\documentclass{report}
\pagestyle{headings}

\usepackage{subfigure}
\usepackage{longtable}
\usepackage{amsmath}
\usepackage{amsthm}
\usepackage{amsfonts}
\usepackage{amssymb}
\usepackage{graphics}
\usepackage{graphicx}
\usepackage{appendix}
\usepackage{color}
\usepackage{makeidx}
\usepackage{hyperref,refcount}

\usepackage{mythesis}  

\renewcommand{\baselinestretch}{1} 
\large
\normalsize


\newcommand{\ZZ}{\mathbb{Z}}

\newcommand{\AC}{\mathcal{A}}
\newcommand{\BC}{\mathcal{B}}
\newcommand{\CC}{\mathcal{C}}
\newcommand{\DC}{\mathcal{D}}
\newcommand{\EC}{\mathcal{E}}

\newcommand{\GC}{\mathcal{G}}
\newcommand{\HC}{\mathcal{H}}

\newcommand{\JC}{\mathcal{J}}
\newcommand{\KC}{\mathcal{K}}
\newcommand{\LC}{\mathcal{L}}

\newcommand{\OC}{\mathcal{O}}
\newcommand{\PC}{\mathcal{P}}
\newcommand{\QC}{\mathcal{Q}}
\newcommand{\RC}{\mathcal{R}}
\newcommand{\SC}{\mathcal{S}}
\newcommand{\TC}{\mathcal{T}}
\newcommand{\UC}{\mathcal{U}}

\newcommand{\WC}{\mathcal{W}}
\newcommand{\XC}{\mathcal{X}}

\newcommand{\GCbar}{\overline{\mathcal{G}}}

\newcommand{\ad}{^\dagger }			 			  
\newcommand{\ket}[1]{|#1\rangle}                  
\newcommand{\bra}[1]{\left\langle #1 \right|}     
\newcommand{\dyad}[2]{\ket{#1}\bra{#2}}           
\newcommand{\ip}[2]{\langle #1|#2\rangle}         
\newcommand{\matl}[3]{\langle #1|#2|#3\rangle}    
\newcommand{\Tr}{{\rm Tr}}                        
\newcommand{\vect}[1]{\mbox{\textbf{#1}}}         

\newcommand{\ii}{\mathrm{i}}					  
\newcommand{\ee}{\mathrm{e}}
\newcommand{\expo}[1]{\mathrm{e}^{#1}} 

\def\avg#1{\langle #1\rangle }

\def\dya#1{|#1\rangle \langle#1|}

\def\mat#1{\left(\begin{matrix}#1\end{matrix}\right)}
\def\om{\omega }
\def\ot{\otimes}

\newcommand{\pt}{B}   			
\newcommand{\ptc}{\bar\pt}	 	

\def\QZ{\mbox{$Q$}}
\def\XZ{\mbox{$\XC_0$}}

\long\def\ca#1\cb{} 

\def\CP{C}
\def\dist{\delta}
\def\vc{\vect{c}}
\def\olabel#1{}
\def\hsp{\hphantom{x}}

\def\ftna{\footnotemark[1]}
\def\ftnb{\footnotemark[2]}
\def\ftnc{\footnotemark[3]}
\def\ftnd{\footnotemark[4]}

\newtheorem{theorem}{Theorem}[chapter]
\newtheorem{lemma}[theorem]{Lemma}
\newtheorem{corollary}[theorem]{Corollary}

\newtheorem*{QGS}{Quadratic Gauss Sums}
\newtheorem*{LS}{Landsberg-Schaar Identity}
\newtheorem*{GQGS}{Generalized Quadratic Gauss Sums}
\newtheorem*{RFGQGS}{Reciprocity Formula for Generalized Quadratic Gauss Sums}

\title{Separable Operations, Graph Codes and the Location of Quantum Information}
\author{Vlad Gheorghiu}
\date{June 1, 2010}

\makeindex

\begin{document}

\pagenumbering{roman}
\maketitle
\makecopyright

\begin{acknowledgements}
I would like to thank to my advisor Robert B. Griffiths for his guidance during my PhD, as well as to the members of the quantum information group at Carnegie Mellon University for useful and stimulating discussions. 
\end{acknowledgements}

\begin{abstract}
In the first part of this Dissertation, I study the differences between LOCC (local operations and classical communication) and the more general class of separable operations. I show that the two classes coincide for the case of pure bipartite state input, and derive a set of important consequences. Using similar techniques I also generalize the no-cloning theorem when restricted to separable operations and show that cloning becomes much more restrictive, by providing necessary (and sometimes sufficient) conditions.

In the second part I investigate graph states and graph codes with carrier qudits of arbitrary dimensionality, and extend the notion of stabilizer to any dimension, not necessarily prime. I further study how and where information is located in the various subsets of the qudit carriers of arbitrary additive graph codes, and provide efficient techniques that can be used in deciding what types of information a subset contains.
\end{abstract}

\tableofcontents
\listoftables
\listoffigures

\chapter{Introduction and preliminary concepts\label{chp1}}
\pagenumbering{arabic}

\section{Historical remarks}
Classical computers are indispensable in today's world and appear in almost all imaginable scenarios, ranging from simple MP3 players to sophisticated supercomputers used in weather prediction. Although their architecture is strongly device-dependent, they all have one thing in common: every classical computer is a physical realization of a Turing machine\index{Turing machine}, a mathematical model introduced by Alan Turing in 1937 \cite{A.M.Turing01011937} which formalized the concept of computation. Intuitively, any computer with a certain minimum capability is, in principle, capable of performing the same tasks that any other computer can perform, if sufficient time and memory are provided. 

The fundamental processing unit of a classical computer is the \emph{bit}\index{bit}: a two-state system commonly denoted by 0 and 1. A computation is performed whenever some input bits are processed by the computer, resulting in another sequence of bits that represent the output of the computation.  Any two-state classical physical system can in principle represent a bit, and a classical computer able to operate on these bits and to produce a valid output can (in principle) be built using only classical devices, as billiard balls, pulleys and so on, although in real world applications everything tend to be miniaturized for efficiency purposes. 

A fruitful idea is that \emph{computation is physical}: any physical system performs some kind of computation during its evolution. Consider for example a rock that is falling down from some height. The total falling time is directly proportional to the square root of the height, so in a sense by simply measuring the time one effectively computes the square root of the height. With a bit of imagination more sophisticated examples can be constructed. What if we make use of the intimate structure of quantum mechanics and quantum systems to perform computations? Are there any fundamentally new possibilities relative to the classical case, or, is a \emph{quantum computer}\index{quantum computer} potentially more powerful than a classical one? This idea was first introduced  in 1982 by Richard Feynman \cite{feynman-82}, when the notion of a quantum computer was born. One may argue that a rock is also a (quite large) quantum mechanical system, so why should quantum mechanics be more ``powerful" in performing computations than its classical counterpart? The main and fundamental difference between classical and quantum mechanics is that the latter has a much richer structure that allows for novel effects not present in classical physics due to its use of a Hilbert space. On larger scales the quantum effects tend to be smoothed out and the system becomes classical, or \emph{decoheres}\index{decoherence}, but on smaller scales the effects can be quite significant. 

Although it was widely believed that a quantum computer can indeed be more powerful than a classical one, in cases such as simulation of complex quantum systems, the first quantum algorithm able to outperform any known classical version was discovered only in 1994 \cite{shor:1484}  by the computer scientist Peter Shor. He invented a quantum algorithm able to factor large numbers in polynomial time and that did not (and still does not) have any efficient classical counterpart.

Even though a genuinely good quantum algorithm existed, a great deal of skepticism was displayed with respect to an actual physical realization. It was already known that quantum systems are very sensitive to external noise that induce \emph{errors} in the computation, and the main problem seemed to be the inability to correct these errors. Classical error correction was well understood \cite{MacWilliamsSloane:TheoryECC}, the basic idea behind the whole field being the usage of \emph{redundancy}, or duplication of information for better protection against errors. On the other hand, the ``no-cloning" theorem of Wootters and Zurek \cite{Nature.299.802}, discovered in 1982, forbids the duplication of quantum information, e.g. non-orthogonal states cannot be duplicated. Therefore the perspectives for good \emph{quantum error correction} schemes looked extremely unpromising. The solution was provided by the same Peter Shor in 1995 \cite{PhysRevA.52.R2493}, who showed that good quantum error correction schemes, which are not simply based on duplication of quantum information, exist. Quantum error correcting codes are of extreme importance in quantum information processing since they allow for high-fidelity transmission of quantum information and reduction of decoherence.

A novel field of science, Quantum Computation and Quantum Information, suddenly became a hot topic at the intersection of physics, mathematics and computer science. It has developed  rapidly since 1994 and remarkable theoretical as well as experimental progress has been achieved. Today the subject is still far from being well understood, and I hope this Dissertation contains some nontrivial contributions to it.

\section{Preliminary concepts}
\subsection{Qubits and qudits}

The fundamental processing unit of a quantum computer is the \emph{qubit}\index{qubit}: a quantum system
with two energy levels. The state of the qubit is often denoted by
\begin{equation}\label{intro_eqn1}
\ket{\psi}=\alpha\ket{0}+\beta\ket{1},
\end{equation}
where $\alpha$ and $\beta$ are complex coefficients satisfying $|\alpha|^2+|\beta|^2=1$ and $\ket{0}$ and $\ket{1}$ are orthonormal basis vectors of a 2-dimensional complex Hilbert space $\HC$.
The \emph{qudit}\index{qudit} is the natural generalization of the qubit to $D$ level quantum systems and is represented by a complex Hilbert space of dimension $D$. 

\subsection{Evolution and quantum channels}

The evolution of a quantum system interacting with an external environment is in general non-unitary. A standard way of describing such evolution uses the Kraus formalism: if the initial state of the system is given in terms of a density operator $\rho$, then after a time $t$ the state evolves as 
\begin{equation}\label{intro_eqn2}
\rho\rightarrow\sum_{k}F_k^{}\rho F_k^\dagger,
\end{equation}
where the $F_k$ are called \emph{Kraus operators}\index{Kraus operators} that satisfy the \emph{closure condition}\index{closure condition}
\begin{equation}\label{intro_eqn3}
\sum_kF_k^\dagger F_k^{}=I,
\end{equation}
where $I$ denotes the identity operator. The Kraus representation \eqref{intro_eqn2} can be formally derived by considering a unitary evolution of the combined system-environment, and then tracing away (or measuring) the environment degrees of freedom. A non-trivial result is that any open quantum evolution can be represented in the form \eqref{intro_eqn2}. If the result of the measurement on the environment is known, e.g. equals $k$, then conditioned on this $k$ the initial density operator of the system is transformed to 
\begin{equation}\label{intro_eqn4}
\rho\rightarrow \frac{F_k^{}\rho F_k^\dagger}{\Tr[F_k^{}\rho F_k^\dagger]}.
\end{equation}
However, measurement is not a deterministic operation and the result $k$ is obtained with some probability $p_k=\Tr[F_k\rho F_k^\dagger]$. Therefore whenever the measurement results on the environment are not discarded, the initial density operator of the system is transformed to 
an \emph{ensemble}\index{ensemble}
\begin{equation}\label{intro_eqn5}
\rho\rightarrow\{p_k,\rho_k\},\quad \rho_k=\frac{F_k^{}\rho F_k^\dagger}{\Tr[F_k^{}\rho F_k^\dagger]}.
\end{equation}
If the system starts out in a pure state $\ket{\psi}$, then \eqref{intro_eqn5} reduces to
\begin{equation}\label{intro_eqn6}
\ket{\psi}\rightarrow\{p_k,\ket{\psi_k}\},\text{ with } F_k\ket{\psi}=\sqrt{p_k}\ket{\psi_k}\text{ and } p_k=\matl{\psi}{F_k^\dagger F_k^{}}{\psi}.
\end{equation}

Any evolution of the form \eqref{intro_eqn2} is also called a \emph{quantum channel}\index{quantum channel}, and \eqref{intro_eqn2} represents the Kraus representation of a quantum channel. Technically the map \eqref{intro_eqn2} is a completely positive trace preserving (CPTP) map\index{completely positive trace preserving (CPTP)}, which intuitively means that it maps positive operators to positive operators, remains positive whenever the system is trivially enlarged to a larger one, and preserves the trace of an operator. The latter condition is imposed by the closure condition \eqref{intro_eqn3} and ensures that probability is conserved. The study of quantum channels is the subject of intense theoretical investigations as they are much less understood than their classical counterpart. The interested reader can consult Chapter 8 of \cite{NielsenChuang:QuantumComputation} for a good introduction to the subject.

\subsection{Bipartite (multipartite) quantum systems}

In quantum theory a multipartite quantum system is described by a Hilbert space constructed as a tensor product of the individual Hilbert spaces, and any vector in this tensor product space represents a valid quantum state. This is one instance in which quantum mechanics has a much richer structure than classical mechanics, because the tensor product description allows the existence of \emph{entangled states}\index{entangled states}, i.e. quantum states that cannot be written as a tensor product of individual states. For example, in a bipartite qubit quantum system represented by a Hilbert space $\HC_A\otimes \HC_B$, a state of the form
\begin{equation}\label{intro_eqn7}
\ket{\psi}=\alpha\ket{00}_{AB}+\beta\ket{11}_{AB}
\end{equation}
with $|\alpha|^2+|\beta|^2=1$
is entangled as long as $0<|\alpha|^2<1$. Otherwise it is called a \emph{product state}\index{product states}. 
Whenever $|\alpha|=|\beta|=1/\sqrt{2}$ the state is called \emph{maximally entangled}\index{maximally entangled states}. Throughout this Dissertation I will use the shorthand notation  $\ket{00}_{AB}$ to denote the tensor product $\ket{0}_A\otimes\ket{0}_B$. 

It turns out that any normalized bipartite entangled state $\ket{\psi}\in\HC_A\otimes\HC_B$ can be written in a canonical form known as the \emph{Schmidt form}\index{Schmidt form}: there always exist orthonormal bases $\{\ket{a_j}\}$ and $\{\ket{b_k}\}$ of $\HC_A$ and $\HC_B$, respectively, in which the $\ket{\psi}$ has the form
\begin{equation}\label{intro_eqn8}
\ket{\psi}=\sum_{r=0}^{D-1}\sqrt{\lambda_r}\ket{a_r}_A\ket{b_r}_B.
\end{equation}
Here $D$ is the minimum of the dimensions of $\HC_A$ and $\HC_B$.
The $\lambda_r$'s can always be chosen to be positive real numbers that satisfy $\sum_r\lambda_r=1$ and are called \emph{Schmidt coefficients}\index{Schmidt coefficients}. An alternative definition is that a bipartite pure state is maximally entangled if and only if all Schmidt coefficients are equal. When all Schmidt coefficients except one are zero then the state is a product state, otherwise it is partially entangled. The number of positive Schmidt coefficients is called the \emph{Schmidt rank}\index{Schmidt rank} of $\ket{\psi}$. If all Schmidt coefficients are strictly positive than we say that $\ket{\psi}$ has \emph{full Schmidt rank}\index{full Schmidt rank}. There is no analog of the Schmidt form for multipartite pure states, and this constitutes a major obstacle in understanding them.

Entanglement is often considered a precious resource, since it can be ``consumed" by various non-classical protocols such as quantum teleportation \cite{PhysRevLett.70.1895}, quantum dense coding \cite{PhysRevLett.69.2881} etc. It also constitutes a key ingredient (although by no means the only one) in the construction of good quantum error codes and exponentially faster quantum algorithms, and therefore its study constitutes an important part of quantum information theory. 

\subsection{LOCC and Separable Operations}

An important paradigm in quantum information is that of local operations and classical communication (LOCC)\index{LOCC}. Consider for example two spatially separated parties, traditionally named Alice and Bob, each having access to local quantum systems described by Hilbert spaces $\HC_A$ and $\HC_B$, respectively. Both Alice and Bob are allowed to perform arbitrary quantum operations on their individual quantum systems, and can also make use of a classical channel to communicate. They are not allowed, however, to exchange quantum systems between themselves nor to use a quantum channel. What kind of tasks can be performed in this paradigm? What are the restrictions compared to the global case (i.e. when global quantum operations are allowed on the combined system $\HC_A\otimes\HC_B$)? Understanding LOCC constitutes an extremely important research program, since in an actual realization of a quantum computer many qubits may be well separated in space, and performing a global operation on them, in contrast to LOCC, will often be challenging, at least from an experimental point of view.

Every LOCC operation can be regarded as a composition of local operations conditioned on particular measurement results that may be communicated through the classical channel. It is not hard to see that the initial state $\rho$ of a quantum system transforms under LOCC as
\begin{equation}\label{intro_eqn9}
\rho\rightarrow\sum_{k}(A_k\otimes B_k)\rho(A_k\otimes B_k)^{\dagger},
\end{equation}
with 
\begin{equation}\label{intro_eqn10}
\sum_{k}(A_k^\dagger A_k^{}\otimes B_k^\dagger B_k^{})=I_A\otimes I_B.
\end{equation}
The concepts above generalize to more than two parties in a straightforward manner.

One may be tempted to say that any quantum operation of the form \eqref{intro_eqn9} represents a valid LOCC, but this is not true! There are operations of this form that are not LOCC \cite{PhysRevA.59.1070}, that is, cannot be implemented by an LOCC paradigm, and which are called \emph{separable operations}\index{separable operations}. There exists a simple example \cite{PhysRevA.59.1070} of a set of basis states in a bipartite qutrit ($D=3$) system that cannot be distinguished by LOCC but can be distinguished by separable operations. Hence LOCC is a proper subset of this more general class of separable operations. Although separable operations are not always implementable by LOCC, studying them is worthwhile since in general they have a cleaner mathematical formulation than LOCC and any result proven to be true for all separable operations will automatically be valid for LOCC, since the latter is a subset of the former. 

In the bipartite setting, a maximally entangled state plays the role of a universal resource,  i.e. from an operational point of view can be transformed deterministically by LOCC to any bipartite partially entangled state \cite{PhysRevLett.83.436}. This is not true anymore in the multipartite regime; there is no universal multipartite quantum state that can be transformed to any arbitrary multipartite state by LOCC \cite{RevModPhys.81.865}.

Entanglement can be quantified by various measures, and the measure is called an \emph{entanglement monotone}\index{entanglement monotone} if it is non-increasing under LOCC. In general entanglement measures have a simple form only for pure bipartite states, and in this case depend only on the Schmidt coefficients of the state. For multipartite pure states or even for bipartite mixed states such a Schmidt decomposition does not exist, and entanglement in these cases is far from being understood.
For a good introduction to the theory of entanglement see \cite{RevModPhys.81.865}.

\subsection{Graph states and graph codes}

Quantum states and quantum entanglement are much better understood in the bipartite setting than in the multipartite setting. Two main difficulties that arise in the latter case are: i) the exponential growth, with the number of constituent parts, of the number of complex amplitudes used to describe a multipartite quantum state, and ii) the non-existence of a Schmidt representation.

However, \emph{graph states}\index{graph state} form a class of multipartite states with a fairly simple structure. 
Given a graph $G=(V,E)$ with $n$ vertices $V$, each corresponding to a qubit, and a collection $E$ of undirected edges connecting pairs of distinct vertices (no self loops are allowed),  a graph state is obtained by preparing a set of initial qubits in the
 $\ket{+}=(\ket{0}+\ket{1})/\sqrt{2}$ state, then applying controlled-phase gates between any two neighbors that are connected by an edge in the corresponding graph. Graph states can be generalized to higher dimensional qudits in a direct manner, using graphs with multiple edges, as described in detail in Chapters~\ref{chp5} and~\ref{chp6}.

Graph states were first introduced by Raussendorf and Briegel \cite{PhysRevLett.86.5188} in their \emph{measurement-based computation model}\index{measurement-based computation}(often called the \emph{one-way model}\index{one-way model}). This new model of quantum computation is fundamentally different from the well-known circuit model, and is also universal. Any desired quantum computation can be achieved by only performing local measurements (in a basis that is conditioned on previous measurement results) on the qubits of a sufficiently large cluster state, a graph state in which the graph is a finite part of a lattice such as the square lattice. Graph states are therefore universal resources for quantum computation and intense theoretical as well as experimental work has been dedicated to their study. For a comprehensive introduction to the subject see \cite{quantph.0602096}.

Graph states are also an instance of the so-called \emph{stabilizer states}\index{stabilizer states} \cite{quantph.9705052}. A stabilizer state is a multipartite quantum state that can be described by an Abelian group (the stabilizer group) of Pauli-like operators on the Hilbert space of the carriers, each of which leaves the state invariant. Instead of describing a stabilizer state of $n$ qubits by $2^n$ complex amplitudes, it is enough to specify the generators of its stabilizer group of which there are no more than $n$. Hence stabilizer states allow for a very compact description. Stabilizer states play an extremely important role in the theory of quantum error-correction codes and they extend the notion of linear classical error correcting codes \cite{MacWilliamsSloane:TheoryECC} to the quantum domain.

\subsection{Location of information}
First let us define what we mean by (classical) information. Information is embedded in correlations between two systems, e.g. we say that information about a system $A$ is ``located" in another system $B$ if the statistical correlations between $A$ and $B$ are such that information about some properties of $A$ can be recovered from $B$.
 For example, the photons that bounce off the Sunday newspaper hit the reader's retina and correlations between the letters on the newspaper page and reader's brain are established: the information about the latest news is now located in the reader's brain.
If the systems are perfectly correlated then we say that all information about $A$ is perfectly present in $B$, and if they are totally uncorrelated than we say that no information about $A$ is present in $B$ (or, equivalently, that all information about $A$ is absent from $B$). Traditionally $A$ is considered to be the input of a communication channel and $B$ its output at a later time, but information theory is not restricted to channels.

 Whereas the concept seems to be quite natural and simple, the rigorous mathematical theory of information was founded by Claude Shannon only in 1948 \cite{Shannon48} and  has been continuously developed since then (see \cite{CoverThomas:ElementsOfInformationTheory} for a comprehensive introduction). Information theory is a very important subject in modern communication, cryptography, classical error-correcting codes etc, with applications ranging from audio CD error-correction to military satellite communication.

Unlike classical information, quantum information can be present in more than one ``type", a terminology introduced by Griffiths \cite{PhysRevA.76.062320}, and various types can be incompatible, e.g. associated to operators that do not commute. Formally, a type of information is associated with a projective decomposition of the identity 
\begin{equation}\label{intro_eqn11}
I=\sum_j P_j,\quad P_j=P_j^\dagger=P_j^2.
\end{equation}
We also associate a type of information with a normal operator through its spectral decomposition, and, for example, call the information corresponding to the $x$ component of an angular momentum of a spin one-half particle, represented by the Pauli operator $\sigma_x$, usually denoted by $X$, the $X$-type of information, and the information corresponding to the $z$ component as the $Z$-type of information. Of course the $X$ and $Z$ types are incompatible, since their corresponding operators do not commute. 

\section{Overview of the Dissertation}

\subsection{Separable operations}

An LOCC acting on a pure bipartite state $\ket{\psi}_{AB}$ will in general produce an ensemble of states $\{p_k,\ket{\phi_k}_{AB}\}$, where $p_k$ is the probability of obtaining the result $k$ through a measurement of the environment. Finding necessary and sufficient conditions for when such a transformation is possible represents an important problem, and a complete solution for bipartite systems was first provided by Nielsen for an ensemble with only one output state \cite{PhysRevLett.83.436} and generalized by Jonathan and Plenio to ensembles with a finite number of states \cite{PhysRevLett.83.1455}. Both of these necessary and sufficient conditions are given in terms of \emph{majorization relations}\index{majorization relations} \cite{Bhatia:MatrixAnalysis} between Schmidt coefficients, and they completely characterize LOCC operations acting on pure bipartite states.

In Chapter~\ref{chp2} and Chapter~\ref{chp3} we study separable operations acting on a pure bipartite state, trying to generalize previously known results. 
The most important result is that any output ensemble produced by a separable operation acting on a pure bipartite state can in fact be produced by some LOCC acting on the same state. Our result effectively says that LOCC and separable operations are the same class of quantum operations when acting on pure bipartite states. In particular, we prove that the majorization conditions of Jonathan and Plenio \cite{PhysRevLett.83.1455} are necessary and sufficient in the more general case of separable operations.

Our result also has important consequences in the theory of entanglement, implying that a large number of mixed-state entanglement monotones remain monotone under separable operations.   Since such monotonicity under LOCC has long been considered a
necessary, or at least a very desirable condition for any ``reasonable''
entanglement monotone, one wonders whether monotonicity under separable operations, in principle a
stronger condition, might be an equally good or even superior desideratum.

An interesting question that follows from our result is: are separable operations and LOCC the same class of quantum operations when acting on multipartite pure states? It might be, but proving it would require very different methods than the ones we used, since there are no simple analogs of the Schmidt decomposition and majorization conditions. Necessary and sufficient conditions are not known even for LOCC.

\subsection{Local cloning of bipartite entangled states}
In Chapter~\ref{chp4} we consider the slightly different problem of cloning orthogonal entangled states by separable operations, a problem that belongs to the more general framework of deterministic mapping of an ensemble of pure states (and not just one state, as before) into another ensemble of pure states by a separable operation. As summarized by the ``no-cloning" theorem\index{no-cloning
 theorem} of \cite{Nature.299.802}, any set of quantum states can be deterministically cloned if and only if the states in the set are mutually orthogonal. When the states are not orthogonal, there is no deterministic apparatus capable of performing such a cloning. However, probabilistic cloning may still be possible and a significant amount of work has been dedicated to studying this case \cite{RevModPhys.77.1225}.

Formally, a set of quantum states $\{\ket{\psi_j}\}$ is cloned whenever there exist a quantum operation that performs 
\begin{equation}\label{intro_eqn12}
\ket{\psi_j}\otimes\ket{\phi}\rightarrow\ket{\psi_j}\otimes\ket{\psi_j},\quad\forall j.
\end{equation}
If the transformation is deterministic for all $j$, then the cloning is deterministic. Otherwise is probabilistic. The state $\ket{\phi}$ plays the role of a resource or a ``blank state" in which the copy of $\ket{\psi_j}$ is to be imprinted.

When the set consists of bipartite entangled states, and the cloning is restricted to LOCC (or to separable operations), the problem becomes much more difficult, and further restrictions have to be imposed. The mere orthogonality of the states no longer implies that they can be (locally) cloned. An LOCC analog of the ``no-cloning" theorem was not yet found, and finding it may prove useful.

It turns out that any two (and no more than two) orthogonal maximally entangled two-qubit states can be locally cloned by LOCC, using a maximally entangled ``blank state" \cite{NewJPhys.6.164} (on which the copy is to be imprinted). 
A generalization to $D$ maximally entangled states of two qudits of prime dimension $D$ was given in \cite{PhysRevA.74.032108},  which showed  that a set of $D$ such states can be locally cloned using a maximally entangled resource if and only if the states in the set are locally (cyclically) shifted
\begin{equation}\label{intro_eqn13}
\ket{\psi_i}=\frac{1}{\sqrt{D}}\sum_{r=0}^{D-1}\ket{r}^A\ket{r\oplus i}^B,
\end{equation}
where the $\oplus$ symbol denotes addition modulo $D$. 
Kay and Ericsson \cite{PhysRevA.73.012343} extended the above results to the LOCC cloning of full Schmidt rank partially entangled states using a maximally entangled blank state. They presented an explicit protocol for the local cloning of a set of $D\times D$ cyclically shifted partially entangled states
of the form
\begin{equation}\label{intro_eqn14}
\ket{\psi_i}=\sum_{r=0}^{D-1}\sqrt{\lambda_r}\ket{r}^A\ket{r\oplus i}^B
\end{equation}
using a maximally entangled blank state, but failed to prove that any clonable set of states must be of this form.

We investigate the conditions under which a set of pure bipartite quantum states on a $D\times D$ system can be locally cloned deterministically by separable operations when at least one of the states is full Schmidt rank. We do not assume that $D$ is necessarily a prime number and we also allow for the possibility of cloning using a resource state that is less than maximally entangled. We derive a set of necessary conditions, that are also sufficient in the case of qubits. In this latter case we proved a long-standing conjecture that a maximally entangled state is a necessary resource for such local cloning, even if the states to be cloned are partially entangled. We also generalize the protocol of Kay and Ericsson and show that any set of partially entangled ``group-shifted" states 
\begin{equation}\label{intro_eqn15}
\ket{\psi_f}=\sum_{g\in G}\sqrt{\lambda_g}\ket{g}^A\ket{fg}^B,
\end{equation}
where $G$ is a group of order $D$ and the elements of the group label an orthonormal basis of the (local) Hilbert space,
can be locally cloned using a maximally entangled blank state, by providing an explicit LOCC circuit. Our protocol reduces to the one of Kay and Ericsson in the case of cyclic groups, since the latter is isomorphic to the additive group of integers mod $D$ and the set of states defined by \eqref{intro_eqn14} and \eqref{intro_eqn15} are the same.

Our results significantly extend previous work in the literature (limited only to LOCC) \cite{PhysRevA.69.052312,NewJPhys.6.164,PhysRevA.74.032108,PhysRevA.73.012343,PhysRevA.76.052305}.

\subsection{Graph codes}
Using graph states provides a fruitful approach for constructing good quantum error correcting codes, called \emph{graph codes}\index{graph code}. A graph code is a subspace of the Hilbert space of $n$ carrier qubits spanned by a collection of \emph{graph basis states}\index{graph basis}: a quantum state obtained from a fixed graph state by applying local $Z$ operators on some of the qubits. See Chapter~\ref{chp5} for a detailed introduction to graph codes.

There have been extensions to higher dimensional qudits, but all of them considered only qudits of prime dimensionality $D$. The main difficulty in the non-prime case is that $\ZZ_D$ is a ring, not a finite field, and the lack of the multiplicative inverse operation poses some technical problems, which we have successfully solved. 

Most known quantum error correcting codes are graph codes or are equivalent to graph codes under local unitary transformations, hence understanding them is extremely important.
In Chapter~\ref{chp5} we present an elegant method for constructing such graph codes, allowing for carrier qudits of arbitrary dimensionality, not necessarily prime. Our method allows for \emph{additive}\index{additive} (or stabilizer \cite{quantph.9705052}) as well as non-additive graph codes and was simultaneously developed by Cross et al \cite{IEEE.55.433} for qubits and Chen et al \cite{PhysRevA.78.062315} for higher dimensional qudits. We use
simple graphical methods and computer searches to construct both additive and non-additive quantum error correcting codes, but computer searches are much faster for additive codes. In a number of cases we have been able to construct
what we call quantum Singleton codes that saturate
the quantum Singleton bound \cite{PhysRevA.55.900}. Our numerical techniques are based on finding a \emph{maximum clique}\index{maximum clique} \cite{GareyJohnson:CompIntract} in some related graph. The maximum clique problem on a general graph is
known to be NP-complete, but our method may still be useful for constructing quantum codes with relatively small number of carriers.
 We also generalize the concept of a stabilizer group to the non-prime case and derive an elegant duality (known before only in the prime case)  between the coding space and its corresponding stabilizer group.

\subsection{Location of quantum information}
In Chapter~\ref{chp6} we develop a mathematical formalism that can be successfully applied in studying how quantum information is encoded in additive graph codes and where is it located, a subject closely related to the one of Chapter~\ref{chp5}. Studying additive graph codes is worthwhile since the vast majority of quantum error correcting codes are stabilizer codes, and stabilizer codes are locally equivalent to additive graph codes \cite{quantph.0111080}.

We show how to encode some input quantum information in the \emph{carrier}\index{carrier qudits} qudits of an additive graph code and demonstrate how to use the concept of types of information to study the location of quantum information in arbitrary subsets of the carrier qudits. What types and how much information about the input can be then recovered? To various types of information we associate a collection of operators on the coding space which form what we call the information group. It represents the input information through an encoding operation constructed as an explicit quantum circuit (hence generalizing the encoding methods developed before only for prime dimensional qudits). Our formalism is very general and works for arbitrary additive graph codes of arbitrary dimension (not necessarily prime). We also present an efficient numerical algorithm that can be successfully used in deciding where information is located and which types of information are present.  As a side remark, note that we have not studied the ``recovery problem", i.e. finding the decoding operation that effectively ``extracts" the quantum information from some carrier qudits, but in principle such a decoding always exist, provided all quantum information is located in these qudits. This recovery operation is interesting, but is not included in this Dissertation.

The methods presented here allow for a better understanding of the intimate nature of quantum codes and may be of use in constructing better quantum error-correcting codes or quantum secret sharing schemes \cite{PhysRevA.78.042309}.

\subsection{Bipartite equientangled bases}
Chapter~\ref{chp7} is not closely related to the other chapters but presents a solution to a problem posed in \cite{PhysRevA.73.012329} of constructing a family of ``equientangled bases" for a bipartite system of two qudits of arbitrary (but equal) dimension: (i) The basis continuously changes from a product basis to a maximally entangled basis, by varying a parameter $t$, and (ii) for a fixed $t$, all basis states are equally entangled. 

We actually construct two solutions to the problem, one based on quadratic Gauss sums and the other using qudit graph states. These bases may find applications in various quantum information protocols including quantum cryptography, optimal Bell tests, investigation of the enhancement of channel capacity due to entanglement and the study of multipartite entanglement.

\section{The structure of the Dissertation}

All chapters of this Dissertation are self contained and consist of published (or accepted for publication) articles in refereed journals. The contents of each chapter is almost the same as that of the published paper, with minor modifications made for the sake of consistency of notation throughout the Dissertation. Most of the chapters represent collaborative work with different persons in our research group, as described below. 

\begin{itemize}
\item Chapter~\ref{chp2}, \textbf{Entanglement transformations using separable operations}: published in Physical Review A \cite{PhysRevA.76.032310}. Collaboration with Robert B. Griffiths. Both authors made major contributions. My most important contributions were: i) the application of map-state duality formalism ii) the idea of using an inequality by Minkowski in deriving necessary conditions for pure state transformations under separable operations, and iii) the investigation of random separable unitary channels.

\item Chapter~\ref{chp3}, \textbf{Separable operations on pure states}: published in Physical Review A as a Rapid Communication \cite{PhysRevA.78.020304}. Collaboration with Robert B. Griffiths. Both authors made major contributions. My most important contributions consisted in: i) extensive numerical studies that led us to the conjecture that separable operations are implementable by LOCC in the case of pure bipartite states; ii) the application of map-state duality formalism; iii) parts of the proof of the main majorization theorem that was conjectured by Griffiths; iv) derivation of the consequences of our result, the most important being that all convex-roof mixed state entanglement measures remain monotone under the more general class of separable operations.

\item Chapter~\ref{chp4}, \textbf{Local cloning of entangled states by separable operations}: accepted for publication in Physical Review A and available on arXiv \cite{quantph.1004.5126}. Collaboration with Scott M. Cohen and Li Yu. I made major contributions to this work, including: i) the map-state duality formalism used in our investigation of the problem; ii) various necessary conditions, such as the necessary form of qubit entangled states, equality of $G$-concurrence, information-theoretical observation etc.; and iii) various proofs of theorems. Scott Cohen introduced the idea of using finite groups to study sets of clonable states, and Li Yu proved that a maximally entangled state is necessary for the local cloning of group shifted states in $D=2$ and $D=3$.

\item Chapter~\ref{chp5}, \textbf{Quantum error correcting codes using qudit graph states}: published in Physical Review A \cite{PhysRevA.78.042303}. Collaboration with Shiang Yong Looi, Li Yu and Robert B. Griffiths. This work was not one of my main projects. My main contributions consisted in developing the stabilizer formalism for non-prime qudits in Sec.~\ref{quditgrcodes_sct15}, and proof of the $X-Z$ rule for qudit graph states in Sec.~\ref{quditgrcodes_sct17}. I was not involved in the numerical work for searching good quantum error-correcting codes.

\item Chapter~\ref{chp6}, \textbf{Location of quantum information in additive graph codes}: published in Physical Review A \cite{PhysRevA.81.032326}. Collaboration with Shiang Yong Looi and Robert B. Griffiths. I had a major part in this paper, together with my co-authors. My most important contributions were: i) introducing a set of useful Clifford gates for arbitrary dimensions; ii) the use of Smith diagonal forms over rings of integers, which allows one to deal with problems that appear in the non-prime case; iii) the discovery of a general encoding operation in terms of an explicit quantum circuit that extended previous work restricted to qudits with prime $D$; iv) implementation of an efficient linear-algebra algorithm used to decide where and which types of informations are present in a given subset of the carriers; v) proofs of various theorems.

\item Chapter~\ref{chp7}, \textbf{Bipartite equientangled bases}: accepted for publication in Physical Review A and available on arXiv \cite{quantph.1004.1633}. Collaboration with Shiang Yong Looi. This work was split into two parts, between myself and my co-author. I found the solution based on Gauss sums (first part), whereas my co-author found the one based on graph states (second part). I also studied the entanglement properties of the second solution in terms of $G$-concurrence, and proved its monotonicity as a function of $t$.
\end{itemize}

\chapter{Entanglement transformations using separable operations\label{chp2}}

\section{Introduction}
\label{sep1_sct1}
A separable operation $\Lambda$ on a bipartite quantum system is a
transformation of the form
\begin{equation}
\label{sep1_eqn1} \rho' = \Lambda(\rho)=\sum_m
\bigl(A_m^{}\otimes B_m^{}\bigr)\rho (A_m^\dagger \otimes B_m^\dagger\bigr),
\end{equation}
where $\rho$ is an initial density operator on the Hilbert space
$\HC_A\otimes\HC_B$.  The Kraus operators $A_m \otimes B_m$ are
arbitrary product operators satisfying the closure condition
\begin{equation}
\label{sep1_eqn2}
\sum_m A_m^\dagger A_m^{}\otimes B_m^\dagger B_m^{}=I\otimes I.
\end{equation}
The extension of \eqref{sep1_eqn1} and \eqref{sep1_eqn2} to multipartite systems is
obvious, but here we will only consider the bipartite case.  To avoid
technical issues the sums in \eqref{sep1_eqn1} and \eqref{sep1_eqn2} and the dimensions
of $\HC_A$ and $\HC_B$ are assumed to be finite.

Various kinds of separable operations play important roles in quantum
information theory. When $m$ takes on only one value the operators $A_1$ and
$B_1$ are (or can be chosen to be) unitary operators, and the operation is a
\emph{local unitary} transformation. When every $A_m$ and every $B_m$ is
proportional to a unitary operator, we call the operation a \emph{separable
  random unitary channel}\index{separable random unitary channel}. Both of these are members of the well-studied class
of \emph{local operations with classical communication} (LOCC)\index{LOCC}, which can be
thought of as an operation carried out by Alice on $\HC_A$ with the outcome
communicated to Bob.  He then uses this information to choose an operation
that is carried out on $\HC_B$, with outcome communicated to Alice, who uses it
to determine the next operation on $\HC_A$, and so forth.  For a precise
definition and a discussion, see [\cite{RevModPhys.81.865}, Sec.~XI]. 
While any LOCC is a separable operation,
i.e., can be written in the form \eqref{sep1_eqn1}, the reverse is not true: there
are separable operations which fall outside the LOCC class
\cite{PhysRevA.59.1070}.

Studying properties of general separable operations seems worthwhile because
any results obtained this way then apply to the LOCC subcategory, which is
harder to characterize from a mathematical point of view.  However, relatively
little is known about separable operations, whereas LOCC has been the subject
of intensive studies, with many important results.  For example, an LOCC
applied to a pure entangled state $\ket{\psi}$ (i.e., $\rho=\dyad{\psi}{\psi}$
in \eqref{sep1_eqn1}) results in an ensemble of pure states (labeled by $m$) whose
average entanglement cannot exceed that of $\ket{\psi}$, [\cite{RevModPhys.81.865}, Sec.~XV D].  One suspects that the same is true of a
general separable operation $\Lambda$, but this has not been proved.  All that
seems to be known is that $\Lambda$ cannot ``generate'' entanglement when
applied to a product pure state or a separable mixed state: the outcome (as is
easily checked) will be a separable state.

If an LOCC is applied to a pure (entangled) state $\ket{\psi}$, Lo and Popescu
\cite{PhysRevA.63.022301} have shown that the same result, typically an
ensemble, can be achieved using a different LOCC (depending both on the
original operation and on $\ket{\psi}$) in which Alice carries out an
appropriate operation on $\HC_A$ and Bob a unitary, depending on that outcome,
on $\HC_B$.  This in turn is the basis of a condition due to Nielsen
\cite{PhysRevLett.83.436} which states that there is an LOCC operation
deterministically (probability 1) mapping a given bipartite state
$\ket{\psi}$ to another pure state $\ket{\phi}$ if and only if $\ket{\phi}$
majorizes $\ket{\psi}$ 
\footnote{
By ``${|\phi\rangle}$ majorizes ${|\psi\rangle}$'' we mean that the
  vector of eigenvalues of the reduced density operator $\rho(\phi)$ of
  ${|\phi\rangle}$ on $\mathcal{H}_A$ majorizes that of the reduced density operator
  $\rho(\psi)$ of ${|\psi\rangle}$ in the sense discussed in
  \cite{PhysRevLett.83.436}, or in [\cite{NielsenChuang:QuantumComputation}, Sec.~12.5.1]: the sum of the $k$ largest
  eigenvalues of $\rho(\phi)$ is never smaller than the corresponding sum for
  $\rho(\psi)$. A helpful discussion of majorization is also found in
  \cite{HornJohnson:MatrixAnalysis} (see the index), with, however, the
  opposite convention from Nielsen for ``$A$ majorizes $B$''
\label{sep1_ftn1}
}.

In this chapter we derive a necessary condition for a separable operation to
deterministically map $\ket{\psi}$ to $\ket{\phi}$ in terms of their Schmidt
coefficients, the inequality \eqref{sep1_eqn5}.  While it is weaker than Nielsen's
condition (unless either $\HC_A$ or $\HC_B$ is two dimensional, in which case
it is equivalent), it is not trivial. In the particular case that the Schmidt
coefficients are the same, i.e., $\ket{\psi}$ and $\ket{\phi}$ are equivalent
under local unitaries, we show that all the $A_m$ and $B_m$ operators in
\eqref{sep1_eqn1} are proportional to unitaries, so that in this case the separable
operation is also a random unitary channel.  For this situation we
also study the conditions under which a whole \emph{collection}
$\{\ket{\psi_j}\}$ of pure states are deterministically mapped to pure states,
a problem which seems not to have been previously studied either for LOCC or
for more general separable operations.

The remainder of this chapter is organized as follows.  Section~\ref{sep1_sct2} has
the proof, based on a inequality by Minkowski, p.~482 of
\cite{HornJohnson:MatrixAnalysis}, of the relationship between the Schmidt
coefficients of $\ket{\psi}$ and $\ket{\phi}$ when a separable operation
deterministically maps $\ket{\psi}$ to $\ket{\phi}$, and some consequences of
this result.  In Section~\ref{sep1_sct3} we derive and discuss the conditions under
which a separable random unitary channel will map a collection of pure
states to pure states.  A summary and some discussion of open questions will
be found in Section~\ref{sep1_sct6}.

\section{Local transformations of bipartite entangled states}
\label{sep1_sct2}

We use the term \emph{Schmidt coefficients}\index{Schmidt coefficients} for the \emph{nonnegative}
coefficients $\{\lambda_j\}$ in the Schmidt expansion
\begin{equation}
\label{sep1_eqn3}
\ket{\psi}=\sum_{j=0}^{D-1}\sqrt{\lambda_j}\ket{a_j}\otimes\ket{b_j},
\end{equation}
of a state $\ket{\psi}\in \HC_A\otimes\HC_B$, using appropriately chosen
orthonormal bases $\{\ket{a_j}\}$ and $\{\ket{b_j}\}$, with the
order chosen so that
\begin{equation}
\label{sep1_eqn4}
 \lambda_0 \geq \lambda_1 \geq \cdots \geq \lambda_{D-1} \geq 0.
\end{equation}
The number $r$ of positive (nonzero) Schmidt coefficients is called the
\emph{Schmidt rank}\index{Schmidt rank}.  We call the subspace of $\HC_A$ spanned by $\ket{a_0},\ket{a_1}\ldots \ket{a_{r-1}}$, i.e., the basis kets for which
the Schmidt coefficients are positive, the $\HC_A$ \emph{support}\index{support} of
$\ket{\psi}$, and that spanned by $\ket{b_0},\ket{b_1}\ldots \ket{b_{r-1}}$ its
$\HC_B$ \emph{support}.

 Our main result is the following:
\begin{theorem}
\label{sep1_thm1}
Let $\ket{\psi}$ and $\ket{\phi}$ be two bipartite entangled states on
$\mathcal{H}_A\otimes\mathcal{H}_B$ with positive Schmidt coefficients
$\{\lambda_j\}$ and $\{\mu_j\}$, respectively, in decreasing order, and let
$r$ be the Schmidt rank of $\ket{\psi}$.  If
$\ket{\psi}$ can be transformed to $\ket{\phi}$ by a deterministic separable
operation, then

i) The Schmidt rank of $\ket{\phi}$ is less than or equal to $r$.

ii)
\begin{equation}
\label{sep1_eqn5}
\prod_{j=0}^{r-1}\lambda_j\geq\prod_{j=0}^{r-1}\mu_j.
\end{equation}

iii) If \eqref{sep1_eqn5} is an equality with both sides positive,
the Schmidt coefficients of $\ket{\psi}$ and $\ket{\phi}$ are identical,
$\lambda_j = \mu_j$, and the operators $A_m$ and $B_m$ restricted to the
$\HC_A$ and $\HC_B$ supports of $\ket{\psi}$, respectively, are proportional
to unitary operators.

iv) The reverse deterministic transformation of $\ket{\phi}$ to
$\ket{\psi}$ by a separable operation is only possible when the Schmidt
coefficients are identical, $\lambda_j = \mu_j$.
\end{theorem}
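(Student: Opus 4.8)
The plan is to first reduce the deterministic hypothesis to a statement about the individual Kraus operators, and then translate everything into matrix language via map-state duality. Since the operation maps $\ket{\psi}$ deterministically to the pure state $\ket{\phi}$, the output $\sum_m (A_m\otimes B_m)\dyad{\psi}{\psi}(A_m\otimes B_m)^\dagger=\dyad{\phi}{\phi}$ is a rank-one projector; because a sum of positive rank-one terms can equal a rank-one projector only if every term lies in its range, each Kraus operator must satisfy $(A_m\otimes B_m)\ket{\psi}=c_m\ket{\phi}$ for scalars $c_m$, and taking the trace together with the closure condition \eqref{sep1_eqn2} gives $\sum_m|c_m|^2=1$. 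Writing $\Psi$ and $\Phi$ for the coefficient matrices of $\ket{\psi}$ and $\ket{\phi}$, map-state duality turns this into $A_m\Psi B_m^T=c_m\Phi$ for all $m$. Part (i) is then immediate: choosing any $m$ with $c_m\neq 0$ exhibits $\Phi$ as $A_m\Psi B_m^T/c_m$, whose rank cannot exceed $\mathrm{rank}\,\Psi=r$, and the Schmidt rank is exactly the rank of the coefficient matrix.

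For part (ii) I would pass to the supports. Fixing Schmidt bases so that $\Psi=\mathrm{diag}(\sqrt{\lambda_0},\dots,\sqrt{\lambda_{r-1}})$ on the $r$-dimensional $\HC_A$ and $\HC_B$ supports, only the first $r$ columns $\hat A_m,\hat B_m$ of $A_m,B_m$ survive, so that $\hat A_m\Lambda\hat B_m^T=c_m\Phi$ with $\Lambda=\mathrm{diag}(\sqrt{\lambda_j})$, while restricting the closure condition \eqref{sep1_eqn2} to these supports yields $\sum_m(\hat A_m^\dagger\hat A_m)\otimes(\hat B_m^\dagger\hat B_m)=I_r\otimes I_r$. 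The case $\mathrm{rank}\,\phi<r$ is trivial, since the right side of \eqref{sep1_eqn5} then vanishes, so I assume $\mathrm{rank}\,\phi=r$ and compress the output to its supports, giving the $r\times r$ equation $\bar A_m\Lambda\bar B_m^T=c_m\bar\Phi$ with $\bar\Phi=\mathrm{diag}(\sqrt{\mu_j})$ invertible. Taking determinants of this relation produces $|\det\bar A_m|\,|\det\bar B_m|=|c_m|^r\sqrt{\prod_j\mu_j/\prod_j\lambda_j}$. The key tool is now Minkowski's determinant inequality (p.~482 of \cite{HornJohnson:MatrixAnalysis}) applied to the positive operators $X_m=(\hat A_m^\dagger\hat A_m)\otimes(\hat B_m^\dagger\hat B_m)$: since $\det(\sum_m X_m)=\det(I_{r^2})=1$, superadditivity of $\det^{1/r^2}$ gives $1\geq\sum_m\det(X_m)^{1/r^2}=\sum_m[\det(\hat A_m^\dagger\hat A_m)\det(\hat B_m^\dagger\hat B_m)]^{1/r}$. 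Bounding $\det(\hat A_m^\dagger\hat A_m)\geq|\det\bar A_m|^2$ (and similarly for $B$) via $\bar A_m^\dagger\bar A_m\leq\hat A_m^\dagger\hat A_m$, substituting the determinant relation, and using $\sum_m|c_m|^2=1$ collapses the right side to $(\prod_j\mu_j/\prod_j\lambda_j)^{1/r}$, which yields \eqref{sep1_eqn5}.

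Part (iii), the equality case, is where I expect the real work to lie, since it requires tracking the equality condition through each inequality used above. If \eqref{sep1_eqn5} holds with equality and both sides are positive, then $\mathrm{rank}\,\phi=r$ and every intermediate step must be tight. Equality in Minkowski's inequality forces all the nonzero $X_m$ to be proportional to a common matrix; together with $\sum_m X_m=I_r\otimes I_r$ this makes each $X_m$ proportional to $I_r\otimes I_r$, hence $\hat A_m^\dagger\hat A_m\propto I_r$ and $\hat B_m^\dagger\hat B_m\propto I_r$, so $\hat A_m,\hat B_m$ are proportional to isometries. Tightness of $\det(\hat A_m^\dagger\hat A_m)=|\det\bar A_m|^2$ forces $(I-P')\hat A_m=0$, i.e.\ the images land exactly in the $r$-dimensional output supports, upgrading these isometries to proportional-to-unitary maps between the supports, which is the second assertion of (iii). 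Finally, feeding $\hat A_m=\sqrt{\alpha_m}\,W_m$ and $\hat B_m=\sqrt{\beta_m}\,V_m$ with $W_m,V_m$ unitary into $\bar A_m\Lambda\bar B_m^T=c_m\bar\Phi$ exhibits $\mathrm{diag}(\sqrt{\mu_j})$ as a scalar multiple of a unitary conjugate of $\mathrm{diag}(\sqrt{\lambda_j})$; matching singular values gives $\sqrt{\mu_j}=\kappa\sqrt{\lambda_j}$ for a constant $\kappa$, and the equality $\prod_j\mu_j=\prod_j\lambda_j$ forces $\kappa=1$, hence $\lambda_j=\mu_j$. The delicate point throughout is handling the possibly rank-deficient or vanishing Kraus terms so that the Minkowski equality characterization applies cleanly.

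Part (iv) then follows with no further computation. A deterministic separable transformation $\ket{\phi}\to\ket{\psi}$ would, by part (i) applied in both directions, force $\mathrm{rank}\,\psi=\mathrm{rank}\,\phi=r$, and by part (ii) in both directions give $\prod_{j=0}^{r-1}\lambda_j\geq\prod_{j=0}^{r-1}\mu_j$ together with $\prod_{j=0}^{r-1}\mu_j\geq\prod_{j=0}^{r-1}\lambda_j$; both products are therefore equal and positive, so part (iii) yields $\lambda_j=\mu_j$.
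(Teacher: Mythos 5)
Your proof is correct, and it follows the paper's strategy in all essentials: reduce the deterministic condition to the per-Kraus-operator equation \eqref{sep1_eqn8}, pass to map-state duality as in \eqref{sep1_eqn9}, restrict to the Schmidt supports, and extract the product condition \eqref{sep1_eqn5} from Minkowski's determinant inequality, with the equality analysis for (iii) and the two-way argument for (iv) matching the paper's. The one genuine difference is where Minkowski is deployed. The paper first treats the square, full-rank case: it solves \eqref{sep1_eqn9} for $A_m$ using the inverses of $\bar B_m$ and $\psi$ and substitutes into the closure condition to obtain \eqref{sep1_eqn10}, so that the ratio $\det(\phi^\dagger\phi)/\det(\psi^\dagger\psi)$ appears directly inside the Minkowski terms; the support-restricted case is handled afterwards via \eqref{sep1_eqn16}--\eqref{sep1_eqn18}. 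You instead apply Minkowski to the untransformed restricted closure terms $X_m=(\hat A_m^\dagger\hat A_m)\otimes(\hat B_m^\dagger\hat B_m)$ and import the Schmidt data separately, through the determinant of the Kraus relation combined with the monotonicity bound $\det(\hat A_m^\dagger\hat A_m)\geq|\det\bar A_m|^2$. This buys two small things: no matrix inverses are needed, so you never have to argue that $\bar B_m$ is invertible --- a step which in the paper tacitly assumes $p_m>0$ and hence quietly ignores Kraus terms that annihilate $\ket{\psi}$ --- and the general (non-full-rank) situation is integrated from the start rather than appended at the end. The price is that your equality analysis in (iii) must track tightness in two separate inequalities (Minkowski, and the compression bound forcing $(I-P')\hat A_m=0$) rather than one; you handle both correctly, and your closing caveat about vanishing Kraus terms applies equally to the paper's own proof.
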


\begin{proof}

For the proof it is convenient to use map-state duality (see
\cite{OSID.11.3,PhysRevA.73.052309} and [\cite{BengtssonZyczkowski:GeometryQuantumStates}, Chap.~11])
defined in the following way.  Let $\{\ket{b_j}\}$ b	e an orthonormal basis
of $\HC_B$ that will remain fixed throughout the following discussion.
Any ket $\ket{\chi}\in\HC_A\otimes\HC_B$ can be expanded in this basis in the
form
\begin{equation}
\label{sep1_eqn6}
  \ket{\chi} = \sum_j \ket{\alpha_j}\otimes\ket{b_j},
\end{equation}
where the $\{\ket{\alpha_j}\}$ are the (unnormalized) expansion coefficients.
We define the corresponding dual map $\chi:\HC_B\rightarrow \HC_A$ to be
\begin{equation}
\label{sep1_eqn7}
 \chi = \sum_j\ket{\alpha_j}\bra{b_j}.
\end{equation}
Obviously any map from $\HC_B$ to $\HC_A$ can be written in the form
\eqref{sep1_eqn7}, and can thus be transformed into a ket on $\HC_A\otimes\HC_B$ by
the inverse process: replacing $\bra{b_j}$ with $\ket{b_j}$.  The
transformation depends on the choice of basis $\{\ket{b_j}\}$, but this will
not matter, because our results will in the end be independent of this choice.
Note in particular that the \emph{rank}\index{rank of map-state duality operator} of the operator $\chi$ is exactly the
same as the \emph{Schmidt rank} of $\ket{\chi}$.

For a separable operation that deterministically maps $\ket{\psi}$ to
$\ket{\phi}$ (or, to be more specific, $\dyad{\psi}{\psi}$ to
$\dyad{\phi}{\phi}$) it must be the case that
\begin{equation}
\label{sep1_eqn8}
\bigl(A_m\otimes B_m\bigr)\ket{\psi}=\sqrt{p_m}\ket{\phi},
\end{equation}
for every $m$, as otherwise the result of the separable operation acting on
$\ket{\psi}$ would be a mixed state.  (One could also include a complex phase
factor depending on $m$, but this can be removed by incorporating it in
$A_m$---an operation is not changed if the Kraus operators are multiplied by
phases.)  By using map-state duality we may rewrite \eqref{sep1_eqn8} in the form
\begin{equation}
\label{sep1_eqn9} A_m\psi \bar B_m = \sqrt{p_m}\phi,
\end{equation}
where by $\bar B_m$ we mean the \emph{transpose} of this operator in the basis
$\{\ket{b_j}\}$---or, to be more precise, the operator whose matrix in this
basis is the transpose of the matrix of $B_m$. From \eqref{sep1_eqn9} one sees at
once that since the rank of a product of operators cannot be larger than the
rank of any of the factors, the rank of $\phi$ cannot be greater than that of
$\psi$.  When translated back into Schmidt ranks this proves (i).

For the next part of the proof let us first assume that $\HC_A$ and
$\HC_B$ have the same dimension $D$, and that the Schmidt ranks of both
$\ket{\psi}$ and $\ket{\phi}$ are equal to $D$; we leave until later the
modifications necessary when these conditions are not satisfied.  In light of
the previous discussion of \eqref{sep1_eqn9}, we see that $\bar B_m$ has rank
$D$, so is invertible. Therefore one can solve \eqref{sep1_eqn9} for $A_m$, and if
the solution is inserted in \eqref{sep1_eqn2} the result is

\begin{equation}
\label{sep1_eqn10}
I\otimes I= \sum_m p_m\big[\psi^{-1\dagger}
\bar B_m^{-1\dagger}(\phi^\dagger\phi)\bar B_m^{-1}\psi^{-1}\big]
\otimes \big[B_m^\dagger B_m\big]
\end{equation}

The Minkowski inequality\index{Minkowski inequality} (\cite{HornJohnson:MatrixAnalysis}, p. 482) for a sum of positive semidefinite operators on a $S$-dimensional space
is
\begin{equation}
\label{sep1_eqn11}
{\Bigg[\det\Big(\sum_m
Q_m\Big)\Bigg]}^{1/S}\geq\sum_m{\Big(\det{Q_m}\Big)}^{1/S},
\end{equation}
with equality if and only if all $Q_m$'s are proportional, i.e.
$Q_i=f_{ij}Q_j$, where the $f_{ij}$ are positive constants.  Since
$A_m^\dagger A_m\otimes B_m^\dagger B_m$ is a positive operator on a
$S=D^2$ dimensional space, \eqref{sep1_eqn10} and \eqref{sep1_eqn11} yield
\begin{eqnarray}
\label{sep1_eqn12}
  1&\geq&{\Bigg[\det\Big(\sum_m p_m\big[\psi^{-1\dagger}
\bar B_m^{-1\dagger}(\phi^\dagger\phi){\bar B_m}^{-1}\psi^{-1}\big]
\otimes \big[B_m^\dagger B_m\big]\Big)\Bigg]}^{1/D^2}\nonumber \\
  &\geq&\sum_m {\Bigg[\det\Big(p_m\big[{\psi^{-1}}^\dagger
\bar B_m^{-1\dagger}(\phi^\dagger\phi){\bar B_m}^{-1}\psi^{-1}\big]\otimes
\big[B_m^\dagger B_m\big]\Big)\Bigg]}^{1/D^2}\nonumber\\
 &=& \sum_m p_m
\frac{\det(\phi^\dagger\phi)^{1/D}}{\det(\psi^\dagger\psi)^{1/D}} =
\frac{\det(\phi^\dagger\phi)^{1/D}}{\det(\psi^\dagger\psi)^{1/D}},
\end{eqnarray}
which is equivalent to
\begin{equation}
\label{sep1_eqn13}
\det(\psi^\dagger\psi)\geq\det(\phi^\dagger\phi).
\end{equation}
The relation $\det(A\otimes B)$=$(\det A)^b (\det B)^a$ , where
$a,b$ are the dimensions of $A$ and $B$, was used in deriving
\eqref{sep1_eqn12}.
Since \eqref{sep1_eqn13} is the square of \eqref{sep1_eqn5}, this proves part (ii).

If \eqref{sep1_eqn5} is an equality with both sides positive,
$\det(\phi^\dagger\phi)/\det(\psi^\dagger\psi)=1$ and the inequality
\eqref{sep1_eqn12} becomes an equality, which implies that all positive
operators in \eqref{sep1_eqn11} are proportional, i.e.
\begin{equation}
\label{sep1_eqn14}
A_m^\dagger A_m^{}\otimes B_m^\dagger B_m^{} =
f_{mn}^{}A_n^\dagger A_n^{}\otimes B_n^\dagger B_n^{},
\end{equation}
where the $f_{mn}$ are positive constants. Setting $n=1$ in
\eqref{sep1_eqn14} and inserting it in \eqref{sep1_eqn2} one gets
\begin{equation}
\label{sep1_eqn15}
(\sum_m f_{m1}) A_1^\dagger A_1\otimes B_1^\dagger B_1 = I\otimes I.
\end{equation}
This implies that both $A_1^\dagger A_1^{}$ and $B_1^\dagger B_1^{}$ are
proportional to the identity, so $A_1$ and $B_1$ are proportional to unitary
operators, and of course the same argument works for every $m$.  Since local
unitaries cannot change the Schmidt coefficients, it is obvious that
$\ket{\psi}$ and $\ket{\phi}$ must share the same set of Schmidt coefficients,
that is $\lambda_j=\mu_j$, for every $j$, and this proves (iii).

To prove (iv), note that if there is a separable operation carrying
$\ket{\psi}$ to $\ket{\phi}$ and another carrying $\ket{\phi}$ to
$\ket{\psi}$, the Schmidt ranks of $\ket{\psi}$ and $\ket{\phi}$ must be equal
by (i), and \eqref{sep1_eqn5} is an equality, so (iii) implies equal Schmidt
coefficients.

Next let us consider the modifications needed when the Schmidt ranks of
$\ket{\psi}$ and $\ket{\phi}$ might be unequal, and are possibly less than the
dimensions of $\HC_A$ or $\HC_B$, which need not be the same.  As noted
previously, \eqref{sep1_eqn9} shows that the Schmidt rank of $\ket{\phi}$ cannot be
greater than that of $\ket{\psi}$.  If it is less, then the right side of
\eqref{sep1_eqn5} is zero, because at least one of the $\mu_j$ in the product will
be zero, so part (ii) of the theorem is automatically satisfied, part (iii)
does not apply, and (iv) is trivial. Thus we only need to discuss the case in
which the Schmidt ranks of $\ket{\psi}$ and $\ket{\phi}$ have the same value
$r$.  Let $P_A$ and $P_B$ be the projectors on the $\HC_A$ and $\HC_B$
supports $\SC_A$ and $\SC_B$ of $\ket{\psi}$ (as defined at the beginning of
this section), and let $\TC_A$ and $\TC_B$ be the corresponding supports of
$\ket{\phi}$. Note that each of these subspaces is of dimension $r$.  Since
$(P_A\otimes P_B)\ket{\psi}=\ket{\psi}$, \eqref{sep1_eqn8} can be rewritten as
\begin{equation}
\label{sep1_eqn16}
\bigl(A'_m\otimes B'_m\bigr)\ket{\psi}=\sqrt{p_m}\ket{\phi},
\end{equation}
where
\begin{equation}
\label{sep1_eqn17}
 A'_m =  A_m P_A,\quad B'_m =  B_m P_B
\end{equation}
are the operators $A_m$ and $B_m$ restricted to the supports of $\ket{\psi}$.
In fact, $A'_m$ maps $\SC_A$ onto $\TC_A$, and $B'_m$ maps $\SC_B$ onto
$\TC_B$, as this is the only way in which \eqref{sep1_eqn16} can be satisfied
when $\ket{\phi}$ and $\ket{\psi}$ have the same Schmidt rank. Finally,
by multiplying \eqref{sep1_eqn2} by $P_A\otimes P_B$ on both left and right
one arrives at the closure condition
\begin{equation}
\label{sep1_eqn18}
\sum_m {A'_m}^\dagger {A'_m}^{}\otimes {B'_m}^\dagger {B'_m}^{}=P_A\otimes P_B.
\end{equation}
Thus if we use the restricted operators $A'_m$ and $B'_m$ we are back to the
situation considered previously, with $\SC_A$ and $\TC_A$ (which are
isomorphic) playing the role of $\HC_A$, and $\SC_B$ and $\TC_B$ the role of
$\HC_B$, and hence the previous proof applies.
\end{proof}

Some connections between LOCC and the more general category of separable
operations are indicated in the following corollaries:

\begin{corollary}
\label{sep1_crl1}
When $\ket{\psi}$ is majorized by $\ket{\phi}$, so there is a deterministic
LOCC mapping $\ket{\psi}$ to $\ket{\phi}$, there does not exist a separable
operation that deterministically maps $\ket{\phi}$ to $\ket{\psi}$, unless
these have equal Schmidt coefficients (are equivalent under local unitaries).
\end{corollary}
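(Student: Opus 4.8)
The plan is to reduce the claim entirely to part (iv) of Theorem~\ref{sep1_thm1}, which already packages the needed asymmetry. The hypothesis ``$\ket{\psi}$ is majorized by $\ket{\phi}$'' is, by Nielsen's theorem \cite{PhysRevLett.83.436} as quoted in the introduction (using the footnote convention under which $\ket{\phi}$ majorizes $\ket{\psi}$), precisely the statement that there exists a deterministic LOCC carrying $\ket{\psi}$ to $\ket{\phi}$. First I would observe that every LOCC is a separable operation of the form \eqref{sep1_eqn1}; hence this supplies a deterministic \emph{separable} operation mapping $\ket{\psi}$ to $\ket{\phi}$, placing us exactly in the hypothesis of Theorem~\ref{sep1_thm1}.

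Next, suppose toward a contradiction that there is also a separable operation deterministically mapping $\ket{\phi}$ back to $\ket{\psi}$. Then both directions are realized by separable operations, which is precisely the situation treated by part (iv) of the theorem. Invoking (iv) directly yields that the Schmidt coefficients of $\ket{\psi}$ and $\ket{\phi}$ must coincide, $\lambda_j=\mu_j$ for every $j$, i.e.\ the two states are equivalent under local unitaries. Taking the contrapositive gives the corollary: if $\ket{\psi}$ and $\ket{\phi}$ do not share the same Schmidt spectrum, then no separable operation can perform the reverse transformation $\ket{\phi}\to\ket{\psi}$.

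I do not expect a genuine obstacle here, since the content is inherited wholesale from the theorem; the only points worth checking are bookkeeping ones. I would verify that the LOCC produced by Nielsen's theorem is deterministic, so that it genuinely serves as the $\ket{\psi}\to\ket{\phi}$ separable operation required by the theorem rather than a probabilistic one, and that the inclusion LOCC $\subseteq$ separable is being used in the correct direction. For completeness one could also note that when $\ket{\phi}$ majorizes $\ket{\psi}$ with unequal spectra the Schmidt rank can only drop under $\ket{\psi}\to\ket{\phi}$, so part (i) already forbids the reverse map whenever the ranks differ, while the equal-rank case is exactly where equality in \eqref{sep1_eqn5} forces the Kraus operators to be proportional to unitaries via part (iii); but since all of this is already folded into the proof of (iv), a one-line appeal to (iv) suffices.
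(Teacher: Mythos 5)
Your proposal is correct and matches the paper's own proof, which likewise disposes of the corollary by noting that the LOCC map $\ket{\psi}\to\ket{\phi}$ guaranteed by Nielsen's theorem is in particular separable, and then invoking part (iv) of Theorem~\ref{sep1_thm1}. Your closing remarks about parts (i) and (iii) simply retrace how (iv) is itself proved, so nothing further is needed.
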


This is nothing but (iv) of Theorem 1 applied when the $\ket{\psi}$ to
$\ket{\phi}$ map is LOCC, and thus separable.  It is nonetheless worth
pointing out because majorization provides a very precise characterization of
what deterministic LOCC operations can accomplish, and the corollary
provides a connection with more general separable operations.

\begin{corollary}
\label{sep1_crl2}
If either $\HC_A$ or $\HC_B$ is 2-dimensional, then $\ket{\psi}$ can be
deterministically transformed to $\ket{\phi}$ if and only if this is possible
using LOCC, i.e., $\ket{\psi}$ is majorized by $\ket{\phi}$.
\end{corollary}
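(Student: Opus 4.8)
The plan is to show that, when one local factor is two dimensional, the necessary condition (ii) of Theorem~\ref{sep1_thm1} for a deterministic separable map coincides exactly with Nielsen's majorization condition, which is necessary and sufficient for a deterministic LOCC map. One implication is immediate: since every LOCC operation is separable, if $\ket{\psi}$ is majorized by $\ket{\phi}$ then Nielsen's theorem supplies a deterministic LOCC map, which is in particular a separable one. The entire content therefore lies in the converse, namely that a deterministic separable transformation of $\ket{\psi}$ to $\ket{\phi}$ already forces the majorization relation $\mu_0\geq\lambda_0$.

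First I would reduce to the generic case. Suppose without loss of generality that $\HC_B$ is two dimensional, so that the Schmidt rank $r$ of $\ket{\psi}$ is at most $2$. If $r=1$ then $\ket{\psi}$ is a product state; part (i) of Theorem~\ref{sep1_thm1} forces $\ket{\phi}$ to be a product state as well, and both the majorization relation and condition (ii) hold trivially. If $r=2$ but $\ket{\phi}$ has Schmidt rank $1$, then $\mu_1=0$, so $\mu_0=1\geq\lambda_0$ holds automatically and the requisite LOCC map exists, while condition (ii) likewise reads $\lambda_0\lambda_1\geq 0$ and is vacuous. Hence the only case requiring real argument is when $\ket{\psi}$ and $\ket{\phi}$ both have full Schmidt rank $2$.

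The core step is an elementary monotonicity observation. In the rank-$2$ case the normalized, ordered coefficients satisfy $\lambda_0+\lambda_1=1$ and $\mu_0+\mu_1=1$ with $\lambda_0\geq\lambda_1$ and $\mu_0\geq\mu_1$, so both $\lambda_0$ and $\mu_0$ lie in $[1/2,1]$. Condition (ii) reads $\lambda_0\lambda_1\geq\mu_0\mu_1$, i.e.
\begin{equation}
\lambda_0(1-\lambda_0)\geq\mu_0(1-\mu_0).
\end{equation}
Since $x\mapsto x(1-x)$ is strictly decreasing on $[1/2,1]$, this is equivalent to $\lambda_0\leq\mu_0$. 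On the other hand, for two-component probability vectors summing to $1$, the statement that $(\lambda_0,\lambda_1)$ is majorized by $(\mu_0,\mu_1)$ amounts to the single nontrivial partial-sum inequality $\mu_0\geq\lambda_0$ (the two-term sums being both equal to $1$). Thus condition (ii) is exactly the majorization condition, and Nielsen's theorem then provides a deterministic LOCC map from $\ket{\psi}$ to $\ket{\phi}$, completing the converse.

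I expect no serious obstacle: the heavy lifting is already carried out inside Theorem~\ref{sep1_thm1}, and this corollary is a clean specialization of it. The one point demanding care is the equivalence of the product inequality with the single majorization inequality, which rests entirely on the strict monotonicity of $x(1-x)$ on $[1/2,1]$ together with the normalization $\lambda_0+\lambda_1=\mu_0+\mu_1=1$; one should also confirm explicitly that a two-dimensional local factor caps the Schmidt rank at $2$, which follows from the general bound $r\leq\min(\dim\HC_A,\dim\HC_B)$.
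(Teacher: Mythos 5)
Your proof is correct and follows essentially the same route as the paper: the paper's (one-sentence) argument is precisely that with only two nonzero Schmidt coefficients the majorization condition reduces to $\mu_0\geq\lambda_0$, which is equivalent to the product condition \eqref{sep1_eqn5} of Theorem~\ref{sep1_thm1}. You have merely filled in the details the paper leaves implicit — the degenerate rank-one cases and the monotonicity of $x\mapsto x(1-x)$ on $[1/2,1]$ — all of which are sound.
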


The proof comes from noting that when there are only two nonzero Schmidt
coefficients, the majorization condition is $\mu_0\geq \lambda_0$, and this is
equivalent to \eqref{sep1_eqn5}.

\section{Separable random unitary channel}
\label{sep1_sct3}

\subsection{Condition for deterministic mapping}
\label{sep1_sct4}

Any quantum operation (trace-preserving completely positive map) can be
thought of as a quantum channel, and if the Kraus operators are proportional
to unitaries, the channel is bistochastic (maps $I$ to $I$) and is called a
random unitary channel or a random external field in Sec.~10.6
of \cite{BengtssonZyczkowski:GeometryQuantumStates}.  Thus a separable operation in which the $A_m$
and $B_m$ are proportional to unitaries $U_m$ and $V_m$, so \eqref{sep1_eqn1} takes
the form
\begin{equation}
\label{sep1_eqn19}
\rho' = \Lambda(\rho)=\sum_m p_m\big(U_m\otimes
V_m\big)\rho\big(U_m\otimes V_m\big)^\dagger,
\end{equation}
with the $p_m>0$ summing to 1, can be called a separable random unitary
channel.  We shall be interested in the case in which $\HC_A$ and $\HC_B$ have
the same dimension $D$, and in which the separable unitary channel
deterministically maps not just one but a collection $\{\ket{\psi_j}\}$,
$1\leq j\leq N$ of pure states of full Schmidt rank $D$ to pure states.  This
means that \eqref{sep1_eqn8} written in the form
\begin{equation}
\label{sep1_eqn20}
\bigl(U_m \otimes V_m\bigr)
\ket{\psi_j} \doteq \ket{\phi_j},
\end{equation}
must hold for all $j$ as well as for all $m$.  The dot equality $\doteq$ means
the two sides can differ by at most a complex phase.  Here such phases cannot
simply be incorporated in $U_m$ or $V_m$, because \eqref{sep1_eqn20} must hold for
all values of $j$, even though they are not relevant for the map carrying
$\dyad{\psi_j}{\psi_j}$ to $\dyad{\phi_j}{\phi_j}$.

\begin{theorem}
\label{sep1_thm2}
Let $\{\ket{\psi_j}\}$, $1\leq j\leq N$ be a collection of states of full
Schmidt rank on a tensor product $\HC_A\otimes\HC_B$ of two spaces of equal
dimension, and let $\Lambda$ be the separable random unitary channel defined by
\eqref{sep1_eqn19}. Let $\psi_j$ and $\phi_j$  be the operators dual to
$\ket{\psi_j}$ and $\ket{\phi_j}$---see \eqref{sep1_eqn6} and \eqref{sep1_eqn7}.

i) If every $\ket{\psi_j}$ from the collection is deterministically mapped to
a pure state, then
\begin{equation}
\label{sep1_eqn21}
 U_m^\dagger U_n^{} \psi_j^{}\psi_k^\dagger \doteq
 \psi_j^{}\psi_k^\dagger U_m^\dagger U_n^{}
\end{equation}
for every $m, n, j,$ and $k$.

ii) If \eqref{sep1_eqn21} holds for a \emph{fixed} $m$ and every $n, j,$ and $k$,
it holds for every $m, n, j,$ and $k$.  If in addition \emph{at least one} of
the states from the collection $\{\ket{\psi_j}\}$ is deterministically mapped
to a pure state by $\Lambda$, then every state in the collection is mapped to
a pure state.

iii) Statements (i) and (ii) also hold when \eqref{sep1_eqn21} is replaced
with
\begin{equation}
\label{sep1_eqn22}
 V_m^\dagger V_n^{} \psi_j^\dagger\psi_k^{} \doteq
 \psi_j^\dagger\psi_k^{} V_m^\dagger V_n^{}.
\end{equation}

\end{theorem}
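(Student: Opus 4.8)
The plan is to push everything through map-state duality, exactly as in the proof of Theorem~\ref{sep1_thm1}, exploiting the fact that the conclusions are only stated ``up to a phase'' so the scalar factors never have to be tracked. First I would record the dual form of the hypotheses. Writing $\psi_j$ for the operator dual to $\ket{\psi_j}$, the rule \eqref{sep1_eqn9} sends $(U_m\ot V_m)\ket{\psi_j}$ to $U_m\psi_j\bar V_m$. The statement ``$\ket{\psi_j}$ is deterministically mapped to a pure state'' is equivalent to $(U_m\ot V_m)\ket{\psi_j}\doteq(U_n\ot V_n)\ket{\psi_j}$ for all $m,n$, because each term of $\Lambda(\dya{\psi_j})$ is already a rank-one projector onto a unit vector, and a convex combination of such with positive weights is pure only when those unit vectors are all parallel. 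Dualizing and using unitarity of the $U$'s and $V$'s, this reads
\begin{equation}
U_n^\dagger U_m\,\psi_j = e^{i\Delta_{mnj}}\,\psi_j\,S_{mn},\qquad S_{mn}:=\bar V_n\bar V_m^\dagger,
\end{equation}
where $S_{mn}$ is unitary and, crucially, independent of $j$.

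For part (i) I would form the product of the $j$-relation with the adjoint of the $k$-relation. Since $S_{mn}S_{mn}^\dagger=I$, the entire $V$-dependence cancels and I am left with $U_n^\dagger U_m\,\psi_j\psi_k^\dagger\,(U_n^\dagger U_m)^\dagger\doteq\psi_j\psi_k^\dagger$, which is \eqref{sep1_eqn21} after interchanging the dummy labels $m,n$. Taking the product in the opposite order, the adjoint of the $j$-relation on the left of the $k$-relation, cancels the $U$-dependence instead and yields the conjugate statement, that $S_{mn}$ commutes up to a phase with $\psi_j^\dagger\psi_k$; translating $S_{mn}$ back to the $V$'s through the transpose gives \eqref{sep1_eqn22}, so part (iii) follows by this symmetric computation (equivalently, by performing map-state duality on the $A$ side rather than the $B$ side, which interchanges the roles of $U,V$ and of $\psi_j\psi_k^\dagger,\psi_j^\dagger\psi_k$).

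The two claims in part (ii) are then elementary facts about the relation ``commutes up to a phase''. If a unitary $G$ satisfies $GM\doteq MG$ then so does $G^\dagger$ (conjugate by $G$), and commutation up to a phase with $M_1$ and $M_2$ forces the same with $M_1M_2$ and with $M_1^{-1}$; hence the operators commuting with a fixed $G$ up to a phase form a group. For the first claim I write $U_m^\dagger U_n=(U_{m_0}^\dagger U_m)^\dagger(U_{m_0}^\dagger U_n)$: the fixed-$m_0$ hypothesis makes both factors commute up to a phase with every $\psi_j\psi_k^\dagger$, hence so does their product, establishing \eqref{sep1_eqn21} for all $m,n$. For the second claim, set $L=U_n^\dagger U_m$; by \eqref{sep1_eqn21} it commutes up to a phase with the whole group generated by the $\psi_a\psi_b^\dagger$. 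Because the $\psi_j$ have full rank $D$ they are invertible, so
\begin{equation}
\psi_j\psi_{j_0}^{-1}=(\psi_j\psi_{j_0}^\dagger)(\psi_{j_0}\psi_{j_0}^\dagger)^{-1}
\end{equation}
lies in that group, whence $L\,\psi_j\psi_{j_0}^{-1}\doteq\psi_j\psi_{j_0}^{-1}\,L$. Feeding in the single pure-state hypothesis in the dual form $L\,\psi_{j_0}\doteq\psi_{j_0}S_{mn}$, which fixes $S_{mn}\doteq\psi_{j_0}^{-1}L\psi_{j_0}$, then yields $L\,\psi_j\doteq\psi_j S_{mn}$ for every $j$, i.e. each $\ket{\psi_j}$ is mapped to a pure state.

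The step I expect to be the genuine obstacle is this last one: part (i) naturally produces commutation with $\psi_j\psi_k^\dagger$ (an \emph{adjoint} appears), whereas the pure-state condition \eqref{sep1_eqn20} is governed by $\psi_j\psi_{j_0}^{-1}$ (an \emph{inverse}). Bridging the two is exactly what forces the full-Schmidt-rank assumption into the argument, through the factorization displayed above; without invertibility of the $\psi_j$ the group argument collapses. A secondary nuisance, though not a real difficulty, is keeping the unimodular factors $e^{i\Delta_{mnj}}$ and the transposes hidden inside $S_{mn}$ under control, which is precisely why it pays to phrase every relation with $\doteq$ from the very start and never resolve the phases, since they are in fact never needed.
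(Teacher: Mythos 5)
Your proof is correct and rests on the same foundations as the paper's: map-state duality, the observation that a positively weighted mixture of rank-one projectors is pure only if the underlying unit vectors are all parallel (this is exactly the paper's \eqref{sep1_eqn20}), and the invertibility supplied by full Schmidt rank. Parts (i) and (iii) coincide with the paper's computation up to bookkeeping: the paper combines $U_m\psi_j\bar V_m\doteq\phi_j$ with the adjoint of the same relation at index $k$ and then uses $m$-independence of $\phi_j^{}\phi_k^\dagger$, which is precisely your cancellation of $S_{mn}^{}S_{mn}^\dagger$ (resp.\ of the $U$'s). The genuine difference is in the second half of part (ii). The paper rewrites the single-state hypothesis as $\psi_1^\dagger U_m^\dagger\doteq\bar V_m\phi_1^\dagger$, substitutes it into the $k=1$ commutation relation \eqref{sep1_eqn25}, and cancels the invertible \emph{output} operator $\phi_1^\dagger$; you never introduce $\phi_j$ at all, but instead observe that the invertible operators commuting up to a phase with $L=U_n^\dagger U_m^{}$ form a group, so that $\psi_j^{}\psi_{j_0}^{-1}=(\psi_j^{}\psi_{j_0}^\dagger)(\psi_{j_0}^{}\psi_{j_0}^\dagger)^{-1}$ lies in that group, and then conjugate $L\psi_{j_0}\doteq\psi_{j_0}S_{mn}$ through it. Both are full-rank cancellations, but your version isolates cleanly why the rank hypothesis is indispensable—it bridges the dagger form $\psi_j^{}\psi_k^\dagger$ delivered by part (i) and the inverse form $\psi_j^{}\psi_{j_0}^{-1}$ that purity demands—something the paper leaves implicit. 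One caveat, which you share with the paper rather than introduce: in part (iii) the operator your computation shows to commute up to a phase with $\psi_j^\dagger\psi_k^{}$ is $S_{mn}=\bar V_n^{}\bar V_m^\dagger$, i.e.\ the transpose of $V_m^\dagger V_n^{}$ in the duality basis; transposing the relation converts it into commutation of $V_m^\dagger V_n^{}$ with $(\psi_j^\dagger\psi_k^{})^T$ rather than with $\psi_j^\dagger\psi_k^{}$ itself, so \eqref{sep1_eqn22} holds literally only modulo this transpose of conventions. The paper's one-line symmetry proof of (iii) glosses over exactly the same point, so this is not a defect of your argument relative to the paper's own.
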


\begin{proof}

Part (i).
By map-state duality \eqref{sep1_eqn20} can be rewritten as
\begin{equation}
\label{sep1_eqn23}
 U_m\psi_j\bar V_m \doteq \phi_j,
\end{equation}
where $\bar V_m$ is the transpose of $V_m$---see the remarks following
\eqref{sep1_eqn9}. By combining \eqref{sep1_eqn23} with its adjoint with $j$ replaced by
$k$, and using the fact that $\bar V_m$ is unitary, we arrive at
\begin{equation}
\label{sep1_eqn24}
 U_m^{}\psi_j^{}\psi_k^\dagger U_m^\dagger \doteq \phi_j^{}\phi_k^\dagger.
\end{equation}
Since the right side is independent of $m$, so is the left, which means that
\begin{equation}
\label{sep1_eqn25}
 U_n^{}\psi_j^{}\psi_k^\dagger U_n^\dagger  \doteq
 U_m^{}\psi_j^{}\psi_k^\dagger U_m^\dagger.
\end{equation}
Multiply on the left by $U_m^\dagger$ and on the right by $U_n$ to obtain
\eqref{sep1_eqn21}.

Part (ii).
If \eqref{sep1_eqn25}, which is equivalent to \eqref{sep1_eqn21}, holds for $m=1$ it
obviously holds for all values of $m$. Now assume that $\ket{\psi_1}$ is
mapped by $\Lambda$ to a pure state $\ket{\phi_1}$, so \eqref{sep1_eqn23} holds
for all $m$ when $j=1$. Take the adjoint of this equation and multiply by
$\bar V_m$ to obtain
\begin{equation}
\label{sep1_eqn26}
 \psi_1^\dagger U_m^\dagger \doteq \bar V_m^{}\phi_1^\dagger.
\end{equation}
Set $k=1$ in \eqref{sep1_eqn25}, and use  \eqref{sep1_eqn26} to rewrite it as
\begin{equation}
\label{sep1_eqn27}
 U_n^{}\psi_j^{}\bar V_n^{}\phi_1^\dagger \doteq
 U_m^{}\psi_j^{}\bar V_m^{}\phi_1^\dagger.
\end{equation}
Since by hypothesis $\ket{\psi_1}$ has Schmidt rank $D$, the same is true of
$\psi_1$, and since $U_m$ and $\bar V_m$ in \eqref{sep1_eqn23} are unitaries,
$\phi_1$ and thus also $\phi_1^\dagger$ has rank $D$ and is invertible.
Consequently, \eqref{sep1_eqn27} implies that
\begin{equation}
\label{sep1_eqn28}
 U_n^{}\psi_j^{}\bar V_n^{} \doteq U_m^{}\psi_j^{}\bar V_m^{},
\end{equation}
and we can define $\phi_j$ to be one of these common values, for example
$U_1\psi_j\bar V_1$.  Map-state duality transforms this $\phi_j$
into $\ket{\phi_j}$ which, because of \eqref{sep1_eqn28}, satisfies \eqref{sep1_eqn20}.

Part (iii). The roles of $U_m$ and $V_m$ are obviously symmetrical, but our
convention for map-state duality makes $\psi_j$ a map from $\HC_B$ to $\HC_A$,
which is the reason why its adjoint appears in \eqref{sep1_eqn22}.
\end{proof}

\subsection{Example}
\label{sep1_sct5}

Let us apply Theorem \ref{sep1_thm2} to see what pure states of full Schmidt rank
are deterministically mapped onto pure states by the following separable
random unitary channel on two qubits:
\begin{equation}
\label{sep1_eqn29}
\Lambda(\rho)=p\rho+(1-p)(X\otimes Z)\rho (X\otimes Z).
\end{equation}
The Kraus operators are $I\otimes I$ and $X\otimes Z$, so $U_1=I$ and $U_2=X$.
Thus the condition \eqref{sep1_eqn21} for a collection of states $\{\ket{\psi_j}\}$
to be deterministically mapped to pure states is
\begin{equation}
\label{sep1_eqn30}
X\psi_j\psi_k^\dagger\doteq \psi_j\psi_k^\dagger X.
\end{equation}
It is easily checked that
\begin{equation}
\label{sep1_eqn31}
 \ket{\psi_1}= (\ket{+}\ket{0}+\ket{-}\ket{1})/\sqrt{2}
\end{equation}
 is mapped to itself by \eqref{sep1_eqn29}.  If the corresponding
\begin{equation}
\label{sep1_eqn32} \psi_1= \frac{1}{2}\left(
\begin{array}{cc}
1 & 1\\
1 & -1\\
\end{array}
\right)
\end{equation}
is inserted in \eqref{sep1_eqn30} with $k=1$, one can show that \eqref{sep1_eqn30} is
satisfied for any $2\times2$ matrix
\begin{equation}
\label{sep1_eqn33}
\psi_j=\left(
\begin{array}{cc}
a_j & b_j\\
c_j & d_j\\
\end{array}
\right)
\end{equation}
having $c_j=\pm a_j$ and $d_j=\mp b_j$, and that in turn these
satisfy \eqref{sep1_eqn30} for every $j$ and $k$. Thus all states of the form
\begin{equation}
\label{sep1_eqn34}
\ket{\psi_\pm}=a\ket{00}+b\ket{01}\pm a\ket{10}\mp b\ket{11}
\end{equation}
with $a$ and $b$ complex numbers, are mapped by this channel into pure states.

\section{Conclusions}
\label{sep1_sct6}

Our main results are in Theorem~\ref{sep1_thm1}: if a pure state on a bipartite
system $\HC_A\otimes\HC_B$ is deterministically mapped to a pure state by a
separable operation $\{A_m\otimes B_m\}$, then the product of the Schmidt
coefficients can only decrease, and if it remains the same, the two sets of
Schmidt coefficients are identical to each other, and the $A_m$ and $B_m$
operators are proportional to unitaries.  (See the detailed statement of the
theorem for situations in which some of the Schmidt coefficients vanish.)
This \emph{product condition} is necessary but not sufficient: i.e., even if
it is satisfied there is no guarantee that a separable operation exists which
can carry out the specified map.  Indeed, we think it is likely that when both
$\HC_A$ and $\HC_B$ have dimension 3 or more there are situations in which the
product condition is satisfied but a deterministic map is not possible.  The
reason is that \eqref{sep1_eqn5} is consistent with $\ket{\phi}$ having a larger
entanglement than $\ket{\psi}$, and we doubt whether a separable operation can
increase entanglement.  While it is known that LOCC cannot increase the
average entanglement [\cite{RevModPhys.81.865}, Sec.~XV D], there
seems to be no similar result for general separable operations. This is an
important open question.

It is helpful to compare the product condition \eqref{sep1_eqn5} with Nielsen's
majorization condition, which says that a deterministic separable operation of
the LOCC type can map $\ket{\psi}$ to $\ket{\phi}$ if and only if $\ket{\phi}$
majorizes $\ket{\psi}$, see~\ref{sep1_ftn1}. Corollary~\ref{sep1_crl2} of Theorem~\ref{sep1_thm1}
shows that the two are identical if system $A$ or system $B$ is 2-dimensional.
Under this condition a general separable operation can deterministically map
$\ket{\psi}$ to $\ket{\phi}$ only if it is possible with LOCC.  This
observation gives rise to the conjecture that when either $A$ or $B$ is
2-dimensional \emph{any} separable operation is actually of the LOCC form.
This conjecture is consistent with the fact that the well-known example
\cite{PhysRevA.59.1070} of a separable operation that is \emph{not} LOCC uses
the tensor product of two 3-dimensional spaces.  But whether separable and LOCC
coincide even in the simple case of a $2\times 2$ system is at present an
open question (see note added in proof).

When the dimensions of $A$ and $B$ are both 3 or more the product condition of
Theorem~\ref{sep1_thm1} is weaker than the majorization condition: if $\ket{\phi}$
majorizes $\ket{\psi}$ then \eqref{sep1_eqn5} will hold 
\footnote{
The general argument that \eqref{sep1_eqn5} is implied by (though it
  does not imply) majorization will be found in [\cite{Nielsen:MajorizationNotes}, Sec.~4], or as an exercise on [\cite{HornJohnson:MatrixAnalysis}, p.~199]
\label{sep1_ftn2}
},
but the converse is in general not true. Thus there might be situations in which a
separable operation deterministically maps $\ket{\psi}$ to $\ket{\phi}$ even
though $\ket{\phi}$ does not majorize $\ket{\psi}$.  If such cases exist,
Corollary~\ref{sep1_crl1} of Theorem~\ref{sep1_thm1} tells us that $\ket{\psi}$ and
$\ket{\phi}$ must be incomparable under majorization: neither one majorizes the
other.  Finding an instance, or demonstrating its impossibility, would help
clarify how general separable operations differ from the LOCC subclass.

When a separable operation deterministically maps $\ket{\psi}$ to $\ket{\phi}$
and the product of the two sets of Schmidt coefficients are the same, part
(iii) of Theorem~\ref{sep1_thm1} tells us that the collections of Schmidt
coefficients are in fact identical, and that the $A_m$ and $B_m$ operators
(restricted if necessary to the supports of $\ket{\psi}$) are proportional to
unitaries.  Given this proportionality (and that the map is deterministic),
the identity of the collection of Schmidt coefficients is immediately evident,
but the converse is not at all obvious.  The result just mentioned can be used
to simplify part of the proof in some interesting work on local copying,
specifically the unitarity of local Kraus operators in [\cite{NewJPhys.6.164}, Sec.~3.1]. It might have applications in other cases
where one is interested in deterministic nonlocal operations.

Finally, Theorem~\ref{sep1_thm2} gives conditions under which a separable random
unitary operation can deterministically map a whole collection of pure states
to pure states.  These conditions [see \eqref{sep1_eqn21} or \eqref{sep1_eqn22}] involve
both the unitary operators and the states themselves, expressed as operators
using map-state duality, in an interesting combination.  While these results
apply only to a very special category, they raise the question whether
simultaneous deterministic maps of several pure states might be of interest
for more general separable operations. The nonlocal copying problem, as
discussed in
\cite{NewJPhys.6.164,PhysRevA.73.012343,PhysRevA.69.052312,PhysRevA.74.032108},
is one situation where results of this type are relevant, and there may be
others.

\textit{Note added in proof.} Our conjecture on the equivalence of separable operations and LOCC for low dimensions has been shown to be false \cite{quantph.0705.0795}.

\chapter{Separable operations on pure states\label{chp3}}

\section{Introduction\label{sep2_sct1}} 
A separable operation $\Lambda$ on a
bipartite quantum system is a transformation of the form
\begin{equation}
\label{sep2_eqn1} \rho'=\Lambda(\rho)=\sum_{k=0}^{N-1}(A_k\otimes B_k)\rho(A_k\otimes
B_k)^\dagger,
\end{equation} where $\rho$ is an initial density operator on the Hilbert
space $\HC_A\otimes \HC_B$. The Kraus operators $A_k\otimes B_k$ are arbitrary
product operators satisfying the closure condition
\begin{equation}
\label{sep2_eqn2} \sum_{k=0}^{N-1}A_k^\dagger A_k\otimes B_k^\dagger B_k=I_A\otimes
I_B,
\end{equation} with $I_A$ and $I_B$ the identity operators. The extension to
multipartite systems is obvious, but here we will only consider the bipartite
case. To avoid technical issues the sums in \eqref{sep2_eqn1} and \eqref{sep2_eqn2} as
well as the dimensions $D_A$ and $D_B$ of $\HC_A$ and $\HC_B$ are assumed to
be finite.

Local operations with classical communication (LOCC)\index{LOCC} form a subset of
separable operations in which the Kraus operators $A_k\otimes B_k$ are
restricted by the requirement that they be generated in the following fashion.
Alice carries out an operation $\{A^{(1)}_i\}$, $\sum_i A^{(1)\dagger}_i
A^{(1)}_i=I_A$, in the usual way with the help of an ancilla, the measurement
of which yields the value of $i$, which is then transmitted to Bob.  He uses
$i$ to choose an operation $\{B^{(2,i)}_j\}$, the result $j$ of which is
transmitted back to Alice, whose next operation can depend on $j$ as well as
$i$, and so forth.  While it is (fairly) easy to see that the end result after
an arbitrary number of rounds is of the form \eqref{sep2_eqn1}, it is difficult to
characterize in simple mathematical or physical terms precisely what it is
that distinguishes LOCC from more general separable operations.  Examples show
that separable operations can be more effective than LOCC in distinguishing
certain sets of orthogonal states \cite{PhysRevA.59.1070}, even in a system as
simple as two qubits \cite{quantph.0705.0795}, but apart from this little is
known about the difference.

What we demonstrate in Sec.~\ref{sep2_sct2} of this chapter is that the ensemble
$\{p_k,\ket{\phi_k}\}$ produced by a separable operation acting on a pure
state $\ket{\psi}$, see \eqref{sep2_eqn5}, satisfies a majorization condition
\eqref{sep2_eqn7}, which is already known to be a necessary and sufficient
condition for producing the same ensemble from the same $\ket{\psi}$ by LOCC.
Among the consequences discussed in Sec.~\ref{sep2_sct5} are: a separable
operation acting on a pure state can be ``simulated'' by LOCC; a necessary
condition for a deterministic transformation $\ket{\psi}\rightarrow\ket{\phi}$
given in \cite{PhysRevA.76.032310} can be replaced by a necessary and
sufficient majorization condition; and certain entanglement measures are
nonincreasing under separable operations.  Section~\ref{sep2_sct6} summarizes our
main result and indicates some open questions.

\section{Ensembles produced by separable operations on pure bipartite
states}\label{sep2_sct2}
\subsection{Majorization conditions\label{sep2_sct3}}

Let $\{A_k\otimes B_k\}_{k=1}^{N}$ be a separable operation on
$\HC_A\otimes\HC_B$, specified by $N$ Kraus operators satisfying the closure
condition \eqref{sep2_eqn2}.  Let $\ket{\psi}$ be a normalized entangled state on
$\HC_A\otimes\HC_B$ with Schmidt form
\begin{equation}
\label{sep2_eqn3} \ket{\psi}=\sum_{j=0}^{D-1}\sqrt{\lambda_j}\ket{a_j}\ket{b_j},
\end{equation} where $D=D_B$, and we assume without loss of generality that
$D_A\geq D_B$.  Here $\{\ket{a_j}\}$ and $\{\ket{b_j}\}$ are orthonormal bases
chosen so that the Schmidt weights (coefficients) $\lambda_j$ are in
increasing order, i.e.
\begin{equation}
\label{sep2_eqn4}
0\leqslant\lambda_0\leqslant\lambda_1\leqslant\cdots\leqslant\lambda_{D-1}.
\end{equation} The separable operation acting on $\ket{\psi}$ will produce an
ensemble $\{p_k,\ket{\phi_k}\}_{k=1}^N$, where
\begin{equation}
\label{sep2_eqn5} (A_k\otimes B_k)\ket{\psi}=\sqrt{p_k}\ket{\phi_k}
\end{equation} and
\begin{equation}
\label{sep2_eqn6} p_k=\matl{\psi}{A_k^\dagger A_k\otimes B_k^\dagger B_k}{\psi}.
\end{equation}

In \cite{PhysRevLett.83.1455} it was shown that such an ensemble
$\{p_k,\ket{\phi_k}\}_{k=0}^{N-1}$ can be produced from $\ket{\psi}$ by a suitable LOCC if
and only if the majorization inequalities
\begin{equation}
\label{sep2_eqn7} \sum_{k=0}^{N-1}p_kE_n(\ket{\phi_k})\leqslant E_n(\ket{\psi})
\end{equation} 
hold for $0\leqslant n\leqslant D-1$, where
\begin{equation}
\label{sep2_eqn8}
E_n(\ket{\psi})=\chi_n\big(\Tr_A\dyad{\psi}{\psi}\big)=\sum_{j=0}^{n}\lambda_j,
\end{equation} 
and similarly for the $\ket{\phi_k}$. Here $\Tr_A(\dyad{\psi}{\psi})$ is the
reduced density operator of $\dyad{\psi}{\psi}$ on Bob's side, and
$\chi_n(\cdot)$ is defined to be the sum of the first $n$ smallest eigenvalues
of its argument.  Note that we are assuming that $D=D_B\leq D_A$, because if
$D_B$ were greater than $D_A$ the extra zero eigenvalues in
$\Tr_A\dyad{\psi}{\psi}$ would cause confusion when using $\chi_n$.

Our main result is the following.
\begin{theorem}
  \label{sep2_thm1} The ensemble $\{p_k,\ket{\phi_k}\}_{k=0}^{N-1}$ can be produced by
  a bipartite separable operation acting on the normalized state $\ket{\psi}$
  if and only if the majorization condition defined by the collection of
  inequalities in \eqref{sep2_eqn7} is satisfied.
\end{theorem}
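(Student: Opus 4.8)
The forward (``if'') implication is immediate: the inequalities \eqref{sep2_eqn7} are precisely the Jonathan--Plenio condition \cite{PhysRevLett.83.1455} guaranteeing an LOCC protocol that produces $\{p_k,\ket{\phi_k}\}$ from $\ket{\psi}$, and every LOCC is in particular a separable operation, so nothing more is needed. All the work lies in the converse: I must show that \emph{any} ensemble arising from a separable operation via \eqref{sep2_eqn5} already obeys \eqref{sep2_eqn7}. Write $\sigma=\Tr_A\dyad{\psi}{\psi}$ and $\sigma_k=\Tr_A\dyad{\phi_k}{\phi_k}$ for the reduced states on $\HC_B$, so that $E_n(\ket{\psi})=\sum_{j=0}^{n}\lambda_j$ in \eqref{sep2_eqn8} is a sum of the smallest eigenvalues. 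Since $\Tr\sigma=\Tr\sigma_k=1$ and $\sum_k p_k=1$, passing to the complementary (largest) eigenvalues turns \eqref{sep2_eqn7} into the equivalent ``convexity-type'' family
\begin{equation}
\Phi_m(\sigma)\;\le\;\sum_k p_k\,\Phi_m(\sigma_k),\qquad m=D-1-n,
\end{equation}
where $\Phi_m(\cdot)$ denotes the sum of the $m$ largest eigenvalues. The plan is to prove this bound for every $m$.

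First I would fix $m$ and let $R$ be the orthogonal projector onto the span of the eigenvectors of $\sigma$ with the $m$ largest eigenvalues, so that the Ky Fan variational principle gives $\Phi_m(\sigma)=\Tr(R\sigma)=\matl{\psi}{I_A\otimes R}{\psi}$. For each Kraus index $k$ I would then \emph{choose} $\tilde R_k$ to be the projector onto the subspace $B_k(\mathrm{ran}\,R)$ of $\HC_B$; this has rank at most $m$ and, crucially, satisfies $\tilde R_k B_k R=B_k R$ (and, by adjoint, $R B_k^\dagger\tilde R_k=R B_k^\dagger$). Setting $\Omega=\sum_k A_k^\dagger A_k\otimes B_k^\dagger\tilde R_k B_k\ge0$, these two identities collapse the diagonal block to $R B_k^\dagger\tilde R_k B_k R=R B_k^\dagger B_k R$ and the off-diagonal block to $R B_k^\dagger\tilde R_k B_k(I-R)=R B_k^\dagger B_k(I-R)$. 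Summing over $k$ and invoking the closure condition \eqref{sep2_eqn2} then yields $(I_A\otimes R)\,\Omega\,(I_A\otimes R)=I_A\otimes R$ and $(I_A\otimes R)\,\Omega\,(I_A\otimes(I-R))=0$. Hence $\Omega$ is block-diagonal with respect to $I_A\otimes R$ and its complement, which forces $\Omega\ge I_A\otimes R$.

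With this operator inequality the theorem follows quickly. On one side, $\matl{\psi}{\Omega}{\psi}\ge\matl{\psi}{I_A\otimes R}{\psi}=\Phi_m(\sigma)$. On the other, unfolding the partial trace in \eqref{sep2_eqn5}--\eqref{sep2_eqn6} shows $\matl{\psi}{A_k^\dagger A_k\otimes B_k^\dagger\tilde R_k B_k}{\psi}=p_k\Tr(\tilde R_k\sigma_k)$, and since $\tilde R_k$ is a projector of rank at most $m$ and $\sigma_k\ge0$ we have $\Tr(\tilde R_k\sigma_k)\le\Phi_m(\sigma_k)$. Summing, $\matl{\psi}{\Omega}{\psi}=\sum_k p_k\Tr(\tilde R_k\sigma_k)\le\sum_k p_k\Phi_m(\sigma_k)$, which chains with the previous inequality to give $\Phi_m(\sigma)\le\sum_k p_k\Phi_m(\sigma_k)$, i.e.\ \eqref{sep2_eqn7}. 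I expect the genuinely creative step — and the main obstacle — to be the passage to the complementary largest-eigenvalue form combined with the choice $\tilde R_k=\mathrm{proj}\,B_k(\mathrm{ran}\,R)$: this is exactly what converts the \emph{coupled} closure condition \eqref{sep2_eqn2} (which, unlike an LOCC, binds Alice's and Bob's operators together) into the block-diagonality of $\Omega$. A minor point to check is that ranks strictly below $m$ cause no harm, which follows from $\sigma_k\ge0$, and that degenerate spectra of $\sigma$ need no special treatment since $R$ may be taken to be any maximizing projector.
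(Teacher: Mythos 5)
Your proof is correct, and it reaches the hard (converse) direction by a route genuinely different from the paper's. The paper proves a free-standing majorization theorem (its Theorem~\ref{sep2_thm2}): after converting states to operators via map-state duality in the Schmidt bases, it splits the diagonal matrix $\psi=\psi_n+\tilde\psi_n$ into its smallest and largest parts, applies the Ky Fan \emph{minimum} principle term by term with a projector onto the orthogonal complement of the range of $A_k\tilde\psi_n$ (its Lemma~\ref{sep2_lma1}), converts each resulting trace back into the matrix element $\matl{\psi_n}{A_k^\dagger A_k\otimes B_k^\dagger B_k}{\psi_n}$, and only then sums over $k$, bounding the sum by the operator norm $\|R\|$ of $R=\sum_k A_k^\dagger A_k\otimes B_k^\dagger B_k$ and specializing to $\|R\|=1$ under closure. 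You instead pass to the complementary (largest-eigenvalue) form, never invoke map-state duality, and choose per-$k$ projectors $\tilde R_k$ onto $B_k(\mathrm{ran}\,R)$; the single operator inequality $\Omega=\sum_k A_k^\dagger A_k\otimes B_k^\dagger\tilde R_k B_k\ge I_A\otimes R$, forced by the block structure that the closure condition imposes, then sandwiches the claim between two applications of the Ky Fan \emph{maximum} principle. The two projector choices are near mirror images (complement of the image of the top eigenspace under $A_k$, versus the image of the top eigenspace under $B_k$), but the assembly differs in substance: the paper's route buys generality, since its Theorem~\ref{sep2_thm2} holds for an arbitrary collection of product operators with $\|R\|$ as the constant, whereas your block-diagonality step genuinely needs $\sum_k A_k^\dagger A_k\otimes B_k^\dagger B_k=I_A\otimes I_B$ (with a general $R$ the off-diagonal block $(I_A\otimes R)\,R\,(I_A\otimes(I-R))$ no longer vanishes, so the argument as written does not yield the $\|R\|$ version). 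What your route buys is economy and transparency: it is basis-free, dispenses with the duality formalism and the truncation lemma, and isolates exactly where the coupled closure condition of a separable operation enters, namely in killing the off-diagonal block of $\Omega$. Your forward direction (Jonathan--Plenio plus LOCC $\subset$ separable) is identical to the paper's, and your closing remarks on low rank and degenerate spectra are the right minor checks.
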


\begin{proof}

  To simplify the proof we assume that $D_A=D_B=D$.  If $D_A$ is larger, one
always modify each $A_k$ by following it with a suitable local unitary which
has the result that as long as the Kraus operators are acting on a fixed
$\ket{\psi}$ the action on the $A$ side takes place in a subspace of $\HC_A$
of dimension $D$.  These local unitaries do not change the Schmidt weights of
the $\ket{\phi_k}$ or alter the closure condition \eqref{sep2_eqn2}. For more
details about this ``decoupling'' see \cite{PhysRevA.76.032310}.

  When the majorization condition \eqref{sep2_eqn7} holds the result in
\cite{PhysRevLett.83.1455} guarantees the existence of an LOCC (hence
separable operation) which will produce the ensemble out of $\ket{\psi}$.  The
reverse inference, that the ensemble $\{p_k,\ket{\phi_k}\}_{k=0}^{N-1}$ defined in
\eqref{sep2_eqn5} and \eqref{sep2_eqn6} satisfies \eqref{sep2_eqn7}, follows from noting that
\begin{equation}
\label{sep2_eqn9} p_kE_n(\ket{\phi_k})=\chi_n \big(\Tr_A[A_k\otimes
B_k\dyad{\psi}{\psi}A_k^\dagger\otimes B_k^\dagger]\big),
\end{equation} 
and applying Theorem~\ref{sep2_thm2} below with $R=I_A\otimes I_B$,
corresponding to \eqref{sep2_eqn2}, 
so $\|R\|=1$.
\end{proof}

\subsection{A majorization theorem\label{sep2_sct4}}

\begin{theorem}
\label{sep2_thm2} Let $\HC_A$ and $\HC_B$ have the same dimension $D$, let
$\ket{\psi}$ be some pure state on $\HC_A \otimes\HC_B$, and let $\{A_k\otimes
B_k\}_{k=0}^{N-1}$ be any collection of product operators on $\HC_A\otimes \HC_B$.
Then for every $0\leqslant n\leqslant {D-1}$
\begin{equation}
\label{sep2_eqn10} \sum_{k=0}^{N-1}\chi_n\big(\Tr_A[A_k\otimes
B_k\dyad{\psi}{\psi}A_k^\dagger\otimes B_k^\dagger]\big)
\leqslant\|R\|\chi_n\big(\Tr_A\dyad{\psi}{\psi}\big),
\end{equation} 
where $\|R\|=\sup_{\|\omega\|=1}\|R\ket{\omega}\|$ is the
largest eigenvalue of the positive operator
\begin{equation}
\label{sep2_eqn11} R=\sum_{k=0}^{N-1} A_k^\dagger A_k\otimes B_k^\dagger B_k.
\end{equation}
\end{theorem}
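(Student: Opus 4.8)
The plan is to combine map-state duality with the Ky Fan variational characterization of $\chi_n$. Since $\chi_n(M)$ is the sum of the smallest eigenvalues of a positive operator $M$ (cf. \eqref{sep2_eqn8}), it has the variational form $\chi_n(M)=\min_{\dim V=n+1}\Tr(M P_V)$, the minimum running over projectors $P_V$ onto $(n+1)$-dimensional subspaces $V\subseteq\HC_B$. Two reductions simplify matters. First, as in \cite{PhysRevA.76.032310}, following each $A_k$ by a suitable local unitary lets one assume $\dim\HC_A=\dim\HC_B=D$ without altering either side of \eqref{sep2_eqn10}. Second, since $R\le\|R\|\,I$ by definition of the operator norm, rescaling $A_k\mapsto A_k/\sqrt{\|R\|}$ reduces the claim to the case $\|R\|=1$, i.e. $R\le I$, in which one must show $\sum_k\chi_n(\rho_k)\le\chi_n(\rho)$ with $\rho=\Tr_A\dyad{\psi}{\psi}$ and $\rho_k=\Tr_A[(A_k\otimes B_k)\dyad{\psi}{\psi}(A_k\otimes B_k)^\dagger]$.

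Passing to dual operators, let $\psi$ be dual to $\ket{\psi}$ and $C_k=A_k\psi\bar B_k$ dual to $(A_k\otimes B_k)\ket{\psi}$, with $\bar B_k$ the transpose in the fixed basis $\{\ket{b_j}\}$ as in \cite{PhysRevA.76.032310}. Then the eigenvalues of $\rho_k$ are the squared singular values of $C_k$ and those of $\rho$ are the squared singular values of $\psi$, so $\chi_n(\rho_k)$ is the sum of the smallest $n+1$ squared singular values of $C_k$. From the operator inequalities $A_k^\dagger A_k\le\|A_k\|^2 I$ and $B_k^\dagger B_k\le\|B_k\|^2 I$ and monotonicity of eigenvalues one gets $\sigma_j(A_k\psi\bar B_k)\le\|A_k\|\,\|B_k\|\,\sigma_j(\psi)$ for every $j$, hence the per-term bound $\chi_n(\rho_k)\le\|A_k^\dagger A_k\otimes B_k^\dagger B_k\|\,\chi_n(\rho)$. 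This shows $\|R\|$ is the right order of magnitude and isolates the remaining difficulty.

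The core step is to produce the sum with the \emph{norm of $R$} rather than the \emph{sum of the norms}. Taking $\Pi_k$ to be the projector onto the $n+1$ smallest eigenvectors of $\rho_k$, the variational principle gives $\sum_k\chi_n(\rho_k)=\bra{\psi}\hat R\ket{\psi}$ with $\hat R=\sum_k A_k^\dagger A_k\otimes B_k^\dagger\Pi_k B_k\le R\le\|R\|\,I$, so the goal is to improve the trivial bound $\bra{\psi}\hat R\ket{\psi}\le\|R\|$ to the sharp $\bra{\psi}\hat R\ket{\psi}\le\|R\|\,\bra{\psi}(I_A\otimes P_W)\ket{\psi}$, where $W$ is the span of the Schmidt vectors of $\ket{\psi}$ of smallest weight, so that the right side equals $\|R\|\chi_n(\rho)$. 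Equivalently one chooses for each $k$ an $(n+1)$-dimensional $V_k\subseteq\HC_B$ adapted to the $k$-th output (the natural candidates being $B_kW$, or the low-weight Schmidt subspace of $(A_k\otimes B_k)\ket{\psi}$) and applies the closure relation only \emph{after} summing over $k$ — exactly as it works transparently when all $A_k,B_k$ are diagonal in the Schmidt bases, where the small output directions coincide with $W$, the cross terms vanish, and every diagonal entry of $R$ is at most $\|R\|$.

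The hard part is controlling the cross terms created because $A_k,B_k$ are not unitary: a non-unitary local filter does not preserve the orthogonal splitting $\ket{\psi}=\ket{\psi_W}+\ket{\psi_{W^\perp}}$, so $B_kW$ and $B_kW^\perp$ need not be orthogonal and no single $(n+1)$-dimensional $V_k$ can simultaneously capture the image of the small part and avoid that of the large part. With the naive choice $V_k=B_kW$ one computes $\sum_k\bra{\psi}(A_k^\dagger A_k\otimes B_k^\dagger P_{V_k}B_k)\ket{\psi}=\bra{\psi}R\ket{\psi}-\sum_k\bra{\psi_{W^\perp}}(A_k^\dagger A_k\otimes B_k^\dagger P_{(B_kW)^\perp}B_k)\ket{\psi_{W^\perp}}$, and the leftover cross terms $2\,\mathrm{Re}\,\bra{\psi_W}R\ket{\psi_{W^\perp}}$ have no fixed sign, so this choice alone does not close the argument and a more careful, quantitative estimate of the overlap is required. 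The essential constraint guiding that estimate is that the product structure $A_k\otimes B_k$ must genuinely be used: for arbitrary (non-product) Kraus operators the inequality is simply false — a single global $K$ can send a nearly product $\ket{\psi}$ to a maximally entangled state while keeping $\|K^\dagger K\|$ bounded — so the proof cannot be a generic matrix inequality and must exploit that product operators cannot raise the Schmidt rank and can amplify entanglement only in the limited fashion quantified by the singular-value submultiplicativity above.
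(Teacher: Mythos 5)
Your setup is sound and you have correctly isolated where the difficulty lies, but the proposal stalls exactly there and, by your own admission, does not close: this is a genuine gap, not a proof. The missing idea is the choice of projector made in the paper's Lemma~\ref{sep2_lma1}. You insist on subspaces that \emph{capture the small part of the output} --- either $\Pi_k$, the projector onto the eigenvectors of the smallest eigenvalues of $\rho_k$ (which gives the equality $\chi_n(\rho_k)=\Tr[\Pi_k\rho_k\Pi_k]$), or $V_k=B_kW$ --- and then try to compare these with the small Schmidt subspace $W$ of the input. Any such choice forces you to estimate cross terms of the form $\bra{\psi_W}\cdots\ket{\psi_{W^\perp}}$, which, as you note, have no definite sign. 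The paper instead chooses, for each $k$, the projector $P_n^{(k)}$ onto the \emph{orthogonal complement of the range of} $A_k\tilde\psi_n$, where $\psi=\psi_n+\tilde\psi_n$ splits the diagonal dual operator of $\ket{\psi}$ into its small- and large-Schmidt-weight parts. This projector does not track the small output eigenvectors, and it does not need to: since $\mathrm{rank}(A_k\tilde\psi_n)\leqslant D-n$, the projector $P_n^{(k)}$ has rank at least $n$, so the variational principle you quote still applies as an \emph{inequality}, $\chi_n(\rho_k)\leqslant\Tr[P_n^{(k)}\rho_k P_n^{(k)}]$. The payoff is that the cross terms you were trying to estimate vanish identically rather than needing a bound: $P_n^{(k)}A_k\psi\bar B_k=P_n^{(k)}A_k\psi_n\bar B_k$ because $P_n^{(k)}A_k\tilde\psi_n=0$ by construction.

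After this observation the argument closes in three lines. Dropping $P_n^{(k)}\leqslant I$ against a positive operator gives
\begin{equation*}
\chi_n(\rho_k)\leqslant\Tr\bigl[A_k\psi_n\bar B_k\bar B_k^\dagger\psi_n^\dagger A_k^\dagger\bigr]
=\matl{\psi_n}{A_k^\dagger A_k\otimes B_k^\dagger B_k}{\psi_n},
\end{equation*}
where $\ket{\psi_n}$ is the truncated state dual to $\psi_n$. The right-hand side no longer involves any output-dependent subspace, so summing over $k$ yields $\matl{\psi_n}{R}{\psi_n}\leqslant\|R\|\ip{\psi_n}{\psi_n}=\|R\|\chi_n\bigl(\Tr_A\dyad{\psi}{\psi}\bigr)$; the norm of $R$ is used once, at the very end --- precisely the ``apply closure only after summing'' step you wanted but could not implement with your $V_k$. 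In short: do not look for a subspace adapted to the small part of the $k$-th output; look for one that exactly annihilates the image of the large part of the input under $A_k$. Your closing observation that the product structure must be used is correct, and this is exactly where it enters: the factorization $C_k=A_k\psi\bar B_k$ is what allows a projector acting on one side alone to kill $A_k\tilde\psi_n$ identically, independently of $\bar B_k$.
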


\begin{proof} By map-state duality
\cite{PhysRevA.76.032310,OSID.11.3,PhysRevA.73.052309}, using the Schmidt bases
of $\ket{\psi}$, we transform the state $A_k\otimes B_k\ket{\psi}$ to a map
$A_k\psi \bar B_k$, where
\begin{equation}
\label{sep2_eqn12} \psi=\sum_{j=0}^{D-1}\sqrt{\lambda_j}\dyad{a_j}{b_j}.
\end{equation} 
denotes an operator mapping $\HC_B$ to $\HC_A$, and $\bar B_k=
B_k^T$ is the transpose of $B_k$. The matrix of $\psi$ using the Schmidt bases
of $\ket{\psi}$ is diagonal, with the entries on the diagonal in
increasing order. (See Sec. II of \cite{PhysRevA.76.032310} for more details
on map-state duality.)
Upon writing the partial traces as
\begin{equation}
\label{sep2_eqn13} \Tr_A\dyad{\psi}{\psi}=\psi\psi^\dagger,\quad
 \Tr_A[A_k\otimes B_k\dyad{\psi}{\psi}A_k^\dagger\otimes
B_k^\dagger]= A_k\psi {\bar B_k}{\bar B_k}^\dagger\psi^\dagger A_k^\dagger,
\end{equation} 
the inequalities \eqref{sep2_eqn10} become:
\begin{equation}
\label{sep2_eqn14} \sum_{k=0}^{N-1}\chi_n(A_k \psi \bar B_k{\bar
B_k}^\dagger\psi^\dagger A_k^\dagger) \leqslant\|R\|\chi_n(\psi\psi^\dagger).
\end{equation}

For some $n$ between $0$ and $D-1$ write the diagonal matrix $\psi$ as
\begin{equation}
\label{sep2_eqn15} \psi=\psi_n+\tilde\psi_{n},
\end{equation} where $\psi_n$ is the same matrix but with $\lambda_{n},
\lambda_{n+1},\ldots$ set equal to zero, while $\tilde\psi_n$ is obtained by
setting $\lambda_0, \lambda_1,\ldots \lambda_{n-1}$ equal to zero.
Lemma~\ref{sep2_lma1}, below, tells us that for each $k$,
\begin{equation} \chi_n(A_k \psi \bar B_k {\bar B_k}^\dagger\psi^\dagger
A_k^\dagger ) \leqslant \Tr(A_k \psi_n \bar B_k {\bar
B_k}^\dagger\psi_n^\dagger A_k^\dagger ).
\label{sep2_eqn16}
\end{equation}
By map-state duality,
\begin{equation}
\label{sep2_eqn17} \Tr(A_k \psi_n \bar B_k {\bar B_k}^\dagger\psi_n^\dagger
A_k^\dagger )= \matl{\psi_n}{A_k^\dagger A_k\otimes B_k^\dagger B_k}{\psi_n}
\end{equation} where $\ket{\psi_n}$, the counterpart of $\psi_n$, is given by
\eqref{sep2_eqn3} with $D$ replaced by $n$. Inserting \eqref{sep2_eqn17} in
\eqref{sep2_eqn16} and summing over $k$, see \eqref{sep2_eqn11}, yields
\begin{equation}
\label{sep2_eqn18} \sum_{k=0}^{N-1}\chi_n(A_k \psi \bar B_k{\bar
B_k}^\dagger\psi^\dagger A_k^\dagger )
\leqslant\matl{\psi_n}{R}{\psi_n}
\leqslant\|R\|\ip{\psi_n}{\psi_n}=\|R\|\chi_n(\psi^\dagger\psi).
\end{equation}
This establishes \eqref{sep2_eqn14}, which is equivalent to
\eqref{sep2_eqn10}.
\end{proof}

\begin{lemma}
\label{sep2_lma1} Let $A$, $B$, and $\psi$ be $D\times D$ matrices, where $\psi$ is
diagonal with nonnegative diagonal elements in increasing order, and for some
$0\leqslant n\leqslant D-1$ let $\psi_n$ be obtained from $\psi$ by setting all
but the $n$ smallest diagonal elements equal to 0, as in \eqref{sep2_eqn15}. Then
\begin{equation}
\label{sep2_eqn19} \chi_n(A \psi B{B}^\dagger\psi^\dagger A^\dagger )\leqslant
\Tr(A \psi_n B{B}^\dagger\psi_n^\dagger A^\dagger ).
\end{equation}
\end{lemma}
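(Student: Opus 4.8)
The plan is to collapse the whole expression onto a single matrix and then exploit the low rank produced by the truncation. Write $X=A\psi B$, $X_n=A\psi_n B$ and $\tilde X_n=A\tilde\psi_n B$, so that $X=X_n+\tilde X_n$ and $XX^\dagger=A\psi B B^\dagger\psi^\dagger A^\dagger$, $X_nX_n^\dagger=A\psi_n B B^\dagger\psi_n^\dagger A^\dagger$. Then the claimed inequality \eqref{sep2_eqn19} reads $\chi_n(XX^\dagger)\le\Tr(X_nX_n^\dagger)$. The decisive observation is purely dimensional: $\tilde\psi_n$ is diagonal with at most $D-n$ nonzero entries, so $\operatorname{rank}(\tilde X_n)\le D-n$, and therefore the orthogonal complement $W:=\operatorname{range}(\tilde X_n)^\perp$ in $\mathbb{C}^D$ has $\dim W\ge n$.

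First I would recall the variational characterization of the sum of the $n$ smallest eigenvalues of a positive operator $M$ (the ``min'' form of Ky Fan's theorem, obtained by applying the usual maximum principle to $-M$):
\begin{equation}
\chi_n(M)=\min_{\dim V=n}\Tr(P_V M P_V),
\end{equation}
the minimum running over $n$-dimensional subspaces $V$ with orthogonal projector $P_V$. To bound $\chi_n(XX^\dagger)$ from above it suffices to produce one convenient subspace, and I would take any $n$-dimensional $V\subseteq W$, which exists precisely because $\dim W\ge n$. For this $V$ the projector $P_V$ kills every column of $\tilde X_n$, i.e.\ $P_V\tilde X_n=0$, so $P_VX=P_VX_n$ and
\begin{equation}
\chi_n(XX^\dagger)\le\Tr(P_V XX^\dagger P_V)=\Tr(P_V X_nX_n^\dagger P_V)\le\Tr(X_nX_n^\dagger),
\end{equation}
the last inequality using $P_V\le I$ together with the positivity of $X_nX_n^\dagger$. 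This is exactly \eqref{sep2_eqn19}.

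The steps that demand care are bookkeeping rather than genuine obstacles. One must read $\chi_n$ as the sum of the $n$ smallest eigenvalues, so that it is the minimum (not the maximum) principle that applies; and one must check the rank count $\operatorname{rank}(\tilde X_n)\le D-n$ that guarantees $W$ is large enough to contain an $n$-dimensional $V$. Note that no invertibility of $A$, $B$ or $\psi_n$ is needed, since the argument uses only an upper bound on $\operatorname{rank}(\tilde X_n)$; a possibly smaller rank of $X_n$ does no harm. I expect the one conceptual point to be the recognition that truncating the \emph{small} Schmidt directions (keeping $\psi_n$) is best controlled by projecting onto the orthogonal complement of the range of the \emph{complementary} operator $\tilde X_n$ built from the large directions, after which the inequality is immediate from the contractivity of $P_V$.
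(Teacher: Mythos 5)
Your proof is correct and follows essentially the same route as the paper's: both invoke the Ky Fan minimum principle for $\chi_n$ and choose a projector annihilating the "large" part $A\tilde\psi_n(\cdot)$, using the rank bound $\operatorname{rank}(\tilde\psi_n)\leqslant D-n$ to guarantee enough dimensions, then finish with $P\leqslant I$ and positivity. The only (immaterial) differences are that the paper projects onto the orthogonal complement of the range of $A\tilde\psi_n$ rather than of $A\tilde\psi_n B$, and works with projectors of rank at least $n$ instead of exactly $n$.
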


\begin{proof}

The inequality
\begin{equation}
\label{sep2_eqn20} \chi_n(A \psi B{B}^\dagger\psi^\dagger A^\dagger )\leqslant
\Tr(P_nA \psi B{B}^\dagger\psi^\dagger A^\dagger P_n),
\end{equation} 
where $P_n$ is a projector (orthogonal projection operator) of rank at least
$n$, follows from the fact that for any Hermitian operator $T$ the sum of its
$n$ smallest eigenvalues is the minimum of $\Tr(P_nTP_n)$ over such $P_n$, see
page~24 of \cite{Bhatia:MatrixAnalysis}.  Choose $P_n$ to be the
projector onto the orthogonal complement of the range of $A\tilde\psi_{n}$,
where $\tilde\psi_n=\psi-\psi_n$, as in \eqref{sep2_eqn15}.  The rank of
$A\tilde\psi_{n}$ is no larger than the rank of $\tilde\psi_n$, which is
smaller than or equal to $D-n$. Thus the dimension of the range of
$A\tilde\psi_{n}$ cannot exceed $D-n$, so the rank of $P_n$ is at least $n$.
By construction, $P_nA\tilde\psi_{n}=0$, so
\begin{equation}
\label{sep2_eqn21} P_nA\psi=P_nA(\psi_n+\tilde\psi_{n})=P_nA\psi_n.
\end{equation} 
Thus with this choice of $P_n$ the right side of \eqref{sep2_eqn20} is
\begin{equation}
\label{sep2_eqn22} \Tr(P_n A \psi_n B{B}^\dagger\psi_n^\dagger A^\dagger P_n),
\end{equation} 
and this implies \eqref{sep2_eqn19}, since $P_n\leqslant I$ and $A \psi_n B
{B}^\dagger\psi_n^\dagger A^\dagger $ is positive.
\end{proof}

\section{Consequences\label{sep2_sct5}} The following are some consequences of
Theorem~\ref{sep2_thm1}.
\begin{itemize}
 \item[i)] An ensemble
$\{p_k,\ket{\phi_k}\}$ can be produced by a separable operation acting on
a bipartite entangled state
$\ket{\psi}$ if and only if it can be produced by some LOCC acting on the same
state $\ket{\psi}$.

\item[ii)] For a given bipartite $\ket{\psi}$ and separable operation
  $\{A_k\otimes B_k\}_{k=0}^{N-1}$ there is another operation of the form 
$\{\hat A_l\otimes U_l\}_{l=0}^{M-1}$, where the $U_l$ are unitary operators (and the closure condition is $\sum_{l=0}^{M-1}\hat A_l^\dagger \hat A_l=I_A$), which 
produces the same ensemble when applied to $\ket{\psi}$. Here $M$ could be
different from $N$, as two Kraus operators might yield the same $\ket{\phi_k}$. For more details about the relation between the $\{A_k,B_k\}_{k=0}^{N-1}$ set and the $\{\hat A_l\otimes U_l\}_{k=0}^{M-1}$ set see \cite{PhysRevA.63.022301}.

\item[iii)] A deterministic transformation $\ket{\psi}\rightarrow\ket{\phi}$
  by a separable operation is possible if and only if
  $E_n(\ket{\phi})\leqslant E_n(\ket{\psi})$ for every $n$ between $0$ and
  $D-1$, with $E_n(.)$ defined in \eqref{sep2_eqn8} This is often written as
  $\lambda_\psi\prec\lambda_\phi$, where $\lambda_\psi$ and $\lambda_\phi$ are
  vectors of the corresponding Schmidt weights.  (This extends 
  Theorem~1 in \cite{PhysRevA.76.032310}.) 

\item[iv)] The maximum probability of success for the
transformation $\ket{\psi}\rightarrow\ket{\phi}$ by a separable operation is
given by
\begin{equation}
\label{sep2_eqn23}
p_{max}^{SEP}(\ket{\psi}\rightarrow\ket{\phi})=\min_{n\in[0,D-1]}\frac{E_n(\ket{\psi})}{E_n(\ket{\phi})},
\end{equation} where $E_n(\cdot)$ was defined in \eqref{sep2_eqn8}.

\item[v)] An entanglement measures $E$ defined on pure bipartite
states is nonincreasing on average under separable
operations, which is to say
\begin{equation}
\label{sep2_eqn24} E(\ket{\psi})\geqslant\sum_{k=0}^{N-1} p_kE(\ket{\phi_k})
\end{equation}
if and only if it is similarly nonincreasing under LOCC.

\item[vi)] Let
\begin{equation}
\label{sep2_eqn25} \hat E(\rho)=\inf\sum_{i}p_iE(\ket{\psi_i}),
\end{equation} 
with the infimum over all ensembles $\{p_i,\ket{\psi_i}\}$ yielding the
density operator $\rho=\sum_ip_i\dyad{\psi_i}{\psi_i}$, be the convex roof
extension of a pure state entanglement measure $E$ that is monotone on pure
states in the sense of \eqref{sep2_eqn24}.  Then $\hat E$ is monotone on mixed
states in the sense that
\begin{equation}
\label{sep2_eqn26} \hat E(\rho)\geqslant\sum_{k=0}^{N-1} p_k\hat E(\sigma_k)
\end{equation} 
for any ensemble $\{p_k,\sigma_k\}$ produced from $\rho$ by separable
operations.

\end{itemize}

The result (i) is an immediate consequence of Theorem~\ref{sep2_thm1}, as the
same majorization condition applies for both separable and LOCC.  Then
(ii), (iii), and (iv) are immediate consequences of known results, in 
 \cite{PhysRevA.63.022301}, \cite{PhysRevLett.83.436}, and 
\cite{PhysRevLett.83.1046}, respectively, for LOCC.  The result (v) is an
obvious consequence of (i), whereas (vi) follows from general arguments
about convex roof extensions; see Sec.~XV.C.2
of \cite{RevModPhys.81.865}.

\section{Conclusion\label{sep2_sct6}} 

We have shown that possible ensembles of states produced by applying a separable
operation to a bipartite entangled pure state can be exactly characterized
through a majorization condition, the collection of inequalities \eqref{sep2_eqn7} 
for different $n$.  These have long been known to be necessary and sufficient
conditions for producing such an ensemble using LOCC, so their extension to
the full class of separable operations is not altogether surprising, even if
our proof is not altogether straightforward.  

Connecting the full set of separable operations with the more specialized LOCC
class immediately yields several significant consequences for the former, as
indicated in the list in Sec.~\ref{sep2_sct5}, because much is already known about
the latter.  Of particular significance is that various entanglement measures
are monotone, meaning they cannot increase, under separable
operations---something expected on intuitive grounds, but now rigorously
proved.  Since such monotonicity under LOCC has long been considered a
necessary, or at least a very desirable condition for any ``reasonable''
entanglement measure on mixed states (see Sec.~XV.B of \cite{RevModPhys.81.865}), one
wonders whether monotonicity under separable operations, in principle a
stronger condition, might be an equally good or even superior desideratum.

Our results apply only to bipartite states, but separable operations and the
LOCC subclass can both be defined for multipartite systems.  Might it be that
in the multipartite case the ensemble produced by applying a separable
operation to a pure entangled state could also be produced by some LOCC
applied to the same state?  It might be, but proving it would require very
different methods than used here.  There are no simple multipartite analogs of
the Schmidt representation \eqref{sep2_eqn3}, the majorization condition
\eqref{sep2_eqn7}, or map-state duality.

Even in the bipartite case we still know very little about separable
operations which are \emph{not} LOCC, aside from the fact that they exist and
can be used to distinguish certain collections of orthogonal states more
effectively than LOCC.  The results in this chapter contribute only indirectly
to a better understanding of this matter: looking at what a separable
operation does when applied to a \emph{single} entangled state will not help;
one must ask what it does to several different states.

\chapter{Local cloning of entangled states by separable operations\label{chp4}}

\section{Introduction\label{lcl_sct1}}
As summarized by the ``no-cloning" theorem of \cite{Nature.299.802}\index{no-cloning
 theorem}, any set of quantum states can be deterministically cloned if and only if the states in the set are mutually orthogonal\index{cloning}. When the set consists of bipartite entangled states, and the cloning is restricted to local operations and classical communication (LOCC), the problem becomes much more difficult, and further restrictions have to be imposed. The mere orthogonality of the states no longer implies that they can be (locally) cloned. 

The local cloning protocol\index{local cloning} of a set of bipartite entangled states $\SC=\{\ket{\psi_i}^{AB}\}$ is schematically represented as
\begin{equation}\label{lcl_eqn1}
\ket{\psi_i}^{AB}\otimes\ket{\phi}^{ab}\longrightarrow\ket{\psi_i}^{AB}\otimes\ket{\psi_i}^{ab},\text{ }\forall i,
\end{equation}
where the letters $A,a$ label Alice's systems and $B,b$ label Bob's systems. Both parties are assumed to have access to ancillary qudits and may share a classical communication channel, so that in principle any LOCC operation can be performed. The state $\ket{\phi}$ is shared in advance between the parties, and it plays the role of a ``blank state" on which the copy of $\ket{\psi_i}$ is to be imprinted. 

The local cloning problem has recently  received a great deal of attention \cite{PhysRevA.69.052312,NewJPhys.6.164,PhysRevA.74.032108,PhysRevA.73.012343,PhysRevA.76.052305}, and was partially extended to tripartite systems in \cite{PhysRevA.76.062312}. 
The question addressed in all previous work was  which sets of states $\SC$ can be locally cloned (by LOCC) using a given blank state $\ket{\phi}$. 

Note that if one can use LOCC to transform $\ket{\phi}$ into three maximally entangled states of sufficient Schmidt rank, then the local cloning of any set of bipartite orthogonal entangled states becomes trivially possible, using teleportation: Alice uses one maximally entangled state to teleport her part of $\ket{\psi_i}$ to Bob, who then distinguishes it (i.e. learns $i$), and next communicates the result back to Alice. Now both Alice and Bob know which state was fed into the local cloning machine. Finally they transform deterministically the two remaining maximally entangled states into $\ket{\psi_i}\otimes\ket{\psi_i}$ by LOCC, which is always possible, 
according to \cite{PhysRevLett.83.436}.

Another possible scenario that uses only two entangled blank states involves using LOCC to deterministically distinguish which state $\ket{\psi_i}$ was fed into the local cloning machine, which can always be done if there are only two states in the set $\SC$ \cite{PhysRevLett.85.4972}. Then, knowing the state, one can deterministically transform the two blank states into $\ket{\psi_i}\otimes\ket{\psi_i}$ (by LOCC). In this case, one needs at least two maximally entangled resource states, one for each of the two copies that must now be created, since in general the entanglement of the original state will have been destroyed in the process of distinguishing the states \cite{PhysRevA.75.052313}. 

One might hope, however, that local cloning can be performed using even less entanglement. As first shown in \cite{PhysRevA.69.052312}, this hope is sometimes correct. Any two (and not more) two-qubit Bell states can be locally cloned using a two-qubit maximally entangled state.

This result was further extended in \cite{NewJPhys.6.164} and \cite{PhysRevA.74.032108}, which considered local cloning of maximally entangled states on higher-dimensional $D\times D$ systems using a maximally entangled resource of Schmidt rank $D$. First, necessary and sufficient conditions for the local cloning of two maximally entangled states were provided in \cite{NewJPhys.6.164}, which also proved that for $D=2$ (qubits) or $D=3$ (qutrits), any pair of maximally entangled states can be locally cloned with a maximally entangled blank state. Whenever $D$ is not prime the authors showed that there always exist pairs of maximally entangled states that cannot be locally cloned with a maximally entangled blank state. A generalization to more than 2 states but prime $D$ was given in \cite{PhysRevA.74.032108},  which showed  that a set of $D$ maximally entangled states can be locally cloned using a maximally entangled resource if and only if the states in the set are locally (cyclically) shifted
\begin{equation}\label{lcl_eqn2}
\ket{\psi_i}=\frac{1}{\sqrt{D}}\sum_{r=0}^{D-1}\ket{r}^A\ket{r\oplus i}^B,
\end{equation}
where the $\oplus$ symbol denotes addition modulo $D$.

Kay and Ericsson \cite{PhysRevA.73.012343} extended the above results to the LOCC cloning of full Schmidt rank partially entangled states using a maximally entangled blank state. They presented an explicit protocol for the local cloning of a set of $D\times D$ cyclically shifted partially entangled states
\begin{equation}\label{lcl_eqn3}
\ket{\psi_i}=\sum_{r=0}^{D-1}\sqrt{\lambda_r}\ket{r}^A\ket{r\oplus i}^B,
\end{equation}
and asserted that \eqref{lcl_eqn3} is also a necessary condition for such cloning; that the states to be cloned must be of this form. Unfortunately, the proof is not correct  
\footnote{The matter was discussed with Kay \cite{GheorghiuAlastair}. The fact that the argument is not correct can be observed after a careful reading of the paragraph following Eq. (3) in \cite{PhysRevA.73.012343}.
 The authors claim that the local cloning of partially entangled states is equivalent to the cloning of maximally entangled states, but this statement is incorrect, because the authors implicitly modified the Kraus operators that defined the local cloning, i.e. changed $A_k$ to $A_k'=A_kM_0$, where $M_0$ (defined in Eq. (3) of \cite{PhysRevA.73.012343}) is the operator that transforms the maximally entangled state ${(1/\sqrt{D})\sum_{r=0}^{D-1}\ket{r}^A\ket{r}^B}$ to the partially entangled state ${|\psi_0\rangle=\sum_{r=0}^{D-1}\sqrt{\lambda_r}\ket{r}^A\ket{r}^B}$. 
The new Kraus operators do not satisfy the closure condition anymore (necessary for a deterministic transformation), since $\sum_k {A_k'}^\dagger A_k' \otimes {B_k}^\dagger B_k=\sum_k M_0^\dagger({A_k}^\dagger A_k)M_0 \otimes {B_k}^\dagger B_k=M_0^\dagger M_0\otimes I\neq I\otimes I$, because $M_0$ is not a unitary operator (unless ${\ket{\psi_0}}$ is maximally entangled, case excluded). \\ \\ Another way of seeing that the argument is not correct is to observe that, if the $B_k$ operator performs the cloning of a maximally entangled state using a maximally entangled blank, as it is claimed, then $B_k$ must be proportional to a unitary operator, see Theorem~1(iii) of \cite{PhysRevA.76.032310} and Sec. 3.1 of \cite{NewJPhys.6.164}. It then follows that the closure condition for the Kraus operators is not satisfied, with $A_k$ as defined in Eq. (3) of \cite{PhysRevA.73.012343}.},
 and therefore finding necessary conditions when the states are partially entangled remains an open problem.

In this chapter, we consider a set $\SC=\{\ket{\psi_i}^{AB}\}$ of full Schmidt rank qudit (of arbitrary dimension) partially entangled states. Actually, we will begin by considering sets $\SC$ in which only one state is required to be full Schmidt rank, and then we will see that in fact, all states in $\SC$ must be full rank. Previous work assumed the blank state $\ket{\phi}$ to be maximally entangled, but in the present chapter we do not impose any \emph{a priori} assumptions on $\ket{\phi}$ and find that its Schmidt rank must be at least that of the states in $\SC$. Furthermore, we do not restrict to LOCC cloning, but allow for the more general class of separable operations --- all the necessary conditions we find for separable operations will also be necessary for LOCC since the latter is a (proper) subset of the former \cite{PhysRevA.59.1070}.

The remainder of the chapter is organized as follows. In the next section we give a preliminary discussion and define some terms that will be used. Then, in Sec.~\ref{lcl_sct3}, we turn to the characterization of clonable sets of states, where we show that $\ket{\phi}$ and all states in $\SC$ must be full Schmidt rank, provide additional necessary conditions on $\SC$, and then prove the group structure of these sets. From this group structure, it is then shown that the number of states in $\SC$ must divide $D$ exactly, and this is followed by a proof of a necessary (``group-shifted") condition on the local cloning of a set of $D\times D$ maximally entangled states. Then, in Sec.~\ref{lcl_sct5}, we further consider group-shifted sets, now allowed to be not maximally entangled, showing that a maximally entangled blank state is sufficient by giving an LOCC protocol that clones these states. This demonstrates that the necessary condition found in the previous section for cloning maximally entangled states is also sufficient for LOCC cloning. In Sec.~\ref{lcl_sct6}, we provide necessary conditions on the minimum entanglement in the blank. In addition, we obtain necessary and sufficient conditions for local cloning of any set when $D=2$ (entangled qubits), and for any group-shifted set for $D=3$ (entangled qutrits); in both these cases we find that the blank state must be maximally entangled, even when the states to be cloned are not. For higher dimensions with these group-shifted sets, we also show that the blank must have strictly more entanglement than the states to be cloned. Finally, Sec.~\ref{lcl_sct7} provides concluding remarks as well as some open questions. Longer proofs are presented in the Appendices.

\section{Preliminary remarks and definitions\label{lcl_sct2}}
A separable operation $\Lambda$ on a bipartite quantum system $\HC_A\otimes\HC_B$ is a transformation that can be written as
\begin{equation}
\label{lcl_eqn4}
\rho'=\Lambda(\rho)=\sum_{m=0}^{M-1}(A_m\otimes B_m)\rho(A_m\otimes B_m)^\dagger
\end{equation}
where $\rho$ is an initial density operator on the Hilbert space $\HC_A\otimes\HC_B$. The Kraus operators are arbitrary product operators satisfying the closure condition
\begin{equation}
\label{lcl_eqn5}
\sum_{m=0}^{M-1}A_m^\dagger A_m\otimes B_m^\dagger B_m=I_A\otimes I_B,
\end{equation}
with $I_A$ and $I_B$ the identity operators. The extension to multipartite systems is obvious, but here we will only consider the bipartite case. To avoid technical issues the sums in \eqref{lcl_eqn4} and \eqref{lcl_eqn5}, as well as the dimensions of $\HC_A$ and $\HC_B$, are assumed to be finite.

The local cloning protocol is described as follows.
Suppose Alice and Bob are two spatially separated parties, each holding a pair of quantum systems of dimension $D$, with Alice's systems described by a Hilbert space $\HC_A\otimes\HC_a$ and Bob's by $\HC_B\otimes\HC_b$. Let $\SC=\{\ket{\psi_i}^{AB}\}_{i=0}^{N-1}$ be a set of orthogonal bipartite entangled states on $\HC_A\otimes\HC_B$. Let $\ket{\phi}^{ab}\in\HC_a\otimes\HC_b$ be another bipartite entangled state that plays the role of a resource,  which we call the blank state, and is  shared in advance between Alice and Bob. Their goal is to implement deterministically (i.e. with probability one) the transformation
\begin{equation}
\label{lcl_eqn6}
\ket{\psi_i}^{AB}\otimes\ket{\phi}^{ab}\longrightarrow\ket{\psi_i}^{AB}\otimes\ket{\psi_i}^{ab}, \forall i=0\ldots N-1
\end{equation}
by a bipartite separable operation. Alice and Bob know exactly the states that belong to the set $\SC$ and also know the blank state $\ket{\phi}^{ab}$, but they do not know which state will be fed to the local cloning machine described by \eqref{lcl_eqn6} --- the machine has to work equally well for all states in $\SC$! Note that local cloning is defined up to local unitaries, i.e., a set $\SC=\{\ket{\psi_i}^{AB}\}_{i=0}^{N-1}$ can be locally cloned if and only if the set $\SC'=\{U^A\otimes V^B\ket{\psi_i}^{AB}\}_{i=0}^{N-1}$ can be locally cloned, where $U^A$ and $V^B$ are local unitaries. This is true because local unitaries can always be implemented deterministically at the beginning or at the end of the cloning operation.

The Schmidt coefficients of $\ket{\psi_i}^{AB}$ are labelled by $\lambda^{(i)}_r$ and by convention are sorted in decreasing order, with $\lambda^{(i)}_0 \geqslant \lambda^{(i)}_1\geqslant\cdots\geqslant\lambda^{(i)}_{D-1}$ and $\sum_{r=0}^{D-1}\lambda^{(i)}_r=1$, for all $i=0\ldots N-1$, and the Schmidt coefficients of $\ket{\phi}^{ab}$ are labelled by $\gamma_r$, with $\gamma_0\geqslant\gamma_1\cdots\geqslant\gamma_{D-1}$ and $\sum_{r=0}^{D-1}\gamma_r=1$. To remind the reader that the components of a vector $\vec{\lambda}$ are ordered in decreasing order we use the notation $\vec{\lambda}^{\downarrow}$.

The Schmidt rank of a bipartite state is the number of its non-zero Schmidt coefficients.
We say that a state of a $D\times D$ dimensional system has full Schmidt rank if its Schmidt rank is equal to $D$.

We use the concept of majorization, which is a partial ordering on $D$-dimensional real vectors. More precisely, if $\vec{x}=(x_0,\ldots,x_{D-1})$ and $\vec{y}=(y_0,\ldots,y_{D-1})$ are two real $D$-dimensional vectors, we say that $\vec{x}$ is majorized by $\vec{y}$ and write $\vec{x}\prec\vec{y}$ if and only if $\sum_{j=0}^{k}x_j^{\downarrow}\leqslant\sum_{j=0}^{k}y_j^{\downarrow}$ holds for all $k=0,\ldots, D-1$, with equality when $k=D-1$. 

For two $D\times D$ bipartite pure states ${|\chi\rangle}$ and ${|\eta\rangle}$, we use the shorthand notation ${|\chi\rangle\prec|\eta\rangle}$ to denote the fact that the vector of Schmidt coefficients of ${|\chi\rangle}$ is majorized by the vector of Schmidt coefficients of ${|\eta\rangle}$.
See \cite{PhysRevLett.83.436} or Chap. 12.5 of \cite{NielsenChuang:QuantumComputation} for more details about majorization.

The entanglement of a $D\times D$ bipartite pure state $\ket{\chi}$ can be quantified by various entanglement measures 
\footnote{Often called entanglement monotones, i.e., non-increasing under local operations and classical communication (LOCC).}, 
the ones used extensively in this chapter  being the \emph{entropy of entanglement}\index{entropy of entanglement}
\begin{equation}\label{lcl_eqn7}
E(\ket{\chi})=-\sum_{r=0}^{D-1}\lambda_r\log_D{\lambda_r}
\end{equation}
and the \emph{G-concurrence}\index{G-concurrence} \cite{PhysRevA.71.012318} 
\begin{equation}\label{lcl_eqn8}
C_G(\ket{\chi})=D\left(\prod_{r=0}^{D-1}{\lambda_r}\right)^{1/D},
\end{equation}
where $\lambda_r$ denotes the $r$-th Schmidt coefficient of $\ket{\chi}$. The base $D$ in the logarithm in \eqref{lcl_eqn7} as well as the prefactor $D$ in \eqref{lcl_eqn8} appear for normalization purposes, so that the entropy of entanglement as well as the $G$-concurrence of a maximally entangled state are both 1, regardless of the dimension.

\section{Characterizing sets of clonable states\label{lcl_sct3}}
\subsection{Preliminary analysis}
Mathematically, the local cloning problem can be formulated in terms of a separable transformation on a set of pure input states $\SC=\{\ket{\psi_i}^{AB}\}_{i=0}^{N-1}$, using a blank state $\ket{\phi}^{ab}$. 

If a set of states $\SC$ can be locally cloned using the blank state $\ket{\phi}^{ab}$, then there must exist a bipartite separable operation $\Lambda$ for which 
\begin{equation}
\label{lcl_eqn9}
\Lambda(\dya{\psi_i}^{AB}\otimes\dya{\phi}^{ab})=\dya{\psi_i}^{AB}\otimes\dya{\psi_i}^{AB},\text{ }\forall i=0\ldots N-1
\end{equation}
(note here that an overall phase factor in the definition of the individual states is of no significance). Since $\Lambda$ is separable, it can be represented by a set of product Kraus operators,
\begin{align}
\label{lcl_eqn10}
\sum_{m=0}^{M-1}(A_m\otimes B_m)(\dya{\psi_i}^{AB}\otimes\dya{\phi}^{ab})(A_m\otimes B_m)^{\dagger}\notag\\
=\dya{\psi_i}^{AB}\otimes\dya{\psi_i}^{AB},\text{ }\forall i=0\ldots N-1,
\end{align}
where operators $A_m$ act on $\HC_A\otimes\HC_a$, and $B_m$ on $\HC_B\otimes\HC_b$. The above equation is equivalent to
\begin{align}
\label{lcl_eqn11}
A_m&\otimes B_m(\ket{\psi_i}^{AB}\otimes\ket{\phi}^{ab})=\sqrt{p_{mi}}\mathrm{e}^{\mathrm{i}\varphi_{mi}}(\ket{\psi_i}^{AB}\otimes\ket{\psi_i}^{ab}),\notag\\
&\forall i=0\ldots N-1, \text{ }\forall m=0\ldots M-1
\end{align}
where $\mathrm{e}^{\mathrm{i}\varphi_{mi}}$ is a complex phase that may depend on $m$ and $i$, and $p_{mi}$ are probabilities for which
\begin{equation}
\label{lcl_eqn12}
\sum_{m=0}^{M-1} p_{mi}=1,\text{ }\forall i=0\ldots N-1.
\end{equation}

By map-state duality in the computational basis
\footnote{As an example of map-state duality, a bipartite state ${\ket{\chi}^{AB}\in\HC_A\otimes\HC_B}$, ${\ket{\chi}^{AB}=\sum c_{ij}\ket{i}^{A}\ket{j}^{B}}$, is transformed into a map ${\chi:\HC_B\longrightarrow\HC_A}$, ${\chi=\sum c_{ij}\ket{i}^{A}\bra{j}^{B}}$. Note that the rank of the operator ${\chi}$ is the Schmidt rank of ${\ket{\chi}^{AB}}$, and the squares of the singular values of ${\chi}$ (or, equivalently, the eigenvalues of ${\chi\chi^\dagger}$) are the Schmidt coefficients of ${\ket{\chi}^{AB}}$. For more details about map-state duality see Sec. II of \cite{PhysRevA.76.032310}.}
\cite{OSID.11.3,PhysRevA.73.052309,PhysRevA.76.032310,PhysRevA.78.020304} one can rewrite \eqref{lcl_eqn11} as
\begin{equation}
\label{lcl_eqn13}
A_m(\psi_i\otimes\phi)B^T_m=\sqrt{p_{mi}}\mathrm{e}^{\mathrm{i}\varphi_{mi}}\psi_i\otimes\psi_i, \text{ }\forall i,m,
\end{equation}
where $\psi_i$ and $\phi$ are now operators obtained from the corresponding kets by turning a ket into a bra, and $B^T_m$ is the transpose of $B_m$.

The superscripts in \eqref{lcl_eqn13} that label the Hilbert spaces have been dropped for clarity, since now one can regard everything as abstract linear operators, or matrices in the computational basis. 
Although map-state duality is basis-dependent, our results will not depend on the choice of a specific basis. 

We now state our first result characterizing sets of states $\SC$ that can be locally cloned.
\begin{theorem}[Rank of states in $\SC$]\label{lcl_thm1}
Let $\SC=\{\ket{\psi_i}^{AB}\}_{i=0}^{N-1}$ be a set of bipartite orthogonal states on $\HC_A\otimes\HC_B$ with one state, say $\ket{\psi_0}$, having full Schmidt rank. If the local cloning of $\SC$ is possible by a separable operation using a blank state $\ket{\phi}$, then $\ket{\phi}$ and all states in $\SC$ must be full rank.
\end{theorem}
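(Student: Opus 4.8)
The plan is to carry out the entire argument on the map-state dual equation \eqref{lcl_eqn13}, using the single fact that the Schmidt rank of a bipartite pure state equals the ordinary rank of its dual operator, and that rank cannot increase under left or right multiplication. Write $r_i$ for the Schmidt rank of $\ket{\psi_i}$ and $s$ for that of $\ket{\phi}$, and let $\MC_0=\{m:p_{m0}>0\}$, which is nonempty by \eqref{lcl_eqn12}. First I would settle the blank. Fix any $m\in\MC_0$. Since $\ket{\psi_0}$ is full Schmidt rank, the operator $\psi_0\otimes\psi_0$ dual to the output $\ket{\psi_0}^{AB}\otimes\ket{\psi_0}^{ab}$ has rank $D^2$, whereas the dual of the input, $\psi_0\otimes\phi$, has rank $Ds$. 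Because \eqref{lcl_eqn13} presents this output operator as $A_m(\psi_0\otimes\phi)B_m^T$, and multiplication cannot raise rank, $D^2\le Ds$, which forces $s=D$: the blank is full Schmidt rank.

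The second step records a strong consequence of full rank. With $\psi_0$ and $\phi$ both invertible, $\psi_0\otimes\phi$ is invertible, so for each $m\in\MC_0$ the product $A_m(\psi_0\otimes\phi)B_m^T$ attains the full rank $D^2$ of $\psi_0\otimes\psi_0$. A product of $D^2\times D^2$ matrices can have rank $D^2$ only if every factor does, so $A_m$ and $B_m$ are both invertible for every $m\in\MC_0$.

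The last step is where I expect the real care to lie, since a naive approach stalls here. One might fear that a deficient-rank $\ket{\psi_i}$ could be cloned using only Kraus labels $m$ that annihilate $\ket{\psi_0}^{AB}\otimes\ket{\phi}^{ab}$ (those with $p_{mi}>0$ but $p_{m0}=0$), so that the invertibility just established never touches $\ket{\psi_i}$; in that scenario a direct rank count yields only a weak bound such as $r_i\le D/\sqrt{2}$, not a contradiction. The resolution is the observation that dissolves the obstacle: for $m\in\MC_0$ the operator $A_m\otimes B_m$ is invertible, hence cannot annihilate the nonzero vector $\ket{\psi_i}^{AB}\otimes\ket{\phi}^{ab}$, so by \eqref{lcl_eqn11} one has $p_{mi}=\|(A_m\otimes B_m)(\ket{\psi_i}^{AB}\otimes\ket{\phi}^{ab})\|^2>0$ for \emph{every} $i$. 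Fixing one such $m\in\MC_0$ and reading \eqref{lcl_eqn13} for this $m$ and arbitrary $i$, the left side $A_m(\psi_i\otimes\phi)B_m^T$ has rank equal to that of $\psi_i\otimes\phi$, namely $r_iD$, while the right side is a nonzero multiple of $\psi_i\otimes\psi_i$, of rank $r_i^2$. Hence $r_iD=r_i^2$, and since $r_i\ge 1$ this gives $r_i=D$. Thus every state in $\SC$ is full Schmidt rank, which together with $s=D$ completes the proof; the crux is simply that invertible Kraus operators assign positive probability to every input, ruling out the troublesome disjoint-support case.
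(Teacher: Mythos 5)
Your proof is correct and follows essentially the same route as the paper's own: map-state duality applied to \eqref{lcl_eqn13}, the fact that the rank of a tensor product is the product of ranks, and that multiplication cannot increase rank, yielding first that $\phi$ is full rank and then that each $\psi_i$ is. If anything, you are more careful than the paper on one point: the paper asserts that $A_m$ and $B_m$ are full rank ``for each $m$,'' which the $i=0$ equation only forces when $p_{m0}>0$, and your observation that an invertible $A_m\otimes B_m$ cannot annihilate $\ket{\psi_i}^{AB}\otimes\ket{\phi}^{ab}$ (hence $p_{mi}>0$ for every $i$ once $m\in\MC_0$) cleanly closes that gloss, though the paper's conclusion survives because only one such Kraus label is ever needed.
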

\begin{proof}
This result follows directly from \eqref{lcl_eqn13}. If $\ket{\psi_0}$ has full Schmidt rank, then $\psi_0$ is a full rank operator. Then, since the rank of a tensor product is the product of ranks, it must be that $\phi$ is also full rank, as are $A_m$ and $B_m$ for each $m$ (a product of operators cannot have rank exceeding that of any of the individual operators in the product). Since the rank of a product of two operators is equal to that of the first whenever the second is full rank, \eqref{lcl_eqn13} with $i\ne0$ directly implies that $\psi_i$ has full rank for every $i$, and we are done.
\end{proof}
In this chapter, we are considering sets $\SC$ in which at least one state is full rank. Therefore by this theorem, we may instead restrict to sets in which every state is full rank, and we will do so throughout the remainder of the chapter.

As just argued in the proof of the previous theorem, all operators in \eqref{lcl_eqn13} are full rank, hence invertible. Now take the inverse of \eqref{lcl_eqn13}, replace $i$ by $j$, and right multiply \eqref{lcl_eqn13} by it to obtain
\begin{equation}
\label{lcl_eqn14}
A_m(\psi_i\psi_j^{-1}\otimes I)A_m^{-1}=\sqrt{\frac{p_{mi}}{p_{mj}}}\mathrm{e}^{\mathrm{i}(\varphi_{mi}-\varphi_{mj})}(\psi_i\psi_j^{-1}\otimes \psi_i\psi_j^{-1}).
\end{equation}
Define 
\begin{equation}\label{lcl_eqn15}
T_{ij}^{(m)}=\sqrt{\frac{p_{mi}}{p_{mj}}}\mathrm{e}^{\mathrm{i}(\varphi_{mi}-\varphi_{mj})}\psi_i\psi_j^{-1}
\end{equation}
so that \eqref{lcl_eqn14} can be written more compactly as
\begin{equation}
\label{lcl_eqn16}
A_m(T_{ij}^{(m)}\otimes I)A_m^{-1}=T_{ij}^{(m)}\otimes T_{ij}^{(m)}.
\end{equation}
Since for every $i$, $\psi_i$ is full rank, we see that $\det(\psi_i)\ne0$, so $\det(T_{ij}^{(m)})$ is also non-vanishing. Thus, taking the determinant on both sides of \eqref{lcl_eqn16} yields
\begin{equation}
\label{lcl_eqn17}
\det(T_{ij}^{(m)})^D=1,
\end{equation}
where we have used the fact that $\det(A\otimes B)=\det(A)^M\det(B)^N$, for $A$ and $B$ being $N\times N$ and $M\times M$ matrices, respectively.
Recalling the definition of $T_{ij}^{(m)}$ in \eqref{lcl_eqn15}, this condition becomes
\begin{equation}
\label{lcl_eqn18}
1=\left(\frac{p_{mi}}{p_{mj}}\right)^{D/2} \left|\frac{\det(\psi_i)}{\det(\psi_j)}\right|,
\end{equation}
or
\begin{equation}
\label{lcl_eqn19}
p_{mj}=p_{mi} \left|\frac{\det(\psi_i)}{\det(\psi_j)}\right|^{2/D}.
\end{equation}
Summing \eqref{lcl_eqn19} over $m$ yields
\begin{equation}
\label{lcl_eqn20}
\left|\det(\psi_i)\right|=\left|\det(\psi_j)\right|,
\end{equation}
implying
\begin{equation}
\label{lcl_eqn21}
p_{mi}=p_{mj},
\end{equation}
hence these determinants and probabilities are independent of the input state. As a consequence, we may write $T_{ij}^{(m)}$ in the simpler form,
\begin{equation}
\label{lcl_eqn22}
T_{ij}^{(m)} = \mathrm{e}^{\mathrm{i}(\varphi_{mi}-\varphi_{mj})}\psi_i\psi_j^{-1}.
\end{equation}

\textbf{Observation:}
The fact that $p_{mi}=p_{m}$, independent of $i$, implies that the cloning apparatus provides no information whatsoever about which state was input to that apparatus, nor can any such information ``leak" to an external environment that might be used to implement the local cloning separable operation. This is not without interest, since it rules out the possibility of local cloning by locally distinguishing while preserving entanglement \cite{PhysRevA.75.052313}. This result turns out to be valid in the much more general setting of one-to-one transformation of full Schmidt rank pure state ensembles by separable operations, but a discussion of these broader implications  will be presented in a future publication.

We can now provide additional conditions that must hold in order for $\SC$ to be a set of states that can be locally cloned by separable operations. These are stated in the following theorem, which holds under completely general conditions, applicable for any $N$ and $D$.
\begin{theorem}[Necessary conditions]\label{lcl_thm2}
Let $\SC=\{\ket{\psi_i}^{AB}\}$ be a set of full Schmidt rank bipartite orthogonal entangled states on $\HC_A\otimes\HC_B$. If the local cloning of $\SC$ using a blank state $\ket{\phi}^{ab}\in\HC_a\otimes\HC_b$ is possible by a separable operation, then the following must hold:
\begin{itemize}
\item[i)] 
All states in $\SC$ are equally entangled with respect to the $G$-concurrence measure,
\begin{equation}
\label{lcl_eqn23}
C_G(\ket{\psi_i}^{AB})=C_G(\ket{\psi_j}^{AB}),\text{ }\forall i,j.
\end{equation}

\item[ii)] Any two states in $\SC$ must either share the same set of Schmidt coefficients or  be incomparable under majorization.

\item[iii)] 
\begin{equation}\label{lcl_eqn24}
\mathrm{Spec}(T_{ij}^{(m)}\otimes I)=\mathrm{Spec}(T_{ij}^{(m)}\otimes T_{ij}^{(m)}),\text{ }\forall i,j,
\end{equation}
where Spec$(\cdot)$ denotes the spectrum of its argument and $T_{ij}^{(m)}$ is defined as in \eqref{lcl_eqn22}.
\end{itemize}
\end{theorem}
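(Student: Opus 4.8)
The plan is to obtain all three parts from the preliminary analysis already in hand, with only part (ii) requiring a genuinely new ingredient. Parts (i) and (iii) should fall out immediately from \eqref{lcl_eqn20} and \eqref{lcl_eqn16}, respectively.

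For part (i), I would express the $G$-concurrence directly in terms of the dual operators. Since the Schmidt coefficients $\lambda_r^{(i)}$ of $\ket{\psi_i}^{AB}$ are exactly the eigenvalues of $\psi_i\psi_i^\dagger$, the product in the definition \eqref{lcl_eqn8} is
\[
\prod_{r=0}^{D-1}\lambda_r^{(i)}=\det(\psi_i\psi_i^\dagger)=|\det(\psi_i)|^2,
\]
so that $C_G(\ket{\psi_i}^{AB})=D\,|\det(\psi_i)|^{2/D}$. The equality \eqref{lcl_eqn20} derived above, $|\det(\psi_i)|=|\det(\psi_j)|$, then yields \eqref{lcl_eqn23} at once. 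For part (iii), I would simply read off \eqref{lcl_eqn16}: it asserts that $T_{ij}^{(m)}\otimes T_{ij}^{(m)}$ is conjugate to $T_{ij}^{(m)}\otimes I$ through the invertible operator $A_m$. Since similar operators have identical spectra, \eqref{lcl_eqn24} is immediate.

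The only real work is part (ii), and here the idea is to combine part (i) with the strict Schur-concavity of the product of the Schmidt coefficients. Writing $\vec\lambda^{(i)}$ for the (positive, by Theorem~\ref{lcl_thm1}) Schmidt vector of $\ket{\psi_i}^{AB}$, the map $\vec\lambda\mapsto\sum_r\log\lambda_r$ is strictly Schur-concave on the positive simplex because $\log$ is strictly concave. Hence if two states were comparable under majorization, say $\vec\lambda^{(i)}\prec\vec\lambda^{(j)}$ with the two vectors not equal after sorting, strict Schur-concavity would force $\prod_r\lambda_r^{(i)}>\prod_r\lambda_r^{(j)}$, i.e.\ $C_G(\ket{\psi_i}^{AB})>C_G(\ket{\psi_j}^{AB})$, contradicting part (i). Thus any two comparable states must share an identical sorted Schmidt spectrum, which is precisely the dichotomy asserted in (ii).

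I expect the main subtlety to lie entirely in the equality case of part (ii): one must invoke \emph{strict} Schur-concavity (not merely Schur-concavity) and use full Schmidt rank to keep every $\lambda_r$ strictly positive, so that $\log$ is finite and the equality condition $\vec\lambda^{(i)}=\vec\lambda^{(j)}$ (up to reordering) is genuinely forced rather than merely suggested. Parts (i) and (iii), by contrast, are essentially corollaries of \eqref{lcl_eqn20} and of the similarity relation \eqref{lcl_eqn16}, and should require no additional machinery.
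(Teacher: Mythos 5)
Your parts (i) and (iii) coincide with the paper's own proof: for (i) the paper likewise combines \eqref{lcl_eqn20} with the identity $\prod_r\lambda_r=|\det(\chi)|^2$ and the definition \eqref{lcl_eqn8}, and for (iii) it likewise reads \eqref{lcl_eqn16} as a similarity relation, so equality of spectra is immediate. Where you genuinely diverge is part (ii). The paper disposes of it in one line by citing Theorem~1(ii,iii) of \cite{PhysRevA.76.032310} (Theorem~\ref{sep1_thm1} of Chapter~\ref{chp2}): for two comparable states Nielsen's theorem supplies a deterministic LOCC (hence separable) map between them, and that earlier theorem---proved there via Minkowski's determinant inequality for the Kraus operators of a separable operation---says that equality of the Schmidt-coefficient products, with both sides positive, forces identical Schmidt spectra. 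You instead prove the required dichotomy directly inside majorization theory: $\vec\lambda\mapsto\sum_r\log\lambda_r$ is \emph{strictly} Schur-concave on the positive orthant, so comparability together with equal positive products (your part (i)) forces equal sorted spectra, with no appeal to separable operations, Nielsen's theorem, or the Minkowski inequality. Both arguments are correct and rest on the same two pressure points, which you identify accurately: strictness of the concavity (plain Schur-concavity would only give the inequality, not the equality case) and full Schmidt rank (guaranteed by Theorem~\ref{lcl_thm1}, so every $\lambda_r>0$ and the logarithms are finite). Your route buys self-containedness and makes the purely majorization-theoretic content of the statement explicit; the paper's route buys brevity by reusing machinery already established in the earlier chapter.
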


\begin{proof}
Proof of i) This follows at once from \eqref{lcl_eqn20}, the definition \eqref{lcl_eqn8} of $G$-concurrence, and the fact that for any state $\ket{\chi}$ the product of its Schmidt coefficients is equal to $|\det(\chi)|^2$.

Proof of ii) The proof follows from Theorem 1 (ii,iii) of \cite{PhysRevA.76.032310} which states that any two bipartite states $\ket{\chi}$ and $\ket{\eta}$ that are comparable under majorization (i.e. $\ket{\chi}\prec\ket{\eta}$ or $\ket{\eta}\prec\ket{\chi}$) and have equal $G$-concurrence must share the same set of Schmidt coefficients.

Proof of iii) The proof follows at once from \eqref{lcl_eqn16}.
\end{proof}

\subsection{Characterization of clonable sets in terms of finite groups}
\label{lcl_sct1a}
We next show that to any set $\SC$ of states that can all be cloned by the same apparatus, there can be associated a finite group, and the set is essentially generated by this group.
\begin{theorem}[Group structure of $\SC$]\label{lcl_thm3}
Let $\SC=\{\ket{\psi_i}^{AB}\}$ be a set of full Schmidt rank bipartite orthogonal entangled states on $\HC_A\otimes\HC_B$. If the local cloning of $\SC$ is possible by a separable operation, then the set $\SC$ can be extended to a larger set such that $\{T_{ij}^{(m)}\}$ of  \eqref{lcl_eqn22} for fixed $j,m$ constitutes an ordinary representation of a finite group, $G$. Since the states in $\SC$ are related as $\mathrm{e}^{\mathrm{i}\varphi_{mi}}|\psi_i\rangle=\mathrm{e}^{\mathrm{i}\varphi_{mj}}(T_{ij}^{(m)}\otimes I_B)|\psi_j\rangle$, then the larger set, with $N=|G|$ members, is generated by the action of the group $G$ on any individual state in the set.
\end{theorem}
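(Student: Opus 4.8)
The plan is to show that the operators $T_{ij}^{(m)}$ obeying \eqref{lcl_eqn16} are automatically closed under composition, so that after adjoining the missing products they form a genuine matrix group, and then to prove that this group is finite. First I would introduce the conjugation map $\Phi_m(Y)=A_mYA_m^{-1}$, which is an algebra automorphism and in particular multiplicative. Since $(S\otimes I)(T\otimes I)=ST\otimes I$, relation \eqref{lcl_eqn16} shows that the set
\begin{equation}
G=\{\,T\in GL(D,\mathbb{C}) : \Phi_m(T\otimes I)=T\otimes T\,\}
\end{equation}
is closed under products and inverses and contains $I$, hence is a subgroup of $GL(D,\mathbb{C})$ containing every $T_{ij}^{(m)}$. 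Two features are worth recording at this stage: the phases telescope, $T_{ij}^{(m)}T_{jk}^{(m)}=T_{ik}^{(m)}$, so no projective cocycle appears; and if $T$ and $\mathrm{e}^{\mathrm{i}\theta}T$ both lie in $G$, then comparing $\mathrm{e}^{\mathrm{i}\theta}(T\otimes T)$ with $\mathrm{e}^{2\mathrm{i}\theta}(T\otimes T)$ forces $\mathrm{e}^{\mathrm{i}\theta}=1$, so $G$ contains no nontrivial phase multiples. This is what will make the resulting representation \emph{ordinary} rather than projective.

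The hard part is finiteness, and here I would exploit that \eqref{lcl_eqn16} asserts not merely equality of spectra but an actual similarity $T\otimes I\sim T\otimes T$. Equality of spectra forces the set of distinct eigenvalues of $T$ to satisfy $E=E\cdot E$; a finite subsemigroup of $\mathbb{C}^*$ is a subgroup, so $E$ is a finite group of roots of unity of order at most $D$, whence $\mu^{L}=1$ for every eigenvalue, with $L$ depending only on $D$. Equality of the full Jordan forms then yields diagonalizability: $T\otimes I$ has largest Jordan block equal to that of $T$, say of size $s$, whereas for a nonzero eigenvalue the Kronecker product of two blocks of size $s$ contains a block of size $2s-1$, so $s\le 1$. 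Thus every $T\in G$ is diagonalizable with root-of-unity eigenvalues and satisfies $T^{L}=I$. Since $G$ is generated by the finitely many operators $\{T_{i0}^{(m)}\}$ and every element has finite (indeed bounded) order, Schur's theorem on finitely generated periodic linear groups—equivalently Burnside's theorem on linear groups of bounded exponent—shows that $G$ is finite. I expect this to be the main obstacle: establishing diagonalizability from the tensor-product Jordan structure and then converting ``every element has finite order'' into finiteness of the group.

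Finally I would reconstruct the enlarged set and identify the representation. For each $g\in G$ set $\psi_g=g\psi_0$ and let $\ket{\chi_g}$ be its map-state dual; because $\psi_0$ is invertible and $G$ has no nontrivial phase multiples, $g\mapsto\ket{\chi_g}$ is injective, so $\SC'=\{\ket{\chi_g}:g\in G\}$ has exactly $|G|$ members, and it contains the original $\SC$ since $\psi_i\propto T_{i0}^{(m)}\psi_0$. As every $g\in G$ is good, $\Phi_m(g\otimes I)=g\otimes g$, so the cloning relation \eqref{lcl_eqn16} persists on $\SC'$. The inclusion $G\hookrightarrow GL(D,\mathbb{C})$ is by construction a faithful $D$-dimensional ordinary representation; fixing the reference state $\ket{\psi_j}$ one has $T_{ij}^{(m)}=gh^{-1}$ with $g,h\in G$, so $\{T_{ij}^{(m)}\}_i$ runs over all of $G$ and reproduces this representation. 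The displayed relation $\mathrm{e}^{\mathrm{i}\varphi_{mi}}\ket{\psi_i}=\mathrm{e}^{\mathrm{i}\varphi_{mj}}(T_{ij}^{(m)}\otimes I_B)\ket{\psi_j}$ then shows that acting with $\{T_{ij}^{(m)}\otimes I_B\}$ on any single member of $\SC'$ sweeps out the whole set, giving the group-orbit description with $N=|G|$. The group axioms and the bookkeeping of phases are routine by comparison with the finiteness step.
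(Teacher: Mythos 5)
Your strategy is genuinely different from the paper's, and most of it is correct. You obtain finiteness purely algebraically: similarity of $T\otimes I$ and $T\otimes T$ forces the eigenvalue set to satisfy $E=E\cdot E$, hence to be roots of unity; the Jordan-block comparison forces diagonalizability; and Burnside/Schur then gives a finite group. The paper never touches Jordan forms or theorems on periodic linear groups. Instead it multiplies \eqref{lcl_eqn16} into \eqref{lcl_eqn13} to show (equation \eqref{lcl_eqn26}) that the state dual to $\psi_i\psi_j^{-1}\psi_k$ is itself cloned by the same apparatus, concludes that each such state is either orthogonal to $\SC$ or equal to a member up to phase, and gets finiteness for free because a set of mutually orthogonal states in the finite-dimensional space $\HC_A\otimes\HC_B$ is finite; the closure relation \eqref{lcl_eqn28} and linear independence of $\{T^{(m)}_{ij}\}$ for fixed $j$ (from $\Tr(\psi_k^\dagger\psi_j)=\delta_{jk}$) then give the group structure and $|G|=N$. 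Your route buys a cleaner conceptual statement (no iterative extension process, and your observation that $G$ contains no nontrivial phase multiples neatly explains why the representation is ordinary rather than projective); the paper's route buys elementarity and, crucially, automatic control of all Kraus branches, which is exactly where your argument has a gap.

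The gap: your group $G$ is defined by the conjugation relation for a \emph{single fixed} $m$, and your check that the relation "persists on $\SC'$" is only the fixed-$m$ relation \eqref{lcl_eqn16}, which is not the cloning relation. For $\SC'$ to be a legitimate extension of $\SC$ in the sense of the theorem, every new state must satisfy \eqref{lcl_eqn11}/\eqref{lcl_eqn13} for \emph{every} Kraus index $m'$; otherwise the phases $\varphi_{m'i}$ entering the definition \eqref{lcl_eqn22} of $T^{(m')}_{ij}$ do not exist for the new states, and the count $N=|G|$ of states clonable by the apparatus (used later, e.g.\ in Theorem~\ref{lcl_thm4}) is not justified. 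Nothing in the definition of your full stabilizer set $\{T:\Phi_m(T\otimes I)=T\otimes T\}$ involves $A_{m'}$ for $m'\neq m$, so an arbitrary element of it can fail the $m'$ relations, and its orbit state need not be cloned by branch $m'$; also, your sentence asserting that this full stabilizer is generated by the $T^{(m)}_{i0}$ is unjustified. The fix is to let $G$ be the subgroup generated by the $T^{(m)}_{ij}$ themselves (Schur's theorem still applies: finitely generated, and periodic by your spectral argument). Since the original states are clonable, $T^{(m)}_{ij}$ and $T^{(m')}_{ij}$ differ only by a phase, so for any product $g$ of generators one gets $A_{m'}(g\otimes I)A_{m'}^{-1}\propto g\otimes g$ for every $m'$, which propagates cloning (with $m'$-dependent phases) to the whole orbit. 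Two smaller omissions: normalization of $g\psi_0$ (take norms in the cloning relation, sum over $m$, and use the closure condition \eqref{lcl_eqn5}, as the paper does), and mutual orthogonality of the orbit states (distinct states deterministically cloned by one apparatus must be orthogonal, by no-cloning).
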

\begin{proof}
The starting point of the proof is to multiply \eqref{lcl_eqn16} on the left of \eqref{lcl_eqn13} (with index $k$) to obtain
\begin{equation}
\label{lcl_eqn25}
A_m(T_{ij}^{(m)}\psi_k\otimes\phi)B^T_m=\sqrt{p_{m}}\mathrm{e}^{\mathrm{i}\varphi_{mk}}T_{ij}^{(m)}\psi_k\otimes T_{ij}^{(m)}\psi_k. 
\end{equation}
Using \eqref{lcl_eqn22} this becomes
\begin{equation}
\label{lcl_eqn26}
A_m(\psi_i\psi_j^{-1}\psi_k\otimes\phi)B^T_m
=\sqrt{p_{m}}\mathrm{e}^{\mathrm{i}(\varphi_{mi}-\varphi_{mj}+\varphi_{mk})}\psi_i\psi_j^{-1}\psi_k\otimes \psi_i\psi_j^{-1}\psi_k, 
\end{equation}
which by map-state duality implies that the state $\ket{\psi_i\psi_j^{-1}\psi_k}$ is cloned by the same apparatus as all the states in the original set $\SC$. Therefore $\ket{\psi_i\psi_j^{-1}\psi_k}$ --- which, by considering the version of \eqref{lcl_eqn26} that corresponds to states (as in \eqref{lcl_eqn11}), taking the squared norm of both sides and summing over $m$, is seen to be normalized --- must either (i) be orthogonal to the entire set $\SC$, or (ii) it is equal to one of those original states up to an overall phase factor. If this state is orthogonal to $\SC$, then $\SC$ can be extended by including this state as one of its members. So assume $\SC$ has been extended to its maximal size (since we are working in finite dimensions, this size will be finite), and then we can conclude that for every $i,j,k$,
\begin{equation}
\label{lcl_eqn27}
\psi_i\psi_j^{-1}\psi_k=\mathrm{e}^{\mathrm{i}(\varphi_{ml}-\varphi_{mi}+\varphi_{mj}-\varphi_{mk})}\psi_l,
\end{equation}
for some $l$, where the phase in the above expression has been determined by comparing \eqref{lcl_eqn26} to \eqref{lcl_eqn13}. Next multiply this latter expression on the right by $\mathrm{e}^{-\mathrm{i}\varphi_{mn}}\psi_n^{-1}$ to obtain
\begin{equation}
\label{lcl_eqn28}
T_{ij}^{(m)}T_{kn}^{(m)}=T_{ln}^{(m)}.
\end{equation}
Hence the collection of $T_{ij}^{(m)}$ is closed under matrix multiplication, which is associative. In addition, $T_{ii}^{(m)}=I$ for every $i$ and $T_{ij}^{(m)}T_{ji}^{(m)}=I$ for every $i,j$, so we see that the identity element and inverses are present, which concludes the proof that the set $\{T_{ij}^{(m)}\}$ with fixed $m$ form a representation of a group, $G$. Now, the number of index pairs $(i,j)$ is $N^2$, where $N$ is the number of states in $\SC$. However, we will now show that in fact the order $|G|$ of this group is equal to $N$ and not $N^2$.

Setting $n=j$ in \eqref{lcl_eqn28}, we have
\begin{equation}
\label{lcl_eqn29}
T^{(m)}_{ij}T^{(m)}_{kj}=T^{(m)}_{lj},
\end{equation}
so the product is closed even when the second index is constrained to be the same. If we set $l=j$, we see that with $T^{(m)}_{jj}=I$, then for each $i$ there exists $k$ such that $T^{(m)}_{kj}=(T^{(m)}_{ij})^{-1}$. Hence, for every fixed $j$ the set $\TC_j=\{T^{(m)}_{ij}\}$ also is a representation of $G$. Similarly, one can show the same holds if instead it is the first index that is held fixed.
Note now that by multiplying \eqref{lcl_eqn28} on the right by $(T_{kn}^{(m)})^{-1}$,
and given that \eqref{lcl_eqn28} holds for any $i,j,k,n$, we see that for every $i,j$, $T_{ij}$ is a member of the group formed by the $T_{kn}$ for fixed $n$. That is, the group of the $T_{kn}$ for fixed $n$ contains all elements $T_{ij}$.

Could two or more of the $T^{(m)}_{ij}$ be equal, for fixed $j$? We will now show this is not the case by demonstrating the linearly independence of the set $\TC_j$. Indeed,
\begin{align}
\label{lcl_eqn30}
0&=\sum_{k=0}^{N-1}c_kT^{(m)}_{kj}=\sum_{k=0}^{N-1}c_k\mathrm{e}^{\mathrm{i}(\varphi_{mk}-\varphi_{mj})}\psi_k\psi_j^{-1}\notag\\
&\Longleftrightarrow0=\sum_{k=0}^{N-1}c_k\mathrm{e}^{\mathrm{i}\varphi_{mk}}\psi_k.
\end{align}
However, the $\psi_k$ are mutually orthogonal, $\Tr(\psi_k^\dagger\psi_j)=\delta_{jk}$, so this can only be satisfied if all the $c_k$ vanish, implying that $\TC_j$ is linearly independent, and hence, that $|G|=N$: the (maximal) number of states in $\SC$ is equal to the order of $G$.
\end{proof}
For the remainder of the chapter, we will use labels $f,g,h$ instead of $i,j,k$, where the former represent elements of the group $G$; the group multiplication is denoted as $fg$, with $e$ the identity element. For example, instead of $\psi_0$ we will now write $\psi_e$, and in place of $T^{(m)}_{j0}$ we will simply write $T^{(m)}_f$.

We may now utilize the powerful tools of group theory to study sets $\SC$ of clonable states, obtaining a very strong constraint on how many states any given apparatus can possibly clone. Any group $G$ is characterized by its irreducible representations, which we denote as $\Gamma^{(\alpha)}(f),~f\in G$, and any representation of $G$ may be decomposed into a direct sum of irreducible representations with a given irreducible representation $\Gamma^{(\alpha)}(f)$ appearing some number $n_\alpha$ times in that sum. In general, a given representation may have $n_\alpha=0$ for some $\alpha$, but since here our representation is linearly independent, we know that every irreducible representation must appear at least once \cite{PhysRevA.81.062315}.

We can use character theory \cite{Hamermesh:GroupTheory} to calculate $n_\alpha$. Defining characters as $\chi(T^{(m)}_{f})=\Tr(T^{(m)}_{f})$ and $\chi^{(\alpha)}(f)=\Tr(\Gamma^{(\alpha)}(f))$, we have that
\begin{equation}
\label{lcl_eqn31}
n_\alpha=\frac{1}{|G|}\sum_{f\in G}\chi^{(\alpha)}(f)^\ast\chi(T^{(m)}_{f}).
\end{equation}
However, by taking the trace of \eqref{lcl_eqn16} and recalling that the trace of a tensor product is equal to the product of the traces, we see that $\chi(T^{(m)}_{f})$ is equal to either $0$ or $D$. Since the identity element of the group, $e$, is always in a conjugacy class by itself ($e=gfg^{-1}$ if and only if $f=e$), then we may conclude that $\chi(T^{(m)}_{f})$ vanishes except when $f=e$, in which case $\chi(T^{(m)}_{e})=D$. Thus, \eqref{lcl_eqn31} reduces to
\begin{equation}
\label{lcl_eqn32}
n_\alpha=\frac{Dd_\alpha}{|G|},
\end{equation}
where $d_\alpha=\chi^{(\alpha)}(e)$ is the dimension of the $\alpha^{\textrm{th}}$ irreducible representation. Since in every finite group there is always the trivial irreducible representation of all ones, $\Gamma^{(t)}(f)=1~\forall {f\in G}$, where this irreducible representation has dimension $d_t=1$, we have immediately that $n_t=D/|G|$ is an integer, implying that $N=|G|$ divides $D$. Thus,
\begin{theorem}[Number of clonable states]\label{lcl_thm4}
If an apparatus can locally clone more than one state on a $D\times D$ system, where at least one (and therefore all, see Theorem~\ref{lcl_thm1}) of these states has full Schmidt rank, then that apparatus can in fact clone a number of states that divides $D$ exactly. In particular if $D$ is prime, then any such apparatus can clone exactly $D$ states, no more and no less.
\end{theorem}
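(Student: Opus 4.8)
The plan is to read off the divisibility constraint from the group-representation structure already established in Theorem~\ref{lcl_thm3}. By that theorem the operators $\{T^{(m)}_{f}\}$ (for a fixed Kraus index $m$) form an ordinary representation of the finite group $G$ with $|G|=N$, and the linear independence proved there makes this representation \emph{faithful}: the map $f\mapsto T^{(m)}_{f}$ is injective, so $T^{(m)}_{f}=I$ only when $f=e$. My strategy is to decompose this representation into irreducibles $\bigoplus_\alpha n_\alpha\Gamma^{(\alpha)}$ and to exploit the fact that each multiplicity $n_\alpha$ computed from the character formula \eqref{lcl_eqn31} must be a non-negative integer; for the trivial representation this integrality will immediately force $N\mid D$.

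The key computation will be the character $\chi(T^{(m)}_{f})=\Tr(T^{(m)}_{f})$. Taking the trace of \eqref{lcl_eqn16}, and using that the trace is invariant under conjugation by $A_m$ and multiplicative over tensor products, I obtain $D\,\chi(T^{(m)}_{f})=\chi(T^{(m)}_{f})^2$, so that $\chi(T^{(m)}_{f})\in\{0,D\}$ for every $f$. The delicate step---and the one I expect to be the main obstacle---is to show that this character attains the value $D$ \emph{only} at the identity. Here I would argue as follows: because $T^{(m)}_{f}$ represents an element of a finite group it has finite order, hence is diagonalizable with eigenvalues that are roots of unity; a sum of $D$ unit-modulus numbers equals $D$ only when all of them equal $1$, which forces $T^{(m)}_{f}=I$, and faithfulness then gives $f=e$. (The brief conjugacy-class remark preceding the theorem is really shorthand for this eigenvalue argument, since being alone in its class does not by itself exclude other classes from having character $D$.) Consequently $\chi(T^{(m)}_{f})=D$ for $f=e$ and $\chi(T^{(m)}_{f})=0$ otherwise.

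Finally I would substitute this into \eqref{lcl_eqn31}: the sum collapses to its $f=e$ term, giving $n_\alpha=Dd_\alpha/|G|$ as in \eqref{lcl_eqn32}. Applying this to the trivial one-dimensional representation ($d_t=1$) yields $n_t=D/N$, which, being the multiplicity of an irreducible constituent, is a non-negative integer; therefore $N$ divides $D$. Since ``cloning more than one state'' means $N>1$, and the only divisor of a prime $D$ that exceeds $1$ is $D$ itself, the prime case forces $N=D$, completing the argument.
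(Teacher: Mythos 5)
Your proposal is correct and takes essentially the same route as the paper: taking the trace of \eqref{lcl_eqn16} to force $\chi(T^{(m)}_{f})\in\{0,D\}$, then feeding this into the character formula \eqref{lcl_eqn31} and using integrality of the trivial-representation multiplicity $n_t=D/|G|$ to conclude that $N=|G|$ divides $D$, with $N=D$ forced when $D$ is prime and $N>1$. Your eigenvalue-plus-faithfulness argument for the key step---that $\chi(T^{(m)}_{f})=D$ can occur only at $f=e$, since finite order makes $T^{(m)}_{f}$ diagonalizable with root-of-unity eigenvalues and linear independence makes the representation injective---is in fact a more complete justification than the paper's own terse conjugacy-class remark, which, as you observe, does not by itself exclude elements other than the identity from having character $D$.
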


Now we see from \eqref{lcl_eqn32} that $n_\alpha$ is an integer multiple of $d_\alpha$. If $|G|=D$ so that $n_\alpha=d_\alpha$, we have what is known as the regular representation of $G$. Otherwise, our representation is a direct sum of an integer number $n_t=D/|G|$ of copies of the regular representation. As is well known, there is always a choice of basis in which the matrices in a \textit{unitary} regular representation appear as permutation matrices $L(f)$, with each row (column) having only a single non-zero entry equal to one. In this basis, denoted as $\{|g\rangle\}_{g\in G}$, we have that $L(f)|g\rangle=|fg\rangle$. The representation $L(f)$ is called the \emph{left regular representation}\index{regular representation, left}. One can as well use the \emph{right regular representation}\index{regular representation, right} $R(f)$ with $R(f)|g\rangle=|gf^{-1}\rangle$, but without loss of generality in the rest of the chapter we restrict only to $L(f)$, since for finite groups the right and left regular representations are equivalent \cite{Hazewinkel:Encyclopaedia}.

In our case the representation will generally not be unitary, so when $|G|=D$ we will have that
\begin{equation}
\label{lcl_eqn33}
T^{(m)}_{f}=SL(f)S^{-1},
\end{equation}
for some invertible matrix $S$. 

In the remainder of the chapter we restrict consideration to $|G|=D$ (or, equivalently, to $n_t=1$), and note that all results obtained in the remainder of the chapter are valid (with small modifications) also when $|G|<D$. However, the notation becomes a bit cumbersome, so we defer detailed discussion about the $|G|<D$ case to Appendix~\ref{lcl_apdx5}.

\subsection{Form of the clonable states when all are maximally entangled}
It was shown in \cite{NewJPhys.6.164} that when at least one of the states in $\SC$ is maximally entangled, then all states in $\SC$ must also be maximally entangled. In this section, we consider such sets, in which case the $T^{(m)}_f$ must all be unitary. This follows directly from the fact that when $\psi_e$ is proportional to the identity then $\psi_f$ is proportional to $T^{(m)}_f$, and also that $\ket{\psi_f}$ is maximally entangled if and only if $\psi_f$ is proportional to a unitary.

We have seen that when $N=D$, then $T^{(m)}_f=SL(f)S^{-1}$ for some invertible $S$, and $L(f)$ is the permutation form of the regular representation of group $G$. However, we have
\begin{lemma}[Unitary equivalence]\label{lcl_thm5}
For any two unitary representations $T_f$ and $L(f)$ of a finite group $G$, which are equivalent in the sense that $T_f=SL(f)S^{-1}$ for some invertible matrix $S$, then these two representations are also equivalent by a unitary similarity transformation, $T_f=WL(f)W^\dagger$, with $W$ unitary.
\end{lemma}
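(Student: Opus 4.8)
The plan is to use the polar decomposition of the intertwining matrix $S$, extracting the unitary factor directly from the equivalence. The hypothesis is that $T_f = S L(f) S^{-1}$ for all $f \in G$, with both $T_f$ and $L(f)$ unitary; equivalently $T_f S = S L(f)$. The idea is that the positive part of $S$ carries no representation-theoretic content and can be stripped away, leaving a genuinely unitary equivalence.

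First I would show that the positive operator $S^\dagger S$ commutes with every $L(f)$. Taking the adjoint of $T_f S = S L(f)$ gives $S^\dagger T_f^\dagger = L(f)^\dagger S^\dagger$, and since $T_f$ and $L(f)$ are unitary we may replace $T_f^\dagger = T_f^{-1}$ and $L(f)^\dagger = L(f)^{-1}$; rearranging yields $L(f) S^\dagger = S^\dagger T_f$. Combining this with $S L(f) = T_f S$, I compute
\[
S^\dagger S \, L(f) = S^\dagger T_f S = L(f) \, S^\dagger S,
\]
so $S^\dagger S$ indeed commutes with each $L(f)$. This is the one step where unitarity of \emph{both} representations is genuinely used, and I expect it to be the crux of the argument.

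Next, let $P = (S^\dagger S)^{1/2}$ be the unique positive square root. Since $S$ is invertible, $P$ is positive definite and invertible, and because $P$ is a function of $S^\dagger S$ (obtainable by functional calculus on its positive spectrum), it inherits the commutation $P L(f) = L(f) P$ for all $f$. I then form the polar decomposition $S = W P$ with $W = S P^{-1}$, and verify unitarity via $W^\dagger W = P^{-1}(S^\dagger S)P^{-1} = I$.

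Finally I substitute into the hypothesis, using $S^{-1} = P^{-1} W^\dagger$ and the commutation $P L(f) P^{-1} = L(f)$:
\[
T_f = S L(f) S^{-1} = W P L(f) P^{-1} W^\dagger = W L(f) W^\dagger ,
\]
which is the desired unitary similarity. After the commutation fact is established, passing to the square root and reading off the unitary polar factor is entirely routine, so no serious obstacle remains beyond that first step.
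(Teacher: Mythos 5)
Your proof is correct, and it takes a genuinely different route from the paper's. The paper starts from the singular value decomposition $S=V\mathcal{D}U$, conjugates both representations to get $\tilde T_f\mathcal{D}=\mathcal{D}\tilde L(f)$, shows that each of $\tilde T_f$, $\tilde L(f)$ commutes with $\mathcal{D}^2$, and then argues entry by entry: matrix elements connecting distinct singular values vanish, while elements within a degenerate block coincide, so $\tilde T_f=\tilde L(f)$ and $W=VU$. You instead prove the single operator identity $S^\dagger S\,L(f)=L(f)\,S^\dagger S$, pass to the positive square root $P=(S^\dagger S)^{1/2}$ (which inherits the commutation by functional calculus), and read off the unitary polar factor $W=SP^{-1}$; the positive part then cancels against $L(f)$ and the conclusion is immediate. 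This is the standard basis-free argument that equivalent unitary representations of a finite group are unitarily equivalent, and it buys you a shorter proof with no bookkeeping about eigenvalue degeneracies or a "judicious choice" of $U$ and $V$ to make $\mathcal{D}$ a direct sum of scalar blocks. What the paper's computation buys in exchange is extra structural information: it exhibits explicitly that $\tilde T_f$ and $\tilde L(f)$ share the block-diagonal structure determined by the singular-value degeneracies of $S$. The two constructions agree in the end: since $S^\dagger S=U^\dagger\mathcal{D}^2U$, one has $P=U^\dagger\mathcal{D}U$ and hence $SP^{-1}=V\mathcal{D}U U^\dagger\mathcal{D}^{-1}U=VU$, so your $W$ is precisely the paper's.
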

A proof of this lemma is given in Chap. 3.3 of \cite{Ma:GroupThPhys}, and we provide an alternative proof in Appendix~\ref{lcl_apdx1}.

What this lemma tells us is that $\psi_f$ is proportional to $WL(f)\psi_eW^\dagger$ (since by local unitaries, $\psi_e$ can be made proportional to the identity, we will assume here that this is the case, and then $\psi_e$ commutes with $W^\dagger$), or 
\begin{align}\label{lcl_eqn34}
\ket{\psi_f}&=c_f(W L(f)\otimes W^\ast)\sum_{g\in G}\ket{g}^A\ket{g}^B\notag\\
&=\frac{1}{\sqrt{D}}(W\otimes W^\ast)\sum_{g\in G}\ket{fg}^A\ket{g}^B,
\end{align}
where $W^\ast$ is the complex conjugate of $W$, the states $\{\ket{g}\}_{g\in G}$ are some orthonormal basis, $\ip{g}{h}=\delta_{g,h}$, and we have omitted an unimportant overall phase (from $c_f$, of magnitude $D^{-1/2}$) in the last line. 
Note that up to unimportant local unitaries and relabeling of group elements, the set of states \eqref{lcl_eqn34} can be written either as
\begin{equation}\label{lcl_eqn35}
\ket{\psi_f}=\frac{1}{\sqrt{D}}\sum_{g\in G}\ket{fg}^A\ket{g}^B
\end{equation}
or
\begin{equation}\label{lcl_eqn36}
\ket{\psi_f}=\frac{1}{\sqrt{D}}\sum_{g\in G}\ket{g}^A\ket{fg}^B.
\end{equation}
The states above are of a form that we will refer to as ``group-shifted".

In Section~\ref{lcl_sct5}, we provide an explicit LOCC protocol that accomplishes cloning of such shifted sets of states. Thus, we have
\begin{theorem}[Maximally entangled states]\label{lcl_thm6}
A set of maximally entangled states on a $D\times D$ system can be cloned by LOCC if and only if there exists a choice of Schmidt bases shared by those states such that they have a group-shifted form, as in \eqref{lcl_eqn35} or \eqref{lcl_eqn36}.
\end{theorem}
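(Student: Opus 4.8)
The plan is to establish the two implications separately. The forward (necessity) direction assembles the structural results already obtained in this section; the reverse (sufficiency) direction is supplied by the explicit LOCC circuit promised in Section~\ref{lcl_sct5}.

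For necessity, I would start from the hypothesis that the maximally entangled set is cloned by some LOCC. Since every LOCC is in particular a separable operation, the whole analysis culminating in Theorem~\ref{lcl_thm3} applies, and I would first extend $\SC$ to its maximal clonable size so that, for a fixed Kraus index $m$, the operators $\{T^{(m)}_f\}_{f\in G}$ form a faithful representation of a finite group $G$ with $|G|=N$. Because each state is maximally entangled, local unitaries let one set $\psi_e\propto I$, whence $\psi_f\propto T^{(m)}_f$; maximal entanglement of $\ket{\psi_f}$ then forces each $T^{(m)}_f$ to be proportional to a unitary, so the representation is unitary, as already remarked before the statement. Restricting to $|G|=D$, the character count \eqref{lcl_eqn32} gives $n_\alpha=d_\alpha$, which identifies the representation as the regular one, so by \eqref{lcl_eqn33} we have $T^{(m)}_f=SL(f)S^{-1}$ for some invertible $S$.

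The crux is then Lemma~\ref{lcl_thm5}: since both $T^{(m)}_f$ and $L(f)$ are unitary, the intertwiner $S$ can be replaced by a \emph{unitary} $W$, giving $T^{(m)}_f=WL(f)W^\dagger$. Substituting this into $\ket{\psi_f}\propto(T^{(m)}_f\otimes I)\ket{\psi_e}$ with $\ket{\psi_e}=\tfrac{1}{\sqrt{D}}\sum_{g}\ket{g}\ket{g}$ reproduces \eqref{lcl_eqn34}; because $W$ is unitary the prefactor $W\otimes W^\ast$ is a genuine local unitary, which may be discarded to leave precisely the group-shifted form \eqref{lcl_eqn35}, or \eqref{lcl_eqn36} after the symmetric argument using the $V_m$ and \eqref{lcl_eqn22}, now written in an honestly orthonormal shared Schmidt basis $\{\ket{g}\}_{g\in G}$. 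The states of the original set are a subset of this group-generated family and hence inherit the same form. I expect this unitarity step to be the main obstacle: with a merely invertible $S$ one would obtain only a non-orthonormal ``group-shifted'' expansion, which says nothing about Schmidt bases, so Lemma~\ref{lcl_thm5} is exactly what upgrades the algebraic equivalence to the desired geometric normal form. (For $|G|<D$ the representation is a direct sum of copies of the regular representation and the same reasoning carries over with the notational changes deferred to Appendix~\ref{lcl_apdx5}.)

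For sufficiency I would simply invoke the construction of Section~\ref{lcl_sct5}, which exhibits an explicit LOCC circuit cloning any group-shifted set of the form \eqref{lcl_eqn35}/\eqref{lcl_eqn36} with a maximally entangled blank; since any subcollection of a clonable set is cloned by the same apparatus, this also covers sets with fewer than $D$ members. Assembling the two implications closes the equivalence.
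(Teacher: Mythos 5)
Your proposal is correct and takes essentially the same approach as the paper: necessity via the group structure of Theorem~\ref{lcl_thm3}, unitarity of the $T^{(m)}_f$ forced by maximal entanglement, and the unitary-equivalence Lemma~\ref{lcl_thm5} upgrading $S$ to a unitary $W$ so as to obtain the group-shifted form \eqref{lcl_eqn34}--\eqref{lcl_eqn36}; sufficiency via the explicit LOCC protocol of Theorem~\ref{lcl_thm8}. You identify Lemma~\ref{lcl_thm5} as the crux of the argument, exactly as the paper does.
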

\noindent This extends the result of \cite{PhysRevA.74.032108}, which applied only for prime $D$.

Additionally, we remark that in our protocol presented in Sec.~\ref{lcl_sct5}, there is no need for classical communication (the measurement $M_r$ and the additional corrections $Q_r$ appearing in that protocol can be omitted when the states to be cloned are maximally entangled). This result was first proven in \cite{NewJPhys.6.164}, where it was shown that the Kraus operators implementing the cloning of maximally entangled states have to be proportional to unitary operators. A completely different proof of this fact was later provided in \cite{PhysRevA.76.032310}, in which it was shown that a separable operation that maps a pure state to another pure state, both sharing the same set of Schmidt coefficients, must have its Kraus operators proportional to unitaries; in our case $\ket{\psi_f}\otimes\ket{\phi}$ and $\ket{\psi_f}\otimes\ket{\psi_f}$ do share the same set of Schmidt coefficients, since they are maximally entangled. We here have another simple proof of this result, since we have proved in Theorem \ref{lcl_thm6} that a set of maximally entangled states must be group-shifted in order that they can be cloned, and since our protocol in Sec.~\ref{lcl_sct5} clones any set that is group-shifted without using communication.

\subsection{Form of the clonable states when $D=2$ (qubits)}
Here, we restrict our attention to local cloning of qubit entangled states, $D=2$. As $D$ is prime, we know from Theorem~\ref{lcl_thm3} that exactly two states can be cloned, $\SC=\{ \ket{\psi_e}^{AB}, \ket{\psi_g}^{AB}\}$. Both are assumed to be entangled (non-product), but not maximally entangled.

Since there is only one independent Schmidt coefficient for a two-qubit state, any two such states are comparable under majorization, and then from part ii) of Theorem~\ref{lcl_thm2} it follows at once that these states have to share the same set of Schmidt coefficients. This is already a surprising result, implicitly assumed (but not proved) in recent work on local cloning of qubit states \cite{PhysRevA.76.052305}. We can actually prove a stronger condition: not only do the states have to share the same set of Schmidt coefficients, but  they must also share the same Schmidt basis and be of a shifted form, as summarized by the following theorem.

\begin{theorem}[Entangled qubits]\label{lcl_thm7}
Let $\SC=\{\ket{\psi_e}^{AB}, \ket{\psi_g}^{AB}\}$ be a set of 2 orthogonal two-qubit entangled states and let $\lambda$ be the largest Schmidt coefficient of $\ket{\psi_e}^{AB}$, assumed to satisfy $1/2<\lambda<1$. If the local cloning of $\SC$ using a two-qubit entangled blank state $\ket{\phi}^{ab}$ is possible by a separable operation, then, up to local unitaries (that is, the same local unitaries acting on both states), the states must either  be of the form
\begin{align}\label{lcl_eqn37}
\ket{\psi_e}^{AB}&=\sqrt{\lambda}\ket{0}^A\ket{0}^{B}+\sqrt{1-\lambda}\ket{1}^{A}\ket{1}^{B}\notag\\
\ket{\psi_g}^{AB}&=\sqrt{\lambda}\ket{0}^{A}\ket{1}^{B}+\sqrt{1-\lambda}\ket{1}^{A}\ket{0}^{B}
\end{align}
or
\begin{align}\label{lcl_eqn38}
\ket{\psi_e}^{AB}&=\sqrt{\lambda}\ket{0}^{A}\ket{0}^{B}+\sqrt{1-\lambda}\ket{1}^{A}\ket{1}^{B}\notag\\
\ket{\psi_g}^{AB}&=\sqrt{\lambda}\ket{1}^{A}\ket{0}^{B}+\sqrt{1-\lambda}\ket{0}^{A}\ket{1}^{B}.
\end{align}
\end{theorem}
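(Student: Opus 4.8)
The plan is to pin down the two operators $\psi_e$ and $\psi_g$ dual to the clonable states, by feeding the structural facts already established for general $N,D$ into the specialization $D=2$. First I would record what is known. By Theorem~\ref{lcl_thm1} both states are full rank; since for a two-qubit state there is only one independent Schmidt coefficient, any two such states are comparable under majorization, so part (ii) of Theorem~\ref{lcl_thm2} forces $\ket{\psi_e}$ and $\ket{\psi_g}$ to share the same Schmidt coefficients $\{\lambda,1-\lambda\}$. By Theorems~\ref{lcl_thm3} and \ref{lcl_thm4}, with $D=2$ prime the associated group is $G=\{e,g\}$ with $g^2=e$, and the character argument used to derive \eqref{lcl_eqn32} tells us that $\Tr(T^{(m)}_g)=0$ for the nonidentity element.

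Next I would exhaust the local-unitary freedom to normalize $\psi_e$. In map-state duality a local unitary $U^A\otimes V^B$ sends $\psi\mapsto U\psi V^T$ simultaneously for both states, so using the singular value decomposition of $\psi_e$ I may take $\psi_e=\mathrm{diag}(\sqrt\lambda,\sqrt{1-\lambda})$, i.e. $\ket{\psi_e}=\sqrt\lambda\ket{00}+\sqrt{1-\lambda}\ket{11}$. Writing $\psi_g=\bigl(\begin{smallmatrix}a&b\\c&d\end{smallmatrix}\bigr)$, I would then extract two independent linear constraints on the diagonal entries $(a,d)$. The trace condition $\Tr(T^{(m)}_g)=0$, combined with $T^{(m)}_g\propto\psi_g\psi_e^{-1}$ from \eqref{lcl_eqn22}, gives $a/\sqrt\lambda+d/\sqrt{1-\lambda}=0$, while orthogonality $\ip{\psi_g}{\psi_e}=\Tr(\psi_g^\dagger\psi_e)=0$ gives (after conjugation) $a\sqrt\lambda+d\sqrt{1-\lambda}=0$. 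The coefficient determinant of this system is $\sqrt{(1-\lambda)/\lambda}-\sqrt{\lambda/(1-\lambda)}=(1-2\lambda)/\sqrt{\lambda(1-\lambda)}$, which is nonzero precisely because $\lambda\neq 1/2$; hence $a=d=0$ and $\psi_g$ is purely off-diagonal.

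It then remains to fix the off-diagonal entries. Since $\psi_g=\bigl(\begin{smallmatrix}0&b\\c&0\end{smallmatrix}\bigr)$ has the same singular values as $\psi_e$, namely $\{|b|,|c|\}=\{\sqrt\lambda,\sqrt{1-\lambda}\}$ (both nonzero by full rank), I would use the residual freedom of the diagonal local unitaries that fix $\psi_e$ together with the overall phase of $\ket{\psi_g}$ to make $b$ and $c$ real and nonnegative; a short phase count shows these two parameters can be set independently. The two possible assignments of the singular values to $b$ and $c$ then translate, via map-state duality, into $\ket{\psi_g}=\sqrt\lambda\ket{01}+\sqrt{1-\lambda}\ket{10}$ or $\ket{\psi_g}=\sqrt{1-\lambda}\ket{01}+\sqrt\lambda\ket{10}$, which are exactly the forms \eqref{lcl_eqn37} and \eqref{lcl_eqn38}.

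I expect the only delicate point to be the phase bookkeeping in the last step: one must verify that the combined diagonal-local-unitary and global-phase freedoms genuinely remove \emph{both} phases of $b$ and $c$, not merely their relative phase, so that the canonical forms emerge with real coefficients. By contrast, the algebraic heart of the argument---forcing $a=d=0$---is essentially immediate once the trace and orthogonality conditions are written down and the hypothesis $\lambda\neq 1/2$ is invoked, the latter being exactly what makes the $2\times 2$ linear system nonsingular.
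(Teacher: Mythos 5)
Your proof is correct, and it reaches the paper's conclusion by a genuinely different route at the decisive step. The paper, after the same normalization $\psi_e=\mathrm{diag}(\sqrt{\lambda},\sqrt{1-\lambda})$, exploits the full order-two group relation: since $G=\{e,g\}$ with $g^{2}=e$, it imposes the quadratic matrix identity $(\psi_g\psi_e^{-1})^{2}=\mathrm{e}^{\mathrm{i}\vartheta}I$, see \eqref{lcl_eqn43}, whose off-diagonal entries force a dichotomy, either $a_{00}\sqrt{1-\lambda}=-a_{11}\sqrt{\lambda}$ or $a_{01}=0=a_{10}$; the second case is eliminated by normalization together with \eqref{lcl_eqn42} and \eqref{lcl_eqn44} (possible only at $\lambda=1/2$), and the first case combined with orthogonality gives $a_{00}=a_{11}=0$. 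You instead use only a linear consequence of the group structure, namely the character computation preceding \eqref{lcl_eqn32}, which gives $\Tr(T^{(m)}_g)=0$ and hence $a/\sqrt{\lambda}+d/\sqrt{1-\lambda}=0$; paired with the orthogonality condition $a\sqrt{\lambda}+d\sqrt{1-\lambda}=0$ this is a homogeneous $2\times2$ linear system whose determinant $(1-2\lambda)/\sqrt{\lambda(1-\lambda)}$ is nonzero precisely when $\lambda\neq1/2$, so $a=d=0$ with no case analysis. Your route is cleaner and makes the role of the hypothesis $\lambda\neq1/2$ completely transparent (it is literally the nonsingularity of the system); the paper's route uses more of the available structure, and as a by-product its relation \eqref{lcl_eqn44} yields $|a_{01}a_{10}|=\sqrt{\lambda(1-\lambda)}$, which with normalization re-derives that the off-diagonal moduli are $\{\sqrt{\lambda},\sqrt{1-\lambda}\}$, whereas you obtain this from Theorem~\ref{lcl_thm2}(ii); that is equally legitimate, since the paper itself establishes the equal-Schmidt-coefficient fact in the paragraph preceding the theorem. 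Your closing phase bookkeeping is also sound and in fact more explicit than the paper's: diagonal local unitaries that fix $\psi_e$ shift $\arg b$ and $\arg c$ oppositely while the free global phase of $\ket{\psi_g}$ shifts them together, so both phases (not merely their difference) can be set to zero, after which the two assignments of the singular values give exactly \eqref{lcl_eqn37} and \eqref{lcl_eqn38}; the paper handles the same point via the remark about the unitaries $U^{A,B}=\dya{0}+\mathrm{e}^{\pm\mathrm{i}\vartheta/2}\dya{1}$ stated just before its proof.
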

\noindent Note that a relative phase $\mathrm{e}^{\mathrm{i}\vartheta}$ may be introduced into $\ket{\psi_g}$, without altering $\ket{\psi_e}$, by Alice and Bob doing local unitaries on systems $A$ and $B$, $U^{\!A,B}=\dya{0}+\mathrm{e}^{\pm\mathrm{i}\vartheta/2}\dya{1}$ (one of them chooses the upper sign, the other does the lower, which accomplishes the task up to an unimportant overall phase). Therefore, the theorem allows cloning of states with these phases.

\begin{proof}
First note that without loss of generality one can always assume that the first state $\ket{\psi_e}^{AB}$ is already in Schmidt form, 
\begin{equation}\label{lcl_eqn39}
\ket{\psi_e}^{AB}=\sqrt{\lambda}\ket{0}^{A}\ket{0}^{B}+\sqrt{1-\lambda}\ket{1}^{A}\ket{1}^{B},
\end{equation}
since this can be done by a local unitary map $U^{Aa}\otimes V^{Bb}$. Therefore, the operators $\psi_e$ and $\psi_g$ obtained by map-state duality can be assumed to have the form
\begin{align}
\label{lcl_eqn40}
\psi_e&=\left(
\begin{array}{cc}
\sqrt{\lambda} & 0\\
0 & \sqrt{1-\lambda}
\end{array}
\right)
,\\
\label{lcl_eqn41}
\psi_g&=\left(
\begin{array}{cc}
a_{00} & a_{01}\\
a_{10} & a_{11}
\end{array}
\right),
\end{align}
where $\lambda$ is the largest Schmidt coefficient of $\ket{\psi_e}^{AB}$ and $a_{ij}$ are complex numbers with $\sum |a_{ij}|^2=1$, which is equivalent to the requirement that $\ket{\psi_g}$ be normalized. 

Orthogonality between these two states implies that
\begin{equation}\label{lcl_eqn42}
0=\sqrt{\lambda}a_{00}+\sqrt{1-\lambda}a_{11}.
\end{equation}
Since the only group of order $2$ is cyclic with elements $e,g$ and $g^2=e$, we have from Theorem~\ref{lcl_thm3} that $(T^{(m)}_{g})^2=SL(g)^2S^{-1}=I$. Thus, we require
\begin{align}
\label{lcl_eqn43}
(\psi_g\psi_e^{-1})^2&=
\left(
\begin{array}{cc}
\mathrm{e}^{\mathrm{i}\vartheta} & 0\\
0 & \mathrm{e}^{\mathrm{i}\vartheta}
\end{array}
\right),
\end{align}
where the factor of $\mathrm{e}^{\mathrm{i}\vartheta}$ arises from the phases that appear in the definition of $T^{(m)}_g$, see \eqref{lcl_eqn22}. Thus, \eqref{lcl_eqn43} implies
\begin{equation}\label{lcl_eqn44}
\frac{a_{00}^2}{\lambda}=\frac{a_{11}^2}{1-\lambda}=\mathrm{e}^{\mathrm{i}\vartheta}-\frac{a_{01}a_{10}}{\sqrt{\lambda(1-\lambda)}},
\end{equation}
and either (i) $a_{00}\sqrt{1-\lambda}=-a_{11}\sqrt{\lambda}$; or (ii) $a_{01}=0=a_{10}$. The condition that $\psi_g$ be normalized in the latter case (ii), along with \eqref{lcl_eqn42} and \eqref{lcl_eqn44}, can only be satisfied if $\lambda=1/2$, a case we are not considering here. The former case (i) along with \eqref{lcl_eqn42} implies that $a_{00}=0=a_{11}$ (again, assuming $\lambda\ne 1/2$). This concludes the proof, since it implies that $\ket{\psi_g}^{AB}$ has to have either the form \eqref{lcl_eqn37} or the form \eqref{lcl_eqn38}, up to an unimportant global phase.
\end{proof}

Now one can immediately see that one of the families of states considered in \cite{PhysRevA.76.052305}, of the form 
$\ket{\psi_e}=\sqrt{\lambda}\ket{0}^A\ket{0}^B+\sqrt{1-\lambda}\ket{1}^A\ket{1}^B$ and $\ket{\psi_g}=\sqrt{1-\lambda}\ket{0}^A\ket{0}^B-\sqrt{\lambda}\ket{1}^A\ket{1}^B$ cannot be locally cloned with a blank state of Schmidt rank 2, unless they are maximally entangled, case already studied in \cite{NewJPhys.6.164}.

\section{Local cloning of group-shifted states: explicit protocol using a maximally entangled blank state}\label{lcl_sct5}

Consider now a set of group-shifted partially entangled states $\SC=\{\ket{\psi_f}^{AB}\}_{f\in G}$ on $\HC_A\otimes\HC_B$, where the dimension of both Hilbert spaces $\HC_A$ and $\HC_B$ is equal to $D$,
\begin{equation}
\label{lcl_eqn45}
\ket{\psi_f}^{AB}=\sum_{g\in G}\sqrt{\lambda_g}\ket{g}^{A}\ket{fg}^{B},
\end{equation}
and we remind the reader that throughout this section we restrict to the $|G|=D$ case (see Appendix~\ref{lcl_apdx5} for the $|G|<D$ case).

In the following we present a protocol that locally clones $\SC$ using a maximally entangled blank state of Schmidt rank $D$. Our protocol, which works for any group $G$, is a direct generalization of the one presented for the special case of a cyclic group in \cite{PhysRevA.73.012343}.
\begin{theorem}[Group shifted states]\label{lcl_thm8}
Let $\SC=\{\ket{\psi_f}^{AB}\}_{f\in G}$ be a set of group-shifted full Schmidt rank bipartite orthogonal entangled states on $\HC_A\otimes\HC_B$ as defined by \eqref{lcl_eqn45}. The local cloning of $\SC$ is always possible using a maximally entangled blank state $\ket{\phi}^{ab}$ of Schmidt rank $D$.
\end{theorem}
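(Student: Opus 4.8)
The plan is to prove the theorem constructively, by exhibiting an explicit LOCC protocol that clones every group-shifted state simultaneously. Throughout I identify the orthonormal bases of all four systems with the elements of $G$, and I choose the Schmidt bases of the maximally entangled blank so that $\ket{\phi}^{ab}=\frac{1}{\sqrt{D}}\sum_{h\in G}\ket{h}^a\ket{h}^b$; this is harmless since $\ket{\phi}$ is maximally entangled of rank $D=|G|$ and local unitaries are free. I write $L(g)$ for the left regular representation, $L(g)\ket{h}=\ket{gh}$, and $R(g)$ for the right one, $R(g)\ket{h}=\ket{hg}$.

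First I would transfer the shift $f$ coherently from $AB$ onto $ab$ using two controlled unitaries. Alice applies $C_1=\sum_g\dyad{g}{g}^A\otimes L(g)^a$ on $\HC_A\otimes\HC_a$ and Bob applies $C_2=\sum_k\dyad{k}{k}^B\otimes L(k)^b$ on $\HC_B\otimes\HC_b$; both are local. Acting on $\ket{\psi_f}^{AB}\otimes\ket{\phi}^{ab}$ these send $\ket{h}^a\mapsto\ket{gh}^a$ and $\ket{h}^b\mapsto\ket{(fg)h}^b$, and after the substitution $g'=gh$ (so $(fg)h=fg'$) the state factorizes as $\ket{\psi_f}^{AB}\otimes\frac{1}{\sqrt{D}}\sum_{g'}\ket{g'}^a\ket{fg'}^b$. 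For maximally entangled states ($\lambda_g\equiv 1/D$) this already equals $\ket{\psi_f}^{AB}\otimes\ket{\psi_f}^{ab}$, so the protocol terminates with no measurement and no communication, recovering the remark following the statement.

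For partially entangled states it then remains to convert the maximally entangled, $f$-shifted residue in $ab$ into $\ket{\psi_f}^{ab}=\sum_{g'}\sqrt{\lambda_{g'}}\ket{g'}^a\ket{fg'}^b$ deterministically, and the central difficulty is that this must be done without either party knowing $f$. The natural move is for Alice to impose the Schmidt weights by a filtering measurement $\{M_s\}_{s\in G}$ on $a$ with $M_s=\Lambda\,R(s)$, where $\Lambda=\sum_{g'}\sqrt{\lambda_{g'}}\dyad{g'}{g'}$; the closure condition $\sum_s M_s^\dagger M_s=\sum_s R(s)^\dagger\Lambda^2 R(s)=I$ holds because for each fixed $a$-index the diagonal entries $\lambda_{g's}$ sum over $s$ to $\sum_k\lambda_k=1$, so the measurement is deterministic with each outcome occurring with probability $1/D$. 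The crucial point, and the step I expect to be the real obstacle, is the correction: outcome $s$ leaves the state (after relabeling $k=g's$) proportional to $\sum_k\sqrt{\lambda_k}\ket{k}^a\ket{fks^{-1}}^b$, so Bob must restore $\ket{fks^{-1}}^b\mapsto\ket{fk}^b$ knowing only the communicated $s$. Here the choice of the \emph{right} regular representation for the filtering shifts is exactly what makes the argument work for a general, possibly non-abelian, group: since $f$ multiplies on the left and $s$ on the right, associativity gives the correction $Q_s=R(s)$, which is manifestly independent of $f$. Had I instead used left shifts, the required correction would have been $L(fsf^{-1})$, which is $f$-dependent whenever $G$ is non-abelian and would defeat the whole scheme.

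Finally I would check that the operations in the weighting stage act only on $a$ and $b$, so the $\ket{\psi_f}^{AB}$ factor is untouched, and that applying $Q_s$ yields exactly $\ket{\psi_f}^{ab}$; combined with the untouched $AB$ factor this produces the target $\ket{\psi_f}^{AB}\otimes\ket{\psi_f}^{ab}$ for every $f\in G$, completing the construction. Since every step is a local unitary, a local measurement, or a classically conditioned local unitary, the whole procedure is LOCC (a fortiori separable), and the only resource consumed is the single maximally entangled blank of Schmidt rank $D$.
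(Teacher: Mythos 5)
Your proposal is correct and is essentially the paper's own protocol: the same controlled left-shift unitaries $\sum_g \dyad{g}{g}\otimes L(g)$ on $Aa$ and $Bb$, the same filtering POVM with weights $\lambda_{hs}$, and the same $f$-independent right-shift correction $Q_s\ket{h}=\ket{hs}$, whose independence of $f$ rests on exactly the left-versus-right multiplication observation you highlight. The only cosmetic difference is that you absorb the right shift $R(s)$ into Alice's Kraus operators $M_s=\Lambda R(s)$ so that only Bob needs an explicit correction, whereas the paper uses diagonal Kraus operators $M_r=\sum_h\sqrt{\lambda_{hr}}\dyad{h}{h}$ and has both parties apply $Q_r$ afterward.
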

\begin{proof}
Without loss of generality the maximally entangled blank state can be written as
\begin{equation}\label{lcl_eqn46}
\ket{\phi}^{ab}=\frac{1}{\sqrt{D}}\sum_{h\in G}\ket{h}^a\ket{h}^b.
\end{equation}
The local cloning protocol is summarized below and the quantum circuit is displayed in Fig.~\ref{lcl_fgr1}. 
\begin{figure}
\begin{center}
\includegraphics{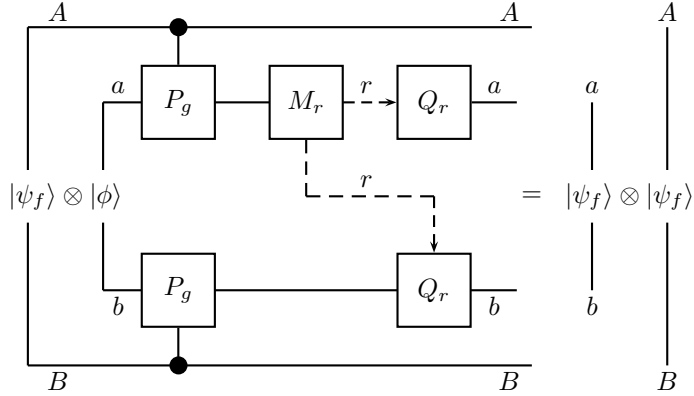}
\caption{Circuit diagram for the local cloning of group-shifted states with a maximally entangled blank state. There is no need to perform the measurement $M_r$ and the corrections $Q_r$ whenever the states to be cloned are maximally entangled.}
\label{lcl_fgr1}
\end{center}
\end{figure}
\begin{enumerate}
	\item Starting with $\ket{\psi_f}^{AB}\otimes\ket{\phi}^{ab}$, both Alice and Bob apply the ``controlled-group" unitary 
	\begin{equation}\label{lcl_eqn47}
		\sum_{g\in G} \dyad{g}{g} \otimes P_g,\quad\text{with } P_g=\sum_{h\in G}\dyad{gh}{h},
	\end{equation}
	where the permutation $P_g$ acts on system $a$ ($b$) and is controlled by system $A$ ($B$), to obtain
	\begin{align}\label{lcl_eqn48}
		&\sum_{g\in G}\sqrt{\lambda_g}\ket{g}^A\ket{fg}^B
		\frac{1}{\sqrt{D}}\sum_{h\in G}\ket{gh}^a\ket{fgh}^b\notag\\
		&=\sum_{g\in G}\sqrt{\lambda_g}\ket{g}^A\ket{fg}^B
		\frac{1}{\sqrt{D}}\sum_{h\in G}\ket{h}^a\ket{fh}^b.
	\end{align}
	\item Next Alice performs a generalized measurement on system $a$ with Kraus operators
	\begin{equation}\label{lcl_eqn49}
			M_{r}=\sum_{h\in G}\sqrt{\lambda_{hr}}\dyad{h}{h},\quad \sum_{r\in G}{M_{r}}^{\dagger}M_{r}=I,
	\end{equation}
	and communicates the result $r$ to Bob. Conditioned on the result $r$, the output state is
	\begin{equation}\label{lcl_eqn50}
			\sum_{g\in G}\sqrt{\lambda_g}\ket{g}^A\ket{fg}^B
		\sum_{h\in G}\sqrt{\lambda_{hr}}\ket{h}^a\ket{fh}^b.
	\end{equation}
	\item Both Alice and Bob apply the unitary correction 
	\begin{equation}\label{lcl_eqn51}
		Q_r=\sum_{h\in G}\dyad{hr}{h}
	\end{equation}
	on systems $a$ and $b$, respectively,
	to obtain
	\begin{align}\label{lcl_eqn52}
		&\sum_{g\in G}\sqrt{\lambda_g}\ket{g}^A\ket{fg}^B
		\sum_{h\in G}\sqrt{\lambda_{hr}}\ket{hr}^a\ket{fhr}^b\notag\\
		&=\sum_{g\in G}\sqrt{\lambda_g}\ket{g}^A\ket{fg}^B
		\sum_{h\in G}\sqrt{\lambda_{h}}\ket{h}^a\ket{fh}^b\notag\\
		&=\ket{\psi_f}^{AB}\otimes\ket{\psi_f}^{ab},
	\end{align}
	which is the desired output.
\end{enumerate}
\end{proof}

Note that from symmetry considerations states of the form $\sum_{g\in G}\sqrt{\lambda_g}\ket{fg}^{A}\ket{g}^{B}$ (with the term $fg$ appearing now on Alice's side instead of Bob's side) can also be locally-cloned, by interchanging the roles of Alice and Bob in the protocol, e.g. performing the measurement $M_r$ on system $b$  instead of $a$, then sending the result back to $a$. Therefore in the following, when discussing group-shifted states, we will restrict to the states of the form \eqref{lcl_eqn45}.

\section{Local cloning of group-shifted states: minimum entanglement of the blank\label{lcl_sct6}}
Here again, we restrict for simplicity to the $|G|=D$ case, and discuss the extension of the results for $|G|<D$ in Appendix~\ref{lcl_apdx5}.
\subsection{Necessary conditions for arbitrary $D$}
We now turn our attention to the task of characterizing the blank state, which essentially amounts to determining the amount of entanglement it must have in order for the local cloning to be possible. We first give a very general lower bound as,
\begin{theorem}[Minimum entanglement of the blank]\label{lcl_thm9}
Let $\SC=\{\ket{\psi_f}^{AB}\}_{f\in G}$ be a set of full Schmidt rank bipartite orthogonal entangled states on $\HC_A\otimes\HC_B$. If the local cloning of $\SC$ using a blank state $\ket{\phi}^{ab}\in\HC_a\otimes\HC_b$ is possible by a separable operation, then it must be that
\begin{equation}\label{lcl_eqn53}
Ent(\ket{\phi}^{ab})\geqslant\max_{f\in G} Ent(\ket{\psi_f}^{AB}),
\end{equation}
where $Ent(\cdot)$ denotes any pure-state entanglement measure.
\end{theorem}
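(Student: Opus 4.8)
The plan is to treat the cloning transformation, for each fixed group element $f$, as a single deterministic pure-to-pure transformation across the bipartite cut in which Alice holds $\HC_A\otimes\HC_a$ and Bob holds $\HC_B\otimes\HC_b$, and then to invoke the monotonicity results already established in Chapter~\ref{chp3}. Indeed, the Kraus operators $A_m\otimes B_m$ of the separable cloning operation act on precisely this $(Aa)\,|\,(Bb)$ split, and by \eqref{lcl_eqn6} the normalized input $\ket{\psi_f}^{AB}\otimes\ket{\phi}^{ab}$ is mapped deterministically to the normalized output $\ket{\psi_f}^{AB}\otimes\ket{\psi_f}^{ab}$. Thus we are exactly in the setting of a deterministic separable transformation of one pure bipartite state into another, to which Theorem~\ref{sep2_thm1} and its consequences apply.

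First I would record the majorization statement. Writing $\vec\lambda^{(f)}$ and $\vec\gamma$ for the Schmidt vectors of $\ket{\psi_f}^{AB}$ and $\ket{\phi}^{ab}$, and recalling that the Schmidt vector of a tensor product is the vector of all pairwise products of Schmidt coefficients, consequence (iii) of Sec.~\ref{sep2_sct5} gives $\vec\lambda^{(f)}\otimes\vec\gamma\prec\vec\lambda^{(f)}\otimes\vec\lambda^{(f)}$. Equivalently, by consequence (v) of the same section, any entanglement measure $Ent$ (which on pure states is a Schur-concave function of the Schmidt vector and is therefore nonincreasing under separable operations exactly as it is under LOCC) satisfies $Ent(\ket{\psi_f}^{AB}\otimes\ket{\phi}^{ab})\geq Ent(\ket{\psi_f}^{AB}\otimes\ket{\psi_f}^{ab})$.

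The remaining step is to cancel the common factor $\ket{\psi_f}^{AB}$, and this is where I expect the real work to lie. The clean route is additivity of the measure on tensor products: for the entropy of entanglement $E$ of \eqref{lcl_eqn7} one has $E(\ket{\chi}\otimes\ket{\eta})=E(\ket{\chi})+E(\ket{\eta})$, while for the $G$-concurrence of \eqref{lcl_eqn8} a short computation using multiplicativity of $\det$ gives $C_G(\ket{\chi}\otimes\ket{\eta})=C_G(\ket{\chi})\,C_G(\ket{\eta})$. In either case the common $\ket{\psi_f}^{AB}$ factor drops out of the displayed inequality (note $C_G(\ket{\psi_f})>0$ by full Schmidt rank), leaving $Ent(\ket{\phi}^{ab})\geq Ent(\ket{\psi_f}^{AB})$; taking the maximum over $f\in G$ then yields \eqref{lcl_eqn53}.

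The main obstacle is precisely that this cancellation is \emph{not} automatic for an arbitrary non-additive measure. The transformation we use returns $\ket{\psi_f}^{AB}$ untouched, so it is really a catalytic transformation of $\ket{\phi}^{ab}$ into $\ket{\psi_f}^{ab}$ with catalyst $\ket{\psi_f}^{AB}$, and catalytic (ELOCC) transformations can increase Schur-concave functions that happen not to be monotone under catalysis; correspondingly, one cannot in general deduce $\vec\gamma\prec\vec\lambda^{(f)}$ from the tensor-level majorization alone. The honest statement is therefore that \eqref{lcl_eqn53} holds for every \emph{additive} entanglement monotone, which already covers the two measures $E$ and $C_G$ used throughout this chapter; I would prove the theorem by the additivity argument above, while noting that the tensor-level inequality $Ent(\ket{\psi_f}^{AB}\otimes\ket{\phi}^{ab})\geq Ent(\ket{\psi_f}^{AB}\otimes\ket{\psi_f}^{ab})$ holds for every measure and is itself the natural bound whenever additivity is unavailable.
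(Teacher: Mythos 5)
Your proposal is correct and takes essentially the same route as the paper: the paper's entire proof consists of viewing the cloning as a deterministic separable pure-to-pure transformation across the $Aa/Bb$ cut, citing the monotonicity result of Chapter~\ref{chp3} (consequence (v) of Theorem~\ref{sep2_thm1}), and asserting that ``otherwise the local cloning machine increases entanglement across the $Aa/Bb$ cut.'' The cancellation subtlety you flag is genuine, and the paper does not address it: its one-line argument silently discards the common factor $\ket{\psi_f}^{AB}$, which is precisely the step that requires additivity (or multiplicativity, for the $G$-concurrence) and which fails for arbitrary Schur-concave measures because of entanglement catalysis. In that sense your version is more rigorous than the paper's own proof, and your restriction to additive monotones is harmless in context, since the entropy of entanglement \eqref{lcl_eqn7} and the $G$-concurrence \eqref{lcl_eqn8} are the only measures the chapter subsequently applies \eqref{lcl_eqn53} to; note also that for group-shifted states the stronger conclusion $\vec{\gamma}\prec\vec{\lambda}$ (hence \eqref{lcl_eqn53} for every measure) does follow later, but only from the separate ancilla argument of Lemma~\ref{lcl_thm10} combined with \eqref{lcl_eqn91}, not from the reasoning used here.
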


\begin{proof} We recently proved in \cite{PhysRevA.78.020304} that any pure state entanglement monotone is non-increasing on average under the general class of separable operations. The theorem follows directly, since otherwise the local cloning machine increases entanglement across the $Aa/Bb$ cut.
\end{proof}

Providing a more detailed lower bound appears to be difficult in general, but turns out to be possible in the special case of group-shifted states.

Consider again the set of $D$ group-shifted entangled states \eqref{lcl_eqn45}, and allow for arbitrary phases, $\vartheta_{\!f\!,g}$,
\begin{equation}
\label{lcl_eqn54}
\ket{\psi_f}^{AB}=\sum_{g\in G}\sqrt{\lambda_g}\mathrm{e}^{\mathrm{i}\vartheta_{\!f\!,g}}\ket{g}^{A}\ket{fg}^{B}.
\end{equation}
Without loss of generality, the blank state $\ket{\phi}^{ab}$ can be written as
\begin{equation}\label{lcl_eqn55}
\ket{\phi}^{ab}=\sum_{h\in G}\sqrt{\gamma_h}\ket{h}^a\ket{h}^b,
\end{equation}
where $\gamma_h$ are its Schmidt coefficients, $\sum_{h\in G}{\gamma_h}=1$.

All states in $\SC$ have the same Schmidt coefficients, and hence the same entanglement. As shown above, the local cloning of the above set of states is possible using a maximally entangled blank state when all phases $\mathrm{e}^{\mathrm{i}\vartheta_{\!f\!,g}}$ are chosen to be $1$, but it is not yet known if one can accomplish this task using less entanglement. One might hope that the local cloning of $\SC$ is possible using a blank state having the same entanglement as each of the states in $\SC$, which could be regarded as an ``optimal" local cloning. However we prove below that such an optimal local cloning is impossible with these states. Indeed we find a sizeable gap between the entanglement needed in the blank state and the entanglement of the states of $\SC$. For $D=2$ and $D=3$, we prove that a maximally entangled blank state is \emph{always} necessary. 

In the rest of this section we will use the \emph{rearrangement inequality}\index{rearrangement inequality} (see Chap. X of \cite{Hardy:Inequalities}), which states that 
\begin{equation}\label{lcl_eqn56}
x_ny_1+\cdots +x_1y_n\leqslant x_{\sigma(1)}y_1+\cdots+x_{\sigma(n)}y_n\leqslant x_1y_1+\cdots +x_ny_n
\end{equation}
for every choice of real numbers $x_1\leqslant\cdots\leqslant x_n$ and $y_1\leqslant\cdots\leqslant y_n$ and every permutation $x_{\sigma(1)},\ldots,x_{\sigma(n)}$ of $x_1,\ldots,x_n$.

The following Lemma is the most important technical result of this section (note that in the statement of this result, we will use $\overline g$ for inverses $g^{-1}$ of elements in the group $G$, which will make the notation somewhat more readable). 

\begin{lemma}[Majorization conditions]\label{lcl_thm10}
Let $\SC=\{\ket{\psi_f}^{AB}\}_{f\in G}$ be a set of $D$ group-shifted full Schmidt rank bipartite orthogonal entangled states on $\HC_A\otimes\HC_B$ as defined by \eqref{lcl_eqn54} and considered to be not maximally entangled. 
If the local cloning of $\SC$ using a blank state $\ket{\phi}^{ab}$ is possible by a separable operation, then
\begin{itemize}
\item[i)] The majorization condition,
\begin{equation}\label{lcl_eqn57}
\vec{\alpha}\prec\vec{\beta},
\end{equation}
must hold. Here, $\vec{\alpha}$ and $\vec{\beta}$ are  vectors with $D^2$ components indexed by elements $g,h\in G$,
\begin{align}\label{lcl_eqn58}
\alpha_{g,h}=\gamma_h\sum_{f\in G}\mu_{\overline f}\lambda_{f g},\quad
\beta_{g,h}=\sum_{f\in G}\mu_{\overline f}\lambda_{fg}\lambda_{fh},
\end{align}
and $\{\mu_f\}_{f\in G}$ is an arbitrary set of non-negative real coefficients that satisfy $\sum_{f}\mu_f=1$.

\item[ii)]
The smallest Schmidt coefficient $\gamma_{\min}$ of the blank state has to satisfy
\begin{equation}\label{lcl_eqn59}
\gamma_{\min}\geqslant
\max_{\{\mu_f\}}\frac{\min_{g,h\in G}\sum_{f\in G}\mu_{\overline f}\lambda_{fg}\lambda_{fh}}
{\min_{g\in G}\sum_{f\in G}\mu_{\overline f}\lambda_{fg}}.
\end{equation}

\item[iii)] In particular, a good choice of $\{\mu_f\}$  is given by
\begin{equation}\label{lcl_eqn60}
\mu_f=\frac{\eta}{\lambda_{\overline f}},\quad\text{with }\eta^{-1}=\sum_{g\in G}1/\lambda_g,
\end{equation}
for which \eqref{lcl_eqn59} becomes
\begin{equation}\label{lcl_eqn61}
\gamma_{\min}\geqslant\frac{1}{D}\min_{g,h\in G}\sum_{f\in G}\frac{1}{\lambda_{f}}\lambda_{fg}\lambda_{fh}.
\end{equation}
\end{itemize}
\end{lemma}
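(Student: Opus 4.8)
The plan is to reduce all three parts to the ensemble majorization theorem for separable operations proved in the previous chapter (Theorem~\ref{sep2_thm1}, \cite{PhysRevA.78.020304}), by cloning a $\mu$-weighted coherent superposition of the input states together with a reference register. Concretely, I would introduce an ancilla $R$ with orthonormal basis $\{\ket{f}^R\}_{f\in G}$ and form the pure input $\ket{\Psi}^{in}=\sum_{f\in G}\sqrt{\mu_f}\,\ket{f}^R\ket{\psi_f}^{AB}\ket{\phi}^{ab}$, attaching $R$ to Alice's laboratory. The cloning operation, extended by the identity on $R$, has Kraus operators $(I_R\otimes A_m)\otimes B_m$, which are still product across the cut $(RAa)\,|\,(Bb)$ and still satisfy closure, so it is a legitimate separable operation on this bipartition. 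Using \eqref{lcl_eqn11} together with the fact that $p_{mf}=p_m$ is independent of $f$ (established in \eqref{lcl_eqn21}), this operation sends $\ket{\Psi}^{in}$ to the ensemble $\{p_m,\ket{\Phi_m}\}$ with $\ket{\Phi_m}=\sum_f\sqrt{\mu_f}\,\mathrm{e}^{\mathrm{i}\varphi_{mf}}\ket{f}^R\ket{\psi_f}^{AB}\ket{\psi_f}^{ab}$.

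The heart of part (i) is then a reduced-density-operator computation across $(RAa)\,|\,(Bb)$. Tracing out $R,A,a$ leaves diagonal operators, because the registers $R$ and $A$ record $f$ and $g$ separately, so the sums collapse incoherently and the phases $\vartheta_{f,g}$ and $\varphi_{mf}$ drop out. For the input one finds that the $Bb$ eigenvalue indexed by $(x,h)$ equals $\gamma_h\sum_{f}\mu_f\lambda_{\overline f x}$, which after the relabeling $f\mapsto\overline f$ is exactly $\alpha_{g,h}$ of \eqref{lcl_eqn58}; for each output $\ket{\Phi_m}$ the eigenvalue indexed by $(x,y)$ is $\sum_f\mu_f\lambda_{\overline f x}\lambda_{\overline f y}$, i.e.\ $\beta_{g,h}$, independent of $m$. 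Both vectors sum to $1$ (using $\sum_g\lambda_{fg}=1$). The necessary majorization inequality of Theorem~\ref{sep2_thm1}, namely $\sum_m p_m E_n(\ket{\Phi_m})\le E_n(\ket{\Psi}^{in})$ for all $n$, then reads $\chi_n(\vec\beta)\le\chi_n(\vec\alpha)$ for the sums of the $n$ smallest components, since every output shares the spectrum $\vec\beta$ and $\sum_m p_m=1$. With equal totals this is precisely $\vec\alpha\prec\vec\beta$, establishing \eqref{lcl_eqn57}.

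Part (ii) is an extraction of the smallest component: $\vec\alpha\prec\vec\beta$ forces $\min_{g,h}\alpha_{g,h}\ge\min_{g,h}\beta_{g,h}$. Because $\alpha_{g,h}=\gamma_h\,S_g$ factorizes with $S_g=\sum_f\mu_{\overline f}\lambda_{fg}$, its minimum is $\gamma_{\min}\min_g S_g$; rearranging gives $\gamma_{\min}\ge \min_{g,h}\beta_{g,h}/\min_g S_g$, and taking the supremum over admissible $\{\mu_f\}$ yields \eqref{lcl_eqn59}. For part (iii) I would substitute $\mu_f=\eta/\lambda_{\overline f}$, so $\mu_{\overline f}=\eta/\lambda_f$; normalization $\sum_f\mu_f=1$ is immediate from $\eta^{-1}=\sum_g 1/\lambda_g$. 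The numerator becomes $\eta\min_{g,h}\sum_f\lambda_{fg}\lambda_{fh}/\lambda_f$ and the denominator $\eta\min_g\sum_f\lambda_{fg}/\lambda_f$. The one genuinely nontrivial step is evaluating this denominator: by AM--GM, $\sum_f\lambda_{fg}/\lambda_f\ge D\bigl(\prod_f(\lambda_{fg}/\lambda_f)\bigr)^{1/D}=D$, since $\prod_f\lambda_{fg}=\prod_f\lambda_f$ as $f\mapsto fg$ permutes $G$, with equality at $g=e$; hence $\min_g\sum_f\lambda_{fg}/\lambda_f=D$. The factors of $\eta$ cancel and the bound collapses to \eqref{lcl_eqn61}.

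I expect the main obstacle to be setting up part (i) correctly rather than any hard inequality: one must recognize that cloning a coherent $\mu$-superposition against a reference register is the right construction, verify that the extended map remains separable on the intended cut, and carry out the group-indexed bookkeeping so that the two reduced spectra come out as exactly $\vec\alpha$ and $\vec\beta$ --- in particular confirming that the $m$-dependence of $p_{mf}$ and of the phases is harmless. Once that structural identification is in place, parts (ii) and (iii) are short, the AM--GM evaluation of the denominator being the only real computation.
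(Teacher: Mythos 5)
Your proposal is correct and takes essentially the same approach as the paper's own proof (App.~\ref{lcl_apdx2}): the paper constructs the identical $\sqrt{\mu_f}$-weighted coherent superposition with a group-indexed ancilla $E$ on Alice's side, uses $p_{mf}=p_m$ to get a pure-state output ensemble, computes the same Schmidt spectra $\vec{\alpha}$ and $\vec{\beta}$ across the $AaE/Bb$ cut, and invokes the same majorization theorem of \cite{PhysRevA.78.020304}, with parts (ii) and (iii) extracted exactly as you do. The only cosmetic difference is in part (iii), where you evaluate $\min_{g}\sum_f\lambda_{fg}/\lambda_f=D$ by AM--GM (using that $f\mapsto fg$ permutes $G$) while the paper uses the rearrangement inequality; both are one-line arguments yielding the same conclusion.
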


The majorization relation \eqref{lcl_eqn57} restricts the possible allowed Schmidt coefficients for the blank state and can easily be checked numerically, but an analytic expression is difficult to find, since there is no simple way of ordering \eqref{lcl_eqn58}. That is why parts ii) and iii) of the Lemma have their importance, since they focus only on the smallest Schmidt coefficient of the blank state. In particular, the bound iii) is crucial in deriving the necessity of a maximally entangled blank state for the local cloning of qubit and group-shifted qutrit states.

The proof of the Lemma is rather technical and is presented in Appendix~\ref{lcl_apdx2}. However, the main idea of the proof consists of adding an ancillary system $\HC_E$ of dimension $D$ on Alice's side and then considering a superposition $\sum_{f\in G}\sqrt{\mu_f}
\ket{\psi_f}^{AB}\otimes\ket{\phi}^{ab}\otimes\ket{f}^E$, that will be mapped by the deterministic separable operation to an ensemble 
$\{p_m,\ket{\Psi_{m,\textrm{out}}}^{AaBbE}\}$, with $\ket{\Psi_{m,\textrm{out}}}^{AaBbE}=\sum_{f\in G}\mathrm{e}^{\mathrm{i}\varphi_{mf}}\sqrt{\mu_f}
\ket{\psi_f}^{AB}\otimes\ket{\psi_f}^{ab}\otimes\ket{f}^E,
$
and we have used the fact discovered above that $p_{mf}=p_m$, independent of $f$. 
The average Schmidt vector of the output ensemble over the $AaE/Bb$ cut has to majorize the input Schmidt vector, see \cite{PhysRevA.78.020304}, and this yields i). Parts ii) and iii) are direct implications of i).

\subsection{Qubits and Qutrits}

When $D=2$ or $D=3$, one can easily show that the minimum in \eqref{lcl_eqn61} is exactly one, and therefore
\begin{theorem}[Necessity of maximally entangled blank]\label{lcl_thm11}
The following must hold.

\begin{itemize}
\item[i)] A maximally entangled state of Schmidt rank 2 is the minimum required resource for the local cloning of 2 entangled qubit states.  
\item[ii)] A maximally entangled state of Schmidt rank 3 is the minimum required resource for the local cloning of 3 group-shifted entangled qutrit states.
\end{itemize}
\end{theorem}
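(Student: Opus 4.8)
The plan is to read off the result from the lower bound on the blank's smallest Schmidt coefficient $\gamma_{\min}$ supplied by Lemma~\ref{lcl_thm10}(iii), and to pair it with the sufficiency already furnished by Theorem~\ref{lcl_thm8}. Writing $S(g,h)=\sum_{f\in G}\lambda_{fg}\lambda_{fh}/\lambda_f$, the inequality \eqref{lcl_eqn61} is exactly $\gamma_{\min}\geqslant\frac{1}{D}\min_{g,h\in G}S(g,h)$, so the whole task reduces to proving that $\min_{g,h\in G}S(g,h)=1$ when $D=2$ and $D=3$. Granting this, I would argue as follows: the blank is full rank (Theorem~\ref{lcl_thm1}), so its $D$ Schmidt coefficients $\gamma_h$ are positive and sum to $1$, whence $\gamma_{\min}\leqslant 1/D$; combined with $\gamma_{\min}\geqslant 1/D$ this forces $\gamma_{\min}=1/D$, and since every $\gamma_h\geqslant\gamma_{\min}=1/D$ while the $\gamma_h$ average to $1/D$, each must equal $1/D$, i.e. the blank is maximally entangled. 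That establishes necessity; sufficiency is the explicit protocol of Theorem~\ref{lcl_thm8}, which clones any group-shifted set with a maximally entangled blank. Together these give that a maximally entangled state of rank $2$ (resp. $3$) is both necessary and sufficient, hence the minimum resource, proving (i) and (ii).

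For the computation of the minimum I would first note the upper bound: $S(e,e)=\sum_f\lambda_f=1$, so $\min_{g,h}S(g,h)\leqslant 1$, and more generally any pair with one argument equal to the identity gives $S(g,e)=\sum_f\lambda_{fg}=1$. It then remains to show $S(g,h)\geqslant 1$ for the off-identity pairs. The diagonal pairs are handled in every dimension by the Cauchy--Schwarz inequality in Engel/Titu form, $S(g,g)=\sum_f \lambda_{fg}^2/\lambda_f\geqslant \frac{(\sum_f\lambda_{fg})^2}{\sum_f\lambda_f}=1$. For $D=2$ the only non-identity pair is the diagonal $(g,g)$, so this already closes that case. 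For $D=3$, where $G=\{e,g,g^2\}$ is cyclic, the only pairs not yet covered are $(g,g^2)$ and its partner $(g^2,g)$ (so that $h=g^{-1}$); setting $x=\lambda_e,\,y=\lambda_g,\,z=\lambda_{g^2}$ one finds $S(g,g^2)=yz/x+zx/y+xy/z$, and the three pairwise AM--GM estimates $yz/x+zx/y\geqslant 2z$, $zx/y+xy/z\geqslant 2x$, and $xy/z+yz/x\geqslant 2y$ sum to give $S(g,g^2)\geqslant x+y+z=1$. Hence $\min_{g,h}S(g,h)=1$ in both dimensions.

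The step I expect to be the crux is precisely verifying that \emph{every} off-identity pair obeys $S(g,h)\geqslant 1$, together with understanding why this succeeds only through $D=3$: the delicate pairs are the genuinely mixed ones, with $e$, $g$, $h$ pairwise distinct. For $D=2,3$ no such mixed pair exists, since every pair is trivial, diagonal, or an inverse-shift $h=g^{-1}$ bundled by the AM--GM argument above. For $D\geqslant 4$, by contrast, genuine mixed pairs appear and $S(g,h)$ can fall below $1$ (for instance a suitably skewed set on the Klein four-group), so the bound \eqref{lcl_eqn61} no longer pins $\gamma_{\min}$ to $1/D$; this is exactly why in higher dimensions one can only obtain the weaker strict-gap statement rather than full maximal entanglement. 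Finally I would assemble the two halves, necessity from $\gamma_{\min}=1/D$ and sufficiency from Theorem~\ref{lcl_thm8}, to conclude the theorem.
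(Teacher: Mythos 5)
Your proposal follows essentially the same route as the paper's own proof: both rest on Lemma~\ref{lcl_thm10}(iii), reduce the theorem to showing that the minimum in \eqref{lcl_eqn61} equals one when $D=2$ and $D=3$, and get sufficiency from the protocol of Theorem~\ref{lcl_thm8}. The only methodological difference is the choice of elementary inequalities: the paper disposes of every nontrivial pair $(g,h)$ with the rearrangement inequality, whereas you handle the diagonal pairs $S(g,g)$ by Cauchy--Schwarz in Engel form (which works uniformly in every dimension) and the single inverse pair $S(g,g^{-1})$ for $D=3$ by pairwise AM--GM. Your cases $(g,g)$, $(g^2,g^2)$, $(g,g^2)$ are exactly the paper's cases 1, 2 and 3, and both computations are correct, so this is a cosmetic rather than substantive difference; if anything your uniform treatment of the diagonal pairs is tidier. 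Your explicit averaging step ($\gamma_{\min}\leqslant 1/D$ always, so $\gamma_{\min}\geqslant 1/D$ forces every $\gamma_h$ to equal $1/D$) is left implicit in the paper but is needed and correct.

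There is one genuine, though easily repaired, gap: part (i) of the theorem quantifies over \emph{arbitrary} pairs of entangled qubit states, while Lemma~\ref{lcl_thm10} applies only to group-shifted sets of the form \eqref{lcl_eqn54} --- without that structure the quantities $\lambda_{fg}$ appearing in your $S(g,h)$ are not even defined. The paper's proof therefore begins by invoking Theorem~\ref{lcl_thm7}, which shows that any clonable pair of entangled qubit states must, up to local unitaries and phases, be of the locally shifted form \eqref{lcl_eqn37} or \eqref{lcl_eqn38}, i.e.\ group-shifted for the cyclic group of order 2; only after this reduction does Lemma~\ref{lcl_thm10} apply. Your argument silently assumes this reduction. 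For part (ii) there is no issue, since the hypothesis there is already ``group-shifted.'' Add the appeal to Theorem~\ref{lcl_thm7} at the start of case (i) and your proof is complete.
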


The proof of both i) and ii) follows easily from Lemma~\ref{lcl_thm10}, iii), by applying the rearrangement inequality to \eqref{lcl_eqn61}, and is presented in Appendix~\ref{lcl_apdx3}. 

When $D=2$, or when $D=3$ and all phases $\expo{\ii\vartheta_{\!f\!,g}}=1$, an explicit protocol for cloning these states exists \cite{PhysRevA.73.012343} (alternatively, see the proof of our Theorem~\ref{lcl_thm8}), and therefore Theorem~\ref{lcl_thm11} becomes a necessary and sufficient condition for the local cloning of such states. In particular, together with Theorem \ref{lcl_thm7}, it provides a complete solution to the problem of local cloning when $D=2$.

\subsection{$D>3$, finite gap in the necessary entanglement}
For $D>3$, preliminary numerical studies indicate that the minimum \eqref{lcl_eqn61} in Lemma~\ref{lcl_thm10}, iii) is often equal to one, with few exceptions. It might be the case that a better  choice of $\{\mu_f\}$ in \eqref{lcl_eqn59} of Lemma~\ref{lcl_thm10}, ii) may provide the $1/D$ lower bound, but we were unable to prove this.

However, for any set of group-shifted states, we can prove that there is a rather sizeable gap between the entanglement needed in the blank state and the entanglement of the states of $\SC$, as stated by the following theorem.
\begin{theorem}[Finite gap]\label{lcl_thm12}
Let $\SC=\{\ket{\psi_f}^{AB}\}_{f\in G}$ be a set of $D$ group-shifted full Schmidt rank bipartite orthogonal entangled states on $\HC_A\otimes\HC_B$ as defined by \eqref{lcl_eqn54} and considered to be not maximally entangled. If the local cloning of $\SC$ using a blank state $\ket{\phi}^{ab}$ is possible by a separable operation, then
the entanglement of the blank state has to be strictly greater that the entanglement of the states in $\SC$, often by a wide margin. Specifically,
\begin{equation}\label{lcl_eqn62}
E(\ket{\phi}^{ab})\geqslant H(\{q_r\})>E(\ket{\psi_f}^{AB}), \forall f\in G,
\end{equation}
where $E(\cdot)$ denotes the entropy of entanglement and $H(\{q_r\})$ is the Shannon entropy of the probability distribution $\{q_r\}$, $q_r:=\sum_{f\in G} \lambda_f\lambda_{fr}$, $\sum_{r\in G}{q_r}=1$.
\end{theorem}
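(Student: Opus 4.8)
The plan is to prove the two inequalities of \eqref{lcl_eqn62} separately. Throughout I use that all members of $\SC$ share the Schmidt coefficients $\{\lambda_g\}$, so that by \eqref{lcl_eqn7} one has $E(\ket{\psi_f}^{AB})=H(\{\lambda_g\})$ for every $f$, while $E(\ket{\phi}^{ab})=H(\{\gamma_h\})$; here $H$ denotes Shannon entropy taken in base $D$, consistent with \eqref{lcl_eqn7}, so that all three quantities in \eqref{lcl_eqn62} are on the same footing.

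For the strict inequality $H(\{q_r\})>E(\ket{\psi_f}^{AB})$ I would read $q_r=\sum_{f\in G}\lambda_f\lambda_{fr}$ probabilistically. Let $X$ and $Y$ be independent $G$-valued random variables, each distributed as $\{\lambda_g\}$, and set $Z=\overline X Y$ (the group product of the inverse of $X$ with $Y$). Then $\Pr(Z=r)=\sum_x\lambda_x\lambda_{xr}=q_r$. For each fixed $x$ the map $Y\mapsto \overline x Y$ is a bijection of $G$, so the conditional law of $Z$ given $X=x$ is merely a relabelling of $\{\lambda_g\}$ and $H(Z\mid X)=H(\{\lambda_g\})$. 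Since conditioning cannot increase entropy, $H(\{q_r\})=H(Z)\geq H(Z\mid X)=H(\{\lambda_g\})=E(\ket{\psi_f}^{AB})$. Equality would force $Z$ to be independent of $X$; comparing $\Pr(X=x,Z=r)=\lambda_x\lambda_{xr}$ with $\lambda_x q_r$ and using full Schmidt rank ($\lambda_x>0$ for all $x$) then makes $\lambda_{xr}$ independent of $x$, i.e. $\{\lambda_g\}$ uniform and the states maximally entangled. As that case is excluded by hypothesis, the inequality is strict.

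For $E(\ket{\phi}^{ab})\geq H(\{q_r\})$ I would apply Lemma~\ref{lcl_thm10}(i) with the uniform weights $\mu_f=1/D$. Then $\mu_{\overline f}=1/D$, and since $\sum_f\lambda_{fg}=1$ the vector $\vec\alpha$ of \eqref{lcl_eqn58} collapses to $\alpha_{g,h}=\gamma_h/D$, that is, $D$ identical copies of the block $\{\gamma_h/D\}_h$. For $\vec\beta$, the change of summation variable $f\mapsto f\overline g$ gives $\beta_{g,h}=\tfrac1D\sum_f\lambda_f\lambda_{f(\overline g h)}=q_{\overline g h}/D$, so for each fixed $g$ the block $\{\beta_{g,h}\}_h$ is again $\{q_r/D\}_r$, and $\vec\beta$ consists of $D$ copies of $\{q_r/D\}_r$. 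Shannon entropy is Schur-concave, so the majorization $\vec\alpha\prec\vec\beta$ gives $H(\vec\alpha)\geq H(\vec\beta)$. Because the entropy of $D$ copies of a normalized block $\{p_r/D\}_r$ equals $H(\{p_r\})+\log_D D=H(\{p_r\})+1$, the additive constant cancels and one is left with $H(\{\gamma_h\})\geq H(\{q_r\})$, i.e. $E(\ket{\phi}^{ab})\geq H(\{q_r\})$. Chaining this with the previous paragraph yields \eqref{lcl_eqn62}.

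The substantive part of the argument is the group bookkeeping: recognizing $q_r$ as the self-correlation of $\{\lambda_g\}$, and reducing $\vec\alpha$ and $\vec\beta$ under uniform $\mu$ to $D$-fold copies of $\{\gamma_h/D\}$ and $\{q_r/D\}$, respectively. Once these identifications are in place the conclusion is immediate from Schur-concavity of entropy and from the elementary conditioning bound; the only point that demands care is the equality case of the strict inequality, where full Schmidt rank is exactly what rules out everything except maximal entanglement.
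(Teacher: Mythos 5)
Your proof is correct, and it splits into one half that coincides with the paper's argument and one half that takes a genuinely different route. For $E(\ket{\phi}^{ab})\geqslant H(\{q_r\})$ you do essentially what the paper does: set $\mu_f=1/D$ in Lemma~\ref{lcl_thm10}, i), observe that $\vec{\alpha}$ and $\vec{\beta}$ collapse to $D$-fold copies of $\{\gamma_h/D\}$ and $\{q_r/D\}$ respectively, and invoke Schur-concavity of the Shannon entropy; the only difference is that the paper first compresses the $D^2$-component majorization to the $D$-component relation $\vec{\gamma}\prec\vec{q}$, \eqref{lcl_eqn88}, while you apply the entropy to the repeated vectors and cancel the additive constant $+1$, which is cosmetic. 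For the strict inequality $H(\{q_r\})>H(\{\lambda_g\})$ the two arguments diverge. The paper proves the majorization $\vec{q}^{\downarrow}\prec\vec{\lambda}^{\downarrow}$ by a combinatorial estimate over the sets $S_n$ of the $n$ largest Schmidt coefficients, \eqref{lcl_eqn93}--\eqref{lcl_eqn94}, then applies strict Schur-concavity, and disposes of the equality case by asserting (with only a ``one can see'') that $\vec{q}$ is a permutation of $\vec{\lambda}$ only when all $\lambda$'s are equal. You instead read $q_r$ as the distribution of $Z=\overline{X}Y$ for i.i.d.\ $X,Y$ distributed as $\{\lambda_g\}$, so that $H(\{q_r\})\geqslant H(Z\mid X)=H(\{\lambda_g\})$ is just non-negativity of mutual information, with equality if and only if $Z$ and $X$ are independent; full Schmidt rank then forces $\lambda_{xr}=q_r$ for all $x,r$, hence uniformity, which is excluded. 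Each approach buys something: the paper's intermediate majorization is a stronger statement, giving the gap for \emph{every} strictly Schur-concave function of the Schmidt vector rather than only the entropy of entanglement, whereas your argument is shorter, purely information-theoretic, and settles the equality case rigorously---precisely the point the paper glosses over.
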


The proof follows by setting $\mu_f=1/D$ in  Lemma~\ref{lcl_thm10}, i), but is rather long and is presented in Appendix~\ref{lcl_apdx4}.

\section{Conclusion and open questions\label{lcl_sct7}}
We have investigated the problem of local cloning of a set $\SC$ of bipartite $D\times D$  entangled states by separable operations, at least one of which is full Schmidt rank. We proved that all states in $\SC$ must be full rank and that the maximal set of clonable states must be generated by a finite group $G$ of order $N$, the number of states in this maximal set, and then we showed that $N$ has to divide $D$ exactly. We further proved that all states in $\SC$ must be equally entangled with respect to the $G$-concurrence measure, and this implied that any two states in $\SC$ must either share the same set of Schmidt coefficients or otherwise be incomparable under majorization.

We have completely solved two important problems in local cloning. For $D=2$ (entangled qubits), we proved that no more than two states can be locally cloned, and that these states must be locally-shifted. We showed that a two-qubit maximally entangled state is a necessary and sufficient resource for such a cloning. In addition, we provided necessary and sufficient conditions when the states are maximally entangled, valid for any dimension $D$, showing that the states must be group-shifted, and then we also provided an LOCC protocol that clones such a set of states. 

We have studied in detail the local cloning of partially entangled group-shifted states and provided an explicit protocol for local cloning of such states with a maximally entangled resource. For $D=3$ (entangled qutrits) we showed that a maximally entangled blank state is also necessary and sufficient, whereas for $D>3$ we proved that the blank state has to be strictly more entangled than any state in $\SC$, often by a sizeable amount.

The necessary form of the clonable states for $D>2$ remains an open problem. One might guess that the states have to be of a group-shifted form, but a proof of such a claim is not presently available. Although we proved the necessity of a maximally entangled resource for the $D=2$ case and for group-shifted states in the $D=3$ case, in higher dimensions it is still not clear if a maximally entangled state of Schmidt rank $D$ is always necessary. Finally it would be of interest to investigate the local cloning of less than full Schmidt rank states, a problem that is likely to bring in additional complications, such as the possibility of first distinguishing amongst the states in $\SC$ while preserving the states intact \cite{PhysRevA.75.052313}, and then once the state is known, the cloning becomes straightforward with a blank state having Schmidt coefficients that are majorized by those of each of the states in $\SC$ \cite{PhysRevLett.83.436,PhysRevA.78.020304}.

\begin{subappendices}
\section{Mathematical proofs}

\subsection{Proof of Lemma~\ref{lcl_thm5}}\label{lcl_apdx1}
Consider the singular value decomposition of $S$, $S=V\DC U$ with $\DC$ diagonal and positive definite, and $V$ and $U$ unitary operators. Using this expression for $S$ in $T_f=SL(f)S^{-1}$ shows that
\begin{align}\label{lcl_eqn63}
V^\dagger T_fV=\DC (UL(f)U^\dagger)\DC^{-1},
\end{align}
or with $\tilde T_f=V^\dagger T_fV$ and $\tilde L(f)=UL(f)U^\dagger$,
\begin{align}\label{lcl_eqn64}
\tilde T_f\DC=\DC\tilde L(f).
\end{align}
Since $\tilde T_f$ and $\tilde L(f)$ are both unitary, it is not difficult to see from this that each commutes with $\DC^\dagger\DC=\DC^2$. That is,
\begin{align}\label{lcl_eqn65}
\DC_i^2[\tilde T_f]_{ij}=[\tilde T_f]_{ij}\DC_j^2\notag\\
\DC_i^2[\tilde L(f)]_{ij}=[\tilde L(f)]_{ij}\DC_j^2,
\end{align}
from which we conclude that when $\DC_i\ne \DC_j$, $[\tilde T_f]_{ij}=0=[\tilde L(f)]_{ij}$. By a judicious choice of $U$ and $V$, we may arrange for $\DC$ to be a direct sum of scalar matrices (some may be one-dimensional). That is, $\DC=\oplus_\nu\alpha_\nu I_\nu$, and then we see that $T_f$ and $L(f)$ share the same block-diagonal structure, with blocks corresponding to this direct sum decomposition of $\DC$.

We also have directly from \eqref{lcl_eqn64} that
\begin{equation}\label{lcl_eqn66}
[\tilde T_f]_{ij}\DC_j=\DC_i[\tilde L(f)]_{ij}.
\end{equation}
Therefore, when $\DC_j=\DC_i$, $[\tilde T_f]_{ij}=[\tilde L(f)]_{ij}$, and we see that the blocks of $\tilde T_f$ are identical to those of $\tilde L(f)$. In other words, we have shown that $\tilde T_f=\tilde L(f)$ or equivalently, $T_f=WL(f)W^\dagger$ with $W=VU$, completing the proof.

\subsection{Proof of Lemma~\ref{lcl_thm10}\label{lcl_apdx2}}

\textbf{Proof of i)}
Let us introduce an ancillary system $\HC_E$ of dimension $D$ on Alice's side and 
construct the superposition
\begin{equation}\label{lcl_eqn67}
\ket{\Psi_\textrm{in}}^{ABabE}:=\sum_{f\in G}\sqrt{\mu_f}
\ket{\psi_f}^{AB}\otimes\ket{\phi}^{ab}\otimes\ket{f}^E,
\end{equation}
with $\{\mu_f\}_{f\in G}$ an arbitrary set of non-negative real coefficients that satisfy $\sum_{f}\mu_f=1$. The proof is based on the fact that if $\ket{\psi_f}^{AB}\otimes\ket{\phi}^{ab}$ is deterministically mapped to $\mathrm{e}^{\mathrm{i}\varphi_{m\!f}}\ket{\psi_f}^{AB}\otimes\ket{\psi_f}^{ab}$ (see \eqref{lcl_eqn11}), then
$\ket{\Psi_\mathrm{in}}^{ABabE}$ will be deterministically mapped to an ensemble
$\{p_m,\ket{\Psi_{m,\textrm{out}}}^{AaBbE}\}$, where
\begin{equation}\label{lcl_eqn68}
\ket{\Psi_{m,\textrm{out}}}^{AaBbE}=\sum_{f\in G}\mathrm{e}^{\mathrm{i}\varphi_{m\!f}}\sqrt{\mu_f}
\ket{\psi_f}^{AB}\otimes\ket{\psi_f}^{ab}\otimes\ket{f}^E.
\end{equation}
Note that this conclusion rests crucially on the fact, discovered in the main text, that $p_{mf}=p_m$, independent of $f$. 

Let us now write $\ket{\Psi_\mathrm{in}}^{ABabE}$ in Schmidt form over the $AaE/Bb$ cut. One has (again we use $\overline f=f^{-1}$)
\begin{align}\label{lcl_eqn69}
\ket{\Psi_\mathrm{in}}^{ABabE}&=\sum_{f\in G}\sqrt{\mu_f}
\left(
\sum_{g,h\in G}\expo{\ii\vartheta_{\!f\!,g}}\sqrt{\lambda_{g}\gamma_{h}}\,\ket{g}^A\ket{fg}^B
\ket{h}^a\ket{h}^b
\right)
\ket{f}^E\notag\\
&=\sum_{f,g,h\in G}\expo{\ii\vartheta_{\!f\!,g}}\sqrt{\mu_f\lambda_g\gamma_{h}}\,\ket{g}^A\ket{h}^a\ket{f}^E\otimes
\ket{fg}^B\ket{h}^b\notag\\
&=\sum_{g,h\in G}\left(
\sum_{f\in G}
\expo{\ii\vartheta_{f,\overline f g}}\sqrt{\mu_{f}\lambda_{\overline fg}\gamma_h}\ket{\overline f g}^{A}\ket{f}^{E}
\right)\ket{h}^a
\otimes
\ket{g}^B 
\ket{h}^b\notag\\
&=\sum_{g,h\in G}\left(
\sum_{f\in G}
\expo{\ii\vartheta_{\overline f,f g}}\sqrt{\mu_{\overline f}\lambda_{fg}\gamma_h}
\ket{f g}^{A}\ket{\overline f}^{E}
\right)\ket{h}^a
\otimes
\ket{g}^B 
\ket{h}^b,
\end{align}
where  we used the group property of $G$ and replaced $g$ by $\overline f g$ and summation over $f$ by summation over $\overline f$ where necessary.
The states on the $AaE$ system are orthogonal for different pairs of $g,h$, and therefore \eqref{lcl_eqn69} represents a Schmidt decomposition, with Schmidt coefficients $\alpha_{g,h}$ given by the squared norm of the states on the $AaE$ system,
\begin{equation}\label{lcl_eqn70}
\alpha_{g,h}=\gamma_h\sum_{f\in G}\mu_{\overline f}\lambda_{fg}.
\end{equation}
A similar calculation yields for the Schmidt coefficients $\beta_{g,h}$ of $\ket{\Psi_{m,\mathrm{out}}}^{ABabE}$ the expression
\begin{equation}\label{lcl_eqn71}
\beta_{g,h}=\sum_{f\in G}\mu_{\overline f}\lambda_{fg}\lambda_{fh},
\end{equation}
independent of $m$, which means that the average Schmidt vector of the output ensemble under the $Aa/BbE$ cut is the same as the Schmidt vector of an individual state $\ket{\Psi_{m,\mathrm{out}}}^{ABabE}$.

We have proven in \cite{PhysRevA.78.020304} that the average Schmidt vector of the output ensemble produced by a separable operation acting on a pure state has to majorize the input Schmidt vector, and this concludes i).

\textbf{Proof of ii)}
The proof follows as a direct consequence of i). A particular majorization inequality imposed by Lemma~\ref{lcl_thm10}~i) requires that the smallest Schmidt coefficients $\alpha_{\min}$ and $\beta_{\min}$  have to satisfy
\begin{equation}\label{lcl_eqn72}
\alpha_{\min}\geqslant\beta_{\min},
\end{equation}
where $\alpha$ and $\beta$ were defined in \eqref{lcl_eqn70} and \eqref{lcl_eqn71}, respectively. 
This is equivalent to
\begin{equation}\label{lcl_eqn73}
\gamma_{\min}\geqslant
\frac{\min_{g,h\in G}\sum_{f\in G}\mu_{\overline f}\lambda_{fg}\lambda_{fh}}
{\min_{g\in G}\sum_{f\in G}\mu_{\overline f}\lambda_{fg}}.
\end{equation}
The above equation must hold regardless of which set of $\{\mu_f\}$ was chosen, hence taking the maximum over all possible sets $\{\mu_f\}$ concludes the proof of ii).

\textbf{Proof of iii)}
Inserting the expression \eqref{lcl_eqn60} for $\{\mu_f\}$ in \eqref{lcl_eqn73} yields
\begin{align}\label{lcl_eqn74}
\gamma_{\min}&\geqslant
\frac{\min_{g,h\in G}\sum_{f\in G}\frac{1}{\lambda_{f}}\lambda_{fg}\lambda_{fh} }
{\min_{g\in G}\sum_{f\in G}\frac{1}{\lambda_f}\lambda_{fg}}\\
\label{lcl_eqn75}
&=\frac{1}{D}\min_{g,h\in G}\sum_{f\in G}\frac{1}{\lambda_{f}}\lambda_{fg}\lambda_{fh},
\end{align}
where \eqref{lcl_eqn75} follows from applying the rearrangement inequality to the denominator in \eqref{lcl_eqn74}, which in this case reads as
\begin{equation}\label{lcl_eqn76}
\min_{g\in G}\sum_{f\in G}\frac{1}{\lambda_f}\lambda_{fg}=\sum_{f\in G}\frac{1}{\lambda_f}\lambda_{f}=D.
\end{equation}

\subsection{Proof of Theorem~\ref{lcl_thm11}\label{lcl_apdx3}}

\textbf{Proof of i)}
In this case the group $G$ is the cyclic group of order 2, and we identify its group elements by $\{0,1\}$. We proved in Theorem~\ref{lcl_thm7} that the qubit states have to be locally shifted. 
The minimum in \eqref{lcl_eqn61} of Lemma~\ref{lcl_thm10}, iii) becomes explicitly a minimum over 4 quantities that correspond to all possible pairings of $g,h$; a straightforward  calculation shows that 3 out of these 4 quantities are equal to 1, except for $g=h=1$, in which case the sum in \eqref{lcl_eqn75} equals 
${\lambda_1^2}/{\lambda_0}+{\lambda_0^2}/{\lambda_1}$. 
Order the $\lambda$'s such that $\lambda_0\geqslant \lambda_1$ and note that
\begin{align}\label{lcl_eqn77}
&\frac{1}{\lambda_0}\leqslant\frac{1}{\lambda_1}\text{ and }\\
\label{lcl_eqn78}
&\lambda_1^2\leqslant \lambda_0^2.
\end{align}
From the rearrangement inequality applied to \eqref{lcl_eqn77} and \eqref{lcl_eqn78} it follows that
\begin{equation}\label{lcl_eqn79}
\frac{\lambda_1^2}{\lambda_0}+\frac{\lambda_0^2}{\lambda_1}\geqslant\frac{\lambda_0^2}{\lambda_0}+\frac{\lambda_1^2}{\lambda_1}=1,
\end{equation}
and hence the minimum in case i) equals 1.

\textbf{Proof of ii)}
Now the group $G$ is isomorphic to the cyclic group of order 3 and again we identify its elements by $\{0,1,2\}$. We order the $\lambda$'s such that $\lambda_0\geqslant\lambda_1\geqslant\lambda_2$.  The minimum in \eqref{lcl_eqn75} is now taken over $9$ possible pairs $g,h$. Again straightforward algebra shows that most expressions sum up to $1$, except for the following three cases for which we show that the sum exceeds $1$.
\begin{enumerate}
\item $g=h=1$, for which the sum in \eqref{lcl_eqn75} equals $\lambda_1^2/\lambda_0+\lambda_2^2/\lambda_1+\lambda_0^2/\lambda_2$;
\item $g=h=2$, for which the sum in \eqref{lcl_eqn75} equals $\lambda_2^2/\lambda_0+\lambda_0^2/\lambda_1+\lambda_1^2/\lambda_2$;
\item $g=1,h=2$ or $g=2,h=1$, for which the sum in \eqref{lcl_eqn75} equals $\lambda_1\lambda_2/\lambda_0+\lambda_2\lambda_0/\lambda_1+\lambda_0\lambda_1/\lambda_2$.
\end{enumerate}
Note first that 
\begin{align}\label{lcl_eqn80}
&\frac{1}{\lambda_0}\leqslant \frac{1}{\lambda_1}\leqslant \frac{1}{\lambda_2}\\
\label{lcl_eqn81}
&{\lambda_2}^2\leqslant {\lambda_1}^2\leqslant {\lambda_0}^2\text{ and }\\
\label{lcl_eqn82}
&\lambda_1\lambda_2\leqslant \lambda_2\lambda_0 \leqslant \lambda_0\lambda_1.
\end{align}

From the rearrangement inequality applied to \eqref{lcl_eqn80} and \eqref{lcl_eqn81} it follows  that
\begin{align}\label{lcl_eqn83}
&\frac{1}{\lambda_0}\lambda_1^2+\frac{1}{\lambda_1}\lambda_2^2+\frac{1}{\lambda_2}\lambda_0^2\geqslant\notag\\
&\geqslant \frac{1}{\lambda_0}\lambda_0^2+\frac{1}{\lambda_1}\lambda_1^2+\frac{1}{\lambda_2}\lambda_2^2=1,
\end{align}
which proves case 1,
and
\begin{align}\label{lcl_eqn84}
&\frac{1}{\lambda_0}\lambda_2^2+\frac{1}{\lambda_1}\lambda_0^2+\frac{1}{\lambda_2}\lambda_1^2\geqslant\notag\\
&\geqslant \frac{1}{\lambda_0}\lambda_0^2+\frac{1}{\lambda_1}\lambda_1^2+\frac{1}{\lambda_2}\lambda_2^2=1,
\end{align}
which proves case 2.

Next apply the rearrangement inequality to \eqref{lcl_eqn80} and \eqref{lcl_eqn82} to get
\begin{align}\label{lcl_eqn85}
&\frac{1}{\lambda_0}(\lambda_1\lambda_2)+\frac{1}{\lambda_1}(\lambda_2\lambda_0)+\frac{1}{\lambda_2}(\lambda_0\lambda_1) \notag\\
&\geqslant
\frac{1}{\lambda_0}\lambda_0\lambda_1+\frac{1}{\lambda_1}\lambda_1\lambda_2+\frac{1}{\lambda_2}\lambda_0\lambda_2=1
\end{align}
and this proves case 3.

\subsection{Proof of Theorem~\ref{lcl_thm12}\label{lcl_apdx4}}
By setting $\mu_f=1/D$ in Lemma~\ref{lcl_thm10}, i), for all $f\in G$, the majorization relation \eqref{lcl_eqn57} reads as
\begin{equation}\label{lcl_eqn86}
\frac{1}{D}\vec{\gamma}\times\vec{1}\prec \vec{\beta},
\end{equation}
where $(1/D)\vec{\gamma}\times\vec{1}$ represents a $D^2$ component vector with components 
$\gamma_h/D$, each component repeated $D$ times; here $\vec{\gamma}$ is the Schmidt vector of the blank state $\ket{\phi}^{ab}$.
The $D^2$ components $\beta_{g,h}$ of $\vec{\beta}$ are given by
\begin{equation}\label{lcl_eqn87}
\beta_{g,h}=\frac{1}{D}\sum_{f\in G}\lambda_{fg}\lambda_{fh}=\frac{1}{D}\sum_{f\in G}\lambda_{f}\lambda_{f\overline g h}.
\end{equation}
Note that it is also the case that $\beta$ has $D$ components each repeated $D$ times, so the majorization relation \eqref{lcl_eqn86} implies a majorization relation between 2 $D$-component vectors
\begin{equation}\label{lcl_eqn88}
\vec{\gamma}\prec\vec{q},
\end{equation} 
where the $r$-th component of $\vec{q}$ is given by
\begin{equation}\label{lcl_eqn89}
q_r:=D\cdot\beta_{g,h}|_{\overline gh=r}=\sum_{f\in G}\lambda_{f}\lambda_{fr}.
\end{equation}
Note that both $\vec{\gamma}$ and $\vec{q}$ are normalized probability vectors.
Since the Shannon entropy is a Schur-concave function, \eqref{lcl_eqn88}  implies at once that 
\begin{equation}\label{lcl_eqn90}
E(\ket{\phi}^{ab})\geqslant H(\{q_r\}).
\end{equation}

We now show that the second inequality in \eqref{lcl_eqn62} is strict. First we will prove that the ordered vector of probabilities $\vec{q}^{\downarrow}$ with components defined in \eqref{lcl_eqn89} and decreasing magnitudes of entries down its column, is majorized 
by  $\vec{\lambda}^{\downarrow}$, the ordered vector of the $\lambda_f$,
\begin{equation}\label{lcl_eqn91}
\vec{q}^\downarrow\prec\vec{\lambda}^\downarrow.
\end{equation}
Since the Shannon entropy is not just Schur-concave, but strictly Schur-concave, this will imply at once that 
\begin{equation}\label{lcl_eqn92}
H(\{q_r\})\geqslant H(\{\lambda_f\})=E(\ket{\psi_f}^{AB}),~\forall f\in G,
\end{equation}
with equality if and only if $\vec{q}^\downarrow$
equals $\vec{\lambda}^\downarrow$ (or, equivalently, if and only if the unordered vector $\vec{q}$ is the same as $\vec{\lambda}$ up to a permutation). One can see that $\vec{q}$ is not a permutation of $\vec{\lambda}$ unless all $\lambda$'s are equal, case that we exclude. Hence, once we show the majorization condition \eqref{lcl_eqn91} holds, the proof will be complete.

We will actually show that $\vec\lambda^{\downarrow}$ majorizes every vector $\vec{q}$ of the $q_r$'s no matter how $\vec q$ is ordered. Denote by $S_n$, with $|S_n|=n$ and $n=1,\cdots,D-1$, the subset consisting of those elements $f\in G$ such that $\lambda_f$ is one of the largest $n$ of the $\lambda$'s. Then, we need to show that for each $n$,
\begin{eqnarray}
\label{lcl_eqn93}
\sum_{g\in S_n}\lambda_g\geqslant\sum_{g\in S_n}q_{\sigma(g)}=\sum_{g\in S_n}\sum_{f\in G}\lambda_f\lambda_{f\sigma(g)},
\end{eqnarray}
where $\sigma$ is an arbitrary permutation of the group elements. Since $\sum_f\lambda_f=1$, this is equivalent to
\begin{eqnarray}
\label{lcl_eqn94}
\sum_{f\in G}\lambda_f\left[\sum_{g\in S_n}\lambda_g-\sum_{g\in S_n}\lambda_{f\sigma(g)}\right]\geqslant0.
\end{eqnarray}
However, given the way we have defined $S_n$, it is always true that the quantity in square brackets is non-negative. The reason is that the first term in this quantity is the sum of the $n$ largest of the $\lambda$'s. Therefore the second term, which is also a sum of $n$ of the $\lambda$'s, cannot possibly be greater than the first. In fact, it is clear that for general sets of Schmidt coefficients $\{\lambda_f\}$, the quantity in square brackets will not be particularly small, implying that the gap between the required entanglement of the blank state and the entanglement of the states in $\SC$ will be sizable. This ends the proof.

\section{$|G|<D$ case\label{lcl_apdx5}}
In the main body of the current chapter, we restricted our consideration to the $|G|=D$ case. All of our results remain valid also when $|G|<D$, with minor modifications. Briefly, when $|G|<D$, $T_f^{(m)}$ is a direct sum of $n_t=D/|G|$ copies of $L(f)$, and the following Theorems/Lemmas have to be modified accordingly.

\textbf{Theorem~\ref{lcl_thm6}.}

Since Lemma~\ref{lcl_thm5} holds for any two unitary representations, it will hold when the regular representation $L(f)$ is replaced by a direct sum of a number of copies of $L(f)$. In this case, the maximally entangled group-shifted states \eqref{lcl_eqn35} and \eqref{lcl_eqn36} of Theorem~\ref{lcl_thm6} have the form
\begin{equation}
\label{lcl_eqn95}
\ket{\psi_f}^{AB} = \frac{1}{\sqrt{D}}\sum_{n=1}^{n_t}\sum_{g\in G}\ket{fg,n}^{A}\ket{g,n}^{B},
\end{equation}
or
\begin{equation}
\label{lcl_eqn96}
\ket{\psi_f}^{AB} = \frac{1}{\sqrt{D}}\sum_{n=1}^{n_t}\sum_{g\in G}\ket{g,n}^{A}\ket{fg,n}^{B},
\end{equation}
respectively. Here the states $\{\ket{g,n}\}_{g\in G,n=1,\ldots,n_t}$ are an orthonormal basis, $\ip{g,n}{h,m}=\delta_{g,h}\delta_{n,m}$. The symbols $f,g\in G$ label the group elements and $m,n=1,\ldots,n_t$ label the copies of the regular representation.

\textbf{Theorem~\ref{lcl_thm8}.}

When the family of partially entangled group-shifted states \eqref{lcl_eqn45} is replaced by
\begin{equation}\label{lcl_eqn97}
\ket{\psi_f}^{AB} = \sum_{n=1}^{n_t}\sum_{g\in G}\sqrt{\lambda_{g,n}}\ket{g,n}^{A}\ket{fg,n}^{B}
\end{equation}
and the maximally entangled blank state \eqref{lcl_eqn46} is modified to
\begin{equation}\label{lcl_eqn98}
\ket{\phi}^{ab}=\frac{1}{\sqrt{D}}
\sum_{m=1}^{n_t}\sum_{h\in G}\ket{h,m}^{a}\ket{h,m}^{b},
\end{equation}
the local cloning protocol of Theorem~\ref{lcl_thm8} continues to work, provided that
\begin{enumerate}
\item The controlled-group unitary \eqref{lcl_eqn47} is replaced by
\begin{align}\label{lcl_eqn99}
&\sum_{n=1}^{n_t}\sum_{g\in G} \dyad{g,n}{g,n} \otimes P_{g},\text{with}\notag\\
&P_{g}=\sum_{m=1}^{n_t}\sum_{h\in G}\dyad{gh,m}{h,m}.
\end{align}
\item The measurement \eqref{lcl_eqn49} Alice performs is changed to
\begin{align}\label{lcl_eqn100}
			M_{r}=\sum_{m=1}^{n_t}\frac{1}{(\sum_{k\in G}\lambda_{k,m})^{1/2}}\sum_{h\in G}\sqrt{\lambda_{hr,m}}\dyad{h,m}{h,m}.
\end{align}
where the factor involving the sum over $k$ is needed to insure that this set of measurement operators corresponds to a complete measurement.
\item Finally the unitary correction \eqref{lcl_eqn51} Alice and Bob perform is modified to
\begin{equation}\label{lcl_eqn101}
Q_r=\sum_{m=1}^{n_t}\sum_{h\in G}\dyad{hr,m}{h,m}.
\end{equation}
\end{enumerate}

\textbf{Lemma~\ref{lcl_thm10}.}

First the blank state has to be modified to
\begin{equation}\label{lcl_eqn102}
\ket{\phi}^{ab}=\frac{1}{\sqrt{D}}
\sum_{m=1}^{n_t}\sum_{h\in G}\sqrt{\gamma_{h,m}}\ket{h,m}^{a}\ket{h,m}^{b}.
\end{equation}
Next we follow the line of thought in Appendix~\ref{lcl_apdx2}. Even though there are only $|G|<D$ states in the clonable set $\SC$, we still use a $D$ dimensional ancillary  system $\HC_E$ on Alice's side, with a basis now given by $\{\ket{f,n}^E\}_{f\in G,n=1,\ldots,n_t}$. Restricting to an ancillary system of dimension $|G|$ leads to unnecessary complications, since the rearrangement inequality can no longer be applied in part ii) to obtain iii). 

We consider again an input superposition 
\begin{equation}\label{lcl_eqn103}
	\sum_{n=1}^{n_t}\sum_{f\in G}\sqrt{\mu_{f,n}}\ket{\psi_f}^{AB}\otimes\ket{\phi}^{ab}\otimes\ket{f,n}^E
\end{equation}
and look at the Schmidt vector of the output ensemble produced by the separable operation acting on \eqref{lcl_eqn103}, where $\{\mu_{f,n}\}$ is an arbitrary set of coefficients satisfying $\sum_{n=1}^{n_t}\sum_{f\in G}\mu_{f,s}=1$. 
We then have:
\begin{itemize}
	\item[i)] The majorization condition $\vec{\alpha}\prec\vec{\beta}$ corresponding to \eqref{lcl_eqn57} holds, provided the vectors $\vec{\alpha}$ and $\vec{\beta}$ in \eqref{lcl_eqn58} are redefined as
	\begin{align}\label{lcl_eqn104}
		\alpha_{g,h}^{n,m}&=\gamma_{h,m}\sum_{s=1}^{n_t}\sum_{f\in G}\mu_{\overline f,s}\lambda_{fg,n},\notag\\
		\beta_{g,h}^{n,m}&=\sum_{s=1}^{n_t}\sum_{f\in G}\mu_{\overline{f},s}\lambda_{fg,n}\lambda_{fh,m}.
	\end{align}
	
\item[ii)] The smallest Schmidt coefficient $\gamma_{\min}$ of the blank has to satisfy
\begin{equation}\label{lcl_eqn105}
\gamma_{\min}\geqslant
\max_{\{\mu_{f,s}\}}\frac{\min_{m,n}\min_{g,h\in G}\sum_{s=1}^{n_t}\sum_{f\in G}\mu_{\overline f,s}\lambda_{fg,n}\lambda_{fh,m}}
{\min_{n}\min_{g\in G}\sum_{s=1}^{n_t}\sum_{f\in G}\mu_{\overline f, s}\lambda_{fg,n}}.
\end{equation}

\item[iii)] A good choice of $\{\mu_{f,s}\}$ is given by $\mu_{f,s}=1/\lambda_{\overline f,s}$ (ignore the normalization, since $\mu_{f,s}$ appears both on the numerator and denominator of \eqref{lcl_eqn105}). Then \eqref{lcl_eqn105} becomes
\begin{equation}\label{lcl_eqn106}
\gamma_{\min}\geqslant\frac{1}{|D|}\min_{m,n}\min_{g,h\in G}\sum_{s=1}^{n_t}\sum_{f\in G}\frac{1}{\lambda_{\overline f,s}}\lambda_{fg,n}\lambda_{fh,m}.
\end{equation}
\end{itemize}

\textbf{Theorem~\ref{lcl_thm12}.}

Theorem~\ref{lcl_thm12} still provides a finite gap between the entanglement needed in the blank state and the entanglement of group shifted states \eqref{lcl_eqn97}. The proof follows the same ideas as before, by setting $\mu_{f,s}=1/D$, for all $f\in G$ and $s=1,\ldots,n_t$ in the majorization relation of the ``modified" Lemma~\ref{lcl_thm10},i) above.
\end{subappendices}

\chapter{Quantum error correcting codes using qudit graph states\label{chp5}}

\section{Introduction\olabel{Introduction}}
\label{quditgrcodes_sct1}

Quantum error correction is an important part of various schemes for quantum
computation and quantum communication, and hence quantum error correcting
codes, first introduced about a decade ago \cite{PhysRevA.52.R2493,PhysRevA.55.900,PhysRevLett.77.793} have
received a great deal of attention.  For a detailed discussion see Ch.~10 of
\cite{NielsenChuang:QuantumComputation}. Most of the early work dealt with codes for qubits, with a
Hilbert space of dimension $D=2$, but qudit codes with $D>2$ have also been
studied \cite{IEEE.45.1827, IEEE.47.3065, PhysRevA.65.012308, quantph.0111080, quantph.0202007, IJQI.2.55,quantph.0210097}. They are of intrinsic interest and could turn out to be of some
practical value.

Cluster or graph states, which were initially introduced in connection with
measurement based or one-way quantum computing \cite{PhysRevLett.86.5188}, are also
quite useful for constructing quantum codes, as shown in
\cite{PhysRevA.65.012308,quantph.0111080,quantph.0202007} in a context in which both the encoding
operation and the resulting encoded information are represented in terms of
graph states.  In the present chapter we follow \cite{quantph.0202007} in focusing on
qudits with general $D$, thought of as elements of the additive group
$\mathbb{Z}_D$ of integers mod $D$.  However, our strategy is somewhat
different, in that we use graph states and an associated basis (graph basis)
of the $n$-qudit Hilbert space in order to construct the coding subspace,
while \emph{not} concerning ourselves with the encoding process.  This leads
to a considerable simplification of the problem along with the possibility of
treating nonadditive graph codes on exactly the same basis as additive or
stabilizer codes.  It also clarifies the relationship (within the context of
graph codes as we define them) of degenerate and nondegenerate codes, though
in this chapter we focus mainly on the latter.  The approach used here was
developed independently in \cite{IEEE.55.433} and \cite{quantph.0709.1780} for $D=2$, and in
\cite{PhysRevA.78.012306} for $D>2$; thus several of our results are similar to those
reported in these references.

Following an introduction in Sec.~\ref{quditgrcodes_sct2} to Pauli operators, graph states,
and the graph basis, as used in this chapter, the construction of graph codes is
the topic of Sec.~\ref{quditgrcodes_sct5}.  In Sec.~\ref{quditgrcodes_sct6} we review the conditions
for an $(\!(n,K,\dist)\!)_D$ code, where $n$ is the number of carriers, $K$
the number of codewords or dimension of the coding space, $\dist$ the distance
of the code, and $D$ the dimension of the Hilbert space of one qudit.  We also
consider the distinction between degenerate and nondegenerate codes.  Our
definition of graph codes follows in Sec.~\ref{quditgrcodes_sct7}, and the techniques we
use to find nondegenerate codes, which are the main focus of this chapter, are
indicated in Sec.~\ref{quditgrcodes_sct8}, while various results in terms of specific
codes are the subject of Sec.~\ref{quditgrcodes_sct9}.

In Sec.~\ref{quditgrcodes_sct11} we show how to construct graph codes with $\dist=2$ that
saturate the quantum Singleton (QS) bound for arbitrarily large $n$ and $D$,
except when $n$ is odd and $D$ is even, and we derive a simple sufficient
condition for graphs to yield such codes.  For $n$ odd and $D=2$ we have an
alternative and somewhat simpler method of producing nonadditive codes of the
same size found in \cite{PhysRevLett.99.130505}.  For both $D=2$ and $D=3$ we have studied
nondegenerate codes on sequences of cycle and wheel graphs, in
Secs.~\ref{quditgrcodes_sct12} and \ref{quditgrcodes_sct13}. These include a number of cases which
saturate the QS bound for $\dist=2$ and 3, and others with $\dist=3$ and 4
which are the largest possible additive codes for the given $n$, $D$, and
$\dist$.  Section~\ref{quditgrcodes_sct13} contains results for a series of hypercube graphs
with $n=4$, 8, and 16, and in particular a $(\!(16,128,4)\!)_2$ additive code.

In Sec.~\ref{quditgrcodes_sct15} we show that what we call G-additive codes are stabilizer
codes (hence ``additive'' in the sense usually employed in the literature),
using a suitable generalization of the stabilizer formalism to general $D$.
In this perspective the stabilizer is a dual representation of a code which
is equally well represented by its codewords.
The final Sec.~\ref{quditgrcodes_sct16} has a summary of our results and indicates
directions in which they might be extended.

\section{Pauli operators and graph states\olabel{Graph states}}
\label{quditgrcodes_sct2}

\subsection{Pauli operators}
\label{quditgrcodes_sct3}

Let $\{\ket{j}\}$, $j=0,1,\ldots D-1$ be an orthonormal basis for the
$D$-dimensional Hilbert space of a qudit, and define the unitary operators
\footnote{
See \cite{PhysRevA.71.042315} for a list of references to work that employs operators of this type.
}
\begin{equation}
\label{quditgrcodes_eqn1}
  Z = \sum_{j=0}^{D-1} \omega^j \ket{j} \! \bra{j},\quad
  X = \sum_{j=0}^{D-1} \ket{j} \! \bra{j \oplus 1},
\end{equation}
with $\oplus$ denoting addition mod $D$. They satisfy
\begin{equation}
\olabel{XZcommute}
\label{quditgrcodes_eqn2}
 Z^D=I=X^D, \quad XZ=\omega ZX,\quad
  \omega := \mathrm{e}^{2 \pi \mathrm{i} /D}.
\end{equation}
We shall refer to the collection of $D^2$ operators $\{X^\mu
Z^\nu\}$, $\mu,\nu=0,1,\ldots, D-1$, as (generalized) \emph{Pauli operators}\index{Pauli operators},
as they generalize the well known $I,X,Z,XZ\,(=-iY)$ for a qubit.  Together
they form the \emph{Pauli basis}\index{Pauli basis} of the space of operators on a qudit.

For a collection of $n$ qudits with a Hilbert space
$\HC=\HC_1\otimes\HC_2\otimes\cdots\HC_n$ we use subscripts to identify the
corresponding Pauli operators: thus $Z_l$ and $X_l$ operate on the space
$\HC_l$ of qudit $l$.  An operator of the form
\begin{equation}
\olabel{eqv1}
\label{quditgrcodes_eqn3}
 P = \omega^\lambda X^{\mu_1}_1 Z^{\nu_1}_1 X^{\mu_2}_2 Z^{\nu_2}_2
 \cdots X^{\mu_n}_n Z^{\nu_n}_n,
\end{equation}
where $\lambda$, and $\mu_l$ and $\nu_l$ for $1\leq l\leq n$, are integers in
the range $0$ to $D-1$, will be referred to as a \emph{Pauli product}\index{Pauli product}.  If
$\mu_l$ and $\nu_l$ are both 0, the operator on qudit $l$ is the identity, and
can safely be omitted from the right side of \eqref{quditgrcodes_eqn3}.  The collection
$\QC$ of all operators $P$ of the form \eqref{quditgrcodes_eqn3} with $\lambda=0$, i.e., a
prefactor of 1, forms an orthonormal basis of the space of operators on $\HC$
with inner product $\langle A,\,B\rangle = D^{-n}\Tr(A^\dagger B)$; we call it
the (generalized) \emph{Pauli basis} $\QC$.

If $P$ and $Q$ are Pauli products, so is $PQ$, and hence the collection $\PC$
of all operators of the form \eqref{quditgrcodes_eqn3} for $n$ fixed form a multiplicative
group, the \emph{Pauli group}\index{Pauli group}.  While $\PC$ is not Abelian, it has the
property that
\begin{equation}
\label{quditgrcodes_eqn4}
 PQ=\omega^\mu QP,
\end{equation}
where $\mu$ is an integer that depends on $P$ and $Q$.  (When $D=2$ and
$\omega=-1$ it is customary to also include in the Pauli group operators of
the form \eqref{quditgrcodes_eqn3} multiplied by $i$.  For our purposes this makes no
difference.)

The \emph{base}\index{base, of an operator} of an operator $P$ of the form \eqref{quditgrcodes_eqn3} is the collection
of qudits, i.e., the subset of $\{1,2,\ldots n\}$, on which the operator acts
in a nontrivial manner, so it is not just the identity, which is to say those
$j$ for which either $\mu_j$ or $\nu_j$ or both are greater than 0.  A general
operator $R$ can be expanded in the Pauli basis $\QC$, and its base is the
union of the bases of the operators which are present (with nonzero
coefficients) in the expansion.  The \emph{size}\index{size, of an operator} of an operator $R$ is defined
as the number of qudits in its base, i.e., the number on which it acts in a
nontrivial fashion.  For example, the base of $P=\omega^2 X^2_1 X^{}_4Z^{}_4$
(assuming $D\geq 3$) is $\{1,4\}$ and its size is 2; whereas the size of
$R=X^{}_1 + 0.5X^{}_2Z^2_2 Z^{}_3 +i X^{}_4$ is 4.

For two distinct qudits $l$ and $m$ the
\emph{controlled-phase}\index{Controlled-phase gate} operation $\CP_{lm}$ on $\HC_l\otimes\HC_m$,
generalizing the usual controlled-phase for qubits, is
defined by
\begin{equation}
\olabel{CPdef}
\label{quditgrcodes_eqn5}
  \CP_{lm} = \sum_{j=0}^{D-1} \sum_{k=0}^{D-1}  \omega^{jk}
\ket{j} \! \bra{j} \otimes \ket{k} \! \bra{k}
= \sum_{j=0}^{D-1} \ket{j} \! \bra{j} \otimes Z^j_m.
\end{equation}
Of course, $C_{lm}=C_{ml}$, and it is easily checked that $(C_{lm})^D=I$.
It follows from its definition that $C_{lm}$ commutes with $Z_l$ and $Z_m$, and
thus with $Z_p$ for any qudit $p$.

\subsection{Graph states}
\label{quditgrcodes_sct4}

Let $G=(V,E)$ be a graph with $n$ vertices $V$, each corresponding to a qudit,
and a collection $E$ of undirected edges connecting pairs of distinct vertices
(no self loops).  Multiple edges are allowed, as in Fig.\ref{quditgrcodes_fgr1} for
the case of $D=4$, as long as the multiplicity (weight) does not exceed $D-1$,
thus at most a single edge in the case of qubits.  The $lm$ element
$\Gamma_{lm}=\Gamma_{ml}$ of the \emph{adjacency matrix}\index{adjacency matrix} $\Gamma$ is the
number of edges connecting vertex $l$ with vertex $m$.  The graph state
\begin{equation}
\label{quditgrcodes_eqn6}
 \ket{G} = \mathcal{U}\ket{G^0}= \mathcal{U}\left(\ket{+}^{\otimes n}\right),
\end{equation}
is obtained by applying the unitary operator
\begin{equation}
\olabel{bigU}
\label{quditgrcodes_eqn7}
 \mathcal{U} =
 \prod_{\{l,m\} \in E} \left(C_{lm}\right)^{\Gamma_{lm}}.
\end{equation}
to the product state
\begin{equation}
\label{quditgrcodes_eqn8}
 \ket{G^0} := \ket{+}\otimes\ket{+}\otimes\cdots\ket{+},
\end{equation}
where
\begin{equation}
  \ket{+} := D^{-1/2} \sum_{j=0}^{D-1} \ket{j}
\label{quditgrcodes_eqn9}
\end{equation}
is a normalized eigenstate of $X$, with eigenvalue 1.  In \eqref{quditgrcodes_eqn7} the
product is over all distinct pairs of qudits, with $(\CP_{lm})^0=I$ when $l$
and $m$ are not joined by an edge.  Since the $C_{lm}$ for different $l$ and
$m$ commute with each other, and also with $Z_p$ for any $p$, the order of the
operators on the right side of \eqref{quditgrcodes_eqn7} is unimportant.

\begin{figure}
\begin{center}
\includegraphics{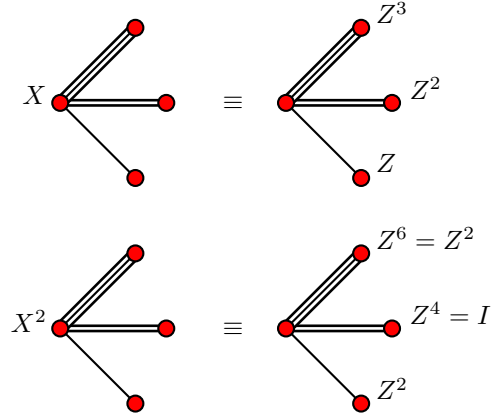}
\caption{Action of $X$ and $X^2$ on graph state ($D=4$).}
\olabel{fig:xonqudit}
\label{quditgrcodes_fgr1}
\end{center}
\end{figure}

Given the graph $G$ we define the \emph{graph basis}\index{graph basis} to be the set of
$D^n$ states
\begin{align}
\label{quditgrcodes_eqn10}
\ket{\vect{a}} &:= \ket{  a_1, a_2, \ldots ,a_n} = Z^{\vect{a}} \ket{G}
\notag\\
&= Z_1^{a_1}  Z_2^{a_2}  \cdots   Z_n^{a_n} \ket{G}
\end{align}
where $\vect{a}=(a_1,\ldots a_n)$ is an $n$-tuple of integers, each taking a
value between $0$ and $D-1$. The original graph state $\ket{G}$ is
$\ket{0,0,\ldots,0}$ in this notation.  That this collection forms an
orthonormal basis follows from the fact that the $Z_p$ operators commute with
the $\CP_{lm}$ operators, so can be moved through the unitary $\mathcal{U}$ on
the right side of \eqref{quditgrcodes_eqn6}.  As the states $Z^\nu\ket{+}$, $0\leq \nu \leq
D-1$, are an orthonormal basis for a single qudit, their products form an
orthonormal basis for $n$ qudits. Applying the unitary $\mathcal{U}$ to this
basis yields the orthonormal graph basis. The $n$-tuple representation in
\eqref{quditgrcodes_eqn10} is convenient in that one can define
\begin{align}
\label{quditgrcodes_eqn11}
  \ket{ \vect{a} \oplus \vect{b} } &:= \ket{  a_1 \oplus b_1, a_2 \oplus b_2,
  \ldots ,a_n \oplus b_n},
\notag\\
 \ket{ j  \vect{a}} &:= \ket{ j a_1, j a_2, \ldots ,j a_n  },
\end{align}
where $j$ is an integer between $0$ and $D-1$, and arithmetic operations are
mod $D$.

One advantage of using the graph basis is that its elements are mapped to each
other by a Pauli product (up to powers of $\omega$), as can be seen by
considering the action of $Z_l$ or $X_l$ on a single qudit.  The result for
$Z_l$ follows at once from \eqref{quditgrcodes_eqn10}.  And as shown in App.~\ref{quditgrcodes_sct17} and
illustrated in Fig.~\ref{quditgrcodes_fgr1}, the effect of applying $X_l$ to $\ket{G}$ is
the same as applying $(Z_m)^{\Gamma_{lm}}$ to each of the qudits corresponding
to neighbors of $l$ in the graph.  Applying these two rules and keeping track
of powers of $\omega$ resulting from interchanging $X_l$ and $Z_l$,
see \eqref{quditgrcodes_eqn2}, allows one to easily evaluate the action of any Pauli
product on any $\ket{\vect{a}}$ in the graph basis.

\section{Code construction\olabel{Code}}
\label{quditgrcodes_sct5}

\subsection{Preliminaries}
\label{quditgrcodes_sct6}

Consider a quantum code corresponding to a $K$-dimensional subspace,  with
orthonormal basis $\{ \ket{\vect{c}_q}\}$, of the Hilbert
space $\HC$ of $n$ qudits. When the Knill-Laflamme \cite{PhysRevA.55.900} condition
\begin{equation}
\olabel{eq:orthoOriginal}
\label{quditgrcodes_eqn12}
\bra{\vect{c}_q} Q \ket{\vect{c}_r} = f(Q) \delta_{qr}
\end{equation}
is satisfied for all $q$ and $r$ between $0$ and $K-1$, and every operator $Q$
on $\HC$ such that $1 \leq \mbox{size}(Q) < \dist$, but fails for some
operators of size $\dist$, the code is said to have \emph{distance}\index{distance} $\dist$,
and is an $(\!(n, K, \dist)\!)_D$ code; the subscript is often omitted when
$D=2$.  (See the definition of size in Sec.~\ref{quditgrcodes_sct3}. The only operator of
size 0 is a multiple of the identity, so \eqref{quditgrcodes_eqn12} is trivially
satisfied.)  A code of distance $\dist$ allows the correction of any error
involving at most $\lfloor (\dist-1)/2\rfloor$ qudits, or an error on
$\dist-1$ (or fewer) qudits if the location of the corrupted qudits is already
known (e.g., they have been stolen).

It is helpful to regard \eqref{quditgrcodes_eqn12} as embodying two conditions: the obvious
off-diagonal condition saying that the matrix elements of $Q$ must vanish when
$r\neq q$; and the diagonal condition which, since $f(Q)$ is an arbitrary
complex-valued function of the operator $Q$, is nothing but the requirement
that all diagonal elements of $Q$ (inside the coding space) be identical.  The
off-diagonal condition has a clear analog in classical codes, whereas the
diagonal one does not. Both must hold for all operators of size up to and
including $\delta-1$, but need not be satisfied for larger operators.

In the coding literature it is customary to distinguish \emph{nondegenerate}\index{nondegenerate codes}
codes for which $f(Q)=0$ for all operators of size between 1 and $\delta-1$,
i.e., for \emph{all} $q$ and $r$
\begin{equation}
\olabel{eq:ortho}
\label{quditgrcodes_eqn13}
\bra{\vect{c}_q} Q \ket{\vect{c}_r} = 0 \;\; \text{for} \;\;
 1\leq\text{size}(Q)<\dist,
\end{equation}
and \emph{degenerate}\index{degenerate codes} codes for which $f(Q)\neq 0$ for at least one $Q$ in the
same range of sizes.  See p.~444 of \cite{NielsenChuang:QuantumComputation} for the motivation behind
this somewhat peculiar terminology when $\delta$ is odd. In this chapter our
focus is on nondegenerate codes. For the most part they seem to perform as well
as degenerate codes, though there are examples of degenerate codes that provide
a larger $K$ for given values of $n$, $\dist$, and $D$ than all known
nondegenerate codes. Examples are the $(\!(6, 2, 3)\!)_2$ 
\footnote{
While there seems to be no proof that the $(\!(6, 2, 3)\!)_2$ degenerate code has a larger $K$ than any nondegenerate code with $n=6$ and $\dist=3$, some support comes from the fact that we performed an exhaustive search of all graphs with 6 vertices and did not find a nondegenerate graph code with $\dist=3$ and $K>1$. But the notion that this degenerate code is superior to nondegenerate codes is undercut by the observation that the well known nondegenerate $(\!(5, 2, 3)\!)_2$ code uses only 5 instead of 6 qubits to achieve equal values of $K$ and $\dist$.
} 
and
$(\!(25, 2, 9)\!)_2$ codes mentioned in \cite{IEEE.44.1369}.

\subsection{Graph codes}
\label{quditgrcodes_sct7}

When each basis vector $\ket{\vect{c}_q}$ is a member of the graph basis, of
the form \eqref{quditgrcodes_eqn10} for some graph $G$, we shall say that the corresponding
code is a \emph{graph codes}\index{graph code} associated with this graph.  As noted in
Sec.~\ref{quditgrcodes_sct1}, this differs from the definition employed in
\cite{PhysRevA.65.012308,quantph.0111080,quantph.0202007}, but agrees with that in more recent $D=2$
studies \cite{quantph.0709.1780, IEEE.55.433}, because we do not concern ourselves with the
processes of encoding and decoding.  In what follows we shall always assume
$\dist\geq 2$, since $\delta=1$ is trivial.  As the left side of \eqref{quditgrcodes_eqn12}
is linear in $Q$, it suffices to check it for appropriate operators drawn from
the Pauli basis $\QC$ as defined in Sec.~\ref{quditgrcodes_sct3}.  It is helpful to note
that for any $Q\in\QC$, any pair $\ket{\vect{c}_q}$ and $\ket{\vect{c}_r}$ of
graph basis states and any $n$-tuple $\vect{a}$,
\begin{align}
\label{quditgrcodes_eqn14}
\bra{\vect{c}_q \oplus \vect{a} } Q \ket{\vect{c}_r \oplus \vect{a} } &=
\bra{\vect{c}_q} Z^{-\vect{a}} Q Z^{\vect{a}} \ket{\vect{c}_r}
\notag\\
 &=\omega^\mu\bra{\vect{c}_q} Q \ket{\vect{c}_r}
\end{align}
for some integer $\mu$ depending on $Q$ and $\vect{a}$; see \eqref{quditgrcodes_eqn10},
\eqref{quditgrcodes_eqn11} and \eqref{quditgrcodes_eqn4}.  Therefore, if \eqref{quditgrcodes_eqn12} is satisfied for
some $Q$ and a collection $\{\ket{\vect{c}_q}\}$ of codewords, the same will be
true for the same $Q$ and the collection $\{ \ket{\vect{c}_q \oplus \vect{a} }
\}$ (with an appropriate change in $f(Q)$).  Thus we can, and hereafter always
will, choose the first codeword to be
\begin{equation}
\label{quditgrcodes_eqn15}
 \ket{\vect{c}_0}=\ket{0,0,\ldots ,0}=\ket{G}.
\end{equation}

Analogous to Hamming distance in classical information theory we define
the \emph{Pauli distance}\index{Pauli distance} $\Delta$ between two graph basis states as
\begin{equation}
\label{quditgrcodes_eqn16}
\Delta(\vc_q,\vc_r) =\Delta \left( \ket{\vect{c}_q} ,\ket{\vect{c}_r} \right)
 :=\\ \min \{ \mbox{size($Q$)} : \bra{\vect{c}_q} Q \ket{\vect{c}_r} \neq 0 \},
\end{equation}
where it suffices to take the minimum for $Q\in\QC$, the Pauli basis.
(Ket symbols can be omitted from the arguments of $\Delta$ when the
meaning is clear.)
Also note the identities
\begin{align}
\olabel{eq:distIdentity}
\label{quditgrcodes_eqn17}
\Delta(\vc_q,\vc_r) &= \Delta(\vc_r,\vc_q) 
= \Delta(\vc_q \oplus \vect{a} ,\vc_r \oplus \vect{a}) \notag \\
&= \Delta(\vc_0,\vc_r\ominus\vc_q),
\end{align}
where $\vect{a}$ is any $n$-tuple, and $\ominus$ means difference mod $D$, see
\eqref{quditgrcodes_eqn11}.  The second equality is a consequence of \eqref{quditgrcodes_eqn14}.  Note
that if in \eqref{quditgrcodes_eqn16} we minimize only over $Q$ operations which are tensor
products of $Z$'s (no $X$'s), $\Delta$ is exactly the Hamming distance between
the $n$-tuples $\vect{c}_q$ and $\vect{c}_r$, see \eqref{quditgrcodes_eqn10}.

For the case $q=r$, where \eqref{quditgrcodes_eqn16} gives 0 (for $Q=I$), we introduce a
special \emph{diagonal distance}\index{diagonal distance} $\Delta'$ which is the minimum size of the
right side of \eqref{quditgrcodes_eqn16} when one restricts $Q$ to be an element of $\QC$
of size 1 or more.  The diagonal distance does not depend on the particular
value of $q=r$, but is determined solely by the graph state $\ket{G}$---see
\eqref{quditgrcodes_eqn14} with $r=q$---and thus by the graph $G$.  This has the important
consequence that if we consider a particular $G$ and want to find the optimum
codes for a given $\delta$ that is no larger than $\Delta'$, the collection of
operators $Q\in\QC$ for which \eqref{quditgrcodes_eqn12} needs to be checked will all have
zero diagonal elements, $f(Q)=0$, and we can use \eqref{quditgrcodes_eqn13} instead of
\eqref{quditgrcodes_eqn12}.  In other words, for the graph in question and for
$\delta\leq\Delta'$, all graph codes are nondegenerate, and in looking for an
optimal code one need not consider the degenerate case. Our computer results
in Sec.~\ref{quditgrcodes_sct9} are all limited to the range $\delta\leq\Delta'$ where no
degenerate codes exist for the graph in question. Any code with $\dist >
\Delta'$ will necessarily be degenerate, since there is at least
one nontrivial $Q$ for which \eqref{quditgrcodes_eqn12} must be checked for the diagonal
elements.

A code is \emph{G-additive (graph-additive)}\index{G-additive (graph-additive) code} if given any two codewords
$\ket{\vect{c}_q}$ and $\ket{\vect{c}_r}$ belonging to the code,
$\ket{\vect{c}_q \oplus \vect{c}_r}$ is also a codeword. As shown in
Sec.~\ref{quditgrcodes_sct15}, this notion of additivity implies the code is additive in the
sense of being a stabilizer code. For this reason, we shall omit the G in
G-additive except in cases where it is essential to make the distinction.
Codes that do not satisfy the additivity condition are called nonadditive.
The additive property allows one to express all codewords as ``linear
combinations'' of $k$ suitably chosen codeword generators. This implies an
additive code must have $K=D^r$, $r$ an integer, whenever $D$ is prime. We
will see an example of this in Sec.~\ref{quditgrcodes_sct9} for $D=2$.

The \emph{quantum Singleton}\index{quantum Singleton bound} (QS) bound \cite{PhysRevA.55.900}
\begin{equation}
\olabel{eq:bound1}
\label{quditgrcodes_eqn18}
n \geq \log_D K + 2(\dist - 1) \;\;\; \text{or}
 \;\;\; K \leq D^{n - 2(\dist - 1)}
\end{equation}
is a simple but useful inequality.  We shall refer to codes which saturate
this bound (the inequality is an equality) as \emph{quantum Singleton} (QS)
codes. Some authors prefer the term MDS, but as it is not clear to us how
the concept of ``maximum distance separable,'' as explained in
\cite{MacWilliamsSloane:TheoryECC}, carries over to quantum codes, we prefer to use QS.

\subsection{Method}
\label{quditgrcodes_sct8}

We are interested in finding ``good'' graph codes in the sense of a large $K$
for a given $n$, $\dist$, and $D$. The first task is to choose a graph $G$ on
$n$ vertices, not a trivial matter since the number of possibilities increases
rapidly with $n$.  We know of no general principles for making this choice,
though it is helpful to note, see App.~\ref{quditgrcodes_sct17}, that the diagonal
distance $\Delta'$ cannot exceed 1 plus the minimum over all vertices of the
number of neighbors of a vertex.  Graphs with a high degree of symmetry are,
for obvious reasons, more amenable to analytic studies and computer searches
than those with lower symmetry.

Given a graph $G$ and a distance $\dist$, one can in principle search for the
best nondegenerate code by setting $\ket{\vc_0}=\ket{G}$, finding a
$\ket{\vc_1}$ with $\Delta(\vc_0,\vc_1)\geq \dist$, after that $\ket{\vc_2}$
with both $\Delta(\vc_0,\vc_2)\geq \dist$ and $\Delta(\vc_1,\vc_2)\geq \dist$,
and so forth, until the process stops.  However, this may happen before one
finds the largest $K$, because a better choice could have been made for
$\ket{\vc_q}$ at some point in the process. Exhaustively checking all
possibilities is rather time consuming, somewhat like solving an optimal
packing problem.

In practice what we do is to first construct a lookup table containing the
$D^n-1$ Pauli distances from $\ket{G}$ to all of the other graph basis states,
using an iterative process starting with all $Q\in\QC$ of size 1, then of size
2, etc. This process also yields the diagonal distance $\Delta'$.  As we are
only considering nondegenerate codes, we choose some $\dist\leq \Delta'$, so
that \eqref{quditgrcodes_eqn13} can be used in place of \eqref{quditgrcodes_eqn12}, and use the table to
identify the collection $S$ of all graph basis states with a distance greater
than or equal to $\dist$ from $\ket{\vc_0}=\ket{G}$.  If $S$ is empty there
are no other codewords, so $K=1$.  However, if $S$ is not empty then $K$ is at
least 2, and a search for the optimum code (largest $K$) is carried out as
follows.

We produce a graph $\SC$ (not to be confused with $G$) in which the nodes are
the elements of $S$, and an edge connects two nodes if the Pauli distance
separating them---easily computed from the lookup table with the help of
\eqref{quditgrcodes_eqn17}---is \emph{greater than or equal to} $\dist$. An edge in this
graph signifies that the nodes it joins are sufficiently (Pauli) separated to
be candidates for the code, and an optimal code corresponds to a largest
complete subgraph or \emph{maximum clique}\index{maximum clique} of $\SC$.  Once a maximum clique has
been found, the corresponding graph basis states, including $\ket{\vc_0}$,
satisfy \eqref{quditgrcodes_eqn13} and span a coding space with the largest possible $K$ for
this graph $G$ and this $\dist$.

The maximum clique problem on a general graph is known to be NP-complete
\cite{GareyJohnson:CompIntract} and hence computationally difficult, and we do not know if $\SC$
has special properties which can be exploited to speed things up.  We used the
relatively simple algorithm described in \cite{OpRsrchLett.9.375} for finding a
maximum clique, and this is the most time-consuming part of the search
procedure. 

The method just described finds additive as well as nonadditive codes. In fact
one does not know beforehand whether the resultant code will be additive or
not. If one is only interested in additive codes, certain steps can be modified
to produce a substantial increase in speed as one only has to find a set of
generators for the code.


\section{Results\olabel{Results}}
\label{quditgrcodes_sct9}

\subsection{Introduction}
\label{quditgrcodes_sct10}

Results obtained using methods described above are reported here for various
sequences of graphs, each sequence containing graphs of increasing $n$ while
preserving certain basic properties.  We used a computer search to find the
maximum number $K$ of codewords for each graph in the sequence, for distances
$\dist\leq \Delta'$ and for $D=2$ or 3, qubits and qutrits, up to the largest
number $n$ of qudits allowed by our resources (running time). Sometimes this
revealed a pattern which could be further analyzed using analytic arguments or
known bounds on the number of codewords.

In the case of distance $\dist=2$ we can demonstrate the existence of QS codes
for arbitrarily large values of $n$ and $D$, except when $n$ is odd and $D$ is
even, see Part~A.  In the later subsections we report a significant collection
of $D=2$ and 3 codes for $\dist=2$, 3, and 4, including QS codes; codes which
are the largest possible additive codes for that set of $n$, $D$ and $\dist$;
and a new $(\!(16,128,4)\!)_2$ additive code.

Tables show the $K$ found as a function of other parameters. The meaning of
superscripts used in the tables is given below.
\begin{itemize}
\item $a$ -- Indicates the maximum clique search was terminated
before completion. This means the code we found might not be optimal, i.e.
there might be another code with larger $K$ for this graph. We can only say 
the code is \emph{maximal}\index{maximal code} in the sense that no codeword can be added without
violating \eqref{quditgrcodes_eqn13}. Absence of this superscript implies no code with a
larger $K$ exists for this $\dist$ and this graph, either because the program
did an exhaustive search, or because $K$ saturates a rigorous bound.
\item $b$ -- Indicates a nonadditive code. Codes without this superscript are
additive.
\item $c$ -- Indicates a QS code, one where $K$ saturates the Singleton bound
\eqref{quditgrcodes_eqn18}.
\item $d$ -- Indicates this is not a QS code, but the largest possible
  \emph{additive} (graph or other) code for the given $n$, $\dist$ and $D$,
  This follows from linear programming bounds in
  \cite{Grassl:codetables} for $D=2$ and \cite{IEEE.51.4892} for $D=3$, along with the fact,
  Sec.~\ref{quditgrcodes_sct7}, that for an additive code, $K$ must be an integer power
  of $D$ when $D$ is prime. A larger \emph{nonadditive} code for this graph
  might still be possible in cases flagged with $a$ as well as $d$. 

\end{itemize}

\subsection{Distance $\dist=2$; bar and star graphs}
\label{quditgrcodes_sct11}

It was shown in \cite{IEEE.44.1369} that for $D=2$ one can construct $\dist=2$
QS codes for any even $n$, and similar codes for larger $D$ are mentioned,
without giving details, in \cite{IEEE.45.1827}.  One way to construct graph codes
with $\dist=2$ is to use the method indicated in the proof, App.~\ref{quditgrcodes_sct18},
of the following result.

\noindent
\textbf{Partition theorem}. \emph{Suppose that for a given $D$ the vertices of
  a graph $G$ on $n$ qudits can be partitioned into two nonempty sets $V_1$
  and $V_2$ with the property that for each vertex in $V_1$ the sum of the
  number of edges (the sum of the multiplicities if multiple edges are
  present) joining it to vertices in $V_2$ is nonzero and coprime to $D$, and
  the same for the number of edges joining a vertex in $V_2$ to vertices in
  $V_1$.  Then there is an additive QS code on $G$ with distance $\dist=2$.}

A \emph{bar}\index{bar graph} graph is constructed by taking $n$ vertices and dividing them
into two collections $V_1$ and $V_2$, of equal size when $n$ is even, and one
more vertex in $V_2$ when $n$ is odd, as in Fig.~\ref{quditgrcodes_fgr2}(a). Next pair the
vertices by connecting each vertex in $V_1$ by a single edge to a vertex in
$V_2$, with one additional edge when $n$ is odd, as shown in the figure.
(Multiple edges are possible for $D>2$, but provide no advantage in
constructing codes.)
When $n$ is even the conditions of the partition theorem are satisfied:
1 is always coprime to $D$.  For odd $n$, the last vertex in $V_1$ has
2 edges joining it to $V_2$, which is coprime to $D$ when $D$ is odd.  Hence
bar graphs yield $\dist=2$ QS codes for all $n$ when $D$ is odd, and for even
$n$ when $D$ is even.

\begin{figure}
\begin{center}
\includegraphics{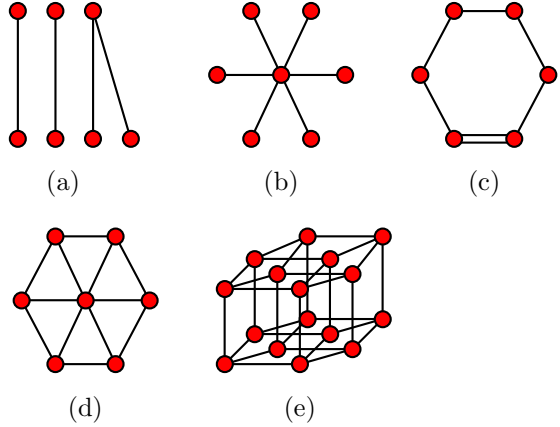}
\caption{Examples from different graph sequences: (a) bar (odd $n$), (b) star,
(c) cycle, (d) wheel, (e) $n=16$ hypercube.}
\label{quditgrcodes_fgr2}
\end{center}
\end{figure}

A \emph{star}\index{star graph} graph, Fig.~\ref{quditgrcodes_fgr2}(b), has a \emph{central} vertex joined by
single edges to every \emph{peripheral} vertex, and no edges connecting pairs
of peripheral vertices. Since the diagonal distance $\Delta'$ is 2,
nondegenerate star codes cannot have $\dist$ larger than 2.  As in the case of
bar codes, one can construct additive QS codes for any $n$ when $D$ is odd,
and for even $n$ when $D$ is even 
\footnote{
We omit the details. In some but not all cases one can use the Partition theorem with $V_1$ and $V_2$ the center and the peripheral vertices. Allowing some double edges when $D>2$ extends the range of $n$ values where the Partition theorem can be employed.
}
. For odd $n$ and $D=2$ there
are nonadditive codes with
\begin{equation}
\label{quditgrcodes_eqn19}
K(n)=2^{n-2} - \frac{1}{2} \binom{n-1}{(n-1)/2};
\end{equation}
see App.~\ref{quditgrcodes_sct19} for details.  Codes with these parameters were discovered
earlier by Smolin et al.\ \cite{PhysRevLett.99.130505} using a different approach. Computer
searches show that for all odd $n\leq 7$ star graphs cannot yield a $K$ larger
than \eqref{quditgrcodes_eqn19}.

\subsection{Cycle graphs}
\label{quditgrcodes_sct12}

We used computer searches to look for graph codes based on cycle (loop)
graphs, Fig.~\ref{quditgrcodes_fgr2}(c). Table~\ref{quditgrcodes_tbl1} shows the maximum number $K$ of
codewords for codes of distance $\dist=2$ and $\dist=3$ for both $D=2$ qubits
and $D=3$ qutrits. In the qutrit case the best codes were obtained by including
one double edge (weight 2), as in Fig.~\ref{quditgrcodes_fgr2}(c), though when $n$ is odd
equally good codes emerge with only single edges. In the qubit case all edges
have weight 1.

\renewcommand{\thefootnote}{\alph{footnote}}

\begin{table}
\begin{center}
\caption{Maximum $K$ for qubit and qutrit cycle graphs. See Sec.~\ref{quditgrcodes_sct10}
for detailed meaning of superscripts.}
\olabel{tab:qubitCycle}
\label{quditgrcodes_tbl1}
\begin{tabular}{ c | c  c | c c }
\multicolumn{1}{c |}{ } &
\multicolumn{2}{c |}{$D=2$} &
\multicolumn{2}{c }{$D=3$}\\
$n$ &
\multicolumn{1}{c}{$\dist=2$} &
\multicolumn{1}{c|}{\hsp$\dist=3$\hsp} &
\multicolumn{1}{c}{$\dist=2$} &
\multicolumn{1}{c}{$\dist=3$} \\
\hline
4     & 4\ftnc         & 0             & 9\ftnc     & 1\ftnc   \\
5     & 6\ftnb         & 2\ftnc        & 27\ftnc    & 3\ftnc   \\
6     & 16\ftnc        & 1             & 81\ftnc    & 9\ftnc   \\
7     & 22\ftnb        & 2\ftnd        & 243\ftnc   & 27\ftnc  \\
8     & 64\ftnc        & 8\ftnd        & 729\ftnc   & 81\ftnc  \\
9     & 96\ftna \ftnb   & 12\ftnb       & 2187\ftnc  & 243\ftnc \\
10    & 256\ftnc       & 18\ftnb       & 6561\ftnc  & 729\ftnc \\
11    & 272\ftna \ftnb  & 32\ftna \ftnd  & 19683\ftnc & 729\ftna \ftnd \\
\hsp12\hsp    &\hsp1024\ftnc\hsp  &  \hsp64\ftna \ftnd\hsp
              &\hsp59049\ftnc\hsp &  \hsp2187\ftna \ftnd \hsp
\end{tabular}
\begin{itemize}
\item[a]{Non-exhaustive search}
\item[b]{Nonadditive code}
\item[c]{Code saturating Singleton bound \eqref{quditgrcodes_eqn18}}
\item[d]{Largest possible additive code}
\end{itemize}
\end{center}
\end{table}
\renewcommand{\thefootnote}{\arabic{footnote}}

The $D=2$ entries in Table~\ref{quditgrcodes_tbl1} include for $n=5$ the well known $(\!(5,
2, 3)\!)_2$, the nonadditive $(\!(5, 6, 2)\!)_2$ presented in \cite{PhysRevLett.79.953},
and, for larger $n$, a $(\!(9, 12, 3)\!)_2$ code similar to that in \cite{PhysRevLett.101.090501}
and the $(\!(10, 18, 3)\!)_2$ of \cite{IEEE.55.433} based upon the same graph.

The $D=3$, $\dist=3$ entries are interesting because the QS bound is saturated
for $4 \leq n \leq 10$ but \emph{not} for $n = 11$. The $(\!(11, 3^6=729,
3)\!)_3$ code we found, the best possible \emph{additive} code according to
the linear programming bound in \cite{IEEE.51.4892}, falls short by a factor of 3 of
saturating the $K=3^7=2187$ QS bound, and even a nonadditive code based on
this graph must have $K\leq 1990$ 
\footnote{
Since the distance $\dist=3$ does not exceed the diagonal distance $\Delta'=3$ for this graph, a graph code is necessarily nondegenerate, see Sec.~\ref{quditgrcodes_sct7}, and hence the quantum Hamming bound---see p.~444 of \cite{NielsenChuang:QuantumComputation}---extended to $D=3$ applies, and this yields an upper bound of $K\leq 1990$.
}
.

One can ask to what extent the results for $\dist=2$ in Table~\ref{quditgrcodes_tbl1} could
have been obtained, or might be extended to larger $n$, by applying the
Partition theorem of Part A to a suitable partition of the cycle graph.  It
turns out---we omit the details---that when $D$ is odd one can use the
Partition theorem to produce codes that saturate the QS bound for any $n$, but
when $D$ is even the same approach only works when $n$ is a multiple of 4. In
particular, the $(\!(6, 16, 2)\!)_2$ additive QS code in Table~\ref{quditgrcodes_tbl1}
cannot be obtained in this fashion since the cycle graph cannot be partitioned
in the required way.

\subsection{Wheel graphs}
\label{quditgrcodes_sct13}

If additional edges are added to a star graph so as to connect the peripheral
vertices in a cycle, as in Fig.~\ref{quditgrcodes_fgr2}(d), the result is what we call a
\emph{wheel}\index{wheel graph} graph. Because each vertex has at least three neighbors, our
search procedure, limited to $\dist\leq \Delta'$, can yield $\dist=4$ codes on
wheel graphs, unlike cycle or star graphs.  The construction of $\dist=2$
codes for any $D$ is exactly the same as for star graphs, so in
Table~\ref{quditgrcodes_tbl2} we only show results for $\dist=3$ and 4, for both $D=2$ and
3. The $(\!(16, 128, 4)\!)_2$ additive code  appears to be new, and
its counterpart in the hypercube sequence is discussed below.

\renewcommand{\thefootnote}{\alph{footnote}}
\begin{table}[ht]
\begin{center}
\caption{Maximum $K$ for qubit and qutrit wheel graphs. See Sec.~\ref{quditgrcodes_sct10}
for detailed meaning of superscripts.}
\olabel{tab:qubitCycle}
\label{quditgrcodes_tbl2}
\begin{tabular}{ c | c  c | c c }
\multicolumn{1}{c |}{ } &
\multicolumn{2}{c |}{$D=2$} &
\multicolumn{2}{c }{$D=3$}\\
$n$ &
\multicolumn{1}{c}{$\dist=3$} &
\multicolumn{1}{c|}{\hsp$\dist=4$\hsp} &
\multicolumn{1}{c}{$\dist=3$} &
\multicolumn{1}{c}{$\dist=4$} \\ \hline
6    & 1               & 1\ftnc        & 1                & 1\ftnc    \\
7    & 2\ftnd          & 0             & 27\ftnc          & 1         \\
8    & 8\ftnd          & 1\ftnd        & 27               & 9\ftnc    \\
9    & 8\ftnd          & 1\ftnd        & 243\ftnc         & 9         \\
10   & 20\ftnc         & 4\ftnd        & 243\ftna         & 27        \\
11   & 32\ftna\ftnd    & 4\ftnd        & 729\ftna\ftnd    & 81        \\
12   & 64\ftna\ftnd    & 8             & 2187\ftna\ftnd   & 81\ftna   \\
13   & 128\ftna\ftnd   & 16            & 6561\ftna\ftnd   & 243\ftna  \\
14   & 256\ftna\ftnd   & 32\ftna       & 19683\ftna\ftnd  & 729\ftna  \\
15   & 512\ftna\ftnd   & 64\ftna\ftnd  & \hsp59049\ftna\ftnd\hsp &
\hsp 2187\ftna\hsp \\
\hsp16\hsp  &\hsp1024\ftna\ftnd\hsp  & \hsp128\ftna\ftnd\hsp & &
\end{tabular}
\begin{itemize}
\item[a]{Non-exhaustive search}
\item[b]{Nonadditive code}
\item[c]{Code saturating Singleton bound \eqref{quditgrcodes_eqn18}}
\item[d]{Largest possible additive code}
\end{itemize}
\end{center}
\end{table}
\renewcommand{\thefootnote}{\arabic{footnote}}

\subsection{Hypercube graphs}
\label{quditgrcodes_sct14}

Hypercube graphs, Fig.~\ref{quditgrcodes_fgr2}(e), have a high symmetry, and as $n$
increases the coordination bound, App.~\ref{quditgrcodes_sct17}, allows $\Delta'$ to
increase with $n$, unlike the other sequences of graphs discussed above.  We
have only studied the $D=2$ case, with the results shown in Table~\ref{quditgrcodes_tbl3}.
Those for $\dist=2$ are an immediate consequence of the Partition theorem:
each hypercube is obtained by adding edges between two hypercubes of the next
lower dimension, and these are the $V_1$ and $V_2$ of the theorem.  The
generators for the $(\!(16, 128, 4)\!)_2$ additive code are given in
Table~\ref{quditgrcodes_tbl4}. The $2^7=128$ codewords are of the form, see \eqref{quditgrcodes_eqn11},
$\ket{\alpha_1 \vect{g}_1\oplus \alpha_2 \vect{g}_2 \oplus \cdots \alpha_7
\vect{g}_7}$, where each $\alpha_j$ can be either 0 or 1.

\renewcommand{\thefootnote}{\alph{footnote}}
\begin{table}
\begin{center}
\caption{Maximum $K$ for qubit hypercube graphs. See Sec.~\ref{quditgrcodes_sct10}
for detailed meaning of superscripts.}
\olabel{tab:qubitHypercube}
\label{quditgrcodes_tbl3}

\begin{tabular}{c|ccc}
\multicolumn{1}{c|}{} &
\multicolumn{3}{c}{$D=2$} \\
$n$ &
$\dist=2$ &  $\dist=3$ & \hsp$\dist=4$\hsp\\ \hline
4    & 4\ftnc     &  0        & 0 \\
8    & 64\ftnc    &  8\ftnd   & 1\ftnd \\
\hsp 16\hsp\hsp   & \hsp16384\ftnc\hsp &  512\ftna & 128\ftna\ftnd
\end{tabular}
\begin{itemize}
\item[a]{Non-exhaustive search}
\item[c]{Code saturating Singleton bound \eqref{quditgrcodes_eqn18}}
\item[d]{Largest possible additive code}
\end{itemize}
\end{center}
\end{table}

\begin{table}
\begin{center}
  \caption{Generators of  $(\!(16, 128, 4)\!)_2$ additive code for hypercube
graph}
\olabel{tab:hypercubeGen}
\label{quditgrcodes_tbl4}
\begin{tabular}{cc}
Generator & Bit notation \\
\hline
$\ket{\,\vect{g}_1}$ & \;\; $\ket{0,0,0,0, 0,0,0,0, 0,0,0,0, 1,1,1,1}$ \;\; \\
$\ket{\,\vect{g}_2}$ & $\ket{0,0,0,0, 0,0,0,0, 0,0,1,1, 0,0,1,1}$ \\
$\ket{\,\vect{g}_3}$ & $\ket{0,0,0,0, 0,0,0,0, 1,1,0,0, 0,0,1,1}$ \\
$\ket{\,\vect{g}_4}$ & $\ket{0,0,0,0, 0,0,1,1, 0,1,0,0, 0,1,0,0}$ \\
$\ket{\,\vect{g}_5}$ & $\ket{0,0,0,0, 1,1,0,0, 0,0,0,1, 0,0,0,1}$ \\
$\ket{\,\vect{g}_6}$ & $\ket{0,0,1,1, 0,0,0,0, 0,1,0,0, 0,1,0,0}$ \\
$\ket{\,\vect{g}_7}$ & $\ket{1,1,0,0, 0,0,0,0, 0,0,0,1, 0,0,0,1}$
\end{tabular}
\end{center}
\end{table}
\renewcommand{\thefootnote}{\arabic{footnote}}

\section{G-Additive codes as stabilizer codes}
\olabel{stabilizer}
\label{quditgrcodes_sct15}

The stabilizer formalism introduced by Gottesman in \cite{quantph.9705052} for $D=2$
(qubits) provides a compact and powerful way of generating quantum error
correcting codes. It has been extended to cases where $D$ is prime or a prime
power in \cite{IEEE.47.3065, quantph.0610267, IEEE.51.4892}.  In \cite{quantph.0111080} stabilizer
codes were extended in a very general fashion to arbitrary $D$ from a point of
view that includes encoding.  However, our approach to graph codes is somewhat
different, see Sec.~\ref{quditgrcodes_sct1}, and hence its connection with stabilizers
deserves a separate discussion. We will show that for any $D\geq 2$ a
G-additive (as defined near the end of Sec.~\ref{quditgrcodes_sct7}) code is a stabilizer
code, and the stabilizer is effectively a dual representation of the code.

The Pauli group $\PC$ for general $n$ and $D$ was defined in Sec.~\ref{quditgrcodes_sct3}.
Relative to this group we define a \emph{stabilizer}\index{stabilizer code} code (not necessarily a
graph code) $\CC$ to be a $K\geq 1$-dimensional subspace of the Hilbert space
satisfying three conditions:

\begin{description}

\item[C1.] There is a subgroup $\SC$ of $\PC$ such that
for \emph{every} $T$ in $\SC$ and \emph{every} $\ket{\psi}$ in $\CC$
\begin{equation}
\olabel{stabCondition1}
\label{quditgrcodes_eqn20}
 T \ket{\psi} = \ket{\psi}
\end{equation}

\item[C2.] The subgroup $\SC$ is maximal in the sense that every $T$ in $\PC$
  for which \eqref{quditgrcodes_eqn20} is satisfied for all $\ket{\psi}\in\CC$ belongs to
  $\SC$.

\item[C3.]  The coding space $\CC$ is maximal in the sense that any ket
  $\ket{\psi}$ that satisfies \eqref{quditgrcodes_eqn20} for every $T\in\SC$ lies in
  $\CC$.
\end{description}

If these conditions are fulfilled we call $\SC$ the \emph{stabilizer}\index{stabilizer, of the code} of the
code $\CC$. That it is Abelian follows from \eqref{quditgrcodes_eqn4}, since for $K>0$
there is some nonzero $\ket{\psi}$ satisfying \eqref{quditgrcodes_eqn20}.  One can also
replace \eqref{quditgrcodes_eqn20} with
\begin{equation}
\label{quditgrcodes_eqn21}
 T \ket{c_q} = \ket{c_q}
\end{equation}
where the $\{\ket{c_q}\}$ form an orthonormal basis of $\CC$. Note that one
can always find a subgroup $\SC$ of $\PC$ satisfying C1 and C2 for any
subspace $\CC$ of the Hilbert space, but it might consist of nothing but the
identity. Thus it is condition C3 that distinguishes stabilizer codes from
nonadditive codes.  A stabilizer code is uniquely determined by $\SC$ as
well as by $\CC$, since $\SC$ determines $\CC$ through C3.

As we shall see, the stabilizers of G-additive graph codes can be described in
a fairly simple way.  Let us begin with one qudit, $n=1$, where the trivial
graph $G$ has no edges, and the graph basis states are of the form
$\{Z^c\ket{+}\}$ for $c$ in some collection $C$ of integers in the range
$0\leq c\leq D-1$. The subgroup $\SC$ of $\PC$ satisfying C1 and C2 must be of
the form $\{X^s\}$ for certain values of $s$, $0\leq s\leq D-1$, belonging to
a collection $S$.  This is because $Z$ and its powers map any state
$Z^c\ket{+}$ to an orthogonal state, and hence $T$ in \eqref{quditgrcodes_eqn21} cannot
possibly contain a (nontrivial) power of $Z$.  Furthermore, since
\begin{equation}
\label{quditgrcodes_eqn22}
  X^s Z^c \ket{+} = \omega^{cs}Z^c\ket{+},
\end{equation}
see \eqref{quditgrcodes_eqn2}, $X^s$ will leave $\{Z^c\ket{+}\}$ unchanged only if
$\omega^{cs}=1$, or
\begin{equation}
\label{quditgrcodes_eqn23}
cs \equiv 0 \pmod{D}.
\end{equation}
Thus for $\SC$ to satisfy C1, it is necessary and sufficient that
\eqref{quditgrcodes_eqn23} hold for every $c\in C$, as well as every $s\in S$. Further,
$\SC=\{X^s\}$ is maximal in the sense of C2 only if $S$ contains every $s$
satisfying \eqref{quditgrcodes_eqn23} for each $c\in C$. As shown in App.~\ref{quditgrcodes_sct20}, such
a collection $S$ must either (depending on $C$) consist of $s=0$ alone, or
consist of the integer multiples $\nu s_1$, with $\nu=0,1,\ldots (D/s_1-1)$, of
some $s_1>0$ that divides $D$.  In either case, $S$ is a subgroup of the group
$\mathbb{Z}_D$ of integers under addition mod $D$, and indeed any such subgroup
must have the form just described.

We now take up C3. Given the maximal collection $S$ of solutions to
\eqref{quditgrcodes_eqn23}, we can in turn ask for the collection of $C'$ of integers $c$
in the range $0$ to $D-1$ that satisfy \eqref{quditgrcodes_eqn23} for every $s$ in $S$.
Obviously, $C'$ contains $C$, but as shown in App.~\ref{quditgrcodes_sct20}, $C'=C$ if and
only if $C$ is a subgroup of $\mathbb{Z}_D$, i.e., $\CC$ is G-additive. Next
note that every $T$ in $\SC$, as it is a power of $X$ and because of
\eqref{quditgrcodes_eqn22}, maps every graph basis state to itself, up to a phase. Thus when
(and only when) $\CC$ is G-additive, the codewords are just those graph basis
states for which this phase is 1 for every $T\in\SC$.  To check C3, expand an
arbitrary $\ket{\psi}$ in the graph basis. Then $T\ket{\psi}=\ket{\psi}$ for
all $T\in\SC$ means that all coefficients must vanish for graph basis states
that do not belong to $\CC$. Hence C3 is satisfied if and only if $\CC$ is
G-additive.

The preceding analysis generalizes immediately to $n>1$ in the case of the
trivial graph $G^0$ with no edges.  A graph code $\CC$ has a basis of the form
$\{Z^{\vect{c}}\ket{G^0}\}$ for a collection $C$ of integer $n$-tuples
$\vect{c} \in \mathbb{Z}_D^n$, and is G-additive when the collection
$C=\{\vect{c}\}$ is closed under component-wise addition mod $D$, i.e., is a
subgroup of $\mathbb{Z}_D^n$. Whether or not $\CC$ is G-additive, the subgroup
$\SC$ of $\PC$ satisfying C1 and C2 consists of all operators of the form
$X^{\vect{s}}=X_1^{s_1}X_2^{s_2}\cdots$ with the $n$-tuple $\vect{s}$
satisfying
\begin{equation}
\label{quditgrcodes_eqn24}
\vect{c}\cdot\vect{s} := \sum_{l=1}^n c_l s_l \equiv \vect{0} \pmod{D}
\end{equation}
for every $\vect{c}\in C$.  Just as for $n=1$, $\SC$ cannot contain Pauli
products with (nontrivial) powers of $Z$ operators. Let $S$ denote the
collection of all such $\vect{s}$. The linearity of \eqref{quditgrcodes_eqn24} means $S$ is
an additive subgroup of $\mathbb{Z}_D^n$.

One can also regard \eqref{quditgrcodes_eqn24} as a set of conditions, one for every
$\vect{s}\in S$, that are satisfied by certain $\vect{c}\in\mathbb{Z}_D^n$.
The set $C'$ of all these solutions is itself an additive subgroup of
$\mathbb{Z}_D^n$, and contains $C$.  In App.~\ref{quditgrcodes_sct20} we show that $C'=C$
if and only if $C$ (the collection we began with) is an additive subgroup
of $\mathbb{Z}_D^n$, and when this is the case the sizes of $C$ and $S$ are
related by
\begin{equation}
\label{quditgrcodes_eqn25}
 |C|\cdot|S| = D^n.
\end{equation}
Just as for $n=1$, any $X^{\vect{s}}$ maps a graph basis state for the trivial
graph $G^0$---they are all product states---onto itself up to a multiplicative
phase, and the same argument used above for $n=1$ shows that C3 is satisfied
for all $T\in\SC$ if and only if $\CC$ is G-additive.

To apply these results to a general graph $G$ on $n$ qubits, note that the
unitary $\UC$ defined in \eqref{quditgrcodes_eqn7} provides, through \eqref{quditgrcodes_eqn6} and
\eqref{quditgrcodes_eqn10}, a one-to-one map of the graph basis states of the trivial $G^0$
onto the graph basis states of $G$. At the same time the one-to-one map $\UC P
\UC^\dagger$ carries the $\SC$ satisfying C1 and C2 (and possibly C3) for the
$G^0$ code to the corresponding $\SC$, satisfying the same conditions for the
$G$ code. (The reverse maps are obtained by interchanging $\UC^\dagger$ and
$\UC$.)  Consequently, the results obtained for $G^0$ apply at once to $G$,
and the transformation allows the elements of the stabilizer for the $G$ graph
code to be characterized by integer $n$-tuples $\vect{s}$ satisfying
\eqref{quditgrcodes_eqn24}. Thus we have shown that G-additive codes are stabilizer codes,
and for these the coding space and stabilizer group descriptions are dual,
related by \eqref{quditgrcodes_eqn24}: each can be derived from the other.

\section{Conclusion and discussion}
\olabel{conclusion}
\label{quditgrcodes_sct16}

In this chapter we have developed an approach to graph codes which works for
qudits with general dimension $D$, and employs graphical methods to search for
specific examples of such codes.  It is similar to the approaches developed
independently in \cite{IEEE.55.433,quantph.0709.1780,PhysRevA.78.012306}.  We have used it for computer searches
on graphs with a relatively small number $n$ of qudits, and also to construct
certain families of graphs yielding optimum distance $\dist=2$ codes for
various values of $D$ and $n$ which can be arbitrarily large.  It remains a
challenging problem to do the same for codes with distance $\dist>2$.

In a number of cases we have been able to construct what we call quantum
Singleton (QS) codes that saturate the quantum Singleton bound \cite{PhysRevA.55.900}:
these include the $\dist=2$ codes for arbitrarily large $n$ and $D$ mentioned
above, and also a number of $\dist=3$ codes in the case of $D=3$ (qutrits),
see Tables~\ref{quditgrcodes_tbl1} and \ref{quditgrcodes_tbl2}.  The results for cycle graphs for $D=3$
and $\dist=3$ in Table~\ref{quditgrcodes_tbl1} are interesting in that the QS bound is
saturated for $n\leq 10$, but fails for $n=11$, as it must for nondegenerate
codes; see the discussion in Sec.~\ref{quditgrcodes_sct12}.  Our results are consistent
with the difficulty of finding QS codes for larger $\dist$ \cite{Grassl:codetables}, but
suggest that increasing $D$ may help, as observed in \cite{PhysRevA.65.012308}.  It
is worth noting that we have managed to construct many of the previously known
nonadditive codes, or at least codes with the same $(\!(n,K,\dist)\!)_D$,
using simple graphs. Some other nonadditive codes not discussed here, such as
the $(\!(10, 24, 3)\!)_2$ code in \cite{quantph.0709.1780}, can also be obtained from
suitably chosen graphs.  While all these results are encouraging, they
represent only a beginning in terms of understanding what properties of graphs
lead to good graph codes, and how one might efficiently construct such codes
with arbitrarily large $n$ and $\dist$, for various $D$.

As noted in Sec.~\ref{quditgrcodes_sct7}, all graph codes with distance $\dist\leq
\Delta'$, where $\Delta'$ is the diagonal distance of the graph, are
necessarily nondegenerate, and our methods developed for such codes will (in
principle) find them all.  All codes with $\dist> \Delta'$ are necessarily
degenerate codes, and their systematic study awaits further work.
It should be noted that our extension of graph codes to $D>2$ is based on
extending Pauli operators in the manner indicated in \cite{PhysRevA.71.042315}.  Though
the extension seems fairly natural, and it is hard to think of alternatives
when $D$ is prime, there are other ways to approach the matter when $D$ is
composite (including prime powers), which could yield larger or at least
different codes, so this is a matter worth exploring.

The relationship between stabilizer (or additive) codes and G-additive
(as defined in Sec.~\ref{quditgrcodes_sct7}) graph codes has been clarified by showing
that they are dual representations, connected through a simple equation,
\eqref{quditgrcodes_eqn24}, of the same thing.  One might suspect that such duality extends
to nongraphical stabilizer codes, but we have not studied the problem outside
the context of graph codes. Nonadditive codes, which---if one uses our
definition, Sec.~\ref{quditgrcodes_sct15}---do not have stabilizers, are sometimes of larger
size than additive codes, so they certainly need to be taken into account in
the search for optimal codes. The graph formalism employed here works in
either case, but computer searches are much faster for additive codes.

\begin{subappendices}
\section{The X-Z rule and related}
\olabel{app:xzRule}
\label{quditgrcodes_sct17}

\noindent
\textbf{X-Z Rule}\index{X-Z rule}.  \emph{Acting with an $X$ operator on the $i'th$ qudit of a
  graph state $\ket{G}$ produces the same graph basis state as the action of
  $Z$ operators on the neighbors of qudit $i$, raised to the power given by
  the edge multiplicities $\Gamma_{im}$.}

The operator $X_i$ commutes with $\CP_{lm}$ when $i \neq l$ and $i \neq m$, but
if $i=l$ (or similarly $i=m$) one can show using \eqref{quditgrcodes_eqn5} and \eqref{quditgrcodes_eqn1}
that
\begin{equation}
\olabel{apxCeqn2}
\label{quditgrcodes_eqn26}
X_{l} \CP_{lm}  =  \CP_{lm} Z_m X_l = Z_m \CP_{lm} X_{l}.
\end{equation}
That is, an $X_i$ operator can be pushed from left to right through a
$\CP_{lm}$ with at most the cost of producing a $Z$ operator associated with
the \emph{other} qudit: if $i=l$ one gets $Z_m$, if $i=m$ one gets $Z_l$.
Since all $Z$ commute with all $\CP$, one can place the resulting $Z_m$ either
to the left or to the right of $\CP_{lm}$.

Now consider pushing $X_i$ from the left to the right through $\mathcal{U}$,
the product of $\CP_{lm}$ operators defined in \eqref{quditgrcodes_eqn7}. Using
\eqref{quditgrcodes_eqn26} successively for those $\CP_{lm}$ that do not commute with
$X_i$, one sees that this can be done at the cost of generating a $Z_m$ for
every edge of the graph connecting $i$ to another vertex $m$. Let the product
of these be denoted as $\hat Z:= \prod_{(l=i,m) \in E}Z_{m}^{\Gamma_{lm}}$.
Then, with definition \eqref{quditgrcodes_eqn6}, we can show
\begin{align}
\label{quditgrcodes_eqn27}
X_i \ket{G} &= X_i \mathcal{U} \ket{G^0} = \hat Z \mathcal{U} X_i \ket{G^0}
\notag\\
 &= \hat Z\mathcal{U}\ket{G^0} =\hat Z \ket{G},
\end{align}
which completes the proof of the X-Z Rule.

For graph codes satisfying \eqref{quditgrcodes_eqn13}, the X-Z Rule leads to the:

\noindent
\textbf{Coordination bound}. \emph{The diagonal distance $\Delta'$ for a graph
  $G$ cannot exceed $\nu+1$, where $\nu$ is the minimum over all vertices of
  the number of neighbors of a vertex, this being the number of vertices
  joined to the one in question by edges, possibly of multiplicity greater
  than 1.}

To make the counting absolutely clear consider Fig.~\ref{quditgrcodes_fgr1}, where the
vertex on the left has 3 neighbors, and each of the others has 1 neighbor, so
that in this case $\nu=1$.  To derive the bound, apply $X$ to a vertex which
has $\nu$ neighbors.  By the X-Z rule the result is the same as applying
appropriate powers of $Z$ to each neighbor.  Let $P$ be this $X$ tensored with
appropriate compensating powers of $Z$ at the neighboring vertices in such a
way that $P\ket{G}=\ket{G}$. The size of $P$ is $\nu+1$, and $\Delta'$ can be
no larger.  \smallskip

Another useful result follows from the method of proof of the X-Z Rule:

\noindent
\textbf{Paulis to Paulis}. \emph{Let $P$ be a Pauli product \eqref{quditgrcodes_eqn3},
  and for $\UC$ defined in \eqref{quditgrcodes_eqn7} let
\begin{equation}
\label{quditgrcodes_eqn28}
P' = \UC^\dagger P \UC,\quad P'' = \UC P \UC^\dagger.
\end{equation}
Then both $P'$ and $P''$ are Pauli products.}

To see why this works, rewrite the first equality as $\UC P' = P \UC$, and
imagine pushing each of the single qudit operators, of the form
$X_j^{\mu_j}Z_j^{\nu_j}$, making up the product $P$ through $\UC$ from left to
right. This can always be done, see the discussion following \eqref{quditgrcodes_eqn26}, at
the cost of producing some additional $Z$ operators, which can be placed on
the right side of $\UC$, to make a contribution to $P'$.  At the end of the
pushing the final result can be rearranged in the order specified in
\eqref{quditgrcodes_eqn3} at the cost of some powers of $\omega$, see \eqref{quditgrcodes_eqn2}.  The
argument for $P''$ uses pushing in the opposite direction.

\section{Partition theorem proof}
\olabel{app:bipartite}
\label{quditgrcodes_sct18}

Given the partition of the $n$ qudits into sets $V_1$ and $V_2$ containing
$n_1$ and $n_2$ elements, the code of interest consists of the graph basis
states $\ket{ \vect{c}} = \ket{c_1, c_2, \ldots,c_n}$ satisfying the two
conditions
\begin{eqnarray}
\sum_{i \in V_1} c_i & \equiv & 0 \pmod{D}
\olabel{eq:parity1}
\label{quditgrcodes_eqn29}
\\
\sum_{j \in V_2} c_j & \equiv & 0 \pmod{D}
\olabel{eq:parity2}
\label{quditgrcodes_eqn30}
\end{eqnarray}
This code is additive and contains $K=D^{n_1-1} \times D^{n_2-1} = D^{n-2}$
codewords. (The counting can be done by noting that \eqref{quditgrcodes_eqn29} defines a
subgroup of the additive group $\mathbb{Z}_D^{n_1}$, and its cosets are
obtained by replacing 0 with some other integer on the right side of
\eqref{quditgrcodes_eqn29}.)

We first demonstrate that this code has $\dist\geq 2$ by showing that any
Pauli operator, except the identity, applied to a single qudit maps a codeword
into a graph basis state not in the code. If $Z^\nu$ for $0 < \nu < D$ is
applied to a qudit in $V_1$, the effect will be to replace 0 on the right side
of \eqref{quditgrcodes_eqn29} with $\nu$, so this graph state is not in the code. If
$X^\mu$, $0 < \mu < D$ is applied to a qudit in $V_1$ the result according to
the X-Z Rule, App.~\ref{quditgrcodes_sct17}, will be the same as placing $Z$ operators on
neighboring qudits in $V_2$ (as well as $V_1$) in such a way that 0 on the
right side of \eqref{quditgrcodes_eqn30} is replaced by $g\mu$, where $g$ is the total
number of edges (including multiplicities) joining the $V_1$ qudit with qudits
in $V_2$.  But as long as $g$ is coprime to $D$, as specified in the condition
for the theorem, $g\mu$ cannot be a multiple of $D$, and \eqref{quditgrcodes_eqn30} will no
longer be satisfied.  The same is true if $Z^\nu X^\mu$ is a applied to a
qudit in $V_1$.  Obviously the same arguments work for Pauli operators applied
to a single qudit in $V_2$. Thus we have shown that $\dist\geq 2$.

But $\dist>2$ is excluded by the QS bound, so we conclude that we have an
additive code of $K=D^{n-2}$ elements and distance $\dist=2$ that saturates
the QS bound.

\section{Construction of qubit star graph codes\olabel{app:qubitStarGraph}}
\label{quditgrcodes_sct19}

As noted in Sec.~\ref{quditgrcodes_sct11} a star graph for $n$-qubits consists of a central
vertex joined by edges to $n-1$ peripheral vertices.  Let $V_1$ be the
central vertex and $V_2$ the set of peripheral vertices.  When $n$ is even and
$D=2$ the conditions of the Partition theorem, Sec.~\ref{quditgrcodes_sct11}, are
satisfied, and the $\dist=2$ code constructed in App.~\ref{quditgrcodes_sct18} consists of
the $2^{n-2}$ graph basis states with no $Z$ on the central qubit and an even
number $r$ of $Z$'s on the peripheral qubits, thus satisfying \eqref{quditgrcodes_eqn29}
and \eqref{quditgrcodes_eqn30}, and yielding an additive QS code.

When $n$ is odd the central vertex is connected to an even number $n-1$ of
vertices in $V_2$, so the conditions of the Partition theorem no longer hold.
A reasonably large $\dist=2$ nonadditive code can, however, be constructed by
again assuming no codeword has $Z$ on the central qubit, and that the code
contains all graph basis states with $r$ $Z$'s on the peripheral qubits
\emph{for a certain selected set $R$ of $r$ values}.

The set $R$ must satisfy two conditions. First, it cannot contain both $r$ and
$r+1$, because applying an additional $Z$ to a codeword with $r$ $Z$'s yields
one with $r+1$, and one cannot have both of them in a code of distance
$\dist=2$.  Second, applying $X$ to the central vertex and using the
X-Z rule, App.~\ref{quditgrcodes_sct17}, maps a codeword with $r$ $Z$'s to one with
$r'=n-1-r$; hence $R$ cannot contain both $r$ and $n-1-r$.  For example, when
$n=7$ ($n-1=6$ peripheral qubits) the set $R=\{0,2,5\}$ satisfies both
conditions, as does $R=\{1,4,6\}$, whereas $R=\{1,2,6\}$ violates the first
condition and $R=\{1,3,5\}$ the second.

By considering examples of this sort, and noting that the number of such graph
basis states with $r$ $Z$'s is $\binom{n-1}{r}$ which is equal to
$\binom{n-1}{n-1-r}$, one sees that for $n$ odd one can construct in this way a
nonadditive code with
\begin{equation}
\label{quditgrcodes_eqn31}
\sum_{i=0}^{(n-3)/2} \binom{n-1}{i} =
 2^{n-2}  - \frac{1}{2} \binom{n-1}{(n-1)/2}
\end{equation}
codewords.

\section{Solutions to $\vect{c}\cdot\vect{s} \equiv 0 \pmod{D}$
\olabel{app:Gadditive}}
\label{quditgrcodes_sct20}

Let $\AC$ be the collection of all $n$-component integer vectors (i.e.,
$n$-tuples) of the form $\vect{a} = (a_1,a_2,\ldots a_n)$,
$0\leq a_j \leq D-1$, with component-wise sums and scalar multiplication
defined using arithmetic operations mod $D$. In particular, $\AC$ is a group
of order $D^n$ under component-wise addition mod $D$.  We shall be
interested in subsets $C$ and $\SC$ of $\AC$ that satisfy
\begin{equation}
  \vect{c}\cdot\vect{s} := \sum_{l=1}^n c_l s_l \equiv 0 \pmod{D}
\label{quditgrcodes_eqn32}
\end{equation}
for all $\vect{c}\in C$ and $\vect{s}\in S$.  Given some collection $C$,
we shall say that $S$ is \emph{maximal}\index{maximal, stabilizer} relative to $C$ if it includes
\emph{all} solutions $\vect{s}$ that satisfy \eqref{quditgrcodes_eqn32} for every
$\vect{c}\in C$. It is easily checked that a maximal $S$ is an additive
subgroup of $\AC$: it includes the zero vector and $-\vect{s}$ mod $D$ whenever
$\vect{s}\in S$.  A similar definition holds for $C$ being maximal relative to
a given $S$. We use $|C|$ to denote the number of elements in a set or
collection $C$.

\textbf{Theorem}.  \emph{Let $C$ be an additive subgroup of $\AC$, and let
  $S$ be maximal relative to $C$, i.e., the set of all $\vect{s}$ that
  satisfy \eqref{quditgrcodes_eqn32} for every $\vect{c}\in C$.  Then $C$ is also
  maximal relative to $S$, and
\begin{equation}
  |C|\cdot |S| = D^n.
\label{quditgrcodes_eqn33}
\end{equation}}

The proof is straightforward when $D$ is a prime, since $\mathbb{Z}_D$ is a
field, and one has the usual rules for a linear space. The composite case is
more difficult, and it is useful to start with $n=1$:

\textbf{Lemma}.
  Let $C$ be a subgroup under addition mod $D$ of the integers lying between
  $0$ and $D-1$, and $S$ all integers in the same range satisfying
\begin{equation}
  cs \equiv 0 \pmod{D}
\label{quditgrcodes_eqn34}
\end{equation}
for every $c\in C$. Then $C$ consists of \emph{all} integers $c$ in the
range of interest which satisfy \eqref{quditgrcodes_eqn34}, and $|C|\cdot |S|= D$.

When $C=\{0\}$ the proof is obvious, since $|C|=1$ and $|S|=D$.  Otherwise,
because it is an additive subgroup of $\mathbb{Z}_D$, $C$ consists of the
multiples $\{\mu c_1\}$ of the smallest positive integer $c_1$ in $C$,
necessarily a divisor of $D$, when $\mu$ takes the values $0, 1, \ldots
s_1-1$, where $s_1=D/c_1$.  One quickly checks that all integer multiples
$s=\nu s_1$ of this $s_1$ satisfy \eqref{quditgrcodes_eqn34} and are thus contained in $S$.
But $S$ is also an additive subgroup, and $s_1$ is its minimal positive
element (except in the trivial case $c_1=1$), for were there some smaller
positive integer $s'$ in $S$ we would have $0<c_1s' <D$, contradicting
\eqref{quditgrcodes_eqn34}.  Similarly there is no way to add any additional integers to
$C$ while preserving the subgroup structure under addition mod $D$ without
including a positive $c$ less than $c_1$, which will not satisfy \eqref{quditgrcodes_eqn34}
for $s=s_1$.

For $n>1$ it is helpful to use a \emph{generator matrix}\index{generator matrix} $F$, with components
$F_{rl}$, each between $0$ and $D-1$, with the property that $\vect{c}\in C$
if and only if it can be expressed as linear combinations of rows of $F$,
i.e.,
\begin{equation}
  c_l \equiv \sum_r b_r F_{rl} \pmod{D}
\label{quditgrcodes_eqn35}
\end{equation}
for a suitable collection of integers $\{b_r\}$. This collection will of course
depend on the $\vect{c}$ in question, and for a given $\vect{c}$ need not be
unique, even assuming (as we shall) that $0\leq b_r\leq D-1$.  In particular
the matrix $F$ for which each row is a distinct $\vect{c}$ in $C$, with $r$
running from $1$ to $|C|$, is a generator matrix.  It is straightforward to
show that if $F$ is any generator matrix for $C$, $S$ consists of all solutions
$\vect{s}$ to the equations
\begin{equation}
  \sum_{l=1}^n F_{rl} s_l \equiv 0 \pmod{d} \text{ for } r=1, 2, \ldots.
\label{quditgrcodes_eqn36}
\end{equation}

The collections $C$ and $S$, vectors of the form \eqref{quditgrcodes_eqn35} and those
satisfying \eqref{quditgrcodes_eqn36}, remain the same if $F$ is replaced by another
generator matrix $F'$ obtained by one of the following \emph{row operations}\index{row operations}:
(i) permuting two rows; (ii) multiplying (mod $D$) any row by an
\emph{invertible}\index{invertible integer} integer, i.e., an integer which has a multiplicative inverse
mod $D$; (iii) adding (mod $D$) to one row an \emph{arbitrary} multiple (mod
$D$) of a different row; (iv) discarding (or adding) any row that is all
zeros, to get a matrix of a different size. Of these, (i) and (iv) are
obvious, and (ii) is straightforward. For (iii), consider what happens if the
second row of $F$ is added to the first, so that $F'_{rl}=F^{}_{rl}$ except
for
\begin{equation}
  F'_{1l} \equiv F^{}_{1l}+ F^{}_{2l} \pmod{D}.
\label{quditgrcodes_eqn37}
\end{equation}
Then setting
\begin{equation}
  b'_1=b_1,\; b'_2\equiv b_2-b_1\pmod{d},\; b'_l=b_l \text{ for } l\geq 3
\label{quditgrcodes_eqn38}
\end{equation}
leads to the same $\vect{c}$ in \eqref{quditgrcodes_eqn35} if $b$ and $F$ are replaced by
$b'$ and $F'$ on the right side. Likewise, any $\vect{c}$ that can be written
as a linear combination of $F'$ rows can be written as a combination of those
of $F$, so the two matrices generate the same collection $C$, and hence have
the same solution set $S$ to \eqref{quditgrcodes_eqn36}.  Since adding to one row a
different row can be repeated an arbitrary number of times, (iii) holds for
an arbitrary (not simply an invertible) multiple of a row.

The corresponding column operations on a generator matrix are (i) permuting
two columns; (ii) multiplying a column by an invertible integer; (iii) adding
(mod $D$) to one column an arbitrary multiple (mod $D$) of a different column.
Throwing away (or adding) columns of zeros is \emph{not} an allowed operation.
When column operations are carried out to produce a new $F'$ from $F$, the new
collections $C'$ and $S'$ obtained using \eqref{quditgrcodes_eqn35} and \eqref{quditgrcodes_eqn36}
will in general be different, but $C'$ is an additive subgroup of the same
size (order), $|C'|=|C|$, and likewise $|S'|=|S|$. The argument is
straightforward for (i) and (ii), and for (iii) it is an easy exercise to show
that if the second column of $F$ is added to the first to produce $F'$, the
collection $C$ is mapped into $C'$ by the map
\begin{equation}
 c_1' \equiv c_1+c_2 \pmod{D}\: ;\quad c'_l = c_l \text{ for } l\geq 2
\label{quditgrcodes_eqn39}
\end{equation}
whose inverse will map $C'$ into $C$ when one generates $F$ from $F'$ by
subtracting the second column from the first. Thus $|C|= |C'|$.
The same strategy shows that $|S'|=|S|$; instead of \eqref{quditgrcodes_eqn39}
use $s'_2\equiv s_2-s_1\pmod{D}$, and $s'_l=s_l$ for $l\neq 2$.

The row and column operations can be used to transform the generator matrix to
a (non unique) diagonal form, in the following fashion.  If each $F_{rl}$ is
zero the problem is trivial.  Otherwise use row and column permutations so
that the smallest positive integer $f$ in the matrix is in the upper left
corner $r=1=l$. Suppose $f$ does not divide some element, say $F_{13}$, in the
first row.  Then by subtracting a suitable multiple of the first column from
the third column we obtain a new generator $F'$ with $0<F'_{13}<f$, and
interchanging the first and third columns we have a generator with a smaller,
but still positive, element in the upper left corner.  Continue in this
fashion, considering both the first row and the first column, until the upper
left element of the transformed generator divides \emph{every} element in
both. When this is the case, subtracting multiples of the first column from
the other columns, and multiples of the first row from the other rows, will
yield a matrix with all zeros in the first row and first column, apart from
the nonzero upper left element at $r=1=l$, completing the first step of
diagonalization.

Next apply the same overall strategy to the sub matrix obtained by ignoring the
first row and column.  Continuing the process of diagonalization and
discarding rows that are all zero (or perhaps adding them back in again), one
arrives at a diagonal $n\times n$ generator matrix
\begin{equation}
  \hat F_{rl} = f_l\delta_{rl},
\label{quditgrcodes_eqn40}
\end{equation}
where some of the $f_l$ may be zero.  The counting problem is now much
simplified, because for each $l$ $c_l$ can be any multiple mod $D$ of $f_l$,
and $s_l$ any solution to $f_l s_l\equiv 0 \pmod{D}$, independent of what
happens for a different $l$.  Denoting these two collections by $C_l$ and
$S_l$, the lemma implies that $|C_l|\cdot|S_l|=D$ for every $l$, and taking
the product over $l$ from $1$ to $n$ yields \eqref{quditgrcodes_eqn33}.  This in turn
implies that $C$ consists of \emph{all possible} $\vect{c}$ that satisfy
\eqref{quditgrcodes_eqn32} for all the $\vect{s}\in S$. To see this, note that the size
$|C|$ of $C$ is $D^n/|S|$.  If we interchange the roles of $C$ and $S$ in the
above argument (using a generator matrix for $S$, etc.), we again come to the
result \eqref{quditgrcodes_eqn33}, this time interpreting $|C|$ as the number of solutions
to \eqref{quditgrcodes_eqn32} with $S$ given.  Thus since it cannot be made any larger, the
original additive subgroup $C$ we started with is maximal relative to $S$.
This completes the proof.
\end{subappendices}

\chapter{Location of quantum information in additive graph codes\label{chp6}}

\section{Introduction\label{locinf_sct1}}


Quantum codes in which quantum information is redundantly encoded in a
collection of code carriers play an important role in quantum information, in
particular in systems for error correction and in schemes for quantum
communication \cite{PhysRevA.51.2738, PhysRevA.52.R2493, PhysRevLett.76.722, PhysRevA.54.1098}. They are a generalization of the classical codes well known and widely used in everyday communication systems \cite{MacWilliamsSloane:TheoryECC}. While for the latter it is fairly obvious where the information is located,
the quantum case is more complicated for two reasons.  First, a quantum
Hilbert space with its non-commuting operators is a more complex mathematical
structure than the strings of bits or other integers used in classical codes.
Second, the very concept of ``information'' is not easy to define in the
quantum case.  However, in certain cases one is able to make quite precise
statements.  Thus in the five qubit code \cite{PhysRevLett.77.198} that
encodes one qubit of information, none of the encoded information is present
in any two qubits taken by themselves, whereas all the information can be
recovered from any set of three qubits \cite{PhysRevA.71.042337}.


Similar precise statements can be made, as we shall see, in the case of an
\emph{additive graph code}\index{additive graph code} on a collection of $n$ qudits which constitute the
\emph{carriers}\index{carrier qudits} of the code, provided each qudit has the same dimension $D$,
with $D$ some integer greater than one (not necessarily prime). It was shown
in \cite{PhysRevA.78.042303} that all additive graph codes are stabilizer
codes, and in \cite{quantph.0111080,quantph.0703112} that all stabilizer codes
are equivalent to graph codes for prime $D$. A detailed discussion of
non-binary quantum error correcting codes can be found in
\cite{quantph.9802007,IEEE.45.1827,PhysRevA.65.012308,
  PhysRevA.78.012306, PhysRevA.78.042303}.  The five qubit code just mentioned
is an example of a quantum code that is locally equivalent to an additive
graph code \cite{PhysRevA.65.012308}, and the information location has an
``all or nothing'' character.  In general the situation is more
interesting in that some subset of carriers may contain some but not all of
the encoded information, and what is present can be either ``classical'' or
``quantum,'' or a mixture of the two. Since many of the best codes currently known are additive graph
codes, identifying the location of information could prove useful when
utilizing codes for error correction, or designing new or better codes, or
codes that correct some types of errors more efficiently than others
\cite{PhysRevA.75.032345}. Our formalism can also be applied to study quantum secret sharing schemes employing graph states and can even handle a more general setting where there might be subsets that contain partial information and hence are neither authorized (contain the whole quantum secret) nor unauthorized (contain no information whatsoever about the secret).


Our approach to the problem of information location is algebraic, based upon
the fact that generalized Pauli operators on the Hilbert space of the carriers
form a group.  Subgroups of this group can be associated with different types
of information, and the information available in some subset of the carriers
can also be identified with, or is isomorphic to, an appropriate subgroup, as
indicated in the isomorphism theorem of Sec.~\ref{locinf_sct10}.  In the process of
deriving this theorem we go through a series of steps which amount to an
\emph{encoding procedure}\index{encoding procedure} that takes the initial quantum information and
places it in the coding subspace of the carrier Hilbert space.  These steps
can in turn be transformed into a set of quantum gates to produce an explicit
circuit that carries out the encoding.  This result, although somewhat
subsidiary to our main aims, is itself not without interest, and is an alternative to a previous scheme \cite{PhysRevA.65.012308} limited to prime $D$.


There have been some previous studies of quantum channels using an algebraic
approach similar to that employed here.  Those most closely related to our
work are by B\'{e}ny et al.\ \cite{PhysRevLett.98.100502,PhysRevA.76.042303}
(and see B\'{e}ny \cite{quantph.0907.4207}) and Blume-Kohout et al.\
\cite{PhysRevLett.100.030501}.  These authors have provided a set of very
general conditions under which an algebraic structure is preserved by a
channel.  In App.~\ref{locinf_apdx4} we show that our results fit within the
framework of a ``correctable algebra'' as defined in
\cite{PhysRevLett.98.100502,PhysRevA.76.042303,quantph.0907.4207}. See also the
remarks in Sec.~\ref{locinf_sct18}.


The remainder of this chapter is organized as follows.  Some general comments
about types of quantum information and their connection with certain ideal
quantum channels are found in Sec.~\ref{locinf_sct2}. Section \ref{locinf_sct3} contains
definitions of the Pauli group and of some quantum gates used later in the
chapter.  The formalism associated with additive graph codes as well as our
encoding strategy is in Sec.~\ref{locinf_sct6}; this along with some results on
partial traces leads to the fundamental isomorphism result in Sec.~\ref{locinf_sct10},
which also indicates some of its consequences for the types of information
discussed in Sec.~\ref{locinf_sct2}.  Section~\ref{locinf_sct14} contains various
applications to specific codes, for both qubit and qudit carriers.  Finally,
Sec.~\ref{locinf_sct18} contains a summary, conclusions, and some open questions.
Appendices~\ref{locinf_apdx1} and \ref{locinf_apdx2} contain longer proofs of theorems,
App.~\ref{locinf_apdx3} presents an efficient linear algebra based algorithm for
working out the results for any additive graph code, and App.~\ref{locinf_apdx4}
illustrates the connection with related work in \cite{PhysRevLett.98.100502}
and \cite{PhysRevA.76.042303}.

\section{Types of information}
\label{locinf_sct2}


Both classical and quantum information theory have to do with statistical
correlations between properties of two or more systems, or properties of a
single system at two or more times.  In the classical case information is
always related to a possible set of physical properties that are distinct and
mutually exclusive---e.g., the voltage has one of a certain number of
values---with one and only one of these properties realized in a particular
system at a particular time.  For quantum systems it is useful to distinguish
different \emph{types}\index{types (species) of information} or \emph{species} of information
\cite{PhysRevA.76.062320}, each corresponding to a collection of mutually
distinct properties represented by a (projective) decomposition $\JC=\{J_j\}$
of the identity $I$ on the relevant Hilbert space $\HC$:
\begin{equation}
\label{locinf_eqn1}
  I = \sum_j J_j,\quad J_j = J_j^\dagger = J_j^2,
\quad J_j J_k = \delta_{jk} J_j.
\end{equation}
Any normal operator $M$ has a spectral representation of the form
\begin{equation}
\label{locinf_eqn2}
M = \sum_j \mu_j J_j,
\end{equation}
where the $\mu_j$ are its eigenvalues, and the decomposition $\{J_j\}$ is
uniquely specified by requiring $\mu_j\neq \mu_k$ when $j\neq k$. This means
one can sensibly speak about the type of information $\JC(M)$ \emph{associated
  with} a normal operator $M$.  When $M$ is Hermitian this is the kind of
information obtained by measuring $M$.


This terminology allows one to discuss the transmission of information through
a quantum channel in the following way.  Let $\EC$ be the
completely positive, trace preserving superoperator that maps the space of
operators $\LC(\HC)$ of the channel input onto the corresponding operator
space $\LC(\HC')$ of the channel output $\HC'$ (which may have a different
dimension from $\HC$).  Provided
\begin{equation}
  \EC(J_j) \EC(J_k) = 0 \text{ for } j\neq k,
\label{locinf_eqn3}
\end{equation}
for all the operators $\{J_j\}$ associated with a decomposition $\JC$ of the
$\HC$ identity, we shall say the channel is \emph{ideal}\index{ideal (noiseless) channel} or \emph{noiseless}
for the $\JC$ species of information, or, equivalently, the $\JC$ type of
information is \emph{perfectly present}\index{perfectly present} in the channel output $\HC'$.
Formally, each physical property $J_j$ at the input corresponds in a
one-to-one fashion to a unique property, the support of $\EC(J_j)$ (or the
corresponding projector) at the output.  Thus we have a quantum version of a
noiseless classical channel, a device for transmitting symbols, in this case
the label $j$ on $J_j$, from the input to the output by associating distinct
symbols with distinct physical properties---possibly a different collection of
properties at the output than at the input.


The opposite extreme from a noiseless channel is one in which $\EC(J_j)$ is
\emph{independent of $j$} up to a multiplicative constant.  In this case no
information of type $\JC$ is available at the channel output: the channel is
\emph{blocked}\index{completely noisy (blocked) channel}, or completely noisy; equivalently, the $\JC$ species of
information is \emph{absent}\index{absent, information} from the channel output.  Hereafter we shall
always use ``absent'' in the strong sense of ``completely absent'', and the
term \emph{present}\index{present (partially present), information}, or \emph{partially present} for situations in which some
type of information is not (completely) absent but is also not perfectly
present: i.e., the channel is noisy but not completely blocked for this type
of information. 


In some cases all the projectors in $\{J_j\}$ will be of rank 1, onto pure
states, but in other cases some or all of them may be of higher rank, in which
case one may have a \emph{refinement}\index{refinement} $\LC = \{L_l\}$ of $\{J_j\}$ such that
each projector $J_j$ is a sum of one or more projectors from the $\LC$
decomposition.  It is then clear that if the $\LC$ information is absent/perfectly present from/in the channel output the same is true of the $\JC$
information, but the converse need not hold.  Thus it may be that the coarse grained $\JC$
information is perfectly present, but no additional information is available
about the refinement.  A particularly simple situation, which we will
encounter later, is one in which the output $\HC'$ is itself a tensor product,
say $\HC'_1\ot\HC'_2$, $\JC$ a decomposition of $\HC'_1$, $\JC=\{J_j\otimes I\}$ and $\KC$ a decomposition of $\HC'_2$, $\KC=\{I\otimes K_k\}$. It can then be the case that the information associated with
the $\JC$ decomposition is perfectly present and that associated with the 
$\KC$ decomposition is (perfectly) absent from the channel output. 


Suppose $\JC = \{J_j\}$ and $\KC = \{K_k\}$ are two types of quantum
information defined on the same Hilbert space. The species $\JC$ and $\KC$ are
\emph{compatible}\index{compatible, types of information} if all the projectors in $\JC$ commute with all the
projectors in $\KC$, in which case the distinct nonzero projectors in the
collection $\{J_j K_k\}$ provide a common refinement of the type discussed
above. Otherwise, if some projectors in one collection do not commute with
certain projectors in the other, $\JC$ and $\KC$ are \emph{incompatible}\index{incompatible, types of information} and
cannot be combined with each other. This is an example of the single framework
rule of consistent quantum reasoning, \cite{PhysRevA.54.2759} or Ch.~16 of
\cite{RBGriffiths:ConsistentQuantumTheory}.  The same channel may be ideal for
some $\JC$ and blocked for some $\KC$, or noisy for both but with different
amounts of noise.  From a quantum perspective, classical information theory is
only concerned with a single type of (quantum) information, or several
compatible types which possess a common refinement, whereas the task of
quantum, in contrast to classical, information theory is to analyze situations
where multiple incompatible types need to be considered.  


The term ``classical information'' when used in a quantum context can be
ambiguous or misleading.  Generally it is used when only a single type of
information, corresponding to a single decomposition of the identity, suffices
to describe what emerges from a channel, and other incompatible types can
therefore be ignored.  Even in such cases it is helpful to indicate explicitly
which decomposition of the identity is involved if that is not obvious from
the context. The contrasting term ``quantum information'' can then refer to
situations where two or more types of information corresponding to
incompatible decompositions are involved, and again it is helpful to be
explicit about what one has in mind if there is any danger of ambiguity.


An \emph{ideal quantum channel}\index{ideal quantum channel} is one in which there is an isometry $V$ from
$\HC$ to $\HC'$ such that 
\begin{equation}
  \EC(A) = VAV^\dagger
\label{locinf_eqn4}
\end{equation}
for every operator $A$ on $\HC$. In this case the superoperator $\EC$
preserves not only sums but also operator products:
\begin{equation}
  \EC(AB) = \EC(A) \EC(B).
\label{locinf_eqn5}
\end{equation}
Conversely, if \eqref{locinf_eqn5} holds for any pair of operators, one can show that
the quantum channel is ideal \cite{PhysRevLett.98.100502,PhysRevA.76.042303},
i.e. $\EC$ has the form \eqref{locinf_eqn4}.  As the isometry maps orthogonal
projectors to orthogonal projectors, \eqref{locinf_eqn3} will be satisfied for every
species of information, and we shall say that \emph{all} information is
perfectly present at the channel output.  The converse, that a channel which
is ideal for all species, or even for an appropriately chosen pair of
incompatible species is an ideal quantum channel, is also correct; see
\cite{PhysRevA.71.042337, PhysRevA.76.062320}.


The preservation of operator products, \eqref{locinf_eqn5}, can be a very useful tool
in checking for the presence or absence of various types of information in the
channel output, as we shall see in Sec.~\ref{locinf_sct10}.  When \eqref{locinf_eqn5} holds
for arbitrary $A$ and $B$ belonging to a particular decomposition of the
identity, this suffices to show that the channel is ideal for this
species. However, note that this sufficient condition is not necessary, since
\eqref{locinf_eqn3} could hold without the $\EC(A_j)$ being projectors, in which case
$\EC(A_j^2)$ is not mapped to $\EC(A_j)^2$.


We use the term \emph{ideal classical channel}\index{ideal classical channel} for a type of information 
$\JC = \{J_j\}$ to refer to a situation where \eqref{locinf_eqn3} is satisfied and,
in addition, 
\begin{equation}
\label{locinf_eqn6} 
\EC(J_j A J_k) = 0 \text{ for } j\neq k,
\end{equation}
where $A$ is any operator on the input Hilbert space $\HC$. 
That is, not only is type $\JC$ perfectly transmitted, but all other types
are ``truncated'' relative to this type, in the notation of
\cite{PhysRevA.54.2759}.  

\section{Preliminary remarks and definitions}
\label{locinf_sct3}

\subsection{Generalized Pauli operators on $n$ qudits}
\label{locinf_sct4}


We generalize Pauli operators to higher dimensional systems of arbitrary
dimension $D$ in the following way. The $X$ and $Z$ operators acting on a
single qudit are defined as
\begin{equation}
\label{locinf_eqn7}
Z=\sum_{j=0}^{D-1}\omega^j\dyad{j}{j},\quad X=\sum_{j=0}^{D-1}\dyad{j}{j+1},
\end{equation}
and satisfy
\begin{equation}
\label{locinf_eqn8}
X^D=Z^D=I,\quad XZ=\omega ZX,\quad \omega = \mathrm{e}^{2 \pi \ii /D},
\end{equation}
where \emph{the addition of integers is modulo $D$}, as will be 
assumed from now on. For a collection of $n$ qudits we use subscripts to
identify the corresponding Pauli operators: thus $Z_i$ and $X_i$ operate on
the space of qudit $i$. The Hilbert space of a single qudit is denoted by
$\HC$, and the Hilbert space of $n$ qudits by $\HC_n$,
respectively. Operators of the form
\begin{equation}
\label{locinf_eqn9}
\omega^{\lambda}X^{\vect{x}}Z^{\vect{z}} :=
\omega^{\lambda}X_1^{x_1}Z_1^{z_1}\otimes X_2^{x_2}Z_2^{z_2}\otimes\cdots
\otimes X_n^{x_n}Z_n^{z_n}
\end{equation} 
will be referred to as \emph{Pauli products}\index{Pauli product}, where $\lambda$ is an integer
in $\ZZ_D$ and $\vect{x}$ and $\vect{z}$ are $n$-tuples in $\ZZ_D^n$, the
additive group of $n$-tuple integers mod $D$.  For a fixed $n$ the collection
of all possible Pauli products \eqref{locinf_eqn9} form a group under operator
multiplication, the \emph{Pauli group}\index{Pauli group} $\PC_n$. If $p$ is a Pauli product,
then $p^D=I$ is the identity operator on $\HC_n$, and hence the order of any
element of $\PC_n$ is either $D$ or else an integer that divides $D$. While
$\PC_n$ is not abelian, it has the property that two elements \emph{commute up
  to a phase}: $p_1p_2 = \omega^{\lambda_{12}} p_2p_1$, with $\lambda_{12}$ an
integer in $\ZZ_D$ that depends on $p_1$ and $p_2$.


The collection of Pauli products with $\lambda=0$, i.e. a pre-factor of $1$, is
denoted by $\QC_n$ and forms an orthonormal basis of
$\LC(\HC_n)$, the Hilbert space of linear operators on
$\HC_n$, with respect to the inner product
\begin{equation}
\label{locinf_eqn10}
\frac{1}{D^n}\Tr[q_1^\dagger q_2]=\delta_{q_1,q_2}, \quad \forall q_1,q_2\in \QC_n.
\end{equation}
Note that $\QC_n$ is a \emph{projective group}\index{projective group} or group up
to phases. There is a bijective map between $\QC_n$ and the quotient group
$\PC_n /\{\omega^{\lambda}{I}\}$ for $\lambda\in\ZZ_D$ where
$\{\omega^{\lambda}{I}\}$, the center of $\PC_n$, consists of phases
multiplying the identity operator on $n$ qudits.

\subsection{Generalization of qubit quantum gates to higher dimensions}
\label{locinf_sct5}


In this subsection we define some one and two qudit gates generalizing
various qubit gates. The qudit generalization of the Hadamard gate is the
\emph{Fourier gate}\index{Fourier gate}
\begin{equation}\label{locinf_eqn11}
 \mathrm{F}:=\frac{1}{\sqrt{D}}\sum_{j=0}^{D-1}\omega^{jk}\dyad{j}{k}.
\end{equation}
For an invertible integer $q\in\ZZ_D$ (i.e. integer for which there exists $\bar q\in\ZZ_D$ such that $q \bar q \equiv 1 \bmod D$), we define a
\emph{multiplicative gate}\index{multiplicative gate}
\begin{equation}\label{locinf_eqn12}
 \mathrm{S}_q:=\sum_{j=0}^{D-1}\dyad{j}{jq},
\end{equation}
where $qj$ means multiplication mod $D$. The requirement that $q$ be
invertible ensures that $\mathrm{S}_q$ is unitary; for a qubit
$\mathrm{S}_q$ is just the identity.


For two distinct qudits $a$ and $b$ we define the CNOT\index{Controlled-NOT gate} gate as
\begin{equation}
\label{locinf_eqn13}
 \mathrm{CNOT}_{ab}:=\sum_{j=0}^{D-1}\dyad{j}{j}_a\otimes X_b^j=\sum_{j,k=0}^{D-1}\dyad{j}{j}_a\otimes \dyad{k}{k+j}_b,
\end{equation}
the obvious generalization of the qubit Controlled-NOT, where $a$ labels the control qudit and $b$ labels the target qudit. Next the SWAP\index{SWAP gate} gate is defined as
\begin{equation}
\label{locinf_eqn14}
 \mathrm{SWAP}_{ab}:=\sum_{j,k=0}^{D-1}\dyad{k}{j}_a\otimes \dyad{j}{k}_b.
\end{equation}
It is easy to check that SWAP gate is hermitian and does indeed swap
qudits $a$ and $b$. Unlike the qubit case, the qudit SWAP gate is not a
product of three CNOT gates, but can be expressed in terms of CNOT gates and
Fourier gates as
\begin{equation}\label{locinf_eqn15}
 \mathrm{SWAP}_{ab}=\mathrm{CNOT}_{ab}(\mathrm{CNOT}_{ba})^{\dagger}\mathrm{CNOT}_{ab}(\mathrm{F}_a^2\otimes I_b),
\end{equation}
with 
\begin{equation}\label{locinf_eqn16}
(\mathrm{CNOT}_{ba})^{\dagger}=(\mathrm{CNOT}_{ba})^{D-1}=(I_a\otimes \mathrm{F}_b^2)\mathrm{CNOT}_{ba} (I_a\otimes \mathrm{F}_b^2). 
\end{equation}
Finally we define the generalized Controlled-phase\index{Controlled-phase gate} or CP gate as
\begin{equation}
\label{locinf_eqn17}
\mathrm{CP}_{ab}=\sum_{j=0}^{D-1}\dyad{j}{j}_a\otimes Z^j_b=
\sum_{j,k=0}^{D-1}\omega^{jk}\dyad{j}{j}_a\otimes\dyad{k}{k}_b.
\end{equation}
The CP and CNOT gates are related by a local Fourier gate, similar to the qubit case
\begin{equation}\label{locinf_eqn18}
\mathrm{CNOT}_{ab}=(I_a\otimes \mathrm{F}_b) \mathrm{CP}_{ab} (I_a\otimes \mathrm{F}_b)^\dagger,
\end{equation}
since $\mathrm{F}$ maps $Z$ into $X$ under conjugation (see Table \ref{locinf_tbl1}).


The gates $\mathrm{F}$, $\mathrm{S}_q$, SWAP, CNOT and CP are unitary
operators that map Pauli operators to Pauli operators under conjugation, as
can be seen from Tables ~\ref{locinf_tbl1} and~\ref{locinf_tbl2}. They are elements of the
so called \emph{Clifford group}\index{Clifford group} on $n$ qudits \cite{quantph.9802007,PhysRevA.71.042315}, the group of $n$-qudit unitary
operators that leaves $\PC_n$ invariant under conjugation, i.e. if $O$ is a
Clifford operator, then $\forall p\in\PC_n$, $OpO^\dagger\in\PC_n$. From
Tables~\ref{locinf_tbl1} and~\ref{locinf_tbl2} one can easily deduce the result of
conjugation by $\mathrm{F}$, $\mathrm{S}_q$, SWAP, CNOT and CP on \emph{any}
Pauli product. 

\begin{table}
\begin{center}
\begin{tabular}{|l|l|l|}
\hline
Pauli operator    & $\mathrm{S}_q$ & $\mathrm{F}$  \\
\hline
\hline
$Z$ & $Z^{q}$ & $X$\\
\hline
$X$ & $X^{\bar q}$ & $Z^{D-1}$\\
\hline
\end{tabular}
\caption{The conjugation of Pauli operators by one-qudit gates $\mathrm{F}$ and $\mathrm{S}_q$ ($\bar q$ is the multiplicative inverse of $q$ mod $D$).}
\label{locinf_tbl1}
\end{center}
\end{table}

\begin{table}
\begin{center}
\begin{tabular}{|l|l|l|l|}
\hline
Pauli product & $\mathrm{CNOT}_{ab}$ & $\mathrm{SWAP}_{ab}$ & $\mathrm{CP}_{ab}$\\
\hline
\hline
$I_a\otimes Z_b$ & $Z_a\otimes Z_b$ & $Z_a\otimes I_b$ & $I_a\otimes Z_b$\\ 
\hline
$Z_a\otimes I_b$ & $Z_a\otimes I_b$ & $I_a\otimes Z_b$ & $Z_a\otimes I_b$\\
\hline
$I_a\otimes X_b$ & $I_a\otimes X_b$ & $X_a\otimes I_b$ & $Z_a^{D-1}\otimes X_b$\\
\hline
$X_a\otimes I_b$ & $X_a\otimes X_b^{D-1}$ & $I_a\otimes{X}_b$ & $X_a\otimes Z_b^{D-1}$\\
\hline
\end{tabular}
\caption{
  The conjugation of Pauli products on qudits $a$ and $b$ by two-qudit 
  gates CNOT, SWAP and CP. For the CNOT gate, the first qudit $a$ is the
  control and the second qudit $b$ the target.
}
\label{locinf_tbl2}
\end{center}
\end{table}

\section{Graph states, graph codes and related operator groups \label{locinf_sct6}}

\subsection{Graph states and graph codes\label{locinf_sct7}} 


Let $G=(V,E)$ be a graph with $n$ vertices $V$, each corresponding to a qudit, and a collection $E$ of undirected edges connecting pairs of distinct vertices (no self loops). Two qudits can be joined by multiple edges, as long as the multiplicity does not exceed $D-1$. The graph $G$ is completely specified by the \emph{adjacency matrix}\index{adjacency matrix} $\Gamma$, where the matrix element $\Gamma_{ab}$ represents the number of edges that connect vertex $a$ with vertex $b$. The \emph{graph state}\index{graph state}
\begin{equation}\label{locinf_eqn19}
\ket{G}=U\ket{G_0}=U\left(\ket{+}^{\otimes n}\right)
\end{equation}
is obtained by applying the unitary (Clifford) operator 
\begin{equation}
\label{locinf_eqn20}
U=\prod_{(a,b) \in E}\left(\mathrm{CP}_{ab}\right)^{\Gamma_{ab}},
\end{equation}
where each pair $(a,b)$ of vertices occurs only once in the product,
to the \emph{trivial graph state}\index{trivial graph state}
\begin{equation}\label{locinf_eqn21}
\ket{G_0}:=\ket{+}^{\otimes n},
\end{equation}
with
\begin{equation}\label{locinf_eqn22}
\ket{+}:=\frac{1}{\sqrt{D}}\sum_{j=0}^{D-1}\ket{j}.
\end{equation}


Define $\SC^G$ to be the stabilizer of $\ket{G}$, the subgroup of operators from $\PC_n$ that leave $\ket{G}$ unchanged. The stabilizer $\SC^G_0$ of the trivial graph state $\ket{G_0}$ is simply the set of all $X$-type Pauli products with no additional phases,
\begin{equation}
\label{locinf_eqn23}
\SC^G_0=\left\{ X^{\vect{x}}: \vect{x}=(x_1,x_2,\ldots,x_n)\right\},
\end{equation}
where $x_j$ are arbitrary integers between $0$ and $D-1$.
Since $\ket{G}$ is related to $\ket{G_0}$ through a Clifford operator (see \eqref{locinf_eqn19} and \eqref{locinf_eqn20}), it follows at once that the stabilizer $\SC^G$ of $\ket{G}$ is related to the stabilizer $\SC_0^G$ of the trivial graph through the Clifford conjugation
\begin{equation}\label{locinf_eqn24}
\SC^G=U\SC^G_0 U^\dagger,
\end{equation}
with $U$ defined in \eqref{locinf_eqn20}.


A \emph{graph code}\index{graph code} $C$ can be defined as the $K$-dimensional subspace $\HC_C$
of $\HC_n$ spanned by a collection of $K$ mutually orthogonal
codewords
\begin{equation}
\label{locinf_eqn25}
 \ket{\vect{c}_j} = 
Z^{\vect{c}_j}\ket{G},\quad j = 1,2,\ldots, K
\end{equation}
where 
\begin{equation}
\label{locinf_eqn26}
\vect{c}_j = (c_{j1}, c_{j2},\ldots, c_{jn})
\end{equation}
is for each $j$ an $n$-tuple in $\ZZ_D^n$.  The $c_{jk}$ notation suggests a
matrix $\vect{c}$ with $K$ rows and $n$ columns, of integers between $0$ and
$D-1$, and this is a very helpful perspective.  In this chapter we are concerned
with \emph{additive}\index{additive graph code} graph codes, meaning that the rows of this matrix form a
group under component-wise addition mod $D$, isomorphic to the abelian
\emph{coding group}\index{coding group} $\CC$, of order $|\CC|=K$, of the operators
$Z^{\vect{c}_j}$ under multiplication.  We use $(\CC,\ket{G})$ to denote the
corresponding graph code.  For more details about graph states and graph codes
for arbitrary $D$, see \cite{PhysRevA.78.042303}.


Note that the codeword $(0,0,\ldots,0)$ is just the graph state $\ket{G}$, and
in the case of the trivial graph $\ket{G_0}$ this is the tensor product of
$\ket{+}$ states, \eqref{locinf_eqn21}, not the tensor product of $\ket{0}$ states
which the $n$-tuple notation $(0,0,\ldots,0)$ might suggest. Overlooking this
difference can lead to confusion through interchanging the role of $X$ and $Z$
operators, which is the reason for pointing it out here.

\subsection{The encoding problem\label{locinf_sct8}}

A coding group $\CC$ can be used to create an additive code starting with any $n$ qudit
graph state, including the trivial graph $\ket{G_0}$, because the entangling
unitary $U$ commutes with $Z^{\vect{z}}$ for any $\vect{z}$; thus
\begin{equation}
\label{locinf_eqn27}
 \ket{\vect{c}_j} = Z^{\vect{c}_j}U\ket{G_0} = 
UZ^{\vect{c}_j}\ket{G_0} =U\ket{\vect{c}_j^0}
\end{equation}
where the $\ket{\vect{c}_j^0}$ span the code $(\CC,\ket{G_0})$. But in
addition the coding group $\CC$ is isomorphic, as explained below to a
\emph{trivial}\index{trivial code} code $\CC_0$,
\begin{equation}
\label{locinf_eqn28}
 \CC_0=\langle Z_1^{m_1}, Z_2^{m_2},\ldots, Z_k^{m_k}\rangle
\end{equation}
which is \emph{generated by}\index{generated by}, i.e., includes all products of, the operators
inside the angular brackets $\langle$ $\rangle$.  Here $k$ is an integer less
than or equal to $n$, and each $m_j$ is $1$ or a larger integer that divides
$D$.
The simplest situation is the one in which each of the $m_j$ is equal to 1, in
which case $\CC_0$ is nothing but the group, of order $D^k$, of products of
$Z$ operators to any power less than $D$ on the first $k$ qudits.  One can
think of these qudits as comprising the input system through which information
enters the code, while the remaining $n-k$ qudits, each initially in a
$\ket{+}$ state, form the ancillary system for the encoding operation.  


If, however, one of the $m_j$ is greater than 1, the corresponding generator
$Z_j^{m_j}$ is of order
\begin{equation}
\label{locinf_eqn29}  
d_j = D/m_j,
\end{equation}
and represents a qudit of dimensionality $d_j$ rather than $D$.  Thus for
example, if $D=6$ and $m_1=2$, applying $Z_1^2$ and its powers to $\ket{+}$
will produce three orthogonal states corresponding to a qutrit, $d_1=3$.
(Identifying operators $Z$ and $X$ on these three states which satisfy
\eqref{locinf_eqn8} with $D=3$ is not altogether trivial, and is worked out in
Sec.~\ref{locinf_sct9} below.)
In general one can think of the group $\CC_0$ in \eqref{locinf_eqn28} as
associated with a collection of $k$ qudits, the $j$'th qudit having dimension
$d_j$, and therefore the collection as a whole a dimension of $K=d_1d_2\cdots
d_k$, equal to that of the graph code.  If one thinks of the information to be
encoded as initially present in these $k$ qudits, the encoding problem is how
to map them in an appropriate way into the coding subspace $\HC$ of the $n$
($D$-dimensional) carriers.  


We address this by first considering
the connection between $\CC$ and $\CC_0$ in a simple 
example with $n=3$, $D=6$, and 
\begin{equation}
\label{locinf_eqn30}
\CC=\left\langle  Z_1^4Z_2^3Z_3^3, Z_2^3Z_3^3 \right\rangle,
\end{equation}
a coding group of order 6. The two generators in \eqref{locinf_eqn30} correspond, in
the notation introduced in \eqref{locinf_eqn26}, to the rows of the $2\times 3$
matrix
\begin{equation}
\label{locinf_eqn31}
\vect{f} =\mat{ 
 4 & 3 & 3\\
 0 & 3 & 3}.
\end{equation}
By adding rows or multiplying them by constants mod $D$ one can create 4
additional rows which together with those in \eqref{locinf_eqn31} constitute the
$6\times 3$ $\vect{c}$  matrix.


Through a sequence of elementary operations mod $D$\index{elementary operations mod $D$}---a) interchanging of
rows/columns, b) multiplication of a row/column by an \emph{invertible} integer, c)
addition of any multiple of a row/column to a \emph{different} row/column---a
matrix such as $\vect{f}$ can be converted to the Smith normal form 
\cite{Newman:IntegralMatrices,Storjohann96nearoptimal}
\begin{equation}
  \vect{s} = \vect{v}\cdot\vect{f}\cdot\vect{w},
\label{locinf_eqn32}
\end{equation}
where $\vect{v}$ and $\vect{w}$ are invertible (in the mod $D$ sense) square
matrices, and $\vect{s}$ is a diagonal rectangular matrix, as in
\eqref{locinf_eqn33}. It is proved in \cite{Storjohann96nearoptimal} that a $K\times n$ matrix can be reduced to the Smith form in only $\OC(K^{\theta-1}n)$ operations from $\ZZ_D$, where $\theta$ is the exponent for matrix multiplication over the ring $\ZZ_D$, i.e. two $m\times m$ matrices over $\ZZ_D$ can be multiplied in $\OC(m^{\theta})$ operations from $\ZZ_D$. Using standard matrix multiplication $\theta=3$, but better algorithms \cite{CoppersmithWinograd} allow for $\theta=2.38$. 


For the example above, the sequence
\begin{equation}
\label{locinf_eqn33}
  \mat{ 4 & 3 & 3\\ 0 & 3 & 3} \rightarrow 
  \mat{ 4 & 0 & 0\\ 0 & 3 & 3} \rightarrow 
  \mat{ 4 & 0 & 0\\ 0 & 3 & 0} \rightarrow 
  \mat{ 2 & 0 & 0\\ 0 & 3 & 0} =\vect{s}
\end{equation} 
proceeds by adding the second row of $\vect{f}$ to the first (mod 6), then the
second column to the third column, and finally multiplying the first row by 5
(which is invertible mod 6).  The final step is needed so that the diagonal
elements divide $D$: $m_1=2$, $m_2=3$, so that $d_1=3$ and $d_2=2$. Thus we
arrive at the trivial coding group
\begin{equation}
\label{locinf_eqn34}
 \CC_0=\left\langle Z_1^2, Z_2^3 \right\rangle,
\end{equation}
isomorphic to $\CC$ in \eqref{locinf_eqn30}.


Since the procedure for reducing a matrix to Smith normal form is quite
general, the procedure illustrated in this example can be applied to
any coding group $\CC$, as defined following \eqref{locinf_eqn26}, to find a
corresponding trivial coding group $\CC_0$. 
The row operations change the collection of generators but not the coding
group that they generate; i.e., the final collection of $K$ rows is the same.
The column operations, on the other hand, produce a different, but isomorphic,
coding group, and one can think of these as realized by a unitary operator $W$
which is a product of various SWAP, CNOT and $\mathrm{S}_q$ gates, so that 
\begin{equation}
\label{locinf_eqn35}
\CC=W\CC_0 W^\dagger,
\end{equation}
that is, conjugation by $W$ maps each operator in $\CC_0$ to its counterpart
in $\CC$.  In our example, $W=\mathrm{CNOT}_{32}$ is the only column
operation, the second arrow in \eqref{locinf_eqn33}, and represents the first step
in the encoding circuit for this example, Fig.~\ref{encoding}(b). It is left
as an exercise to check that this relates the generators in \eqref{locinf_eqn30} and
\eqref{locinf_eqn34} through \eqref{locinf_eqn35}.  Table~\ref{locinf_tbl3} indicates how
different matrix column operations are related to the corresponding gates in
the encoding circuit. 

\begin{figure}
\begin{center}
\includegraphics{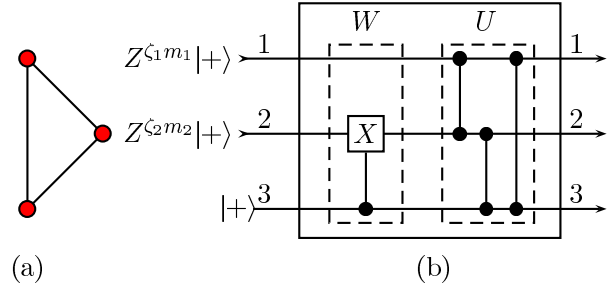}
\caption{(a) The graph state used in the example; (b) The encoding circuit: the input states ${Z_1^{\zeta_1 m_1}Z_2^{\zeta_2 m_2}\ket{++}}$ that correspond to the trivial code $\CC_0$ are mapped by
$W$ to $\CC$, then $U$ entangles the qudits. Here $m_1=2$, $m_2=3$ and $\zeta_j$ are integers such that $0\leqslant \zeta_j\leqslant d_j-1$, with $d_1=3$, $d_2=2$.}
\label{encoding}
\end{center}
\end{figure}

\begin{table}
\begin{center}
\begin{tabular}{|l|c|}
\hline
Matrix operation in $\ZZ_D$ & Clifford conjugation\\
\hline
\hline
Interchange of columns $a$ and $b$    & $\mathrm{SWAP}_{ab}$\\
\hline
Multiplication of column $a$          & $\mathrm{S}_q$ on qudit $a$\\
by invertible integer $q$             & \\
\hline
Addition of m times column $b$ to             & $(\mathrm{CNOT}_{ab})^m$ \\
column $a$             &  \\
\hline
\end{tabular}
\caption{The correspondence between matrix column operations in $\ZZ_D$ and 
conjugation by Clifford gates. For the CNOT gate, the first qudit $a$ is the control
and the second qudit $b$ the target.}
\label{locinf_tbl3}
\end{center}
\end{table}


The overall encoding operation 
\begin{equation}
\label{locinf_eqn36}
\ket{\vect{c}_j}=UW\ket{\vect{c}_j^{0}}
\end{equation}
starting with the trivial code on the trivial graph $(\CC_0,\ket{G_0})$  and
ending with the desired code $(\CC,\ket{G})$ is shown for our example in
Fig.~\ref{encoding}(b) for the case of a graph indicated in (a) in this
figure.  It is important to notice that both $W$ and $U$, and therefore their
product, are Clifford operators, unitaries that under conjugacy map Pauli products to Pauli products.  This follows from the fact
that the gates in Table~\ref{locinf_tbl3} are Clifford gates, and will allow us in
what follows to extend arguments that are relatively straightforward for
trivial codes on trivial graphs to more general additive graph codes.  

\subsection{The information group}
\label{locinf_sct9}


In this section we define the \emph{information group}\index{information group} that plays a central
role in the isomorphism theorem in Sec.~\ref{locinf_sct10} below.  The basic strategy
is most easily understood in terms of $C_0 = (\CC_0,\ket{G_0})$, the trivial
code on the trivial graph.  However, because the overall encoding map $UW$ in
\eqref{locinf_eqn36} is a Clifford operation mapping Pauli products to Pauli
products, various results that apply to $C_0$ can be immediately translated
to the general graph code $C = (\CC,\ket{G})$ we are interested in, and for
this reason most of the formulas valid for both are written in the form
valid for $C$ even if the derivations are based on $C_0$.

The pointwise stabilizer%
\footnote{Also called the ``fixer'' or ``fixator''.  It is important to
  distinguish this subgroup from the group theoretical notion of the
  stabilizer of the coding space in the sense of the
  subgroup of $\PC_n$ that maps the coding space onto itself without necessarily
  leaving the individual vectors fixed.  As we shall not employ the latter, it
  should cause no confusion if we hereafter follow the usual convention in
  quantum codes and omit ``pointwise,'' even though retaining it would add
  some precision.}%
\ of $C_0$, the subgroup of operators from $\PC_n$ that leave every codeword
$\ket{\vect{c}^0_j}$ unchanged, is given by
\begin{equation}
\label{locinf_eqn37}
\SC_0=\left\{X^{\vect{x}}:\vect{x}
 =(\eta_1 d_1,\eta_2 d_2,\ldots,\eta_k d_k,x_{k+1},\ldots,x_n)
\right\},
\end{equation}
where the $d_j$ are defined in \eqref{locinf_eqn29}, $\eta_j$ is any integer between
0 and $m_j-1$, and the $x_j$ for $j>k$ are arbitrary integers between 0 and
$D-1$.  That this is correct can be seen as follows.  First, Pauli products
belonging to $\SC_0$ cannot contain $Z_j$ operators, for such operators map
each codeword onto an orthogonal state. On the other
hand, every $X_j^{x_j}$ leaves $\ket{G_0}$, \eqref{locinf_eqn21}, unchanged, so it
belongs to $\SC_0$ if and only if it commutes with $Z_j^{m_j}$, which means
$x_jm_j$ must be a multiple of $D$, or $x_j$ a multiple of $d_j$, see
\eqref{locinf_eqn29}.  Thus elements of $\SC_0$ commute with elements of $\CC_0$,
\eqref{locinf_eqn28}.  Since its operators cannot alter the phases of the codewords,
no additional factors of $\om^\lambda$ are allowed, and thus $\SC_0$ is given
by \eqref{locinf_eqn37}.  The stabilizer of the (nontrivial) code $C$ is then the
isomorphic group $\SC$ obtained using the unitary $UW$ of \eqref{locinf_eqn36}:
\begin{equation}
\label{locinf_eqn38}
\SC = (UW)\SC_0(UW)^\dagger \equiv \{ (UW) s (UW)^\dag \ : s\in \SC_0 \}, 
\end{equation}
a collection of Pauli products because the unitary $UW$, as remarked earlier,
is a Clifford unitary.  The order of $\SC_0$, and thus of $\SC$, is given by
\begin{equation}
\label{locinf_eqn39}
|\SC|=D^{n-k}\prod_{j=1}^k{m_j}=
\frac{D^n}{\prod_{j=1}^kd_j}=\frac{D^n}{|\CC|}=\frac{D^n}{K}.
\end{equation}

Next define the subgroup $\WC$ of $\PC_n$
\begin{equation}
\label{locinf_eqn40}
\WC=\langle \SC^G, \CC\rangle
\end{equation}
generated by operators belonging to the stabilizer $\SC^G$ of the graph state
or to the coding group $\CC$, and denote it by $\WC_0=\langle \SC^G_0,
\CC_0^{}\rangle$ in the case of the trivial code. The elements of $\SC_0$
commute with those of $\SC_0^G$ (both are abelian and the former is a subgroup
of the latter), and also with those of $\CC_0$, as noted above.  As group
properties are preserved under the $UW$ map, as in \eqref{locinf_eqn38}, we conclude
that all elements in $\SC$ commute with those in $\WC$, even though $\WC$ is
not (in general) abelian, and hence $\SC$ is a normal subgroup of $\WC$. Now
define the \emph{abstract information group}\index{information group, abstract} as the quotient group
\begin{equation}
\label{locinf_eqn41}
\GCbar=\WC/\SC=\langle \SC^G, \CC\rangle/\SC
\end{equation}
consisting of cosets of $\SC$, written as $g\SC$ or $\SC g$ for $g$ in $\WC$.
Note that because any element $g$ of $\WC$ is a Pauli product, $g^D=I$ is the
identity, and the order of $g$ is either $D$ or an integer that divides $D$. 
Consequently the order of any element of $\GCbar$ is also $D$ or an integer
that divides $D$. 

To understand the significance of $\GCbar$ consider a
trivial code on a single qudit, with 
\begin{equation}
\label{locinf_eqn42} 
\CC_0 = \avg{Z_1^{m_1}},\quad \SC_0^G = \avg{X_1},\quad 
\SC_0 = \avg{X_1^{d_1}}
\end{equation}
The elements of $\GCbar_0$ can be worked out using its identity $\bar I$ and
the generators $\bar X$ and $\bar Z$:
\begin{align}
\label{locinf_eqn43} 
 \bar I &= \SC_0 = \{I_1, X_1^{d_1}, X_1^{2d_1},\ldots \}
\notag\\
 \bar X &= X_1\SC_0 = \{X_1, X_1^{d_1+1}, X_1^{2d_1+1},\ldots \}
\notag\\
 \bar Z &= Z_1^{m_1}\SC_0 = \{Z_1^{m_1}, Z_1^{m_1} X_1^{d_1},\ldots \}.
\end{align}
It is evident that the cosets $\bar X$, $\bar X^2 =
X_1^2\SC_0$ and so forth up to $\bar X^{d_1-1}$ are distinct, whereas
$\bar X^{d_1}=\bar I = \SC_0$.  The same is true for powers of $\bar Z$.
Furthermore,
\begin{equation}
\label{locinf_eqn44}
\bar X\bar Z = X_1 Z_1^{m_1}\SC_0 = \omega^{m_1} Z_1^{m_1}  X_1\SC_0 = 
 \bar\omega \bar Z\bar X,
\end{equation}
with $\bar\omega = \omega^{m_1} = \mathrm{e}^{2\pi \mathrm{i}/d_1}$.  Thus $\GCbar_0$ is 
generated by operators $\bar X$ and $\bar Z$ that satisfy \eqref{locinf_eqn8}
with $D$ replaced by $d_1$, which is to say the corresponding group is what
one would expect for a qudit of dimension $d_1$.  The same  argument
extends easily to the trivial code on $k$ carriers produced by $\CC_0$, see
\eqref{locinf_eqn28}: $\GCbar_0$ is isomorphic to the group of Pauli products on a
set of qudits of dimension $d_1, d_2,\ldots, d_k$.  The same
structure is inherited by the abstract information group $\GCbar$ for 
the code $C=(\CC,\ket{G})$ obtained by applying the $UW$ map as
in \eqref{locinf_eqn38}.

The abstract information group $\GCbar$ is isomorphic to the \emph{information
  group}\index{information group} $\GC$ of \emph{information operators}\index{information operator} acting on the coding space $\HC_C$ and defined in
the following way.  Its identity is the operator
\begin{equation}
\label{locinf_eqn45}
 P = |\SC|^{-1}\Sigma(\SC) =  |\SC|^{-1}\sum_{s\in\SC} s,
\end{equation}
where $\Sigma(\AC)$ denotes the sum of the operators that make up a collection
$\AC$.  In fact, $P$ is just the projector onto $\HC_C$, as
can be seen as follows. Since $\SC$ is a group, $P^2=P$; and since a group
contains the inverse of every element, and $s\in \SC$ is unitary (a Pauli
product),  $P^\dagger = P$.  These two conditions mean that $P$ is a projector
onto some subspace of $\HC_n$.  Since $\SC$ is the (pointwise) stabilizer of
the coding space each $s$ in $\SC$ maps a codeword onto itself, and
thus $P$ maps each codeword to itself. Consequently, all the codewords lie in
the space onto which $P$ projects.  Finally, the rank of $P$ is
\begin{equation}
\label{locinf_eqn46}
\Tr[P] = D^n/|\SC| = |\CC| = K
\end{equation}
(see \eqref{locinf_eqn39}), since the trace of every $s$ in $\SC$ is zero except for
the identity with trace $D^n$. (Note that while $\PC_n$ contains the identity
multiplied by various phases, only the identity operator occurs in $\SC$.)
Therefore $P$ projects onto $\HC_C$, and is given by the formula
\begin{equation}
\label{locinf_eqn47}
P= \sum_{j=1}^{K} \dya{\vect{c}_j}. 
\end{equation}

The other information operators making up the information group
$\GC=\{\hat g\}$ are formed in a similar way from the different cosets
making up $\WC/\SC$:
\begin{equation}
\label{locinf_eqn48} 
 \hat g = |\SC|^{-1}\Sigma(g\SC) =  g P = P g P = P\hat g P.
\end{equation}
That is, for each coset form the corresponding sum of operators and divide by
the order of the stabilizer $\SC$.  The second and third equalities in
\eqref{locinf_eqn48} reflect the fact that the product of the cosets $\SC$ and $g\SC$
in either order is $g\SC$, which is to say $P$ forms the group identity of
$\GC$. They also tell us that the operators that make up $\GC$ act only on the
coding space, mapping $\HC_C$ onto itself, and give zero when applied to any
element of $\HC_n$ in the orthogonal complement of $\HC_C$.  Because $\SC$ is
a normal subgroup of $\WC$, products of operators of the form \eqref{locinf_eqn48}
mirror the products of the corresponding cosets, so the map from the abstract
$\GCbar$ to the group $\GC$ is a homomorphism.  That it is actually an
isomorphism is a consequence of the following, proved in App.~\ref{locinf_apdx1}:

\begin{lemma}\label{locinf_thm1}

Let $\RC$ be a linearly independent collection of Pauli product operators that
form a subgroup of $\PC_n$, and for a Pauli product $p$ let $p\RC=\{pr: r\in
\RC\}$. Then

i) The operators in $p\RC$ are linearly independent.

ii) If $p$ and $q$ are two Pauli products, one or the other of the
following two mutually exclusive possibilities obtains:

$\alpha$) 
\begin{equation}
\label{locinf_eqn49} 
p\RC = \mathrm{e}^{\ii\phi} q\RC
\end{equation}
in the sense that each operator in $p\RC$ is equal to $\mathrm{e}^{\ii\phi}$ times an
operator in $q\RC$

$\beta$)
The union $p\RC\cup q\RC$ is a collection of $2|\RC|$ linearly independent
operators. 

\end{lemma}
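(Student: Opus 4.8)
The plan is to reduce everything to two structural facts about the Pauli group recorded in Sec.~\ref{locinf_sct4}: every Pauli product is unitary, hence invertible; and two Pauli products are either phase-multiples of one another or else correspond to distinct elements of the basis $\QC_n$, in which case they are trace-orthogonal by \eqref{locinf_eqn10}. Part (i) is then immediate. First I would note that if $\sum_{r\in\RC} c_r\,pr = 0$, then $p\bigl(\sum_r c_r r\bigr)=0$; since $p$ is unitary and therefore invertible, $\sum_r c_r r = 0$, and the linear independence of $\RC$ forces every $c_r$ to vanish.

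For part (ii) I would first extract a consequence of the linear independence of the subgroup $\RC$: no two of its elements can be phase-multiples of each other, as that would be a linear dependence. Hence the phase-stripping homomorphism $\pi:\PC_n\to\PC_n/\{\omega^{\lambda}I\}\cong\QC_n$ is injective on $\RC$, and $\tilde\RC:=\pi(\RC)$ is a genuine subgroup of the quotient. The argument of part (i) shows likewise that $\pi$ is injective on each of $p\RC$ and $q\RC$. The whole dichotomy then emerges from comparing the two cosets $\pi(p)\tilde\RC$ and $\pi(q)\tilde\RC$: as cosets of a subgroup they are either identical or disjoint, and these are exactly the mutually exclusive alternatives $\beta$ and $\alpha$.

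In the disjoint case I would conclude $\beta$ directly. The $|\RC|$ basis elements underlying $p\RC$, the $|\RC|$ underlying $q\RC$, and (by disjointness of the cosets) the absence of coincidences between the two families together give $2|\RC|$ distinct members of $\QC_n$; by \eqref{locinf_eqn10} these are orthonormal, and therefore linearly independent. In the equal case, $\pi(q)^{-1}\pi(p)\in\tilde\RC$ means $q^{-1}p=\omega^{\nu} r_0$ for some $r_0\in\RC$ and some phase $\omega^{\nu}$; thus $p=\omega^{\nu}q r_0$, and since $\RC$ is a group with $r_0\in\RC$ we have $r_0\RC=\RC$, whence $p\RC=\omega^{\nu}q\RC$. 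This is precisely $\alpha$ with $\mathrm{e}^{\ii\phi}=\omega^{\nu}$.

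The step I expect to require the most care is showing that the equal-coset case yields a single global phase rather than an $r$-dependent one; a naive coset argument only matches $p\RC$ to $q\RC$ element by element with possibly varying phases. The resolution is the subgroup property of $\RC$, which converts that matching into left multiplication by the fixed element $\omega^{\nu}r_0$, an operation that merely permutes $\RC$ and so contributes one uniform factor $\omega^{\nu}$. I would also be careful to invoke \eqref{locinf_eqn10} explicitly in the disjoint case, since that is the ingredient that upgrades mere set-disjointness of the underlying $\QC_n$ labels into the genuine linear independence asserted in $\beta$.
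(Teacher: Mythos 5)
Your proof is correct, but it is organized differently from the paper's. The paper first reduces to the case $q=I$ (replacing $p$ by $q^{-1}p$), then argues by contraposition: it assumes a linear dependence among the operators of $p\RC\cup\RC$, shows some coefficient $a_s$ on the $\RC$ side must be nonzero, multiplies the relation by $s^{-1}$ and takes the trace to conclude that $\Tr[prs^{-1}]\neq 0$ for some $r$, which forces $prs^{-1}$ to be a phase times the identity and hence lands in case $\alpha$ via the same subgroup observation ($p\RC=\mathrm{e}^{\ii\phi}\bar r^{-1}\RC=\mathrm{e}^{\ii\phi}\RC$) that you emphasize. You instead work directly in the quotient $\PC_n/\{\omega^{\lambda}I\}$, where the dichotomy becomes the standard fact that two cosets of $\tilde\RC$ are identical or disjoint; identical cosets give $\alpha$ by the subgroup property (with the sharper conclusion that the phase can be taken to be a power of $\omega$), and disjoint cosets give $\beta$ immediately because the $2|\RC|$ operators then sit over distinct elements of $\QC_n$, which are orthonormal by \eqref{locinf_eqn10}. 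The two arguments rest on the same kernel fact—a Pauli product has nonzero trace precisely when it is a phase times the identity, i.e., distinct elements of $\QC_n$ are trace-orthogonal—but your coset formulation makes the exhaustiveness and mutual exclusivity of $\alpha$ and $\beta$ structurally transparent, whereas the paper's trace manipulation keeps everything at the level of elementary linear algebra without invoking the quotient group. Your closing remark about the global (rather than $r$-dependent) phase is exactly the right point of care, and your resolution via left multiplication by the fixed element $\omega^{\nu}r_0$ matches the mechanism the paper uses implicitly.
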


Since the collection of Pauli products $\QC_n$ with fixed phase forms a basis
of $\LC(\HC_n)$, a collection of Pauli products can be linearly
\emph{dependent} if and only if it contains both an operator and that operator
multiplied by some phase.  As the (pointwise) stabilizer $\SC$ leaves each
codeword unchanged, the corresponding operators are linearly independent, and
the lemma tells us that distinct cosets $g\SC\neq h\SC$ give rise to distinct
operators $\hat g\neq\hat h$.  Either $g\SC = \mathrm{e}^{\ii\phi} h\SC$, in which
case $\hat g = \mathrm{e}^{\ii\phi} \hat h \neq \hat h$ (since if $\mathrm{e}^{\ii\phi}=1$ the cosets are identical). Or else the $g\SC$ operators are linearly independent
of the $h\SC$ operators, and therefore $\hat g$ and $\hat h$ are linearly
independent. Consequently the homomorphic map from $\GCbar$ to $\GC$ is a
bijection, and the two groups are isomorphic.

The single qudit example considered in \eqref{locinf_eqn42} provides an example of how
$\GCbar$ and $\GC$ are related.  In this case the projector
\begin{equation}
\label{locinf_eqn50} 
P_0 = (1/m_1) (I_1 + X_1^{d_1} + \cdots)
\end{equation}
projects onto the subspace spanned by $\ket{+},Z_1^{m_1}\ket{+}, Z_1^{2
  m_1}\ket{+},\ldots$. While each of the operators that make up a coset such as
$\bar X$ in \eqref{locinf_eqn43} is unitary, their sum, an operator times $P_0$, is
no longer unitary, though when properly normalized acts as a unitary on the
subspace onto which $P_0$ projects. That the different sums of operators
making up the different cosets are distinct is in this case evident from
inspection without the need to invoke Lemma~\ref{locinf_thm1}.

Let us summarize the main results of this subsection.  For an additive graph
code $C$ we have defined the information group $\GC$ of operators acting on
the coding subspace $\HC_C$, whose group identity is the projector $P$ onto $\HC_C$.  
It is isomorphic to the group of Pauli products acting on a tensor product of
qudits of dimensions $d_1$, $d_2$, \dots, $d_k$, which can be thought of as the
input to the code, see Sec.~\ref{locinf_sct8}.  Each element $\hat g$ of $\GC$ is
of the form $P\hat g P$, so as an operator on $\HC_n$ it commutes with $P$ and
yields zero when applied to any vector in the orthogonal complement of
$\HC_C$. The dimension of $\HC_C$ is $K=d_1 d_2\cdots d_k$, the size of the
code, and hence the elements of $\GC$ span the space of linear operators
$\LC(\HC_C)$ on $\HC_C$.

\section{Subsets of carriers and the isomorphism theorem}
\label{locinf_sct10}

\subsection{Subsets of carriers}
\label{locinf_sct11}

Before stating the isomorphism theorem, which is the principal technical
result of this chapter, let us review some facts established in Sec.~\ref{locinf_sct6}.
The additive graph code $(\CC,\ket{G})$ we are interested in can be thought of
as arising from an encoding isometry that carries the channel input onto a
subspace $\HC_C$ of the $n$-qudit carrier space $\HC_n$, as in Fig~\ref{encoding}.  
This isometry, as explained in Sec.~\ref{locinf_sct2} in connection with \eqref{locinf_eqn4}, constitutes a perfect quantum channel, and thus all the information of interest can be said
to be located in the $\HC_C$ subspace, where it is represented by the
information group $\GC$, a multiplicative group of operators for which the
projector $P$ on $\HC_C$ is the group identity, and which as a group is
isomorphic to the abstract information group $\GCbar$ defined in
\eqref{locinf_eqn41}.

We are interested in what kinds of information are available in some subset
$\pt$ of the carriers, where $\ptc$ denotes the complementary set.  For this
purpose it is natural to consider the partial traces over $\ptc$, i.e., the
traces down to the Hilbert space $\HC_\pt$, of the form
\begin{equation}
\label{locinf_eqn51}
 g_\pt =  N^{-1} \Tr_{\ptc}[\hat g],
\end{equation}
where $\hat g$ is an element of the information group $\GC$, and the positive
constant $N$ is defined in \eqref{locinf_eqn58} below.  In those cases in which
$g_{\pt}=0$ the $\JC(\hat g)$ information has disappeared and is not available in the subset $B$, so we shall be interested in those $\hat g$ for which the partial trace does not vanish, that is to say in the elements of the \emph{subset information group}\index{information group, subset}
\begin{equation}
\label{locinf_eqn52}
\GC^{\pt}=\left\{ \hat g \in \GC: \Tr_{\ptc}[\hat g]\neq 0\right\}.
\end{equation}
We show below that $\GC^{\pt}$ is a subgroup of $\GC$, thus justifying its name,
and that it is isomorphic to the group $\GC_{\pt}$ of nonzero operators of the form
$g_{\pt}$ defined in \eqref{locinf_eqn51}.  To actually determine which elements of $\GC$
belong to $\GC^\pt$ one needs to take partial traces of the $\hat g \in \GC$
to see which of them do not trace down to
zero. In App.~\ref{locinf_apdx3} we present an efficient linear algebra algorithm
based on solving systems of linear equations mod $D$ that can find $\GC^{\pt}$
in $\mathcal{O}(K^2n^\theta)$ operations from $\ZZ_D$ where $\theta$ is defined
in Sec. \ref{locinf_sct8}.

If an operator $A$ on the full Hilbert space $\HC_n$ of the $n$ carriers
can be written as a tensor product of an operator on $\HC_\pt$ times the
identity operator $I_{\ptc}$ on $\HC_{\ptc}$ we shall say that $A$ is
\emph{based in $\pt$}\index{operator, based in}. Let $\BC$ be the collection of all operators on $\HC_n$
that are based in $\pt$.  Obviously, $\BC$ is closed under sums, products,
and scalar multiplication. In addition the partial trace $\Tr_{\ptc}[A]$ of an
operator $A$ in $\BC$ is ``essentially the same'' operator, apart from
normalization in the sense that
\begin{equation}
\label{locinf_eqn53} 
A = D^{-|\ptc|}\cdot \Tr_{\ptc}[A] \otimes I_{\ptc}.
\end{equation}
If $A\notin\BC$ is a Pauli product, then its partial trace over $\ptc$
vanishes, since $\Tr[X]$ and $\Tr[Z]$ and their powers (when not equal to
$I$) are zero.  Consequently the partial trace over $\ptc$ of $\Sigma(g\SC)$ in
\eqref{locinf_eqn48} is the same as the partial trace of $\Sigma[(g\SC)\cap\BC]$,
which suggests that it is useful to consider the properties of collections of
Pauli operators of the form $(g\SC)\cap\BC$ with $g$ an element of $\WC$.
The following result, proved in App.~\ref{locinf_apdx1}, turns out to be useful.

\begin{lemma}\label{locinf_thm2}
Let $g,h$ be two arbitrary elements of $\WC$, and $\BC$ the collection of operators with
base in $\pt$.  

i) The set $(g\SC)\cap\BC$ is empty if and only if $(g^{-1}\SC)\cap\BC$ is
empty. 

ii) Every nonempty set of the form $(g\SC)\cap\BC$ contains precisely 
\begin{equation}
\label{locinf_eqn54} M = |\SC\cap \BC| \geq 1
\end{equation}
elements.

iii) Two nonempty sets $(g\SC)\cap\BC$ and $(h\SC)\cap\BC$ are either
identical, which means $g\SC = h\SC$ and $\Sigma[(g\SC)\cap\BC] =
\Sigma[(h\SC)\cap\BC]$, or  else they have no elements in common
and the operators $\Sigma[(g\SC)\cap\BC]$ and $\Sigma[(h\SC)\cap\BC]$
are distinct.

iv) If both $(g\SC)\cap\BC$ and $(h\SC)\cap\BC$ are nonempty, their product
as sets, including multiplicity, is given by
\begin{equation}
\label{locinf_eqn55}
[ (g\SC)\cap\BC]\cdot [(h\SC)\cap\BC] = M[ (gh\SC)\cap\BC].
\end{equation}

\end{lemma}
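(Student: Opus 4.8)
The plan is to lean on two structural facts already in place: that $\SC$ commutes elementwise with all of $\WC$ (so $\SC$ is abelian and, for every $s\in\SC$ and $h\in\WC$, $h^{-1}sh=s$), and that the based-in-$\pt$ operators $\BC$ are closed under products, inverses, and multiplication by phases. A standing observation I would record first is that $\SC\cap\BC$ is a subgroup of $\PC_n$: it contains the identity (which lies in both $\SC$ and $\BC$) and inherits closure from $\SC$ and $\BC$. Hence $M=|\SC\cap\BC|\geq 1$, which already supplies the inequality in \eqref{locinf_eqn54}.

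For (ii) I would fix one element $gs_0\in(g\SC)\cap\BC$ and note that for arbitrary $s\in\SC$ one has $gs=(gs_0)(s_0^{-1}s)$; since $gs_0\in\BC$ and $\BC$ is closed under products and inverses, $gs\in\BC$ if and only if $s_0^{-1}s\in\SC\cap\BC$. Because $s\mapsto s_0^{-1}s$ is a bijection of $\SC$, the number of $s$ with $gs\in\BC$ is exactly $|\SC\cap\BC|=M$, and these produce $M$ distinct operators since $g$ is invertible. This argument simultaneously yields the description $(g\SC)\cap\BC=(gs_0)(\SC\cap\BC)$, which I will reuse in (iv). Part (i) is then immediate: if $gs\in\BC$, then $(gs)^{-1}=s^{-1}g^{-1}\in\BC$, and commutativity gives $s^{-1}g^{-1}=g^{-1}s^{-1}\in(g^{-1}\SC)\cap\BC$, so the latter is nonempty; the converse follows by the symmetric argument.

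For (iii), distinct left cosets $g\SC\neq h\SC$ are disjoint, so the two intersections are automatically disjoint as sets of operators; the substance is showing the operator sums are distinct, and this I expect to be the main obstacle, since two distinct cosets of $\SC$ may differ merely by a phase (recall $\SC$ contains no nontrivial phase multiple of $I$), so distinctness of the sums is not purely combinatorial. I would resolve it by invoking Lemma~\ref{locinf_thm1} with $\RC=\SC$. Either $g\SC=\mathrm{e}^{\ii\phi}h\SC$ with $\mathrm{e}^{\ii\phi}\neq 1$, in which case phase-closure of $\BC$ gives $\Sigma[(g\SC)\cap\BC]=\mathrm{e}^{\ii\phi}\Sigma[(h\SC)\cap\BC]$, a nontrivial-phase multiple of a nonzero operator and hence distinct; or $g\SC\cup h\SC$ is a linearly independent collection, whence the $2M$ summed Pauli products are independent and the two sums cannot coincide. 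The remaining case $g\SC=h\SC$ gives identical sets and equal sums, matching the ``identical'' alternative.

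For (iv), I would start from $(g\SC)\cap\BC=(gs_1)(\SC\cap\BC)$ and $(h\SC)\cap\BC=(hs_2)(\SC\cap\BC)$ as obtained in (ii). A generic product is $(gs_1 t)(hs_2 u)$ with $t,u\in\SC\cap\BC$; pushing the $\SC$-factors through $h$ and $s_2$ by commutativity turns this into $gh\,(s_1 s_2)\,(tu)$, which lies in $(gh\SC)\cap\BC$ because $gh(s_1 s_2)=(gs_1)(hs_2)\in\BC$. Setting the base point $w=gh\,s_1 s_2$, the products become $w\,(tu)$; as $t,u$ range over the group $\SC\cap\BC$, each value $v=tu$ is attained exactly $M$ times (for fixed $v$, choose $t$ freely and $u=t^{-1}v$), so the product multiset is $M$ copies of $w(\SC\cap\BC)=(gh\SC)\cap\BC$, which is precisely $M[(gh\SC)\cap\BC]$.
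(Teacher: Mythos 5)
Your proof is correct. The only part that needs anything beyond closure and commutativity---part (iii), where two distinct cosets could differ by a mere phase---you handle exactly as the paper does, invoking Lemma~\ref{locinf_thm1} with $\RC=\SC$ and splitting into the phase-multiple case (using phase-closure of $\BC$) and the linearly-independent case. Where you genuinely depart from the paper is in (i), (ii) and (iv): you first record that $\SC\cap\BC$ is a subgroup of $\PC_n$ and that every nonempty intersection $(g\SC)\cap\BC$ is a single coset $(gs_0)(\SC\cap\BC)$ of it, then read (ii) off the coset size and get (iv) by a direct count ($M^2$ ordered pairs of factors, each product value $w(tu)$ attained exactly $M$ times). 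The paper never names this subgroup/coset structure; instead it proves (ii) and (iv) simultaneously by exhibiting, for a fixed element $g\bar s\in(g\SC)\cap\BC$, a one-to-one factorization correspondence $ghs_1=(g\bar s)(hs_2)$ between nonempty intersections, with (ii) then following by specializing $h=g^{-1}$ via (i). Both arguments rest on the same raw ingredients---closure of $\BC$ under products, inverses and phases, and the fact that $\SC$ commutes with all of $\WC$---but your coset formulation makes the multiplicity-$M$ claim in \eqref{locinf_eqn55} completely explicit, where the paper leaves the "precisely $M$ ways" count somewhat implicit; the paper's correspondence argument, in exchange, dispatches (ii) and (iv) in one stroke without the preliminary subgroup observation.
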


By \eqref{locinf_eqn55} we mean the following.  The product (on the left) of any
operator from the collection $(g\SC)\cap\BC$ with another
operator from the collection $(h\SC)\cap\BC$ belongs to the collection
$(gh\SC)\cap\BC$ (on the right), and every operator in $(gh\SC)\cap\BC$ can be written 
as such a product in precisely $M$ different ways.

We are now in a position to state and prove our central result:

\subsection{Isomorphism theorem}
\label{locinf_sct12}

\begin{theorem}[Isomorphism]
\label{locinf_thm3}
  Let $C$ be an additive graph code with information group $\GC$, $P$ the projector onto the coding space $\HC_C$ and $\pt$ be some subset of the carrier qudits. Then the collection $\GC^\pt$ of members of $\GC$ with nonzero partial trace down to $\pt$, \eqref{locinf_eqn52}, is a subgroup of the information group $\GC$, and the mapping $\hat g\rightarrow g_{\pt}$ in \eqref{locinf_eqn51} carries $\GC^{\pt}$ to an
isomorphic group $\GC_{\pt}$ of nonzero operators on $\HC_\pt$.  Furthermore,

i) If $\hat g$ and $\hat h$ are any two elements of $\GC^\pt$, then
\begin{equation}
\label{locinf_eqn56} 
  \Tr_{\ptc}[\hat g \hat h] = \Tr_{\ptc}[\hat g]\;\Tr_{\ptc}[\hat h]/N \quad \text{or}\quad (gh)_\pt = g_\pt h_\pt 
\end{equation}

ii) If $\hat g\neq\hat h$ are distinct elements of $\GC^\pt$, 
$g_{\pt}\neq h_{\pt}$ are distinct elements of $\GC_\pt$. 

iii) The identity element
\begin{equation}
\label{locinf_eqn57}
P_{\pt}:=\Tr_{\ptc}[P]/N,
\end{equation}
of $\GC_{\pt}$ is a projector onto a subspace of $\HC_\pt$
(possibly the whole space) with rank equal to $\Tr[P]/N = K/N$. 

The normalization constant $N$ is given as
\begin{equation}
\label{locinf_eqn58}
 N := |\SC\cap\BC| \cdot D^{|\ptc|}/|\SC|
\end{equation}
where $\BC$ are the operators based in $\pt$.
\end{theorem}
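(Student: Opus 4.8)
The plan is to translate every statement about partial traces of information operators into combinatorial statements about the sets $(g\SC)\cap\BC$, and then read off the conclusions from Lemma~\ref{locinf_thm2}. The starting observation is that for a Pauli product $s$ the partial trace $\Tr_{\ptc}[s]$ vanishes unless $s$ is based in $\pt$, i.e. $s\in\BC$, in which case $s = s_\pt\otimes I_{\ptc}$ and $\Tr_{\ptc}[s] = D^{|\ptc|}s_\pt$, cf. \eqref{locinf_eqn53}. Applying this term by term to $\hat g = |\SC|^{-1}\Sigma(g\SC)$ from \eqref{locinf_eqn48}, only the operators in $(g\SC)\cap\BC$ survive the trace, so with $M := |\SC\cap\BC|$ and $N$ as in \eqref{locinf_eqn58} one obtains
\begin{equation}
g_\pt = N^{-1}\Tr_{\ptc}[\hat g] = \tfrac{1}{M}\bigl(\Sigma[(g\SC)\cap\BC]\bigr)_\pt ,
\end{equation}
the factor $N^{-1}|\SC|^{-1}D^{|\ptc|}=1/M$ being exactly why $N$ is defined the way it is. First I would record the key characterization: $\hat g\in\GC^\pt$ if and only if $(g\SC)\cap\BC$ is nonempty. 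Indeed, the operators in a coset $g\SC$ are distinct Pauli products that cannot differ merely by a phase (since $\SC$ carries no phases), hence by Lemma~\ref{locinf_thm1}(i) they are linearly independent; restricting to $\HC_\pt$, the $\pt$-parts of the elements of $(g\SC)\cap\BC$ are again distinct Pauli products and so linearly independent, whence their sum is nonzero precisely when the set is nonempty.

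With this dictionary the group statements follow from Lemma~\ref{locinf_thm2}. Using the isomorphism $\GCbar=\WC/\SC\to\GC$, $g\SC\mapsto\hat g$, of Sec.~\ref{locinf_sct9}, I would verify the subgroup axioms at the level of cosets: $I\in\SC\cap\BC$ gives $P=\hat e\in\GC^\pt$; Lemma~\ref{locinf_thm2}(iv) shows that if $(g\SC)\cap\BC$ and $(h\SC)\cap\BC$ are nonempty then so is $(gh\SC)\cap\BC$, so $\hat g\hat h=\widehat{gh}\in\GC^\pt$; and Lemma~\ref{locinf_thm2}(i) gives $(g^{-1}\SC)\cap\BC\neq\emptyset$, so $(\hat g)^{-1}\in\GC^\pt$. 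Thus $\GC^\pt$ is a subgroup of $\GC$. Part (i) is then the homomorphism property: since operators based in $\pt$ multiply with $(AB)_\pt=A_\pt B_\pt$, and since $\Sigma[(g\SC)\cap\BC]\cdot\Sigma[(h\SC)\cap\BC]=M\,\Sigma[(gh\SC)\cap\BC]$ by Lemma~\ref{locinf_thm2}(iv), one computes $g_\pt h_\pt=\tfrac{1}{M^2}\cdot M\bigl(\Sigma[(gh\SC)\cap\BC]\bigr)_\pt=(gh)_\pt$, which is just \eqref{locinf_eqn56}.

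For injectivity (part (ii)) I would invoke the dichotomy of Lemma~\ref{locinf_thm2}(iii): two nonempty sets $(g\SC)\cap\BC$ and $(h\SC)\cap\BC$ are either identical, forcing $g\SC=h\SC$ and hence $\hat g=\hat h$, or disjoint with distinct sums $\Sigma[(g\SC)\cap\BC]\neq\Sigma[(h\SC)\cap\BC]$. As both sums are based in $\pt$, distinctness of the full operators coincides with distinctness of their $\pt$-parts, so $\hat g\neq\hat h$ implies $g_\pt\neq h_\pt$. The map is therefore a bijective homomorphism onto the set $\GC_\pt$ of nonzero operators $g_\pt$, i.e. an isomorphism. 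Finally, for part (iii), $P_\pt=\Tr_{\ptc}[P]/N$ is the image of the group identity $P=\hat e$, hence the identity of $\GC_\pt$ and idempotent by part (i); it is Hermitian because $P=\sum_j\dya{\vect{c}_j}$ is, so it is an orthogonal projector, and its rank equals its trace $\Tr[P_\pt]=\Tr[P]/N=K/N$ (using $\Tr[\Tr_{\ptc}[\cdot]]=\Tr[\cdot]$), which is automatically a nonnegative integer.

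The genuinely hard part is Lemma~\ref{locinf_thm2} itself, in particular the counting in (ii) and the multiplication-with-multiplicity identity in (iv); but that is established separately in App.~\ref{locinf_apdx1}, and granting it the theorem reduces to bookkeeping. The one subtlety to keep in view throughout is the linear-independence step that guarantees $g_\pt\neq 0$ on nonempty sets, since it is what makes $\GC^\pt$ exactly the preimage of the nonzero operators and hence makes the displayed formula for $g_\pt$ the natural bridge between $\GC^\pt$ and $\GC_\pt$.
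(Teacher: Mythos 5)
Your proof is correct and takes essentially the same route as the paper's: both reduce the theorem to the combinatorics of the sets $(g\SC)\cap\BC$ via Lemma~\ref{locinf_thm1} (linear independence, hence nonvanishing of the traced-down sums) and Lemma~\ref{locinf_thm2} (parts (i) and (iv) for the group and homomorphism structure, part (iii) for injectivity), with the same identification of $\GC^\pt$ with the nonempty sets and the same role for the normalization $N$. You merely make explicit a few steps the paper leaves implicit, notably the $1/M$ bookkeeping behind \eqref{locinf_eqn56} and the projector argument for part (iii).
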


\begin{proof}
The proof is a consequence of Lemma~\ref{locinf_thm2} and the following observations.
The trace $\Tr_{\ptc}[\hat g]$ in \eqref{locinf_eqn51} is, apart from a constant,
the trace of $\Sigma[(g\SC)\cap \BC]$, and is zero if $(g\SC)\cap \BC$ is
empty.  If the collection $(g\SC)\cap \BC$ is not empty, then by
Lemma~\ref{locinf_thm1} it consists of a collection of linearly independent
operators, and the trace of its sum cannot vanish. Thus there is a one-to-one,
see part (iii) of Lemma~\ref{locinf_thm2}, correspondence between nonempty sets of
the form $(g\SC)\cap\BC$ and the elements $\hat g$ in $\GC^\pt$. Then (i) and
(iv) of Lemma~\ref{locinf_thm2} imply both that $\GC^\pt$ is a group, and also that
the map from $\GC^\pt$ to $\GC_\pt$ is a homomorphism, whereas (ii) shows
that this is actually an isomorphism: $g_\pt=h_\pt$ is only possible when
$g\SC=h\SC$. That $N$ in \eqref{locinf_eqn58} is the correct normalization follows
from \eqref{locinf_eqn54}, \eqref{locinf_eqn55}, and \eqref{locinf_eqn48}.
\end{proof}

A significant consequence of Theorem~\ref{locinf_thm3} is the following result on the
presence and absence of information in the subset $B$, using
the terminology of Sec.~\ref{locinf_sct2}:

\begin{theorem} 
\label{locinf_thm4} 
Let $C$ be an additive graph code on $n$ carrier qudits, with information
group $\GC$. Let $\pt$ be a subset of the carrier qudits, $\GC^{\pt}$ the
corresponding subset information group, and $\JC(\hat g)$ the type of
information corresponding to $\hat g$ (as defined in Sec.~\ref{locinf_sct2}).
 Then
\begin{itemize}
 \item[i)] The $\JC(\hat g)$ type of information is perfectly
   present in $\pt$ if and only if $\hat g \in \GC^{\pt}$.
\item[ii)] The $\JC(\hat g)$ type of information is
absent from $\pt$ if and only if $\hat g^k \notin \GC^{\pt}$ for all integers
$k$ between $1$ and $D-1$.
 \item[iii)] All information is perfectly present in $\pt$ if and only if
   $\GC^{\pt}=\GC$.
 \item[iv)] All information is absent from $\pt$ if and only if $\GC^{\pt}$
   consists entirely of scalar multiples of the identity element $P$ of $\GC$.

\end{itemize}
\end{theorem}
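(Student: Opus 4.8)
The plan is to read the partial-trace map $\hat g \mapsto \Tr_{\ptc}[\hat g] = N g_\pt$ of \eqref{locinf_eqn51} as the completely positive map $\EC_\pt$ describing the channel from the encoded input to the accessible subset $\pt$: because the encoding isometry is noiseless (Sec.~\ref{locinf_sct2}), an information operator $\hat g\in\GC$ on $\HC_C$ is carried to $\Tr_{\ptc}[\hat g]$ on $\HC_\pt$, so that ``perfectly present'' and ``absent'' in $\pt$ become, respectively, the ideal condition \eqref{locinf_eqn3} and the blocked condition (constancy of $\EC_\pt(J_j)$ up to a scalar) of Sec.~\ref{locinf_sct2} applied to $\EC_\pt$. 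The positive factor $N$ and any global phases are irrelevant for both, since \eqref{locinf_eqn3} is homogeneous and blockage is defined only up to a multiplicative constant. I would first prove (i) and (ii), then obtain (iii) and (iv) as immediate corollaries.

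For (i) the central observation is that the spectral projectors $\{J_j\}$ of the normal operator $\hat g$ (unitary on $\HC_C$, with $\hat g^d=P$ for $d$ the order of $\hat g$ in $\GC$) are polynomials in the powers of $\hat g$: by discrete Fourier inversion over the cyclic group $\langle\hat g\rangle$, $J_j=\tfrac1d\sum_{k=0}^{d-1}\mu_j^{-k}\hat g^k$, with $\mu_j$ the eigenvalues. If $\hat g\in\GC^\pt$, then by the subgroup property in Theorem~\ref{locinf_thm3} the whole cyclic group $\langle\hat g\rangle$ lies in $\GC^\pt$, and the product law \eqref{locinf_eqn56} shows that $\hat g\mapsto g_\pt$ restricted to $\langle\hat g\rangle$ is a group isomorphism with identity $P_\pt$. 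Extending it linearly over this commutative group yields an algebra homomorphism, so the images satisfy $(J_j)_\pt(J_k)_\pt=(J_jJ_k)_\pt=\delta_{jk}(J_j)_\pt$; hence $\Tr_{\ptc}[J_j]\,\Tr_{\ptc}[J_k]=0$ for $j\neq k$, i.e.\ \eqref{locinf_eqn3} holds and $\JC(\hat g)$ is perfectly present. For the converse, if $\JC(\hat g)$ is perfectly present then the operators $\Tr_{\ptc}[J_j]$ are mutually orthogonal and sum to $\Tr_{\ptc}[P]=N P_\pt\neq0$, so at least one is nonzero; orthogonality then forces $\|\Tr_{\ptc}[\hat g]\|^2=\sum_j|\mu_j|^2\,\|\Tr_{\ptc}[J_j]\|^2>0$ (eigenvalues of a unitary are nonzero), whence $\Tr_{\ptc}[\hat g]\neq0$ and $\hat g\in\GC^\pt$ by \eqref{locinf_eqn52}.

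Part (ii) uses the same expansion in the opposite direction. Writing $\Tr_{\ptc}[J_j]=\tfrac1d\sum_{k=0}^{d-1}\mu_j^{-k}\Tr_{\ptc}[\hat g^k]$, the entire $j$-dependence resides in the terms with $1\le k\le d-1$; these all vanish exactly when $\Tr_{\ptc}[\hat g^k]=0$ for every nonidentity power, i.e.\ when no nontrivial power of $\hat g$ lies in $\GC^\pt$. In that case every $\Tr_{\ptc}[J_j]$ equals $\tfrac1d\Tr_{\ptc}[P]$, independent of $j$, which is the blocked condition, so $\JC(\hat g)$ is absent; the converse follows by Fourier inversion, since constancy of $\Tr_{\ptc}[J_j]$ gives $\Tr_{\ptc}[\hat g^k]=\big(\tfrac1d\Tr_{\ptc}[P]\big)\sum_j\mu_j^k=0$ for $1\le k\le d-1$. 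The point needing care is the index range: a power $\hat g^k$ proportional to $P$ always has nonzero partial trace and lies in $\GC^\pt$, so the stated condition ``$\hat g^k\notin\GC^\pt$ for $1\le k\le D-1$'' must be read as ranging over the nonidentity elements of $\langle\hat g\rangle$ (powers not proportional to $P$); when $\hat g$ has order $D$ this is literally $k=1,\dots,D-1$.

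Finally, (iii) is immediate from (i): $\GC^\pt=\GC$ says every $\hat g\in\GC$ is in $\GC^\pt$, i.e.\ every $\JC(\hat g)$ is perfectly present, and since the elements of $\GC$ include incompatible species spanning $\LC(\HC_C)$ this is equivalent to $\EC_\pt$ being an ideal quantum channel (all information perfectly present), as recalled in Sec.~\ref{locinf_sct2}. Likewise (iv) follows from (ii): if $\GC^\pt$ contains only scalar multiples of $P$, then no nontrivial power of any nontrivial $\hat g$ can lie in $\GC^\pt$, so every $\JC(\hat g)$ is absent; conversely, any $\hat g\in\GC^\pt$ that is not a scalar multiple of $P$ would, via $k=1$ in (ii), prevent $\JC(\hat g)$ from being absent. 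The main obstacle I anticipate is the bridge in (i) between the purely group-theoretic product law \eqref{locinf_eqn56} and the operator-algebraic ideal-channel condition \eqref{locinf_eqn3}: one must make watertight that the homomorphism on $\langle\hat g\rangle$ extends to the spectral projectors and that product preservation on group elements alone already forces orthogonality of the images of the $J_j$.
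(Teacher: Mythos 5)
Your proposal follows essentially the same route as the paper's proof in App.~\ref{locinf_apdx2}: in both, the key step is to write the spectral projectors of $\hat g$ as linear combinations of the powers $\hat g^0=P,\hat g,\hat g^2,\dots$ and then push the multiplicativity \eqref{locinf_eqn56} of the traced-down map on $\GC^\pt$ through this expansion to obtain orthogonality of the $\Tr_{\ptc}[J_j]$, i.e.\ condition \eqref{locinf_eqn3}. The paper performs the inversion with a Vandermonde matrix (Lemma~\ref{locinf_thm5}); you use Fourier inversion over the cyclic group $\langle\hat g\rangle$ — the same lemma in more explicit form. Your converse of (i) (Pythagoras on nonzero, mutually orthogonal partial traces) matches the paper's linear-independence argument. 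For (iii) you invoke the Sec.~\ref{locinf_sct2} result that perfect presence of sufficiently incompatible species implies an ideal channel, whereas the paper argues directly that arbitrary orthogonal projectors $Q_1,Q_2$ on $\HC_C$ expand in $\GC=\GC^\pt$ and inherit orthogonal partial traces; both are legitimate.

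Your remark about part (ii) is a genuine catch. If some power $\hat g^k$ with $1\le k\le D-1$ is a scalar multiple of $P$ (which happens whenever the order of $\hat g$ up to phase is less than $D$, e.g.\ for an encoded subsystem of dimension $d<D$ when $D$ is composite), that power always lies in $\GC^\pt$, so the stated condition can never hold — yet $\JC(\hat g)$ can perfectly well be absent, for instance when $\GC^\pt$ consists only of multiples of $P$, where (iv) declares all information absent. The paper's own converse argument for (ii) ("absent $\Rightarrow$ not perfectly present $\Rightarrow$ $\hat g^k\notin\GC^\pt$ by (i)") silently fails at exactly these trivial types, since a type with a single projector is simultaneously absent and perfectly present. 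Your rereading of the range (powers not proportional to $P$) is the correct repair and restores consistency between (ii) and (iv).

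Two steps of yours do need shoring up. First, in your converse of (ii) you assert $\sum_j\mu_j^k=0$, summing over the distinct eigenvalues of $\hat g$; for a general unitary this is false, and here it requires the Pauli structure: $\hat g^k=g^kP$ with $g^k$ a Pauli product, so $\Tr[\hat g^k]=0$ unless $\hat g^k\propto P$, and this forces the eigenvalues of $\hat g$ to be equidistributed over a full set of roots of unity, which is what makes the power sums vanish for the relevant $k$. You should prove this, or else argue as the paper does: $\JC(\hat g^k)$ is a coarse-graining of $\JC(\hat g)$, hence absent whenever $\JC(\hat g)$ is, hence (for powers not proportional to $P$) not perfectly present, hence outside $\GC^\pt$ by (i). Second, deriving the forward direction of (iv) from (ii) shows only that every type of the form $\JC(\hat g)$ is absent; "all information" means every decomposition of $P$, and for that you need the spanning argument the paper uses: any projector $Q$ on a subspace of $\HC_C$ is a linear combination of elements of $\GC$, and tracing down to $\pt$ kills every term outside $\GC^\pt$, leaving $\Tr_{\ptc}[Q]\propto P_\pt$, so that arbitrary species are blocked.
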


The proof of the theorem can be found in App.~\ref{locinf_apdx2}. Statement (iii) is useful because
the check of whether there is a perfect quantum channel from the input
to $\pt$ involves a finite group $\GC$; one does not have to consider 
all normal operators of the form \eqref{locinf_eqn2}. 
Statement (ii) deserves further comment.  If $D$ is prime then the order of
any element of the Pauli group (apart from the identity) is $D$, see the
remark following \eqref{locinf_eqn9}. The same is true of elements of the quotient
group $\GCbar$, \eqref{locinf_eqn41}, and thus of members $\hat g$ of the isomorphic
group $\GC$.  Consequently, for any $k$ in the interval $1 < k < D$, there is
some $m$ such that $1=km \mod D$, which means $\hat g = (\hat g^k)^m$.  And
since $\GC^\pt$ is a group, $\hat g^k \in\GC^\pt$ implies $\hat
g\in\GC^\pt$. Thus when $D$ is prime, $\hat g\notin \GC^\pt$ is equivalent to
$\hat g^k \notin \GC^{\pt}$ for all integers $k$ between $1$ and $D-1$, and
the latter can be replaced by the former in statement (ii).
However, when $D$ is composite it is quite possible
to have $\Tr_{\ptc}[\hat g] = 0$ but $\Tr_{\ptc}[\hat g^{k'}] \neq 0$ for some
$k'$ larger than 1 and less than $D$; see the example below. In this situation
we can still say that $\JC(\hat g^{k'})$ is perfectly present, but it is not
true that $\JC(\hat g)$ is absent. One can regard the type $\JC(\hat g)$ as a
\emph{refinement}\index{refinement} of $\JC(\hat g^{k'})$, and as explained in Sec.~\ref{locinf_sct2},
although the coarse-grained $\JC(\hat g^{k'})$ information is perfectly
present in $\pt$, the additional information associated with the refinement is
not.

As an example, suppose $\hat g$ has a spectral decomposition
\begin{equation}
\label{locinf_eqn59}
\hat g=J_0+\ii J_1-J_2-\ii J_3,
\end{equation}
with the $J_j$ orthogonal projectors such that
\begin{align}
\label{locinf_eqn60}
\Tr_{\ptc}[J_0]=\Tr_{\ptc}[J_2]\neq\Tr_{\ptc}[J_1]=\Tr_{\ptc}[J_3].
\end{align}
Then $\Tr_{\ptc}[\hat g]=0$, whereas 
\begin{equation}
\label{locinf_eqn61}
\hat g^2=(J_0+J_2)-(J_1+J_3),
\end{equation}
and thus $\Tr_{\ptc}[\hat g^2]\neq 0$.  Thus $\hat g^2$ is an element of
$\GC^{\pt}$, whereas $\hat g$ is not, and so the coarse grained $\JC(\hat g^2)$
information corresponding to the decomposition on the right side of
\eqref{locinf_eqn61} is present in $\pt$, while the further refinement corresponding
to the right side of \eqref{locinf_eqn59} is not.  Precisely this structure is
produced by a graph code on two carriers of dimension $D=4$, with graph state
$\ket{G}=\ket{++}$, coding group $\CC=\langle Z_1Z_2\rangle$, information
group $\GC=\langle X_1P , Z_1Z_2P \rangle$, coding space projector
\begin{equation}
\label{locinf_eqn62}
 P = (I + X_1 X_2^{3} + X_1^2 X_2^2 +X_1^3 X_2)/4,
\end{equation}
and 
\begin{equation}
\label{locinf_eqn63}
\hat g=X_1P=\dyad{\bar 0\bar0}{\bar0\bar0}+
\ii\dyad{\bar 1\bar 2}{\bar 1 \bar 2}-\dyad{\bar 2\bar0}{\bar 2\bar0}
-\ii\dyad{\bar 3\bar 2}{\bar 3 \bar 2},
\end{equation}
where $\ket{\bar j}=Z^j\ket{+}$ are the eigenvectors of the $X$ operator.

\subsection{Information flow}
\label{locinf_sct13}

At this point let us summarize how we think about information ``flowing'' from
the input via the encoding operation into a subset $\pt$ of the code
carriers. At the input the information is represented by the quotient group
$\GCbar_0=\WC_0/\SC_0$, see \eqref{locinf_eqn41}, or more concretely by the
isomorphic group $\GC_0$ of operators generated by the cosets, as in
\eqref{locinf_eqn48}.  The encoding operation $UW$, see \eqref{locinf_eqn36} and
\eqref{locinf_eqn38}, maps $\GCbar_0$ to the analogous $\GCbar=\WC/\SC$ associated
with the code $C$, and likewise $\GC_0$ to the group of operators $\GC$ acting
on the coding space $\HC_C$.  Tracing away the complement $\ptc$ of $\pt$ maps
some of the $\hat g$ operators of $\GC$ to zero, and the remainder form the
subset information group $\GC^{\pt}$. Applying the inverse $UW$ map to
$\GC^{\pt}$ gives $\GC^{\pt}_0$, a subgroup of $\GC_0$ that tells us what
types of information at the input (i.e. before the encoding) are available in
the subset of carriers $\pt$.  This is illustrated by various examples in the
next section.

\section{Examples}
\label{locinf_sct14}

\subsection{General principles}
\label{locinf_sct15}

In this section we apply the principles developed earlier in the chapter to some
simple $[[n,k,\delta]]_D$ additive graph codes, where $n$ is the number of
qudit carriers, each of dimension $D$, the dimension of the coding space
$\HC_C$ is $K=D^k$, and $\delta$ is the distance of the code; see Chapter 10 of \cite{NielsenChuang:QuantumComputation} for a
definition of $\delta$.  We shall be interested in the subset information
group $\GC^\pt$, \eqref{locinf_eqn52}, that represents the information about the
input that is present in the subset $\pt$ of carriers.  Rather than discussing
$\GC^\pt$ or its traced down counterpart $\GC_\pt$, it will often be simpler
to use $\GC_0^\pt$, the subset information group referred back to the channel
input, see Sec.~\ref{locinf_sct13} above, and in this case we add an initial
subscript $0$ to operators: $X_{01}$ means the $X$ operator on the first qudit
of the input.  Since all three groups are isomorphic to
one another, the choice of which to use in any discussion is a matter of
convenience. (In the examples below for the sake of brevity we sometimes 
omit a term $\mathrm{e}^{\ii \phi}I$ from the list of generators of $\GC^{\pt}_0$.)

Before going further it is helpful to list some general principles of quantum
information that apply to all codes, and which can simplify the analysis of
particular examples, or give an intuitive explanation of why they work.  In
the following statements ``information'' always means information about the
input which has been encoded in the coding space through some isometry.

1. If all information is perfectly present in $\pt$,
then all information is absent from $\ptc$.

2. If all information is absent from $\ptc$ then all information is perfectly
present in $\pt$.

3. If the information about some orthonormal basis (i.e., the type
corresponding to this decomposition of the identity) is perfectly present in
$\pt$, then the information about a mutually-unbiased basis is absent from
$\ptc$.

4. If two types of information that are ``sufficiently incompatible'' are both
perfectly present in $\pt$, then all information is perfectly present in
$\pt$.  In particular this is so when the two types are associated with
mutually unbiased bases.

5. For a code of distance $\delta$ all information is absent from any $\pt$ if
$|\pt|<\delta$, and all information is perfectly present in $\pt$ if $|\pt| >
n-\delta$.

Items 1, 2, 3 and 4 correspond to the No Splitting, Somewhere, Exclusion
and Presence theorems of \cite{PhysRevA.76.062320}, which also gives weaker
conditions for ``sufficiently incompatible.''  The essential idea behind 5 is
found in Sec.~III~A of \cite{PhysRevA.56.33}
\footnote{It is shown in \cite{PhysRevA.56.33} that if noise only affects a certain 
subset $\ptc$ of the carriers with $|\ptc|<\delta$, then the errors can be corrected
using the complementary set $\pt$. In our notation this is equivalent to saying that all the information is in $\pt$.
}.

\subsection{One encoded qudit \label{locinf_sct16}}

\begin{figure}
\begin{center}
\includegraphics{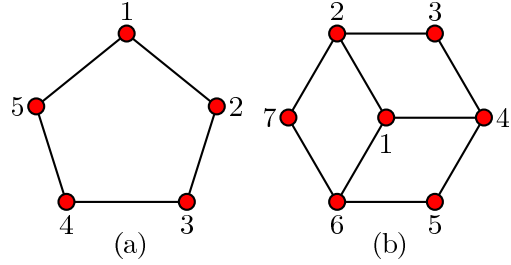}
\caption{(a) The graph state for the $[[5,1,3]]_D$ code; (b) The graph state
  for Steane $[[7,1,3]]_2$ code}
\label{five_steane}
\end{center}
\end{figure}

It was shown in \cite{IEEE.45.1827} that a $[[5,1,3]]_D$
code exists for all $D$. Here we consider the graph version
\cite{PhysRevA.65.012308} where the coding group is
\begin{equation}
\label{locinf_eqn64}
\CC=\langle Z_1 Z_2 Z_3 Z_4 Z_5\rangle
\end{equation}
and the graph state is shown in Fig.~\ref{five_steane}(a). Our formalism shows
that, whatever the value of $D$, there are only two possibilities.  When $|B|$
is 1 or 2 $\GC^{\pt}$ is the just the group identity, the projector $P$ on the
coding space, so all information is absent whereas if $|B|$ is 3, 4 or
(obviously) 5, $\GC^{\pt} = \GC$, so the subsystem $B$ is the output of a
perfect quantum channel.  To be sure, these results also follow from principle
5 in the above list, given that $\delta=3$ for this code.

The Steane $[[7,1,3]]_2$ code, a graphical version of which
\cite{quantph.0709.1780} has a coding group
\begin{equation}
\label{locinf_eqn65}
\CC=\langle Z_3 Z_5 Z_7\rangle
\end{equation}
for the graph state shown in Fig.~\ref{five_steane} (b), is more interesting
in that while principles 5 ensures that all $|\pt|\leq 2 = \delta -1$ subsets
of carriers contain zero information and all $|\pt|\geq 5 = n-\delta+1$
subsets contain all the information, one qubit, it leaves open the question of
what happens when $|\pt|=3$ or 4. We find that all information is perfectly
present when $\pt$ is $\{1,2,5\}$, $\{1,3,6\}$, $\{1,4,7\}$, $\{2,3,4\}$,
$\{2,6,7\}$, $\{4,5,6\}$, or $\{3,5,7\}$---representing three different
symmetries in terms of the graph in the figure---and absent for all other
cases of $|B|=3$.  Therefore all information is absent from the $|\pt|=4$ subsets which
are complements of the seven just listed, and perfectly present in all others
of size $|\pt|=4$.  So far as we know, generalizations of this code to $D>2$
have not been studied.

\begin{figure}
\begin{center}
\includegraphics{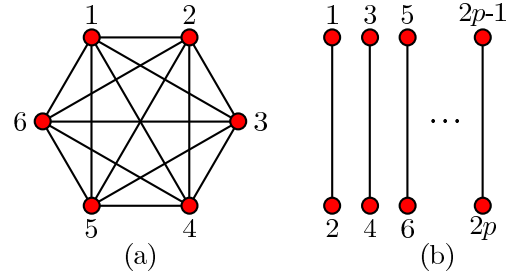}
\caption{(a) Complete graph (on $6$ qudits); (b) Bar graph with $n=2p$ carriers and $p$ bars}
\label{bar_complete}
\end{center}
\end{figure}

A simple code in which a specific type of information is singled out is
$[[n, 1, 1]]_D$ generated by
\begin{equation}
\label{locinf_eqn66}
\CC=\langle Z_1Z_2 \dotsm Z_n\rangle
\end{equation}
on the \emph{complete graph}\index{complete graph}, illustrated in Fig.~\ref{bar_complete}(a) for
$n=6$.  Whereas all information is (of course) present when $|\pt|=n$, it
turns out that for any subset $B$ with $1\leq |\pt| < n$ one has 
$\GC^{\pt}_{0}= \langle X_{01} Z_{01} \rangle$, i.e., the abelian group consisting
of all powers of the operator $X_1 Z_1$ on the input qudit. 
Thus the information is ``classical,'' corresponding to that
decomposition of the input identity that diagonalizes $X_1 Z_1$.
The intuitive explanation for this situation is that this $X_1 Z_1$ type
of information is separately copied as an ideal classical channel, see \eqref{locinf_eqn6},
 to each of the carrier qudits, and as a
consequence other mutually unbiased types of information are ruled out by
principle 3. This, of course, is typical of ``classical'' information, which
can always be copied. 

A more interesting example in which distinct types of information come into
play is the bar graph, Fig.~\ref{bar_complete} (b), in which $n$ qudits are
divided up into $p=n/2$ pairs or ``bars,'' and the code is generated by 
\begin{equation}
\label{locinf_eqn67}
\CC=\langle Z_1Z_2\dotsm Z_n\rangle.
\end{equation}
Let us say that a subset of carriers $\pt$ has property I if the corresponding
subgraph contains at least one of bars, and property II if it contains at
least one qudit from each of the bars.  Then:

(i) If $\pt$ has property I but not II, $\GC^{\pt}_{0} = \langle
X_{01} \rangle$, an abelian group.

(ii) If $\pt$ has property II but not I, $\GC^{\pt}_{0} = \langle
X_{01}^p Z_{01} \rangle$, another abelian group

(iii) If $\pt$ has both property I and property II, all information (1 qudit)
is perfectly present. 

(iv) When $\pt$ has neither property I nor II, all information is absent.

While both (i) and (ii) are ``classical'' in an appropriate sense and indeed represent an ideal classical channel, the two
abelian groups do not commute with each other, so the two types of information
are incompatible, and it is helpful to distinguish them. Case (iii)
illustrates principle 4, since $X_{01}$ and $X_{01}^p Z_{01}$ (whatever the value
of $p$) correspond to mutually unbiased bases. In case (iv) the complement
$\ptc$ of $\pt$ possesses both properties I and II, and therefore contains all
the information, so its absence from $\pt$ is an illustration of principle 1.

\subsection{Two encoded qudits\label{locinf_sct17}}

\begin{figure}
\begin{center}
\includegraphics{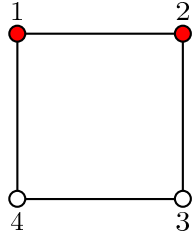}
\caption{The graph state  of the $[[4,2,2]]_D$ code}
\label{K4code}
\end{center}
\end{figure}

Consider a $[[4,2,2]]_D$ code based on the graph state shown in Fig.~\ref{K4code} whose
coding group 
\begin{equation}
\label{locinf_eqn68}
\CC=\langle Z_1 Z_2, Z_3 Z_4 \rangle,
\end{equation}
employs two generators of order $D$, and thus encodes two qudits.  Note that
while the graph state has the symmetry of a square the coding group has a
lower symmetry corresponding to the different types of nodes employed in the
figure.

Let us begin with the qubit case $D=2$.  Our analysis shows that
when $|\pt|=1$ all information is absent, and thus for $|\pt|\geq 3$ all
information is present, consistent with the fact that this code has
$\delta=2$ \cite{PhysRevA.78.042303},  see principle 5. Thus the
interesting cases are those in which $|\pt|=|\ptc|=2$, for which one finds: 
\begin{align}
\label{locinf_eqn69}
 \pt=\{1,3\}, \ptc=\{2,4\}:&\quad  
 \GC^{\pt}_{0}=\GC^{\ptc}_{0} = \langle X_{01} Z_{01} Z_{02}, X_{01} X_{02}  \rangle;\\ 
\label{locinf_eqn70}
\pt=\{1,4\}, \ptc=\{2,3\}:&\quad  
 \GC^{\pt}_{0}=\GC^{\ptc}_{0} = \langle X_{01} Z_{01}, X_{02} Z_{02} \rangle;\\ 
\label{locinf_eqn71}
 \pt=\{1,2\}, \ptc=\{3,4\}:&\quad  
 \GC^{\pt}_{0}=\GC^{\ptc}_{0} = \langle X_{01} Z_{01}, X_{02} Z_{02} \rangle.
\end{align}
In each case the generators commute and thus the subgroup $\GC^{\pt}_{0}$ is
abelian. Hence the information is ``classical'', and the same type is present
both in $\pt$ and $\ptc$, not unlike the situation for the complete graph
considered earlier. However, the three subgroups do not commute with each
other, so the corresponding types of information are mutually incompatible, a
situation similar to what we found for the bar graph.

For $D>2$ it is again the case that all information is absent when $|\pt|=1$
completely present for $|\pt|\geq 3$.  And \eqref{locinf_eqn69} and \eqref{locinf_eqn70}
remain correct (with each generator of order $D$), and these subgroups are
again abelian.  However, when $\pt=\{1,2\}$ and $\ptc=\{3,4\}$, \eqref{locinf_eqn71}
must be replaced with
\begin{equation}
\label{locinf_eqn72}
\GC^{\pt}_{0} = \langle Z_{01}^{} X_{02}^{2}, Z_{02}^{} \rangle,\quad 
\GC^{\ptc}_{0} = \langle Z_{01}^{}, X_{01}^2 Z_{02}^{} \rangle.
\end{equation}
In each case the two generators do not commute with each other, so neither
subgroup is abelian.  However, all elements of $\GC^{\pt}_{0}$ commute with
all elements of $\GC^{\ptc}_{0}$. Also, the two subgroups are isomorphic
(interchange subscripts 1 and 2).

For \emph{odd} $D \geq 3$ one can use for $\GC^{\pt}_{0}$ an alternative pair
of generators
\begin{equation}
\label{locinf_eqn73}
\GC^{\pt}_{0} = \langle Z_{01}^m X_{02}^{}, Z_{02}^{} \rangle,\quad 
m:=(D+1)/2,
\end{equation}
whose order is $D$ and whose commutator is
\begin{equation}
\label{locinf_eqn74}
\bigl( Z_{01}^m X_{02}^{}\bigr) Z_{02} = \omega Z_{02} \bigl( Z_{01}^m X_{02}^{}\bigr).
\end{equation}
This means---see \eqref{locinf_eqn8}---that $\GC^{\pt}_{0}$, and thus also the
(isomorphic) $\GC^{\ptc}_{0}$, is isomorphic to the
Pauli group of a single qudit. Since  $\GC^{\pt}_{0}$ and
$\GC^{\ptc}_{0}$ commute with each other, it is natural to think of the pair
as associated with the tensor product of two qudits with the same $D$.  That
this is correct can be confirmed by explicitly constructing
a ``pre-encoding'' circuit embodying the unitary
\begin{equation}
\label{locinf_eqn75}
(\mathrm{F}_1\otimes \mathrm{F}_2)^\dag \mathrm{CP}^{-m}_{12}
(\mathrm{F}_1\otimes \mathrm{F}_2),
\end{equation} 
expressed in terms of the Fourier and CP gates defined in Sec.~\ref{locinf_sct5},
that carries the Pauli groups on ``pre-input" qudits 1 and 2 onto $\GC^{\pt}_{0}$ and
$\GC^{\ptc}_{0}$, respectively. 

Things become more complicated for \emph{even} $D\geq 4$, where $\GC^{\pt}_{0}$ (and also $\GC^{\ptc}_{0}$) are no longer isomorphic to the Pauli group of a single qudit.

\section{Conclusion}
\label{locinf_sct18}

We have shown that for additive graph codes with a set of $n$ carrier qudits,
each of the same dimension $D$, where $D$ is any integer greater than 1, it is
possible to give a precise characterization of the information from the
coding space that is present in an arbitrary subset $\pt$ of the carriers.
This information corresponds to a subgroup $\GC^\pt$ of a group $\GC$, the
information group of operators on the coding space, that spans the coding
space and provides a useful representation of the information that it
contains.  We discuss how what we call a trivial code, essentially a tensor
product of qudits of (possibly) different dimensions, can be encoded into the
coding space in a manner which gives one a clear intuitive interpretation of
$\GC$.  The subgroup $\GC^\pt$ is then simply the subset of operators in $\GC$
whose trace down to $\pt$ is nonzero, and the traced-down operators when
suitably normalized form a group $\GC_\pt$ that is isomorphic to $\GC^\pt$.
The information present in those operators in $\GC$ that are not in $\GC^\pt$
disappears so far as the subsystem $\pt$ is concerned, as their partial traces
are zero.  This is the central result of our chapter and is illustrated by a
number of simple examples in Sec.~\ref{locinf_sct14}. We also provide in
App.~\ref{locinf_apdx3} a relatively simple algorithm for finding the elements of
$\GC^B$.

These results can be extended to arbitrary qudit stabilizer codes even if they
are not graph codes, by employing appropriate stabilizer and information
groups, as in Sec.~\ref{locinf_sct6}. Here, however, the concept of a trivial code,
and thus our perspective on the encoding step, may not apply.  The extension
of these ideas, assuming it is even possible, to more general codes, such as
nonadditive graph codes, remains an open question.

As shown in App.~\ref{locinf_apdx4} our formalism can be fitted within the general
framework of invariant algebras as discussed in \cite{PhysRevLett.98.100502,
  PhysRevA.76.042303,quantph.0907.4207,PhysRevLett.100.030501}.  The overall
conceptual framework we use is somewhat different from that found in these
references in that we directly address the question of what information is
present in the subsystem of interest, rather than asking whether there exists
some recovery operation (the $\RC$ in App.~\ref{locinf_apdx4}) that will map an
algebra of operators back onto its original space. Thus in our work the
operator groups $\GC^\pt$ on the coding space and $\GC_\pt$ on the subsystem
are isomorphic but not identical.  Hence, even though there is, obviously, a
close connection between our ``group approach'' and the ``algebraic
approach,'' the algebra of interest being generated from the group of
operators, further relationships remain to be explored.  The fact that the
arguments in App.~\ref{locinf_apdx4} are not altogether straightforward suggests that
the use of groups in cases where this is possible may provide a useful
supplement, both mathematically and intuitively, to other algebraic ideas.  In
particular the additional structure present in an additive graph code allows
one to determine $\GC^\pt$ in $\OC(n^{\theta}+K^2n^2)$, App.~\ref{locinf_apdx3}, as
against $\OC(K^6)$ for the algorithm presented in
\cite{PhysRevLett.100.030501} for a preserved matrix algebra, where $K$ is the
dimension of the input and output Hilbert space.

\begin{subappendices}
\section{Proof of Lemmas~\getrefnumber{locinf_thm1} and \getrefnumber{locinf_thm2} }
\label{locinf_apdx1}

Proof of Lemma~\ref{locinf_thm1}

The operators in $p\RC$ are linearly independent when those in $\RC$ are
linearly independent, since $p$ is unitary and thus invertible.  This
establishes (i).  For (ii), consider the case where $q$ is the identity $I$.
As the collection $\RC$ is linearly independent, there is at most one $r\in
\RC$ such that $pr$ is a multiple of the identity.  If such an $r$ exists, $p$
is of the form $\mathrm{e}^{\ii\phi} r^{-1}$, and since $\RC$ is a group,
$p\RC=\mathrm{e}^{\ii\phi}r^{-1}\RC=\mathrm{e}^{\ii\phi}\RC$, we have situation $(\alpha)$, with the collection
$p\RC\cup\RC$ linearly dependent.  Next assume the collection
$p\RC\cup\RC$ is linearly dependent, which means there are complex numbers
$\{a_r\}$ and $\{b_r\}$, not all zero, such that
\begin{equation}
\label{locinf_eqn76} 
\sum_{r\in\RC} \left[ a_r r + b_r pr\right] = 0. 
\end{equation}
This is not possible if all the $a_r$ are zero, since this would mean $p\sum_r
b_r r=0$, thus $\sum_r b_r r=0$ implying $b_r=0$ for every $r$, since the
$\RC$ collection is by assumption linearly independent.  Thus at least one
$a_r$, say $a_s$ is not zero. Multiply both sides of \eqref{locinf_eqn76} by $s^{-1}$
on the right and take the trace:
\begin{equation}
\label{locinf_eqn77}
a_s\Tr[I] + \sum_{r\in\RC} b_r\Tr[prs^{-1}] = 0, 
\end{equation}
implying there is at least one $r$ for which $\Tr[prs^{-1}]\neq 0$.  But then
$p$ is of the form $\mathrm{e}^{\ii\phi} sr^{-1}= \mathrm{e}^{\ii\phi}\bar r^{-1}$ for 
$\bar r = rs^{-1}\in\RC$, so we are back to situation $(\alpha)$. Hence the alternative
to $(\alpha)$ is $(\beta)$: the operators in $p\RC\cup\RC$ are linearly
independent.  Finally, if $q$ is not the identity $I$, simply apply the
preceding argument with $\bar p = q^{-1}p$ in place of $p$.
\medskip

Proof of Lemma~\ref{locinf_thm2}

Statement (i) is a consequence of the fact that if an invertible operator is in
$\BC$, so is its inverse, and since
$\SC$ is a group, $g\SC$ consists of the inverses of the elements in
$g^{-1}\SC$.

Statements (ii) and (iv) follow from a close examination of \eqref{locinf_eqn55}.
Assume both sets on the left side are nonempty. If $gs_1$ 
and  $hs_2$ are both in $\BC$, so is their product $gs_1hs_2= ghs_1s_2$, where
we use the fact that $g$ and $h$ commute with every element of $\SC$. 
If, on the other hand, $(gh\SC)\cap\BC$ and $(g\SC)\cap\BC$ are nonempty,
any element, say $ghs_1$, in the former can be written using a specific
element, say $g\bar s$, in the latter, as
\begin{equation}
\label{locinf_eqn78}
 ghs_1 = (g\bar s)(h s_2) 
\end{equation}
where $s_2 = s_1 \bar s^{-1}$ is uniquely determined by this equation, and the
fact that both $ghs_1$ and $g\bar s$ are (by assumption) in $\BC$ means the
same is true of $h s_2$.  Thus not only can every element of $(gh\SC)\cap\BC$
be written as a product of elements of $(g\SC)\cap\BC$, but there is a one-to-one
correspondence between $(gh\SC)\cap\BC$ and $(g\SC)\cap\BC$, which must
therefore be of equal size. A similar argument shows that $(gh\SC)\cap\BC$
and  $(h\SC)\cap\BC$ are of the same size. This establishes both (ii)
and (iv).

As for (iii), use the fact that the cosets $g\SC$ and $h\SC$ are either
identical or have no elements in common, so the same is true of their
intersections with $\BC$.  If $g\SC$ and $h\SC$ have no elements in common,
Lemma~\ref{locinf_thm1} with $\RC=\SC$ tells us that either $g\SC= \mathrm{e}^{\ii\phi} (h\SC)$
for some nonzero $\phi$, in which case 
$\Sigma[(g\SC)\cap\BC] = \mathrm{e}^{\ii\phi}\Sigma[(h\SC)\cap\BC]$
is distinct from $\Sigma[(h\SC)\cap\BC]$, or else the collection $(g\SC)\cup
(h\SC)$ is linearly independent, which means that its intersection with $\BC$
shares this property and the operators $\Sigma[(g\SC)\cap\BC]$ and
$\Sigma[(h\SC)\cap\BC]$ are linearly independent.

\section{Proof of  Theorem~\getrefnumber{locinf_thm4} \label{locinf_apdx2}}

The proof of Theorem~\ref{locinf_thm4} makes use of the following:
\begin{lemma}
\label{locinf_thm5}

Let $\hat g=P\hat gP$ be an information operator in $\GC$ with spectral decomposition
\begin{equation}
\label{locinf_eqn79} 
\hat g  = \sum_{j=0}^{m-1} \lambda_j J_j, 
\end{equation}
where the mutually orthogonal projectors $J_j$ sum to $P$.  Then each
projector $J_j$ can be written as a polynomial in $\hat g$ with $\hat g^0=P$:
\begin{equation}
\label{locinf_eqn80} 
J_j = \sum_{k=0}^{m-1} \alpha_{jk} \hat g^k.
\end{equation}
\end{lemma}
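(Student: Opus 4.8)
The plan is to use Lagrange interpolation on the distinct eigenvalues of $\hat g$. The essential observation is a functional calculus on the coding space: because the $J_j$ in \eqref{locinf_eqn79} are mutually orthogonal projectors satisfying $J_jJ_k=\delta_{jk}J_j$ and $\sum_j J_j = P$, one has for every integer $k\geq 1$ that $\hat g^k = \sum_{j=0}^{m-1}\lambda_j^k J_j$, while adopting the convention $\hat g^0 = P$ makes the same formula hold at $k=0$ (since $\lambda_j^0=1$ and $\sum_j J_j=P$). Consequently, for any polynomial $q(x)=\sum_k c_k x^k$ the operator $q(\hat g):=\sum_k c_k \hat g^k$ (with $\hat g^0=P$) evaluates to $q(\hat g)=\sum_{j=0}^{m-1} q(\lambda_j)\,J_j$. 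Here $P$, and not the full identity $I$ on $\HC_n$, plays the role of the multiplicative identity, which is natural since $P$ is the group identity of $\GC$ and $\hat g = P\hat g P$ acts as zero on the orthogonal complement of $\HC_C$.

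First I would note that in a spectral decomposition the eigenvalues $\lambda_0,\ldots,\lambda_{m-1}$ are by definition pairwise distinct---this is exactly the convention $\mu_j\neq\mu_k$ of \eqref{locinf_eqn2}, equal eigenvalues being collected into a single projector. Given distinctness, I would define the degree $m-1$ Lagrange interpolating polynomials
\[
 p_j(x) = \prod_{\substack{l=0\\ l\neq j}}^{m-1} \frac{x-\lambda_l}{\lambda_j-\lambda_l},
\]
which satisfy $p_j(\lambda_i)=\delta_{ij}$. Applying the functional calculus above with $q=p_j$ gives
\[
 p_j(\hat g) = \sum_{i=0}^{m-1} p_j(\lambda_i)\,J_i = \sum_{i=0}^{m-1}\delta_{ij}J_i = J_j.
\]
Since $p_j$ has degree at most $m-1$, expanding $p_j(\hat g)$ in powers of $\hat g$ (again with $\hat g^0=P$) yields coefficients $\alpha_{jk}$ for $0\leq k\leq m-1$ such that $J_j=\sum_{k=0}^{m-1}\alpha_{jk}\hat g^k$, which is precisely \eqref{locinf_eqn80}.

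The only point requiring care---and the place a naive argument could go wrong---is the bookkeeping of the identity: one must consistently interpret the zeroth power as $\hat g^0=P$ rather than $I$, so that the constant term of $p_j$ contributes $p_j(0)\cdot P$ and the identity $\sum_j J_j=P$ is respected. No issue of normality or diagonalizability arises separately, since the existence of the orthogonal spectral decomposition \eqref{locinf_eqn79} is part of the hypothesis; and no restriction on $D$ (prime or composite) is needed, as Lagrange interpolation over the distinct complex eigenvalues $\lambda_j$ works regardless. This makes the lemma essentially elementary, its role being to supply, in the proof of Theorem~\ref{locinf_thm4}, the fact that each $J_j$ lies in the algebra generated by $\hat g$, so that its partial-trace behavior is controlled by that of the powers $\hat g^k$.
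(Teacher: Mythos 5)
Your proof is correct and rests on the same key idea as the paper's: both invert the relation $\hat g^k=\sum_j\lambda_j^kJ_j$ (the paper's \eqref{locinf_eqn81}) using the distinctness of the eigenvalues, the paper by observing that the matrix $\beta_{kj}=\lambda_j^k$ is an invertible Vandermonde matrix, you by writing down its inverse explicitly through the Lagrange interpolating polynomials $p_j$. The only difference is presentational—your construction is explicit where the paper's is an existence argument via the nonvanishing Vandermonde determinant—and your handling of the convention $\hat g^0=P$ matches the paper's.
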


\begin{proof}
The proof consists in noting that
\begin{equation}
\label{locinf_eqn81} 
\hat g^k = \sum_{j=0}^{m-1} \lambda_j^k J_j=\sum_{j=0}^{m-1}\beta_{kj}J_j,
\end{equation}
is a linear equation in the $J_j$ with $\beta_{kj} = \lambda_j^k$ an $m\times
m$ Vandermonde matrix whose determinant is $\prod_{j > k}(\mu_j-\mu_k)$ (see
p.~29 of \cite{HornJohnson:MatrixAnalysis}).  As the $\mu_j$ are distinct the
matrix $\beta_{kj}$ has an inverse $\alpha_{jk}$.
\end{proof}

To prove (i) of Theorem~\ref{locinf_thm4}, first assume that $\hat g$ is in
$\GC^\pt$.  Since $\GC^{\pt}$ is a group with identity $P$, this means that all
powers of $\hat g$, including $\hat g^0=P$, are also in $\GC^{\pt}$.
Consequently, the projectors entering the spectral decomposition \eqref{locinf_eqn79}
of $\hat g$ satisfy
\begin{equation}
\label{locinf_eqn82}
N^{-1}\Tr_{\ptc}[J_j] \; \Tr_{\ptc}[J_k]
= \Tr_{\ptc}[J_jJ_k] 
= \delta_{jk}\Tr_{\ptc}[J_j],
\end{equation}
with the first equality obtained by expanding $J_j$ and $J_k$ in powers of
$\hat g$, \eqref{locinf_eqn80}, and using \eqref{locinf_eqn56} along with the linearity of
the  partial trace.  This orthogonality of the partial traces of different projectors,
see \eqref{locinf_eqn3}, implies that the $\JC(\hat g)$ type of information is
perfectly present in $\pt$. Conversely, if the $\JC(\hat g)$ type of
information is perfectly present in $\pt$ then the partial traces down to
$\pt$ of the different $J_j$, which cannot be zero, are mutually orthogonal
and thus linearly independent.  Therefore by \eqref{locinf_eqn79}, $\Tr_{\ptc}[\hat
g]$ cannot be zero, and $\hat g$ is in $\GC^\pt$.

The prove (ii) note that $\hat g^k$ absent from $\GC^{\pt}$ for $1\leq k < D$ 
means that 
$\Tr_{\ptc}[\hat g^k]=0$ for these values of $k$, and thus by taking the
partial trace of both sides of  \eqref{locinf_eqn80} and using \eqref{locinf_eqn57},  
\begin{equation}
\label{locinf_eqn83} 
\Tr_{\ptc}[J_j] = N\alpha_{j0} P_{\pt}.
\end{equation}
Since these partial traces are identical up to a multiplicative constant 
there is no information of the $\JC(\hat g)$ type in $\pt$.  For the converse,
if there is no $\JC(\hat g)$ information in $\pt$ then there is also no
$\JC(\hat g^2)$, $\JC(\hat g^3)$, etc. information in $\pt$, since the
projectors which arise in the spectral decomposition of $\hat g^k$ are already
in the spectral decomposition of $\hat g$, see \eqref{locinf_eqn81}.
Consequently, by (i), these $\hat g^k$ must be absent from $\GC^\pt$.  

To prove (iii), note that if all information is perfectly present in $\pt$
this means that for every $\hat g\in\GC$ the $\JC(\hat g)$ information is
present in $\pt$, and therefore, by (i), $\hat g\in\GC^\pt$, so $\GC =
\GC^\pt$. For the converse, let $Q_1$ and $Q_2$ be two orthogonal but
otherwise arbitrary projection operators on subspaces of the coding space
$\HC_C$.  Because the elements of the information group $\GC$ form a basis for
the set of linear operators on $\HC_C$, see comments at the end of
Sec.~\ref{locinf_sct9}, $Q_1$ and $Q_2$ can both be written as sums of elements
$\hat g$ in $\GC$, and the same argument that was employed in \eqref{locinf_eqn82}
shows that the orthogonality of $Q_1$ and $Q_2$ implies the orthogonality of
$\Tr_{\ptc}[Q_1]$ and $\Tr_{\ptc}[Q_2]$.

To prove (iv), note that if $\GC^{\pt}$ consists entirely of scalar
multiples of $P$, the partial trace down to
$\pt$ of any projector $Q$ on a subspace of $\HC_C$, since it can be written
as a linear combination of the partial traces of the $\hat g$ in $\GC$, most
of which vanish, will be some multiple of $P_{\pt}$, and thus all information is
absent from $\pt$.  Conversely, if $\GC^{\pt}$ contains a $\hat g$ which is
not proportional to $P$ the corresponding $\JC(\hat g)$ type of information
will be present in $\pt$ by (i), so it is not true that all information is
absent from $\pt$, a contradiction.

\section{Algorithm for finding $\GC^{\pt}$\label{locinf_apdx3}}

Here we present an algorithm for determining the subset information group
$\GC^{\pt}$ by finding the elements $\hat g$ of $\GC$ whose partial trace down
to $\pt$ is nonzero.  If two or more elements differ only by a phase it is
obviously only necessary to check one of them.  For what follows it is helpful
to adopt the abbreviation
\begin{equation}
\label{locinf_eqn84}
E^{(\vect{x}|\vect{z})} := X^{\vect{x}} Z^{\vect{z}}
\end{equation}
with $(\vect{x}|\vect{z})$ an $n$-tuple row
vector pair, and thus a $2n$-tuple of integers between $0$ and $D-1$. 
Arithmetic operations in the following analysis are assumed to be mod $D$.

First consider the trivial code on the trivial graph, Sec.~\ref{locinf_sct8}, with
information group $\GC_0^\pt$ consisting of elements of the form $\hat g_0=
g_0 P_0$, see \eqref{locinf_eqn48}, with $g_0 = E^{(\vect{x}_0|\vect{z}_0)}$ some 
element of $\WC_0=\langle \SC_0^G, \CC_0\rangle$, and
\begin{equation}
P_0 = |\SC_0|^{-1}\sum_{\vect{x}\in\XZ} X^{\vect{x}},
\label{locinf_eqn85} 
\end{equation}
where $\XC_0$ denotes the collection of $n$-tuples that enter the stabilizer
$\SC_0$, \eqref{locinf_eqn37}.
 By choosing $\vect{x}_0$ and $\vect{z}_0$ to be of the form
\begin{align}
\label{locinf_eqn86}
\vect{x}_0 &= (\xi_1,\xi_2,\ldots \xi_k,0,0,\ldots 0),\notag\\
\vect{z}_0 &= (\zeta_1 m_1,\zeta_2 m_2,\ldots \zeta_k m_k,0,0,\ldots 0),
\end{align}
using integers in the range
\begin{equation}
\label{locinf_eqn87}
 0\leq \xi_j \leq (d_j-1),\quad  0\leq \zeta_j \leq (d_j-1), 
\end{equation}
we obtain a single representative $g_0 =E^{(\vect{x}_0|\vect{z}_0)}$ for each
coset $g_0\SC_0$ in $\WC/\SC_0$.  The corresponding information operator,
which depends only on the coset, is 
\begin{equation}
\label{locinf_eqn88} 
\hat g_0 =  E^{(\vect{x}_0|\vect{z}_0)} P_0 = 
 |\SC_0|^{-1} \sum_{\vect{x}\in\XZ} \omega^{-\vect{z}_0\vect{x}} 
 E^{(\vect{x}+\vect{x}_0|\vect{z}_0)},
\end{equation}
where the addition of $\vect{x}$ and $\vect{x}_0$ is component-wise mod $D$,
and $\vect{z}_0\vect{x}$ denotes the scalar product of $\vect{z}_0$ and
$\vect{x}$ mod $D$ (multiply corresponding components and take the sum mod
$D$).

Elements of the information group $\GC^\pt$ of the nontrivial code of interest
to us are then of the form
\begin{align}
\label{locinf_eqn89}
\hat g &= (UW)\hat g_0(UW)^{\dagger}\notag\\
 &=  
|\SC_0|^{-1} \sum_{\vect{x}\in\XZ}
\omega^{\nu(\vect{x},\vect{x}_0,\vect{z})-\vect{z}_0\vect{x}}
E^{(\vect{x}+\vect{x}_0|\vect{z}_0)\QZ},
\end{align}
where we use the fact that because the conjugating operator $UW$,
\eqref{locinf_eqn36}, is a Clifford operator there is a $2n\times 2n$ matrix $Q$ over
$\ZZ_D^{2n}$, representing a symplectic automorphism
\cite{PhysRevA.71.042315}, such that
\begin{equation}
\label{locinf_eqn90} 
(UW) E^{(\vect{x}|\vect{z})}(UW)^{\dagger} = 
 \omega^{\nu(\vect{x},\vect{z})} E^{(\vect{x}|\vect{z})\QZ}.
\end{equation}
with $(\vect{x}|\vect{z})Q$ the $2n$-tuple, interpreted as an $n$-tuple pair,
obtained by multiplying $(\vect{x}|\vect{z})$ on the right by $Q$, and
$\nu(\vect{x},\vect{z})$ an integer whose value does not concern us. The
explicit form of $Q$ can be worked out by means of the encoding procedure
presented in Sec.~\ref{locinf_sct8}, using Tables~\ref{locinf_tbl1} and~\ref{locinf_tbl2}.

The operators appearing in the sum on the right side of \eqref{locinf_eqn89}
are linearly independent Pauli products, since $Q$ is nonsingular. The 
trace down to $\pt$ of such a product is nonzero if and only if its base is in 
$\pt$, and when  nonzero the result after the trace is essentially the same
operator: see \eqref{locinf_eqn53} and the associated discussion.  Consequently
$g_B=N^{-1} \Tr_{\ptc}[\hat g]$ is nonzero if and only if the trace down to
$\pt$ of at least one operator on the right side of \eqref{locinf_eqn89} is nonzero.
A useful test takes the form
\begin{equation}
\label{locinf_eqn91} 
\Tr_{\ptc}[ E^{(\vect{x}|\vect{z})} ]\neq 0
\Longleftrightarrow (\vect{x}|\vect{z})J  =\vect{0},
\end{equation}
where $\vect{0}$ is the zero row vector, and $J$ is a diagonal $2n\times 2n$
matrix with 1 at the diagonal positions $j$ and $2j$ whenever qudit $j$
belongs to $\ptc$, and 0 elsewhere.  Therefore the $\hat g$ associated with
$\vect{x}_0$ and $\vect{z}_0$ through \eqref{locinf_eqn88} and \eqref{locinf_eqn89} is a
member of $\GC^B$ if and only if there is at least one $\vect{x}\in\XC_0$
such that
\begin{equation}
\label{locinf_eqn92}
(\vect{x}+\vect{x}_0|\vect{z}_0)Q J 
= \vect{0} \text{\ \ or\ \ } (\vect{x}|\vect{0})Q J 
= -(\vect{x}_0|\vect{z}_0)Q J.
\end{equation}

The $\vect{x}$ that belong to $\XZ$ are characterized by the equation
\begin{equation}
\label{locinf_eqn93}
\vect{x} M = \vect{0},
\end{equation}
where $M$ is an $n\times k$ matrix that is everywhere 0 except for
$M_{jj}=m_j$ for $1\leq j\leq k$, using the $m_j$ that appear in
\eqref{locinf_eqn28}.  Consequently, instead of asking whether \eqref{locinf_eqn92} has a
solution $\vect{x}$ belonging to $\XZ$ one can just as well ask if there is
any solution to the pair \eqref{locinf_eqn92} and \eqref{locinf_eqn93}, or equivalently to
the equation
\begin{equation}
\label{locinf_eqn94}
\vect{x} T = \vect{u}_0 
\end{equation}
where $T$ is an $n\times (2n+k)$ matrix whose first $2n$ columns consist of
the top half of the matrix $QJ$, (upper $n$ elements of each column), and
whose last $k$ columns are the matrix $M$ in \eqref{locinf_eqn93}; while
$\vect{u}_0$ is a row vector whose first $2n$ elements are
$-(\vect{x}_0|\vect{z}_0)Q J$ and last $k$ elements are 0. Deciding if
\eqref{locinf_eqn94} has a solution $\vect{x}$ becomes straightforward once one has
transformed $T$ to Smith normal form, including determining the associated
invertible matrices, see \eqref{locinf_eqn32}.  As this needs to be done just once
for a given additive code and a given subset $\pt$, the complexity of the
algorithm for finding $\GC^\pt$ is $\OC(n^{\theta})$ for finding the Smith
form plus $\OC(n^2K^2)$ for testing the $K^2$ elements of $\GC$ once the Smith
form is available.  By using the group property of $\GC^{\pt}$ one can construct 
a faster algorithm, but that is beyond the scope of this Dissertation.

\section{Correctable $*$-algebra\label{locinf_apdx4}}

The counterpart in \cite{PhysRevA.76.042303} of our notion of information
perfectly present at the output of a quantum channel, see Sec.~\ref{locinf_sct2}, is
that of a \emph{correctable $*$-algebra}\index{correctable algebra} $\AC$ of operators acting on a
Hilbert space. The $*$ (sometimes denoted C$^*$) means that $\AC$, as well as
being an algebra of operators in the usual sense, contains $a\ad$ whenever
it contains $a$. Let the channel superoperator $\EC$ be represented by Kraus
operators,
\begin{equation}
\label{locinf_eqn95}
\EC(\rho) = \sum_j E_j\rho E_j\ad, 
\end{equation}
satisfying the usual closure condition $\sum_j E_j\ad E_j=I$, and let $P$ be a
projector onto some subspace $P\HC$ of the Hilbert space $\HC$.  Then a $*$-algebra $\AC$ is defined in \cite{PhysRevA.76.042303} to be \emph{correctable for $\EC$ on states in $P\HC$}\index{correctable algebra} provided $a=PaP$ for every $a$ in $\AC$, and there exists a superoperator $\RC$ (the recovery operation in an error correction scheme) whose domain is the range of $\EC$, whose range is $\LC(\HC)$, and such that
\begin{equation}
\label{locinf_eqn96}
P[(\RC \circ \EC)^\dagger (a)] P = a = P a P
\end{equation}
for all $a\in\AC$.  Here the dagger denotes the adjoint of the superoperator
in the sense that
\begin{equation}
\label{locinf_eqn97} 
\Tr\left[b \left((\RC\circ\EC)(c) \right)\right] = 
\Tr\left[ \left((\RC \circ \EC)^\dagger(b) \right) c\right]
\end{equation}
for any $b$ and $c$ in $\LC(\HC)$. 
In \cite{PhysRevA.76.042303}, see Theorem~9 and Corollary~10, it is shown that
any correctable algebra in this sense is a subalgebra of (what we call) a
\emph{maximal} correctable algebra\index{correctable algebra, maximal}
\begin{equation}
\label{locinf_eqn98} 
\AC_M = \left\{ a \in \LC({P\HC}) : 
[ a, P E^\dag_i E_j P]=0 \quad \forall \: i,j \right\}.
\end{equation}

We can apply this to our setting described in Secs.~\ref{locinf_sct6} and \ref{locinf_sct10} where $P$ is the projector on the coding space $\HC_C$ and $\EC_B$ is the superoperator for the partial trace down to the subset $\pt$ of carriers,
\begin{equation}
\label{locinf_eqn99} 
\EC_B(\rho) = \Tr_{\ptc}[\rho] = \sum_j E_j \rho E_j\ad \quad \text{ for }\rho \in \LC(\HC) 
\end{equation}
with Kraus operators
\begin{equation}
\label{locinf_eqn100}
E_j := I_B \otimes \bra{j}_{\ptc},
\end{equation}
where $\ket{j}_{\ptc}$ is any orthonormal basis of $\HC_{\ptc}$, so
\begin{equation}
\label{locinf_eqn101}
E_i\ad E_j = I_\pt \otimes \ket{i}\bra{j}_{\ptc}.
\end{equation}

We shall now show that collection of operators in $\GC^\pt$ (defined in Theorem~\ref{locinf_thm3}) spans a $*$-algebra which is correctable for $\EC_\pt$ on states in $P\HC = \HC_C$, and is the maximal algebra of this kind, i.e. $\text{span}(\GC^{\pt}) = \AC_M$.  First note that $\text{span}(\GC^{\pt})$ is indeed a $*$-algebra: every $\hat g \in \GC$ is a unitary operator and $\GC$ contains the adjoint of each of its elements; replacing $g$ with $g\ad$ in \eqref{locinf_eqn48} yields $\hat g\ad$. Of course $\Tr_{\ptc}[\hat g]=0$ if and only if $\Tr_{\ptc}[\hat g\ad]=0$ and in addition, $a = P a P$ for $a \in \text{span}(\GC^{\pt})$ because $\hat g = P\hat g P$, \eqref{locinf_eqn48}.

By definition $\Tr_{\ptc}[\hat g]\neq 0$ for $\hat g \in \GC^\pt$, and this means that the partial trace down to $\pt$ of at least one element in the corresponding coset $g\SC$, see \eqref{locinf_eqn48}, must be nonzero.  Let $h$ be such an element; since it is a Pauli product it must be of the form $h=h_\pt\otimes I_{\ptc}$.  As a consequence,
\begin{align}
\label{locinf_eqn102}
[\hat g,PE_i\ad E_j P] &=  [\hat h,PE_i\ad E_j P]= 
 P[h, E_i\ad E_j] P\notag\\
 &= P[\,h_\pt\otimes I_{\ptc}, I_\pt \otimes \ket{i}\bra{j}_{\ptc}\,] P = 0,
\end{align}
where the successive steps are justified as follows.  Since $\hat g$ depends
only on the coset $g\SC$ and $h$ belongs to this coset, $h\SC=g\SC$ and $\hat
h = Ph = hP =\hat g$.  This means we can move the projector $P$ outside the
commutator bracket, and once outside it is obvious that the latter vanishes
for every $i$ and $j$.   Thus any $\hat g$ in $\GC^\pt$ belongs
to the maximal $\AC_M$ defined in \eqref{locinf_eqn98}, as do all linear
combinations of the elements in $\GC^\pt$.

To show that $\AC_M$ is actually spanned by $\GC^\pt$ we note that any 
$a$ belonging to  $\AC_M$ can be written as  
\begin{equation}
\label{locinf_eqn103} 
a = b+c,
\end{equation}
where $b$ is a linear combination of elements of $\GC^\pt$ and $c$ of elements
of $\GC$ that do not belong to $\GC^\pt$, so
$\Tr_{\ptc}[c]=\Tr_{\ptc}[c\ad]=0$. Thus it is the case that
\begin{equation}
\label{locinf_eqn104} 
P(\RC\circ\EC_\pt)\ad (b)P = b,\quad 
P(\RC\circ\EC_\pt)\ad (c)P = c,
\end{equation}
where the first follows, see \eqref{locinf_eqn96}, from the previous argument showing
that the span of $\GC^\pt$ is a subalgebra of $\AC_M$, and the second from
linearity and the assumption that $a$ belongs to $\AC_M$.  Multiply the second
equation by $c\ad$ and take the trace:
\begin{align}
\label{locinf_eqn105}
 \Tr[c\ad c] &= \Tr \left[ c\ad P \left( (\RC\circ\EC_\pt)\ad (c) \right) P \right]\notag\\
  &= \Tr \left[ \left( \RC\circ\EC_\pt(c\ad) \right) c \right] = 0,
\end{align}
where we used the fact that $Pc\ad P=c\ad$, and $\EC_\pt(c\ad) = \Tr_{\ptc}[c\ad]=0$.  Thus $c=0$ and any element of $\AC_M$ is a linear combination of the operators in $\GC^\pt$.

In conclusion, we have shown for any additive graph code $C$ and any subset of
carrier qudits $\pt$, the $*$-algebra spanned by operators in $\GC^\pt$ is
exactly the maximal correctable algebra $\AC_M$ defined in
 \eqref{locinf_eqn98}. In App.~\ref{locinf_apdx3} we outline an algorithm that
enumerates the elements in $\GC^{\pt}$ for any $\HC_C$ and $\EC_\pt$, which in
light of the result above is an operator basis of $\AC_M$.
\end{subappendices}

\chapter{Bipartite equientagled bases\label{chp7}}


\section{Introduction\label{eqent_sct1}}
We present two different solutions to the problem posed by Karimipour and Memarzadeh in \cite{PhysRevA.73.012329} of constructing an orthonormal basis of two qudits with the following properties: (i) The basis continuously changes from a product basis (every basis state is a product state) to a maximally entangled basis (every basis state is maximally entangled), by varying some parameter $t$, and (ii) for a fixed $t$, all basis states are equally entangled. As mentioned in \cite{PhysRevA.73.012329}, such a family of bases may find applications in various quantum information protocols including quantum cryptography, optimal Bell tests, investigation of the enhancement of channel capacity due to entanglement and the study of multipartite entanglement. For a more detailed motivation the interested reader may consult \cite{PhysRevA.73.012329}.

The chapter is organized as follows : In Sec.~\ref{eqent_sct2} we summarize the main results of \cite{PhysRevA.73.012329} and then introduce the concept of Gauss sums and some useful related properties. Next we provide an explicit parameterization of a family of equientangled bases and we prove that it interpolates continuously between a product basis and a maximally entangled basis, for all dimensions. We illustrate the behaviour of our solution with explicit examples. In Sec.~\ref{eqent_sct7} we construct another such family using a completely different method based on graph states, describe a simple extension of it to multipartite systems, and then illustrate its behaviour with specific examples. Finally in Sec.~\ref{eqent_sct11} we compare the two solutions and make some concluding remarks.

\section{Construction based on Gauss sums\label{eqent_sct2}}
\subsection{Summary of previous work\label{eqent_sct3}}
Let us start by summarizing the main results of \cite{PhysRevA.73.012329}. Consider a bipartite Hilbert space $\HC\otimes\HC$, where both Hilbert spaces have the same dimension $D$. The authors first defined an arbitrary normalized bipartite state
\begin{equation}
\label{eqent_eqn1}
\ket{\psi_{0,0}}=\sum_{k=0}^{D-1}a_k\ket{k}\ket{k}.
\end{equation}
Next for $m,n=0,1,\ldots, D-1$, they considered the collection of $D^2$ ``shifted" states
\begin{align}
\ket{\psi_{m,n}} &=X^{m}\otimes X^{m+n}\ket{\psi_{0,0}} \notag\\
&=\sum_{k=0}^{D-1}a_k\ket{k\oplus m}\ket{k\oplus m\oplus n}, \label{eqent_eqn2}
\end{align}
where 
\begin{equation}\label{eqent_eqn3}
X:=\sum_{k=0}^{D-1}\dyad{k\oplus 1}{k}
\end{equation}
is the generalized Pauli (or shift) operator and $\oplus$ denotes addition modulo $D$. 
They noted that all states have the same value of entropy of entanglement \cite{NielsenChuang:QuantumComputation} given by the von-Neumann entropy
\begin{equation}\label{eqent_eqn4}
E(\ket{\psi_{m,n}})=E(\ket{\psi_{0,0}})=-\sum_{k=0}^{D-1}|a_k|^2\log_D|a_k|^2,
\end{equation}
where the logarithm is taken in base $D$ for normalization reasons so that all maximally entangled states have entanglement equal to one regardless of $D$.

Demanding the states in \eqref{eqent_eqn2} be orthonormal yields
\begin{equation}\label{eqent_eqn5}
\sum_{k=0}^{D-1} (a_k)^{*}a_{k\oplus m}=\delta_{m,0},\text{ }\forall m=0,\ldots,D-1,
\end{equation}
and the authors proved (see their Eqn. (36)) that \eqref{eqent_eqn5} is satisfied if and only if the coefficients $a_k$ have the form
\begin{equation}\label{eqent_eqn6}
a_k=\frac{1}{D}\sum_{j=0}^{D-1}\ee^{\ii \theta_j}\omega^{kj},
\end{equation}
where $\theta_j$ are arbitrary real parameters and $\omega=\expo{2\pi\ii/D}$ is
 the $D$-th root of unity. 
 
Therefore the authors found a family of $D^2$ orthonormal states, all having the same Schmidt coefficients and hence the same value of entanglement.  To ensure it interpolates from a product basis to a maximally entangled basis, it is sufficient to find a set of parameters $\{\theta^0_j\}_{j=0}^{D-1}$ for which the magnitude of $a_k$ is  $|a_k|=1/\sqrt{D}$ for all $k$. Then the problem is solved by defining
\begin{equation}\label{eqent_eqn7}
a_k(t):=\frac{1}{D}\sum_{j=0}^{D-1}\ee^{\ii t\theta^0_j}\omega^{kj},
\end{equation}
where $t\in[0,1]$ is a real parameter. When $t=0$ we have $a_k=\delta_{k,0}$ so the basis states are product states and when $t=1$, the basis is maximally entangled by assumption. We also observe there is a continuous variation in between these two extremes as a function of $t$. 

Karimipour and Memarzadeh considered the existence of such a set $\{\theta^0_j\}_{j=0}^{D-1}$ in arbitrary dimensions (see the last paragraph of Sec. V in \cite{PhysRevA.73.012329}). They found particular solutions for $D\leqslant 5$, but did not find a general solution for arbitrary $D$.

\subsection{Quadratic Gauss Sums\label{eqent_sct4}}
We now define the basic mathematical tools we will make use of in the rest of this section. The most important concept is that of a \emph{quadratic Gauss sum}\index{quadratic Gauss sum}, defined below.
\begin{QGS}
Let $p,m$ be positive integers. The quadratic Gauss sum is defined as
\begin{equation}\label{eqent_eqn8}
\sum_{j=0}^{p-1}\expo{2\pi\ii j^2 m/p}.
\end{equation}
\end{QGS}
The quadratic Gauss sums satisfy a reciprocity relation known as 
\begin{LS}
Let $p,m$ be positive integers. Then
\begin{equation}\label{eqent_eqn9}
\frac{1}{\sqrt{p}}\sum_{j=0}^{p-1}
\expo{2\pi\ii j^2m/p}=
\frac{\ee^{\pi\ii/4}}{\sqrt{2m}}\sum_{j=0}^{2m-1}\expo{-\pi\ii j^2 p/2m}
\end{equation}
\end{LS}
The quadratic Gauss sums can be generalized as follows.
\begin{GQGS}
Let $p,m,n$ be positive integers. The generalized quadratic Gauss sum is defined as
\begin{equation}\label{eqent_eqn10}
\sum_{j=0}^{p-1}\expo{2\pi\ii( j^2 m + j n )/p}.
\end{equation}
\end{GQGS}
Finally the following reciprocity formula for generalized Gauss sums holds.
\begin{RFGQGS}
Let $p,m,n$ be positive integers such that $mp\neq0$ and $mp+n$ is even. Then
\begin{align}\label{eqent_eqn11}
\frac{1}{\sqrt{p}}&\sum_{j=0}^{p-1}\expo{\pi\ii(j^2m+jn)/p}=\expo{\pi\ii(mp-n^2)/4mp}\frac{1}{\sqrt{m}}\sum_{j=0}^{m-1}\expo{-\pi\ii(j^2p+jn)/m}.
\end{align}
\end{RFGQGS}
The definitions of the Gauss sums \eqref{eqent_eqn8} and \eqref{eqent_eqn10} as well as the Landsberg-Schaar's identity \eqref{eqent_eqn9} can be found in standard number theory books \cite{HardyWright:IntroTheoryNumbers,EverestWard:IntroNumberTheory,Nathanson:ElementaryMethodsNumberTheory}. The reciprocity formula for the generalized quadratic Gauss sum is not as well-known, and can be found in \cite{BerndtEvans81}.

\subsection{Explicit Solution\label{eqent_sct5}}
We now show that a family of equientangled bases that interpolates continuously between the product basis and the maximally entangled basis exists for all dimensions $D$, as summarized by the following Theorem.
\begin{theorem}\label{eqent_thm1}
The collection of $D^2$ normalized states
\begin{align}\label{eqent_eqn12}
\ket{\psi_{m,n}(t)}&=\sum_{k=0}^{D-1}a_k(t)\ket{k\oplus m}{\ket{k\oplus m\oplus n}},\\
m,n &= 0,\ldots, D-1,\notag
\end{align}
 indexed by a real parameter 
\mbox{$t\in[0,1]$} with
\begin{equation}\label{eqent_eqn13}
a_k(t)=\frac{1}{D}\sum_{j=0}^{D-1}\ee^{\ii t\theta^0_j}\omega^{kj},
\qquad \omega=\ee^{2\pi\ii/D},
\end{equation}
with the particular choice of
\begin{equation}\label{eqent_eqn14}
\theta^0_j = \left\{ 
\begin{array}{l l}
  \pi  j^2/D & \quad \text{if $D$ is even}\\
  2\pi j^2/D & \quad \text{if $D$ is odd},
\end{array} \right.
\end{equation}
defines a family of equientangled bases that continuously interpolates between a product basis at $t=0$ and  a maximally entangled basis at $t=1$.
\end{theorem}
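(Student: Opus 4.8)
The plan is to reduce everything to two facts already established in the summary of Karimipour and Memarzadeh. First, for \emph{any} real parameters $\{\theta_j\}$ the coefficients $a_k=\tfrac1D\sum_j\expo{\ii\theta_j}\omega^{kj}$ of the form \eqref{eqent_eqn6} automatically satisfy the orthonormality conditions \eqref{eqent_eqn5}, so the $D^2$ states \eqref{eqent_eqn12} form an orthonormal basis. Second, by \eqref{eqent_eqn4} the entanglement of every state $\ket{\psi_{m,n}}$ in this family equals $-\sum_k|a_k|^2\log_D|a_k|^2$, a quantity depending only on the moduli $|a_k|$ and identical for all $m,n$. Since the choice \eqref{eqent_eqn13} is exactly of the form \eqref{eqent_eqn6} with $\theta_j=t\theta^0_j$ real for every $t\in[0,1]$, orthonormality and the equientanglement property (ii) hold for the entire family at once, and it only remains to analyze the two endpoints and continuity in between.

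The $t=0$ endpoint is immediate: $a_k(0)=\tfrac1D\sum_{j=0}^{D-1}\omega^{kj}=\delta_{k,0}$, so each $\ket{\psi_{m,n}(0)}$ is the product state $\ket{m}\ket{m\oplus n}$ and the basis is a product basis. Continuity is also clear, because $a_k(t)$ is a smooth (indeed entire) function of $t$, hence so is each $|a_k(t)|$, and the entanglement \eqref{eqent_eqn4} is a continuous function of the $|a_k(t)|$; thus the common entanglement varies continuously from $0$ at $t=0$ to its value at $t=1$.

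The heart of the argument is therefore the claim that at $t=1$ the basis is maximally entangled, which by \eqref{eqent_eqn4} is equivalent to $|a_k(1)|=1/\sqrt D$ for every $k$. Writing out $a_k(1)$ with the choice \eqref{eqent_eqn14} turns it into a generalized quadratic Gauss sum, which I would evaluate using the Reciprocity Formula for Generalized Quadratic Gauss Sums \eqref{eqent_eqn11}. For even $D$,
\begin{equation}
a_k(1)=\frac1D\sum_{j=0}^{D-1}\expo{\pi\ii(j^2+2kj)/D},
\end{equation}
which is \eqref{eqent_eqn11} with $p=D$, $m=1$, $n=2k$; the hypothesis that $mp+n=D+2k$ be even holds since $D$ is even, and the dual sum collapses to the single term $j=0$, giving $a_k(1)=\tfrac1{\sqrt D}\,\expo{\pi\ii(D-4k^2)/(4D)}$, of modulus $1/\sqrt D$. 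For odd $D$ one instead has $a_k(1)=\tfrac1D\sum_j\expo{\pi\ii(2j^2+2kj)/D}$, which is \eqref{eqent_eqn11} with $p=D$, $m=2$, $n=2k$ (again $mp+n=2D+2k$ is even); now the dual sum has the two terms $j=0,1$, so that, apart from a unit-modulus phase prefactor and the factor $1/\sqrt{m}=1/\sqrt2$, it is governed by the two-term factor $1+\expo{-\pi\ii(D+2k)/2}$.

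The main obstacle is precisely this last step: one must verify that the two-term factor has modulus exactly $\sqrt2$, so that together with the $1/\sqrt2$ prefactor it contributes $1$ and yields $|a_k(1)|=1/\sqrt D$. This follows from a parity observation, namely $|1+\expo{-\pi\ii(D+2k)/2}|=2\,|\cos(\pi(D+2k)/4)|$, combined with the fact that $D+2k$ is odd whenever $D$ is odd; hence $\pi(D+2k)/4$ is an odd multiple of $\pi/4$, $|\cos(\pi(D+2k)/4)|=1/\sqrt2$, and the factor indeed has modulus $\sqrt2$. Combining the two parities establishes $|a_k(1)|=1/\sqrt D$ for all $k$, proving maximal entanglement at $t=1$ and completing the argument. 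The delicate points to get right are the correct identification of the parameters $(p,m,n)$ in \eqref{eqent_eqn11}, the verification of its parity hypothesis in each case, and the closing parity argument in the odd case; the even case is comparatively painless because its dual sum is trivial.
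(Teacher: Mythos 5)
Your proposal is correct and follows essentially the same route as the paper's own proof: the same reduction to Karimipour--Memarzadeh's orthonormality and equientanglement facts, the same endpoint and continuity observations, and the same application of the reciprocity formula for generalized quadratic Gauss sums with parameters $(p,m,n)=(D,1,2k)$ in the even case and $(D,2,2k)$ in the odd case. The only cosmetic difference is the closing step of the odd case, where you compute $|1+\expo{-\pi\ii(D+2k)/2}|=2|\cos(\pi(D+2k)/4)|=\sqrt{2}$ directly, while the paper rewrites the factor as $1-\ii^{2k+D}$ and uses that an odd power of $\ii$ is $\pm\ii$; these are the same parity argument in different clothing.
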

That \eqref{eqent_eqn12} defines a family of equientangled bases that consists of a product basis at $t=0$ follows directly from the remarks of Sec.~\ref{eqent_sct1}, $a_k(0)=\delta_{k,0}$. 
Next note that a continuous variation of $t$ in the interval $[0,1]$ corresponds to a continuous variation of the Schmidt coefficients of the states in the basis. The latter implies that no matter which measure one uses to quantify the entanglement, the measure will vary continuously with $t$, since any pure state entanglement measure depends only on the Schmidt coefficients of the state \cite{PhysRevLett.83.1046}.

The only thing left to show is that the basis states in Theorem~\ref{eqent_thm1} are maximally entangled when $t=1$, or, equivalently, that $|a_k(1)|=1/\sqrt{D}$ for all $k$. We prove this by explicitly evaluating the value of $a_k(1)$ in the following Lemma.
\begin{lemma}\label{eqent_lma1}
Let $a_k(t)$ and $\{\theta_j^0\}_{j=0}^{D-1}$ be as defined by Theorem~\ref{eqent_thm1}. Then for all $k$
\begin{equation}\label{eqent_eqn15}
a_k(1) = 
\frac{\expo{\pi\ii/4}}{\sqrt{D}}\times 
\left\{ 
\begin{array}{l l}
   \omega^{-k^2/2}, & \quad \text{if $D$ is even}\\
  \omega^{-k^2/4} \left(\frac{1-\ii^{2k+D}}{\sqrt{2}} \right), & \quad \text{if $D$ is odd}
\end{array} \right..
\end{equation}
\end{lemma}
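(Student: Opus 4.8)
The plan is to evaluate $a_k(1)$ directly from its definition \eqref{eqent_eqn13}, namely
\begin{equation}\label{eqent_propeqn1}
a_k(1)=\frac{1}{D}\sum_{j=0}^{D-1}\ee^{\ii\theta^0_j}\omega^{kj},
\end{equation}
by substituting the explicit choice \eqref{eqent_eqn14} of $\theta^0_j$ and recognizing the resulting exponential sum as a generalized quadratic Gauss sum. First I would treat the two parities of $D$ separately. For $D$ even, inserting $\theta^0_j=\pi j^2/D$ turns the right side of \eqref{eqent_propeqn1} into $\tfrac{1}{D}\sum_{j=0}^{D-1}\exp(\pi\ii j^2/D)\,\exp(2\pi\ii kj/D)$, which after combining exponents reads $\tfrac{1}{D}\sum_{j=0}^{D-1}\exp\!\big(\pi\ii(j^2+2kj)/D\big)$. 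This is exactly the generalized quadratic Gauss sum \eqref{eqent_eqn10} with $p=D$, $m=1$, $n=2k$; I would apply the Reciprocity Formula for Generalized Quadratic Gauss Sums \eqref{eqent_eqn11}.

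The key computation is to check the hypotheses and then simplify the right-hand side of \eqref{eqent_eqn11}. With $m=1$ the sum on the right of the reciprocity formula collapses to a single term ($j=0$), giving $a_k(1)=\tfrac{1}{\sqrt{D}}\exp\!\big(\pi\ii(D-4k^2)/4D\big)$ for the even case; the hypothesis $mp+n=D+2k$ even must hold, which it does precisely when $D$ is even, so the formula is applicable exactly in the regime where I am invoking it. Simplifying the phase $\exp(\pi\ii(D-4k^2)/4D)=\exp(\pi\ii/4)\exp(-\pi\ii k^2/D)=\exp(\pi\ii/4)\,\omega^{-k^2/2}$ then yields the claimed even-$D$ value in \eqref{eqent_eqn15}. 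For $D$ odd I would substitute $\theta^0_j=2\pi j^2/D$, producing $\tfrac{1}{D}\sum_{j=0}^{D-1}\exp\!\big(2\pi\ii(j^2+kj)/D\big)$; to match the normalization of \eqref{eqent_eqn11} I rewrite this with denominator $2D$, i.e. as a sum $\tfrac{1}{D}\sum_j\exp\!\big(\pi\ii(2j^2+2kj)/D\big)$, and apply the reciprocity formula with $p=D$, $m=2$, $n=2k$ (here $mp+n=2D+2k$ is automatically even). Now $m=2$ leaves a two-term sum ($j=0,1$) on the right, and assembling those two terms is what produces the factor $\big(1-\ii^{\,2k+D}\big)/\sqrt{2}$ together with the phase $\exp(\pi\ii/4)\,\omega^{-k^2/4}$.

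The main obstacle I anticipate is bookkeeping rather than conceptual: correctly tracking the overall phase factors and the root-of-unity normalizations through the reciprocity identity, especially reconciling the two different conventions for the argument (the $\pi\ii$ versus $2\pi\ii$ in the exponent) so that the arguments $m,n,p$ fed into \eqref{eqent_eqn11} are the right integers and the parity condition $mp+n$ even is genuinely satisfied in each case. In particular, the odd case requires care in simplifying the residual two-term sum $\exp(\pi\ii/4)\exp(-\pi\ii k^2/2D)\cdot\tfrac{1}{\sqrt{2}}\sum_{j=0}^{1}\exp(-\pi\ii(j^2D+2jk)/2)$ and showing it equals the stated expression. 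Once the value of $a_k(1)$ is established for all $k$, taking its modulus gives $|a_k(1)|=1/\sqrt{D}$ in both parities (the odd-case factor satisfies $|1-\ii^{2k+D}|/\sqrt{2}=1$ since $\ii^{2k+D}$ is purely imaginary for $D$ odd), which is exactly the condition ensuring the $t=1$ basis is maximally entangled, completing the proof of Theorem~\ref{eqent_thm1}.
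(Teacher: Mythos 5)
Your proposal is correct and takes essentially the same route as the paper's own proof: in both cases you recognize $a_k(1)$ as a generalized quadratic Gauss sum and invoke the reciprocity formula \eqref{eqent_eqn11} with $(p,m,n)=(D,1,2k)$ for even $D$ (single-term right-hand sum) and $(p,m,n)=(D,2,2k)$ for odd $D$ (two-term sum producing the factor $(1-\ii^{2k+D})/\sqrt{2}$), followed by the same phase simplifications. The parity bookkeeping you flag is handled exactly as you describe, so there is no gap.
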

Lemma~\ref{eqent_lma1} implies at once that $|a_k(1)|=1/\sqrt{D}$, and therefore proves Theorem~\ref{eqent_thm1}. 
\begin{proof}
(of Lemma~\ref{eqent_lma1}) Note first that the expression for $a_k(1)$ in \eqref{eqent_eqn13} with $\theta_j^0$ defined in \eqref{eqent_eqn14} resembles the generalized quadratic Gauss sum \eqref{eqent_eqn10}. We will use the  reciprocity formula \eqref{eqent_eqn11} to prove Lemma~\ref{eqent_lma1}. There are two cases to be considered: Even $D$ and odd $D$.

\textbf{Even $D$}. Note that one can rewrite $a_k(1)$ in \eqref{eqent_eqn13} with $\theta_j^0$ defined in \eqref{eqent_eqn14} as
\begin{equation}\label{eqent_eqn16}
a_k(1)=\frac{1}{D}\sum_{j=0}^{D-1}\expo{\pi\ii j^2/D} \; \expo{2\pi\ii j k/D}=\frac{1}{D}\sum_{j=0}^{D-1}\expo{\pi\ii(j^2+2 k j)/D}.
\end{equation}
Applying the reciprocity formula \eqref{eqent_eqn11} to last term in \eqref{eqent_eqn16} with $m=1,n=2k,p=D$ (noting that $mp+n=D+2k$ is even) yields 
\begin{align}\label{eqent_eqn17}
a_k(1)&=\frac{1}{\sqrt{D}}\ee^{\pi\ii (D-4k^2)/4D} = \frac{\ee^{\pi\ii/4}}{\sqrt{D}}(-1)^D \ee^{-\pi\ii k^2/D} \notag\\
&=\frac{\ee^{\pi\ii/4}}{\sqrt{D}}\omega^{-k^2/2},\text{ since $(-1)^D=1$ for even $D$}.
\end{align}

\textbf{Odd $D$}. The proof is essentially the same as in the even $D$ case, but we explicitly write it below for the sake of completeness. Using a similar argument we rewrite $a_k(1)$ in \eqref{eqent_eqn13} with $\theta_j^0$ defined in \eqref{eqent_eqn14} as
\begin{align}\label{eqent_eqn18}
a_k(1)=\frac{1}{D}\sum_{j=0}^{D-1}\expo{2\pi\ii j^2/D} \; \expo{2\pi\ii j k/D}=\frac{1}{D}\sum_{j=0}^{D-1}\expo{\pi\ii(2j^2+2 k j)/D}.
\end{align}
Applying again the reciprocity formula \eqref{eqent_eqn11} to last term in \eqref{eqent_eqn18} with $m=2,n=2k,p=D$ (noting that $mp+n=2D+2k$ is even) yields 
\begin{align}\label{eqent_eqn19}
a_k(1)&=\frac{1}{\sqrt{D}}\frac{\ee^{\pi\ii(2D-4k^2)/8D}}{\sqrt{2}}
(1+\ee^{-\pi\ii(D+2k)/2})\notag\\
&=\frac{\ee^{\pi\ii/4}}{\sqrt{D}}\omega^{-k^2/4} \left( \frac{1-\ii^{2k+D}}{\sqrt{2}} \right),
\end{align}
where we used $(-\ii)^{2k+D}=-(\ii^{2k+D})$ since $2k+D$ is odd.
This concludes the proof of Lemma~\ref{eqent_lma1} and implicitly of Theorem~\ref{eqent_thm1}.
\end{proof}

\subsection{Examples\label{eqent_sct6}}
In this section we present some examples that illustrate the behaviour of the solution we provided in Theorem~\ref{eqent_thm1}, for various dimensions. First we consider $D=5$ and we plot the absolute values of the $a_k(t)$ coefficients as a function of $t$ in Fig.~\ref{eqent_fgr1}. It is easy to see that indeed the basis interpolates between a product basis and a maximally entangled one in a continuous manner. We observe that all coefficients are non-zero for $t>0$ and we believe that this is  probably also the case for all odd $D$'s.
\begin{figure}
\begin{center}
\includegraphics[scale=0.45]{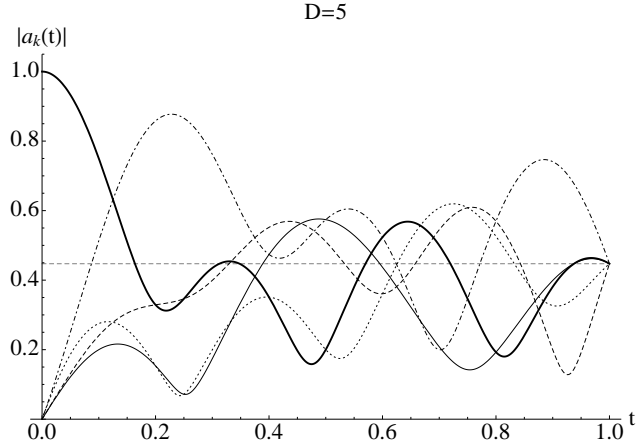}
\caption{The variation of $|a_k(t)|$ with $t$ for $D=5$. Note how at $t=0$ all coefficients but one are zero, and how at $t=1$ all coefficients are equal in magnitude to $1/\sqrt{5}$, with a continuous variation in between. The dashed line represents the $1/\sqrt{5}$ constant function. }
\label{eqent_fgr1}
\end{center}
\end{figure}

In Fig.~\ref{eqent_fgr2} we perform the same analysis as above, but now for $D=8$. We observed that some coefficients vanish for some values of $t$, which seems to be true in general for even $D$.
\begin{figure}
\begin{center}
\includegraphics[scale=0.45]{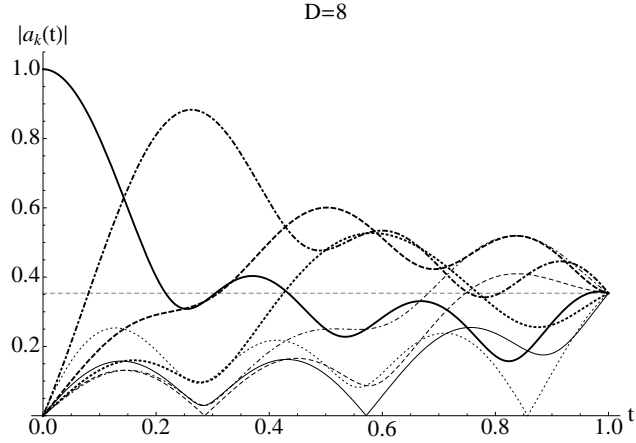}
\caption{The variation of $|a_k(t)|$ with $t$ for $D=8$. Again note how at $t=0$ all coefficients but one are zero, and how at $t=1$ all coefficients are  equal in magnitude to $1/\sqrt{8}$, with a continuous variation in between. The dashed line represents the $1/\sqrt{8}$ constant function. }
\label{eqent_fgr2}
\end{center}
\end{figure}

In Fig.~\ref{eqent_fgr3} we plot the entropy of entanglement of the states in the basis as a function of $t$ for dimensions $D=2,3,5,8$ and $100$.
We see how the entanglement varies continuously but not monotonically between 0 and 1.
\begin{figure}
\begin{center}
\includegraphics[scale=0.45]{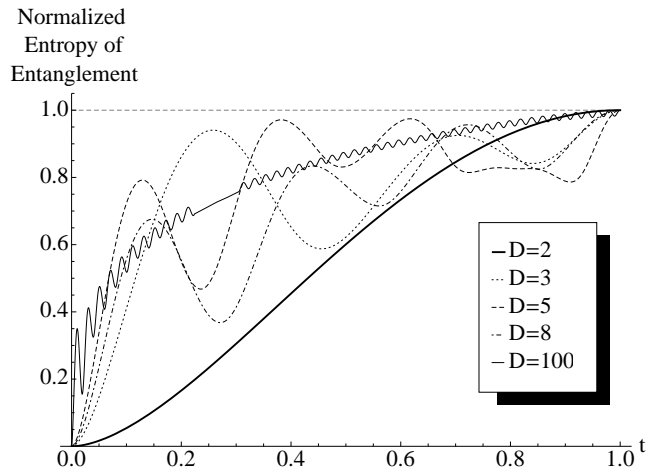}
\caption{The entropy of entanglement as a function of $t$ for various dimensions. Note that the variation is not monotonic (except for $D=2$), although for large $D$ the oscillations tend to be smoothed out.}
\label{eqent_fgr3}
\end{center}
\end{figure}

Finally in Fig.~\ref{eqent_fgr4} we display a parametric plot  of the variation of the second Schmidt coefficient $a_1(t)$ in the complex plane as $t$ is varied from $0$ to $1$ for $D=51$, so that the reader can get an idea of how the coefficients defined in \eqref{eqent_eqn13} look in general. The other coefficients $a_k$ look similar. 

\begin{figure}
\begin{center}
\includegraphics[scale=0.47]{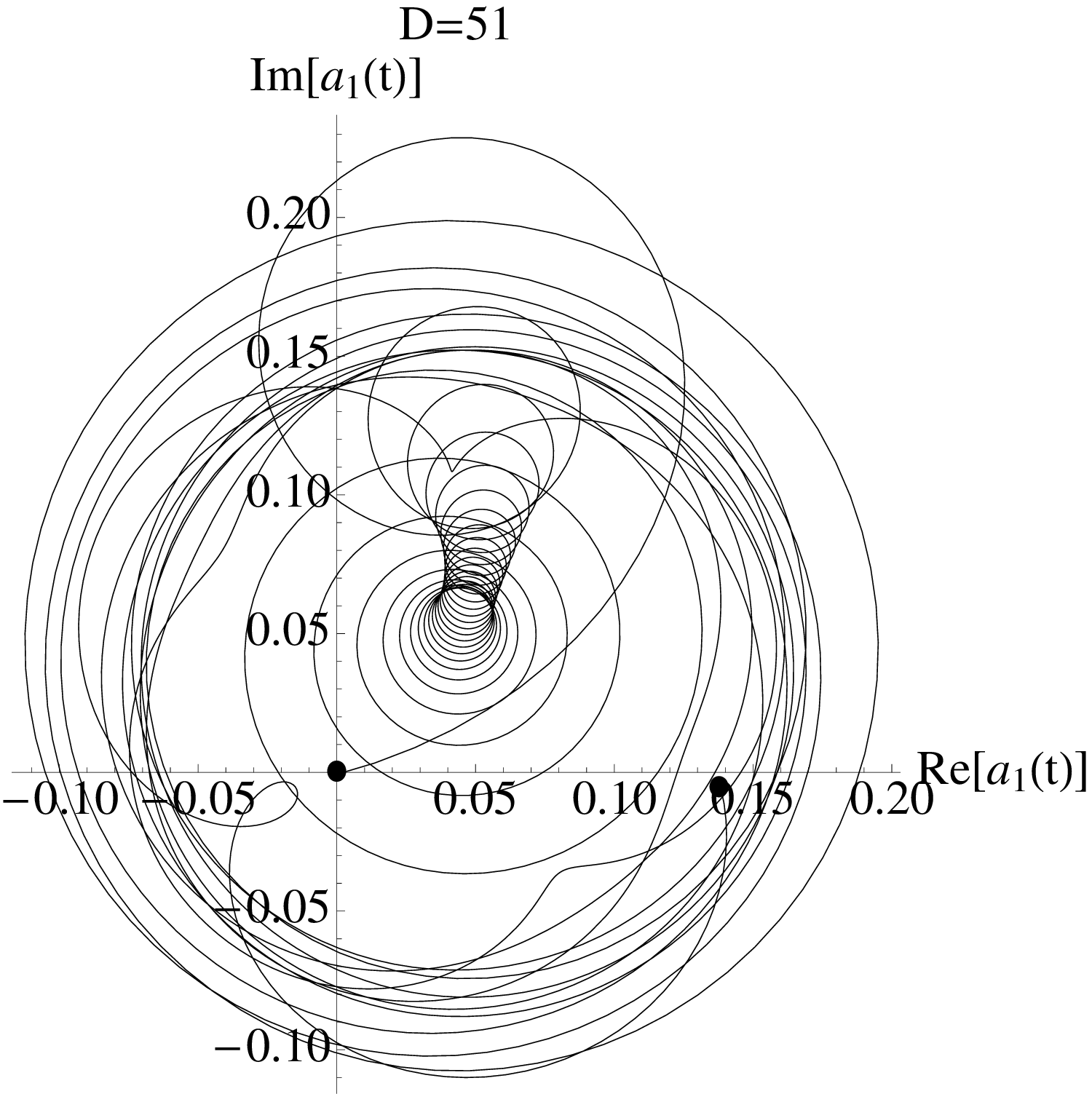}
\caption{Parametric plot of $a_1(t)$ in the complex plane as $t$ is varied from $0$ to $1$. Note that $a_1(0)=0$ and $a_1(1)=\frac{1-\ii}{\sqrt{2\cdot 51}}\ee^{\pi\ii(\frac{1}{4}-\frac{1}{2\cdot 51})}$, the value provided by Lemma~\ref{eqent_lma1}. The starting point $t=0$ and the ending point $t=1$ are marked by solid disks.}
\label{eqent_fgr4}
\end{center}
\end{figure}

\section{Construction based on Graph States\label{eqent_sct7}}
\subsection{Explicit solution\label{eqent_sct8}}
We provide below another solution to the problem that uses qudit graph states. 
Again having in mind a bipartite Hilbert space $\HC\otimes\HC$, both local spaces having dimension $D$, we define a one-qudit state
\begin{equation}\label{eqent_eqn20}
\ket{+}:=\frac{1}{\sqrt{D}}\sum_{k=0}^{D-1}\ket{k}.
\end{equation}
It is easy to see that the collection of $D$ states
\begin{equation}\label{eqent_eqn21}
\ket{\overline m}:= Z^m\ket{+},\quad {m=0,\ldots, D-1}
\end{equation}
defines an orthonormal basis of $\HC$ (also known as the Fourier basis), $\ip{\overline m}{\overline n} = \delta_{mn}$, where 
\begin{equation}\label{eqent_eqn22}
Z:=\sum_{k=0}^{D-1}\omega^k\dyad{k}{k},
\end{equation}
with  
$\omega=\expo{2\pi\ii/D}
$ being the $D$-th root of unity. It then follows at once that the collection of $D^2$ states
\begin{equation}\label{eqent_eqn23}
\ket{\overline m}\ket{\overline n}=(Z^m\otimes Z^n)\ket{+}\ket{+},
\quad{m,n=0,\ldots, D-1}
\end{equation}
defines an orthonormal product basis of the bipartite Hilbert space $\HC\otimes\HC$. 

Next we define the generalized controlled-Phase gate as
\begin{equation}\label{eqent_eqn24}
\CP:=\sum_{k=0}^{D-1} \dyad{k}{k} \otimes Z^k=\sum_{j,k=0}^{D-1}\omega^{jk}\dyad{j}{j}\otimes\dyad{k}{k}
\end{equation}
and note that $\CP$ is a unitary operator that commutes with $Z^m\otimes Z^n$, for all $m,n=0,\ldots,D-1$. The state
\begin{equation}\label{eqent_eqn25}
\ket{G}:=\CP\ket{+}\ket{+}=\frac{1}{D}\sum_{j,k=0}^{D-1}\omega^{jk}\ket{j}\ket{k}
\end{equation}
is an example of a two-qudit \emph{graph state}\index{graph state} and it is not hard to see that $\ket{G}$ is maximally entangled. Then the collection of $D^2$ states
\begin{align}\label{eqent_eqn26}
(&Z^m\otimes Z^n)\ket{G}=(Z^m\otimes Z^n) \CP\ket{+}\ket{+}\notag\\
&=\CP (Z^m\otimes Z^n)\ket{+}\ket{+}, \quad{m,n=0,\ldots, D-1}
\end{align}
defines an orthonormal basis of the bipartite Hilbert space $\HC\otimes\HC$, which we call a \emph{graph basis}\index{graph basis}. Since $Z^m\otimes Z^n$ are local unitaries and $\ket{G}$ is a maximally entangled state, then all the other graph basis states must also be maximally entangled. For more details about graph states of arbitrary dimension see \cite{quantph.0602096, PhysRevA.78.042303, PhysRevA.81.032326}.

We now have all the tools to construct a continuous interpolating family of equientangled bases, as summarized by the Theorem below.
\begin{theorem}\label{eqent_thm2}
The collection of $D^2$ normalized states
\begin{align}\label{eqent_eqn27}
\ket{G_{m,n}(t)}&=(Z^m\otimes Z^n) \CP(t)\ket{+}\ket{+},\\
m,n&=0,\ldots, D-1\notag,
\end{align}
indexed by a real parameter $t\in[0,1]$ where 
\begin{equation}\label{eqent_eqn28}
\CP(t)=\sum_{j,k}\omega^{jkt}\dyad{j}{j}\otimes\dyad{k}{k}
\end{equation}
defines a family of equientangled bases that continuously interpolates between a product basis at $t=0$ and a maximally entangled basis at $t=1$. 
\end{theorem}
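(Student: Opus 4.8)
The plan is to prove three claims: that the states $\ket{G_{m,n}(t)}$ form an orthonormal basis for each fixed $t$, that they are all equally entangled, and that the family interpolates continuously from a product basis at $t=0$ to a maximally entangled basis at $t=1$. The key observation driving everything is that $\CP(t)$ in \eqref{eqent_eqn28} commutes with $Z^m\otimes Z^n$, exactly as the ordinary $\CP$ does in \eqref{eqent_eqn24}, since both are diagonal in the computational basis. This lets me push $\CP(t)$ through the local $Z$ operators freely, which is the analogue of the commutation property exploited around \eqref{eqent_eqn26}.

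First I would establish the orthonormal-basis property. Because $\CP(t)$ is unitary for every real $t$ (it is diagonal with entries $\omega^{jkt}$ of unit modulus), and because $\{(Z^m\otimes Z^n)\ket{+}\ket{+}\}_{m,n}$ is the orthonormal product basis already exhibited in \eqref{eqent_eqn23}, the image states $\ket{G_{m,n}(t)}=\CP(t)(Z^m\otimes Z^n)\ket{+}\ket{+}$ form an orthonormal basis as the image of an orthonormal basis under a unitary. Here I use the commutation to rewrite \eqref{eqent_eqn27} as $\CP(t)(Z^m\otimes Z^n)\ket{+}\ket{+}$, matching the form in the second line of \eqref{eqent_eqn26}.

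Next I would show the equal-entanglement property. All the states $\ket{G_{m,n}(t)}$ are obtained from the single state $\ket{G_{0,0}(t)}=\CP(t)\ket{+}\ket{+}$ by applying the local unitary $Z^m\otimes Z^n$ (after commuting it back to the front). Since local unitaries do not change the Schmidt coefficients, every state in the basis shares the Schmidt spectrum of $\ket{G_{0,0}(t)}$, hence all are equally entangled with respect to any entanglement measure, which depends only on the Schmidt coefficients \cite{PhysRevLett.83.1046}. The entanglement of the whole basis is therefore controlled by the single state $\ket{G_{0,0}(t)}=\frac{1}{D}\sum_{j,k}\omega^{jkt}\ket{j}\ket{k}$, whose dual operator (in the sense of map-state duality) has matrix entries $\frac{1}{D}\omega^{jkt}$; I would compute its Schmidt coefficients as the squared singular values of this matrix.

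The main obstacle, and the heart of the interpolation claim, is the endpoint analysis. At $t=0$ one has $\CP(0)=I$, so $\ket{G_{m,n}(0)}=\ket{\overline m}\ket{\overline n}$ is manifestly a product state, recovering \eqref{eqent_eqn23}. At $t=1$ one has $\CP(1)=\CP$, and $\ket{G_{0,0}(1)}=\ket{G}$ of \eqref{eqent_eqn25} is maximally entangled, as noted there. For the continuity in between, the singular values of the matrix $\frac{1}{D}(\omega^{jkt})_{j,k}$ vary continuously with $t$, so the Schmidt coefficients, and hence the entanglement, vary continuously; I expect the cleanest argument to observe that the entries of $\CP(t)$ depend continuously (indeed analytically) on $t$, so the reduced density operator $\Tr_B\dya{G_{0,0}(t)}$ is a continuous matrix-valued function of $t$ and its eigenvalues move continuously. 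The one genuinely delicate point is verifying maximal entanglement precisely at $t=1$ without appeal to $D$-dependent Gauss-sum machinery: here I would simply invoke that $\ket{G}$ in \eqref{eqent_eqn25} is a graph state and graph states on two qudits are maximally entangled (the reduced state is maximally mixed because $\sum_{j}\omega^{jkt}\overline{\omega^{jk't}}=D\,\delta_{k,k'}$ at $t=1$), which is exactly the claim already asserted below \eqref{eqent_eqn25}, so no new number-theoretic input is required — this is the principal advantage of the graph-state construction over the Gauss-sum one.
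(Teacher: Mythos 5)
Your proof is correct and follows essentially the same route as the paper's: the commutation of $\CP(t)$ with $Z^m\otimes Z^n$ giving orthonormality via unitarity, local-unitary equivalence giving equal Schmidt spectra, and the endpoint identifications $\CP(0)=I\otimes I$, $\CP(1)=\CP$ with continuity of the Schmidt coefficients in $t$. The only addition is your explicit check that $\ket{G}$ is maximally entangled via $\sum_j\omega^{j(k-k')}=D\,\delta_{k,k'}$, which the paper merely asserts below \eqref{eqent_eqn25}; this is a welcome detail but not a different argument.
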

\begin{proof}
We make the crucial observation that $\CP(t)$ commutes with $Z^m\otimes Z^n$ for all $m,n=0,\ldots, D-1$ and all $t \in [0,1]$ which implies that $\{\ket{G_{m,n}(t)}\}_{m,n=0}^{D-1}$ defines an orthonormal basis since it differs from the orthonormal basis in \eqref{eqent_eqn23} only by the unitary operator $\CP(t)$. All states in the basis are equally entangled, and moreover, share the same set of Schmidt coefficients since any two basis states are equivalent up to local unitaries of the form $Z^m\otimes Z^n$.

Finally note that \mbox{$\CP(t=0)=I\otimes I$} and \mbox{$\CP(t=1)=\CP$} (defined in \eqref{eqent_eqn24}), and therefore at $t=0$ the basis is product, see \eqref{eqent_eqn23}, and at $t=1$ the basis is maximally entangled, see \eqref{eqent_eqn26}. The operator $\CP(t)$ can be viewed as a controlled-Phase gate whose ``entangling strength" can be tuned continuously. The Schmidt coefficients of the states in the basis vary continuously with $t$ and hence the entanglement also varies continuously with $t$, regardless of which entanglement measure one uses (see the remarks following Theorem~\ref{eqent_thm1}).
\end{proof}
Our construction above can be expressed in the framework described in the last two paragraphs of Sec. II of \cite{PhysRevA.73.012329}, by setting $U_m=Z^m$ and $V_n=Z^n$.

Next we prove that the Schmidt coefficients of the basis states in Theorem \ref{eqent_thm2} are all non-zero for any $t>0$, so all bases consist of full Schmidt rank states whenever $t>0$.

\begin{lemma}\label{eqent_lma2}
The equientangled family of bases $\{\ket{G_{m,n}(t)}\}_{m,n=0}^{D-1}$ defined in Theorem~\ref{eqent_thm2} consists of full Schmidt rank states, for any $0<t\leqslant1$.
\end{lemma}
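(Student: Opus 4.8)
The plan is to reduce the problem to a single representative state and then recognize its coefficient matrix as a Vandermonde matrix. First I would use the fact that $Z^m\otimes Z^n$ are local unitaries, so every $\ket{G_{m,n}(t)}$ shares the same set of Schmidt coefficients as $\ket{G_{0,0}(t)}$; it therefore suffices to compute the Schmidt rank of
\[
\ket{G_{0,0}(t)}=\CP(t)\ket{+}\ket{+}=\frac{1}{D}\sum_{j,k=0}^{D-1}\omega^{jkt}\ket{j}\ket{k}.
\]
The Schmidt rank of a bipartite pure state equals the rank of its coefficient matrix, here the $D\times D$ matrix $C$ with entries $C_{jk}=\omega^{jkt}/D$; the overall factor $1/D$ is irrelevant for the rank.

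Next I would set $x_j:=\omega^{jt}=\expo{2\pi\ii jt/D}$ and observe that $C_{jk}=x_j^{\,k}/D$, so that $D\,C$ is exactly the Vandermonde matrix with nodes $x_0,x_1,\ldots,x_{D-1}$, whose determinant is $\prod_{0\leqslant j<l\leqslant D-1}(x_l-x_j)$. Hence $C$ is invertible, and the Schmidt rank of $\ket{G_{0,0}(t)}$ equals $D$, precisely when the nodes $x_j$ are pairwise distinct.

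The only remaining point is to verify that the $x_j$ are distinct for every $0<t\leqslant 1$. Two nodes coincide, $x_j=x_l$ with $j\neq l$, if and only if $(j-l)t/D\in\ZZ$; writing $m=|j-l|$ with $1\leqslant m\leqslant D-1$, this requires $mt/D$ to be an integer. But for $0<t\leqslant 1$ and $1\leqslant m\leqslant D-1$ one has $0<mt\leqslant D-1<D$, so $mt/D\in(0,1)$ and can never be an integer. Therefore the nodes are pairwise distinct, the Vandermonde determinant is nonzero, and $\ket{G_{0,0}(t)}$, hence every $\ket{G_{m,n}(t)}$, has full Schmidt rank $D$ for all $0<t\leqslant 1$.

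I do not anticipate a genuine obstacle here: the whole argument hinges on spotting the Vandermonde structure of the coefficient matrix, after which everything reduces to an elementary distinctness check. The one place demanding mild care is the bound $0<mt<D$, which must use both $t\leqslant 1$ and $m\leqslant D-1$; it is worth noting that this is exactly the bound that fails at $t=0$, where all nodes collapse to $1$ and the state degenerates to a product state, consistent with Theorem~\ref{eqent_thm2}.
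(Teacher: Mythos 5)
Your proof is correct and follows essentially the same route as the paper's: reduce to $\ket{G_{0,0}(t)}$ by local unitaries, identify $D\cdot\Omega(t)$ as a Vandermonde matrix, and show its determinant is nonzero because $(j-k)t$ can never be a nonzero multiple of $D$ when $0<t\leqslant 1$ and $1\leqslant j-k\leqslant D-1$. The only cosmetic difference is that the paper phrases the conclusion via the product of Schmidt coefficients being $|\det\Omega|^2$ (tying in with the $G$-concurrence used later), whereas you argue directly that full matrix rank equals full Schmidt rank; the substance is identical.
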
 
\begin{proof}
We will show that the product of the Schmidt coefficients is always non-zero, which implies that no Schmidt coefficient can be zero, whenever $0<t\leqslant 1$.

Let 
\begin{equation}\label{eqent_eqn29}
\ket{\psi}=\sum_{j,k=0}^{D-1}\Omega_{jk}\ket{j}\ket{k}
\end{equation}
be an arbitrary normalized pure state in a bipartite Hilbert space $\HC\otimes\HC$ and
let $\{\lambda_k\}$ denote the set of Schmidt coefficients of $\ket{\psi}$ satisfying $\sum_k{\lambda_k}=1$; note that they are equal to the squares of the singular values of the coefficient matrix  $\Omega$ in \eqref{eqent_eqn29}. The product of the squares of the singular values is just the product of the eigenvalues of $\Omega\Omega^\dagger$, the latter product being equal to $\det(\Omega\Omega^\dagger)=|\det(\Omega)|^2$, so we conclude that
\begin{equation}\label{eqent_eqn30}
\prod_{k=0}^{D-1}\lambda_k=|\det(\Omega)|^2.
\end{equation}

The states $\ket{G_{m,n}(t)}$ of Theorem~\ref{eqent_thm2} share the same set of Schmidt coefficients (they are all related by local unitaries) so it suffices to show that the product of the Schmidt coefficients is non-zero only for the state $\ket{G_{0,0}(t)}$. Recall that 
\begin{equation}\label{eqent_eqn31}
\ket{G_{0,0}(t)} = \CP(t)\ket{+}\ket{+} = \sum_{j,k}\frac{\omega^{jkt}}{D}\ket{j}\ket{k}.
\end{equation}
Expressing the coefficients $\omega^{jkt}/D$ as a matrix $\Omega(t)$, one can easily see that $D\cdot\Omega(t)$ is a $D\times D$ Vandermonde matrix whose determinant is 
\begin{equation}\label{eqent_eqn32}
\det\left[D\cdot\Omega(t)\right] = \prod_{j > k}(\omega^{jt}-\omega^{kt}) = \prod_{j>k}\omega^{kt}\left[\omega^{(j-k)t}-1\right]
\end{equation}
(see p.~29 of \cite{HornJohnson:MatrixAnalysis} for more details on Vandermonde matrices).
For a given $t$, the product above is zero if and only if at least one term is zero, i.e. there must exist integers $j$, $k$, with $0\leqslant k<j\leqslant D-1$, such that
\begin{equation}\label{eqent_eqn33}
(j-k)t=n D \Longleftrightarrow t=n \frac{D}{j-k},
\end{equation}
for some positive integer $n\geqslant 0$.
Note that $0<j-k\leqslant D-1$, so $D/(j-k)>1$ and the above equation can never be satisfied for $0<t\leqslant 1$. We have therefore proved that $\det[\Omega(t)]\neq0$ for $0<t\leqslant 1$, which, in the light of \eqref{eqent_eqn30}, is equivalent to saying that the product of the Schmidt coefficients is non-zero for $0<t\leqslant 1$, and this concludes the proof of the Lemma.
\end{proof}

For this family of equientangled bases, we do not have an analytic expression for the Schmidt coefficients nor the entropy of entanglement for general $D$ though they can be easily found by numerically diagonalizing the coefficient matrix $\Omega(t)\Omega(t)^\dagger$. Having said that, we derived a simple analytic expression for the \emph{product} of all Schmidt coefficients, see \eqref{eqent_eqn30} and \eqref{eqent_eqn32}, which is simply related to an entanglement monotone called $G$-concurrence (first introduced in \cite{PhysRevA.71.012318}) which is defined for a pure bipartite state \eqref{eqent_eqn29} in terms of its Schmidt coefficients $\{\lambda_k\}$ as
\begin{equation}\label{eqent_eqn34}
C_G(\ket{\psi}) = D\left(\prod_{k=0}^{D-1}\lambda_k\right)^{1/D} = D|\det(\Omega)|^{2/D}
\end{equation}
where $\sum_k\lambda_k=1$. The $G$-concurrence is zero whenever at least one Schmidt coefficient is zero and is equal to one if and only if the state is maximally entangled. Unlike the entropy of entanglement, we are able to show two analytical results that are true for all $D$ expressed in Lemmas \ref{eqent_lma3} and \ref{eqent_lma4}.

\begin{lemma}\label{eqent_lma3}
The $G$-concurrence of the equientangled basis states, $\{\ket{G_{m,n}(t)}\}_{m,n=0}^{D-1}$ defined in Theorem~\ref{eqent_thm2} is
\begin{equation}\label{eqent_eqn35}
C_G(t) = \frac{2^{D-1}}{D} \prod_{r=1}^{D-1}
\left[\sin^2(\pi r t/D)\right]^{(D-r)/D} \text{ for all $m,n,D$.}
\end{equation}
\end{lemma}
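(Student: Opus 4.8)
The plan is to reduce the $G$-concurrence to the product of the Schmidt coefficients, which by \eqref{eqent_eqn30} equals $|\det\Omega(t)|^2$, and then evaluate that determinant explicitly using the Vandermonde structure already exhibited in \eqref{eqent_eqn32}. Since all the states $\ket{G_{m,n}(t)}$ are related by the local unitaries $Z^m\otimes Z^n$ and hence share a single set of Schmidt coefficients, it suffices to work with $\ket{G_{0,0}(t)}$, whose coefficient matrix is $\Omega(t)$ with entries $\omega^{jkt}/D$. Writing $x_j:=\omega^{jt}$, the matrix $D\,\Omega(t)$ has entries $(D\,\Omega(t))_{jk}=x_j^{\,k}$, i.e.\ it is a Vandermonde matrix, so $\det[D\,\Omega(t)]=\prod_{0\le k<j\le D-1}(x_j-x_k)$, which is exactly the expression in \eqref{eqent_eqn32}.

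Next I would pass to moduli factor by factor. Factoring the unit-modulus phase $x_k=\omega^{kt}$ out of each term gives $|x_j-x_k|=|\omega^{(j-k)t}-1|$, and the elementary identity $|\ee^{\ii\theta}-1|^2=2-2\cos\theta=4\sin^2(\theta/2)$ applied with $\theta=2\pi(j-k)t/D$ yields $|\omega^{jt}-\omega^{kt}|^2=4\sin^2\!\bigl(\pi(j-k)t/D\bigr)$. Hence $|\det\Omega(t)|^2 = D^{-2D}\prod_{0\le k<j\le D-1}4\sin^2\!\bigl(\pi(j-k)t/D\bigr)$.

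The remaining step is purely combinatorial: group the pairs $(j,k)$ by their difference $r=j-k$, which runs from $1$ to $D-1$, noting that for each fixed $r$ there are exactly $D-r$ admissible pairs (namely $k=0,1,\ldots,D-1-r$). This collapses the double product into $\prod_{r=1}^{D-1}\bigl[4\sin^2(\pi rt/D)\bigr]^{D-r}$. Separating the numerical factor gives $\prod_{r=1}^{D-1}4^{D-r}=4^{\sum_{r=1}^{D-1}(D-r)}=4^{D(D-1)/2}=2^{D(D-1)}$, so that $|\det\Omega(t)|^2=D^{-2D}\,2^{D(D-1)}\prod_{r=1}^{D-1}[\sin^2(\pi rt/D)]^{D-r}$. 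Substituting this into the defining relation $C_G=D\,(\prod_k\lambda_k)^{1/D}=D\,|\det\Omega(t)|^{2/D}$ of \eqref{eqent_eqn34} and extracting the $1/D$-th power turns $D^{-2D}$ into $D^{-2}$, $2^{D(D-1)}$ into $2^{D-1}$, and each exponent $D-r$ into $(D-r)/D$; the prefactor $D$ then cancels one power of $D^{-2}$, leaving $2^{D-1}/D$, which is exactly \eqref{eqent_eqn35}.

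There is essentially no conceptual obstacle here; the only place demanding care is the bookkeeping of the third paragraph — correctly counting the multiplicity $D-r$ of each difference and tracking the accumulated powers of $2$ and of $D$ through the $1/D$-th root — since an off-by-one in the multiplicity count or a misplaced factor of $D$ would corrupt the final constant. A useful consistency check is that the exponents sum correctly, $\sum_{r=1}^{D-1}(D-r)=\binom{D}{2}$, matching the number of Vandermonde factors, and that one should recover $C_G(1)=1$ because $\ket{G}$ is maximally entangled: indeed, using the symmetry $\sin(\pi r/D)=\sin(\pi(D-r)/D)$ together with the standard identity $\prod_{r=1}^{D-1}\sin(\pi r/D)=D/2^{D-1}$ one finds $\prod_{r=1}^{D-1}[\sin^2(\pi r/D)]^{(D-r)/D}=D/2^{D-1}$, so that $C_G(1)=1$, giving an independent verification of the constant $2^{D-1}/D$.
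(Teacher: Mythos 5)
Your proposal is correct and follows essentially the same route as the paper's proof: reduce to $\ket{G_{0,0}(t)}$ via local unitaries, evaluate the Vandermonde determinant of \eqref{eqent_eqn32}, group the pairs $(j,k)$ by their difference $r$ with multiplicity $D-r$, apply $|\omega^{rt}-1|^2 = 4\sin^2(\pi r t/D)$, and collect the powers of $2$ and $D$ through the $1/D$-th root. The only additions are cosmetic — you take the $1/D$-th power at the end rather than carrying it through each line, and your $C_G(1)=1$ consistency check via $\prod_{r=1}^{D-1}\sin(\pi r/D)=D/2^{D-1}$ is a nice (correct) bonus not in the paper.
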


\begin{proof}
Every basis state has the same $G$-concurrence since they all share the same set of Schmidt coefficients (recall that they differ only by local unitaries). Invoking the definition of $G$-concurrence, we have
\begin{align}
C&_G(t)=D|\det(\Omega(t))|^{2/D}
=D\left|\frac{1}{D^D}\det(D\cdot\Omega(t))\right|^{2/D}\label{eqent_eqn36}\\
&=\frac{1}{D}\prod_{j>k}\left|\omega^{(j-k)t}-1\right|^{2/D}=\frac{1}{D}\prod_{r=1}^{D-1}\left|\omega^{rt}-1\right|^{\frac{2(D-r)}{D}}\label{eqent_eqn37}\\
&=\frac{1}{D}\prod_{r=1}^{D-1}\left[2 - 2\frac{\omega^{rt}+\omega^{-rt}}{2} \right]^{\frac{D-r}{D}}\label{eqent_eqn38}\\
&=\frac{1}{D}\prod_{r=1}^{D-1}2^{(D-r)/D}\left[1-\cos(2\pi r t/D)\right]^{\frac{D-r}{D}}\label{eqent_eqn39}\\
&=\frac{1}{D}\prod_{r=1}^{D-1}2^{2(D-r)/D}\left[\sin^2(\pi r t/D)\right]^{\frac{D-r}{D}}\label{eqent_eqn40}\\
&=\frac{2^{D-1}}{D}\prod_{r=1}^{D-1}\left[\sin^2(\pi r t/D)\right]^{(D-r)/D} \label{eqent_eqn41},
\end{align}
where in \eqref{eqent_eqn36} we used the fact that $\det(cM)=c^D\det(M)$ for a $D\times D$ arbitrary matrix $M$ and an arbitrary constant $c$. The first equality in \eqref{eqent_eqn37} follows at once from \eqref{eqent_eqn32}, whereas the second equality in \eqref{eqent_eqn37} follows from a simple counting argument in which one replaces $j-k$ by $r$, making sure that the different pairs $(j,k)$, $j>k$, that give rise to the same 
$r$ are counted; for a given $r$ there are $D-r$ such pairs.
\end{proof}

It turns out that the $G$-concurrence has the following nice property:
\begin{lemma}\label{eqent_lma4}
The $G$-concurrence of the basis states $\{\ket{G_{m,n}(t)}\}_{m,n=0}^{D-1}$ in \eqref{eqent_eqn35} is strictly increasing in the open interval $t \in (0,1)$, for all dimensions $D$.
\end{lemma}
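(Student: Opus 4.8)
The plan is to work with the logarithm of the $G$-concurrence and show its derivative with respect to $t$ is strictly positive on $(0,1)$. From \eqref{eqent_eqn35} we have, up to the additive constant $\log(2^{D-1}/D)$,
\begin{equation}
\log C_G(t) = \text{const} + \frac{1}{D}\sum_{r=1}^{D-1}(D-r)\log\!\left[\sin^2(\pi r t/D)\right].
\label{eqent_eqn42}
\end{equation}
Differentiating term by term gives
\begin{equation}
\frac{d}{dt}\log C_G(t) = \frac{2\pi}{D^2}\sum_{r=1}^{D-1}(D-r)\,r\cot(\pi r t/D),
\label{eqent_eqn43}
\end{equation}
so the whole problem reduces to proving that the trigonometric sum
$S(t)=\sum_{r=1}^{D-1}r(D-r)\cot(\pi r t/D)$ is strictly positive for $t\in(0,1)$. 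Since $C_G(t)>0$ on the open interval by Lemma~\ref{eqent_lma2}, establishing $S(t)>0$ is exactly what is needed.

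First I would symmetrize the sum to exploit the reflection $r\mapsto D-r$. The coefficient $r(D-r)$ is invariant under $r\mapsto D-r$, while the argument transforms as $\cot(\pi(D-r)t/D)=\cot(\pi t - \pi r t/D)$. Pairing the $r$ and $D-r$ terms and using the cotangent addition identity should collapse $S(t)$ into a sum of manifestly-signed pieces on $(0,1)$, where $\pi r t/D$ ranges over $(0,\pi)$ and the individual cotangents need not themselves be positive. The key structural fact I expect to use is that the weights $r(D-r)$ are largest for $r$ near $D/2$ (where the cotangent arguments are closest to the ``safe'' region) and smallest near the endpoints $r=1$ and $r=D-1$ (where cotangents can be large in magnitude but may carry an unfavorable sign), so a careful pairing is essential rather than a crude term-by-term bound.

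The main obstacle will be controlling the sign of the paired terms uniformly in $t$, because for $t$ close to $1$ the arguments $\pi r t/D$ approach $\pi r/D$ and several individual cotangents turn negative; a naive estimate fails there. I anticipate the cleanest route is to recognize $S(t)$ as (a constant multiple of) the derivative of $\log C_G$ and instead argue monotonicity of $C_G$ directly by comparing $C_G(t)$ with its reflection: one can check $C_G(t)=C_G(2-t)$ is \emph{not} the relevant symmetry, but rather that each factor $\sin^2(\pi r t/D)$ is increasing on the subinterval where $\pi r t/D\le \pi/2$, and to handle the factors that have passed their maximum I would group the $r$ and $D-r$ factors into the combined factor $\left[\sin(\pi r t/D)\sin(\pi (D-r)t/D)\right]$ raised to a symmetric power and show this product is increasing on $(0,1)$ using a product-to-sum identity. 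If a fully elementary argument proves recalcitrant, a fallback is to establish positivity of $S(t)$ via the known closed form for such cotangent sums (or via an Abel summation against the decreasing weights after the symmetrization), but I expect the reflection-pairing to suffice; verifying it holds at the endpoints $t\to 0^+$ and $t\to 1^-$ by continuity then completes the strict monotonicity on the open interval.
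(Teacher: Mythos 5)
Your reduction to showing $S(t)=\sum_{r=1}^{D-1}r(D-r)\cot(\pi r t/D)>0$ on $(0,1)$ is correct, and your instinct to pair $r$ with $D-r$ using the invariance of the weight $r(D-r)$ is in fact the germ of a complete proof --- but you stop exactly where the work must be done, and the route you then pivot to is wrong. The missing step is elementary: with $A=\pi rt/D$ and $B=\pi(D-r)t/D$ one has $A+B=\pi t$ and
\begin{equation*}
\cot A+\cot B=\frac{\sin(A+B)}{\sin A\,\sin B}
=\frac{\sin(\pi t)}{\sin(\pi rt/D)\,\sin(\pi(D-r)t/D)},
\end{equation*}
and for $t\in(0,1)$ every factor on the right is strictly positive (all arguments lie in $(0,\pi)$), so each pair contributes strictly positively to $S(t)$; when $D$ is even the self-paired term $r=D/2$ gives $\cot(\pi t/2)>0$. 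Thus the ``main obstacle'' you worry about near $t=1$ is not an obstacle at all: individual cotangents do turn negative there, but the pair sum is manifestly positive. By contrast, the substitute you retreat to --- that the ``combined factor $\sin(\pi rt/D)\sin(\pi(D-r)t/D)$ raised to a symmetric power'' is increasing on $(0,1)$ --- fails on two counts. First, the exponents attached to $r$ and $D-r$ in \eqref{eqent_eqn35} are $(D-r)/D$ and $r/D$, which are not symmetric. Second, the unweighted product itself is not monotone: for $D=10$, $r=1$ its value at $t=0.95$, namely $\sin(0.095\pi)\sin(0.855\pi)\approx 0.129$, exceeds its value $\sin^2(0.1\pi)\approx 0.096$ at $t=1$. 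What rescues the pairing is precisely that the log-derivative weights are the \emph{equal} numbers $r(D-r)$, not the unequal $r$ and $D-r$; the correctly weighted paired factor $[\sin^2(\pi rt/D)]^{(D-r)/D}[\sin^2(\pi(D-r)t/D)]^{r/D}$ is indeed increasing, but only because its log-derivative is $\tfrac{2\pi r(D-r)}{D^2}(\cot A+\cot B)$, i.e., the identity above.

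For comparison, the paper proves the lemma by a different and equally short argument: differentiate once more to obtain
\begin{equation*}
\frac{\mathrm{d}^2}{\mathrm{d}t^2}\log C_G(t)
=-\frac{2\pi^2}{D^3}\sum_{r=1}^{D-1}r^2(D-r)\,\frac{1}{\sin^2(\pi rt/D)}<0 ,
\end{equation*}
so $\tfrac{\mathrm{d}}{\mathrm{d}t}\log C_G$ is strictly decreasing; then observe that it vanishes at $t=1$, because under $r\mapsto D-r$ the weight $r(D-r)$ is invariant while $\cot(\pi(D-r)/D)=-\cot(\pi r/D)$, so the terms of $S(1)$ cancel in pairs. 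Hence the derivative is strictly positive throughout $(0,1)$. Note that the same $r\leftrightarrow D-r$ symmetry you identified powers both arguments: the paper exploits it only at the endpoint $t=1$ together with strict concavity, while the completed version of your pairing exploits it at every $t$. Either route proves the lemma, but as written your proposal neither executes the pairing nor supplies a valid substitute, so it is not yet a proof.
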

\begin{proof}
We prove this by showing that the first derivative of the $C_G(t)$ with respect to $t$ is strictly positive. First note that since $C_G(t)>0$ is a positive function in the interval $t \in (0,1)$. This means showing $\frac{\mathrm{d}}{\mathrm{d} t} C_G(t) > 0$ is equivalent to showing that $\frac{\mathrm{d}}{\mathrm{d} t} \log C_G(t) > 0$. The derivative of the logarithm of \eqref{eqent_eqn41} is
\begin{equation}\label{eqent_eqn42}
\frac{\mathrm{d}}{\mathrm{d} t} \log C_G(t) = \frac{2\pi}{D^2} \sum_{r=1}^{D-1}r(D-r)
\cot(\pi r t/D),
\end{equation}
where $\cot(\cdot)$ denotes the cotangent function. We differentiate again to get
\begin{equation}\label{eqent_eqn43}
\frac{\mathrm{d^2}}{\mathrm{d} t^2}\log C_G(t)=-\frac{2\pi^2}{D^3}\sum_{r=1}^{D-1}r^2(D-r)\frac{1}{\sin^2(\pi r t/D)}.
\end{equation}
Note that the right hand side of \eqref{eqent_eqn43} is strictly negative whenever $t>0$, which implies that the first derivative of the logarithm \eqref{eqent_eqn42} is a strictly decreasing function of $t$ and hence achieving its minimum value at $t=1$, which is given by
\begin{equation}\label{eqent_eqn44}
\frac{\mathrm{d}}{\mathrm{d} t} \log C_G(t)\mid_{t=1} = \frac{2\pi}{D^2}\sum_{r=1}^{D-1}r(D-r) \cot(\pi r/D)=0,
\end{equation}
where the last equality follows from symmetry considerations (terms cancel one by one). We have shown $\frac{\mathrm{d}}{\mathrm{d} t} \log C_G(t) > 0 \Leftrightarrow \frac{\mathrm{d}}{\mathrm{d} t} C_G(t) > 0$ and therefore we conclude that the $G$-concurrence is strictly increasing for $t \in (0,1)$.
\end{proof}

\begin{figure}[h!]
\begin{center}
\includegraphics[scale=0.56]{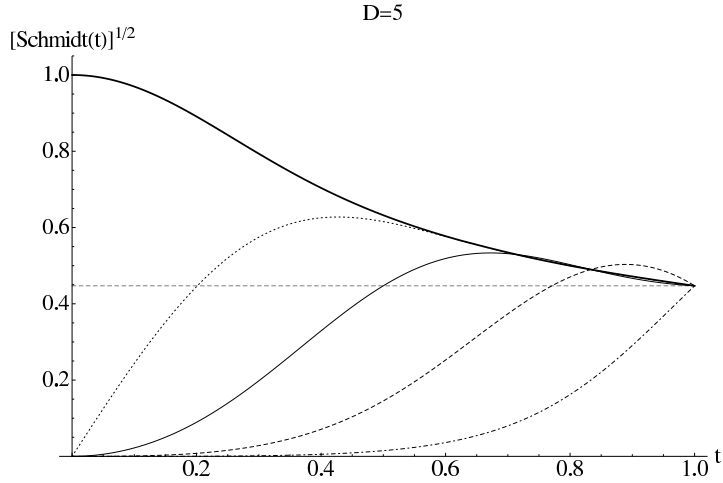}
\caption{The square roots of the Schmidt coefficients as functions of $t$ for $D=5$. Note how at $t=0$ all coefficients but one are zero, and how at $t=1$ all coefficients are equal in magnitude to $1/\sqrt{5}$, with a continuous variation in between. The dashed line represents the $1/\sqrt{5}$ constant function. }
\label{eqent_fgr5}
\end{center}
\end{figure}

\subsection{Extension to multipartite systems\label{eqent_sct9}}
The construction presented in Theorem~\ref{eqent_thm2} can be easily generalized to  multipartite systems of arbitrary dimension. The concept of maximally entangled states is not defined for three parties or more, so in this case, the family continuously interpolates between a product basis and a qudit graph basis. It is still true that for a fixed $t$, all basis states constructed this way have the same entanglement (as quantified by any entanglement measure) since they only differ by local unitaries. 

As a specific example, consider the tripartite GHZ state \mbox{$(\ket{000}+\ket{111})/\sqrt{2}$}. This state is a stabilizer state \cite{NielsenChuang:QuantumComputation} and therefore is local-unitary equivalent \cite{quantph.0111080}  to a graph state $\ket{G}=(\CP_{12} \CP_{23} \CP_{13}) \ket{+}_1 \ket{+}_2 \ket{+}_3$, where the subscripts on $\CP$ indicates which pair of qubits the $\CP$ gate is applied to. By varying the ``strength" of the controlled-Phase gate, one can now construct a family of equally entangled basis for the Hilbert space of 3 qubits that continuously interpolates between a product basis and the GHZ-like graph basis. This GHZ construction can be easily generalized to higher dimensions and also to $n$ parties by using the complete graph given by
\begin{equation}
\label{eqent_eqn45}
\left| G_{\mathrm{GHZ}}(t) \right \rangle := \prod_{i=1}^{n-1} \prod_{j>i}^n \CP_{ij}(t) \; \ket{+}^{\otimes n}
\end{equation}
where $\ket{+}$ is defined in \eqref{eqent_eqn20} and $\CP(t)$ is defined in \eqref{eqent_eqn28}. Finally note that this construction works for any graph state of any dimension, and not just for GHZ-like graph states. Such bases with tunable entanglement may be of use in the study of multipartite entanglement.

\subsection{Examples\label{eqent_sct10}}
In this subsection we perform a similar analysis as the one in Sec.~\ref{eqent_sct6}, so that one can easily compare the behaviour of both solutions.

We consider again a $D=5$ example, for which we plot in Fig.~\ref{eqent_fgr5} the square root of the Schmidt coefficients as functions of $t$. It is easy to see that indeed the basis interpolates between a product basis and a maximally entangled one in a continuous manner. As proven in Lemma~\ref{eqent_lma2}, all Schmidt coefficients are non-zero for $t>0$.

In Fig.~\ref{eqent_fgr6} we plot the same quantities for $D=8$. 
\begin{figure}[h!]
\begin{center}
\includegraphics[scale=0.56]{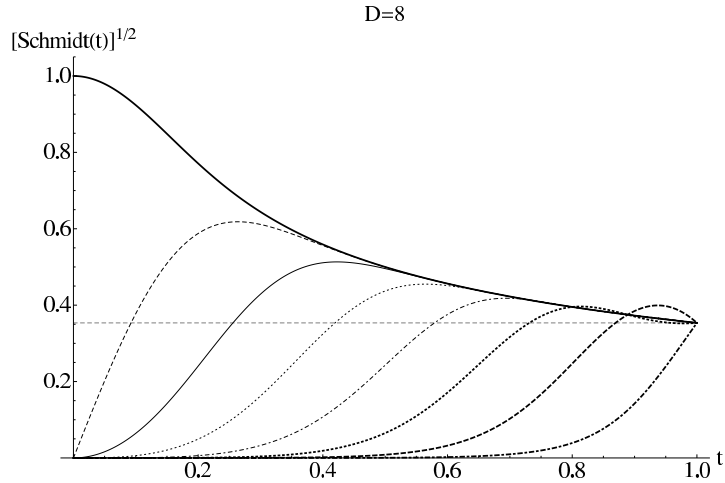}
\caption{The square roots of the Schmidt coefficients as functions of $t$ for $D=8$. Again note how at $t=0$ all coefficients but one are zero, and how at $t=1$ all coefficients are  equal in magnitude to $1/\sqrt{8}$, with a continuous variation in between. The dashed line represents the $1/\sqrt{8}$ constant function. }
\label{eqent_fgr6}
\end{center}
\end{figure}

\begin{figure}[h!]
\begin{center}
\includegraphics[scale=0.46]{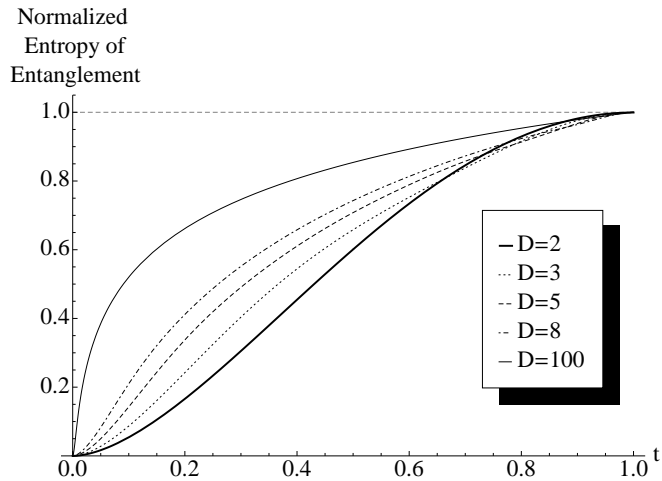}
\caption{The entropy of entanglement as function of $t$ for various dimensions. Note that the variation seems to be monotonically increasing for all $D$, a statement we did not prove.}
\label{eqent_fgr7}
\end{center}
\end{figure}

\begin{figure}[h!]
\begin{center}
\includegraphics[scale=0.44]{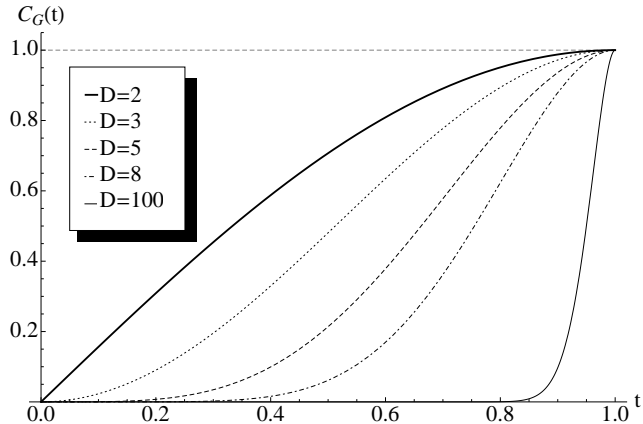}
\caption{The $G$-concurrence as function of $t$ for various dimensions. The variation is strictly increasing in $t$ for all $D$ as shown in Lemma \ref{eqent_lma4}.}
\label{eqent_fgr8}
\end{center}
\end{figure}

Observe how in both examples above the variation of the Schmidt coefficients is not oscillatory, as in the examples of Sec.~\ref{eqent_sct6}.

In Fig.~\ref{eqent_fgr7} we plot the entropy of entanglement of the basis states as a function of $t$ for dimensions $D=2,3,5,8$ and $100$. We observe that the entropy of entanglement varies continuously and monotonically between 0 and 1. It is not known if the entropy of entanglement is always strictly increasing for all $D$ although we verified
this by visual inspection for all $D\leqslant 10$. In Fig.~\ref{eqent_fgr8}, we plot the $G$-concurrence for the same dimensions. We see how the curves are strictly increasing and this is true for all $D$ as proven in Lemma~\ref{eqent_lma4}.

\section{Conclusion\label{eqent_sct11}}
We have solved the problem posed in \cite{PhysRevA.73.012329} by providing two families of equientangled bases for two identical qudits for arbitrary dimension $D$. The construction of the first solution is based on quadratic Gauss sums and follows along the lines of \cite{PhysRevA.73.012329}, whereas the second family is constructed using a different method based on qudit graph states.

The first solution based on quadratic Gauss sums has an explicit analytic expression for the Schmidt coefficients that is easy to evaluate since they are just sums with $D$ terms (see \eqref{eqent_eqn13} and \eqref{eqent_eqn14}). However some Schmidt coefficients can be zero and the entropy of entanglement of the states in the basis varies non-monotonically with $t$ for $D>2$.

The second solution based on graph states consists entirely of full Schmidt rank states for any $0<t\leqslant 1$ that seem to have an entropy of entanglement that is strictly increasing as $t$ increases. Unfortunately we did not find a simple analytic expression for the Schmidt coefficients, but they can be computed numerically without much difficulty. We found a simple analytic expression for another pure state entanglement measure, the $G$-concurrence, which we proved is strictly increasing as $t$ increases. Finally we remark that one can extend this construction to equally entangled bases of more than two parties that interpolate continuously between a product basis and a graph basis even if the concept of maximally entangled states is not defined for more than two parties. This construction may be of interest in studying multipartite entanglement.

\renewcommand{\baselinestretch}{1} 
\large
\normalsize

\bibliographystyle{alpha}

\newcommand{\etalchar}[1]{$^{#1}$}

\printindex

\end{document}